\appto{\bibsetup}{\sloppy} 
\g@addto@macro\bfseries{\boldmath}
\newcommand{\bra}[1]{\langle #1|}
\newcommand{\ket}[1]{|#1\rangle}
\newcommand{\braket}[2]{\langle #1|#2\rangle}
\newcommand{\ketbra}[2]{\ket{#1}\!\bra{#2}}
\newcommand{\inner}[2]{\langle #1,#2\rangle}
\newcommand{\abs}[1]{|#1|}
\newcommand{\Abs}[1]{\left|#1\right|}
\newcommand{\e}{\mathrm{e}}
\newcommand{\I}{\mathrm{i}}
\newcommand{\Tr}{\mathrm{Tr}}
\renewcommand{\t}{{\scriptscriptstyle\mathsf{T}}}
\newcommand{\id}{\operatorname{id}}
\newcommand{\conj}[1]{\overline{#1}}
\newcommand{\norm}[1]{\lVert#1\rVert}
\renewcommand{\Vec}[1]{\boldsymbol{#1}}
\newcommand{\Lin}{\mathrm{L}}
\DeclareMathOperator*{\argmin}{arg\,min}
\numberwithin{equation}{section}  
\newcommand{\pushright}[1]{\ifmeasuring@#1\else\omit\hfill$\displaystyle#1$\fi\ignorespaces}
\newcommand{\pushleft}[1]{\ifmeasuring@#1\else\omit$\displaystyle#1$\hfill\fi\ignorespaces}
\theoremstyle{definition}
\newtheorem{theorem}{Theorem}
\newtheorem{lemma}[theorem]{Lemma}
\newtheorem{corollary}[theorem]{Corollary}
\newtheorem{definition}[theorem]{Definition}
\newtheorem{proposition}[theorem]{Proposition}
\newtheorem{remark}[theorem]{Remark}
\newtheorem{problem}{Problem}
\renewcommand{\qedsymbol}{$\blacksquare$}
\renewcommand{\qedsymbol}{\unskip\nobreak\quad\qedsymbol}
\renewcommand{\qedsymbol}{$\blacksquare$}
\newcommand{\qedgen}{$\blacktriangleleft$}
\definecolor{dred}  {RGB}{164,12,52}
\title{Online learning of quantum processes}
\author[1]{Asad Raza}
\author[1]{Matthias C. Caro}
\author[1,2]{Jens Eisert}
\author[1]{Sumeet Khatri}
\affil[1]{Dahlem Center for Complex Quantum Systems, Freie Universit\"at Berlin, Berlin, Germany}
\affil[2]{Helmholtz-Zentrum Berlin für Materialien und Energie, Berlin, Germany}
\date{\today}
\begin{document}

\maketitle

\begin{abstract}
    Among recent insights into learning quantum states, online learning and shadow tomography procedures are notable for their ability to accurately predict expectation values even of adaptively chosen observables.
    In contrast to the state case, quantum process learning tasks with a similarly adaptive nature have received little attention.
    In this work, we investigate online learning tasks for quantum processes.
    Whereas online learning is infeasible for general quantum channels, we show that channels of bounded gate complexity as well as Pauli channels can be online learned in the regret and mistake-bounded models of online learning. 
    In fact, we can online learn probabilistic mixtures of any exponentially large set of known channels.
    We also provide a provably sample-efficient shadow tomography procedure for Pauli channels. 
    Our results extend beyond quantum channels to non-Markovian multi-time processes, with favorable regret and mistake bounds, as well as a shadow tomography procedure.
    We complement our online learning upper bounds with mistake as well as computational lower bounds.
    On the technical side, we make use of the multiplicative weights update algorithm, classical adaptive data analysis, and Bell sampling, as well as tools from the theory of quantum combs for multi-time quantum processes.
    Our work initiates a study of online learning for classes of quantum channels and, more generally, non-Markovian quantum processes. Given the importance of online learning for state shadow tomography, this may serve as a step towards quantum channel variants of adaptive shadow tomography.
\end{abstract}

\section{Introduction}\label{sec-introduction}

Learning about quantum systems and their evolution over time is a task of fundamental importance in quantum physics. ``Learning'', broadly speaking, refers to the extraction of useful classical information from quantum-mechanical systems and their evolution through experiments. Such learning tasks first appeared in quantum information in the form of quantum state and process tomography, in which the aim is to extract all classical information about the system, in terms of the density matrix of the system in the case of quantum state tomography~\cite{BenchmarkingReview,hradil1997quantumstateestimation,MPS03,gross2010tomographycompressedsensing,blumeKohout2010reliableestimation,banaszek2013quantumtomography}, and the transfer matrix in the case of quantum process tomography~\cite{chuang1997processtomography,dariano2001processtomography,mohseni2008processtomography}. These tomographic tasks, with their strict measures of performance in terms of worst-case distance measures such as the trace and diamond norm, require resources scaling exponentially with the system size~\cite{haah2016sample, o2016efficient, chen2022tight, haah2023query, oufkir2023sample}. Consequently, recent years have seen a growing interest in less strict variants of state and process learning, defined by relaxing the requirement of extracting full classical information about the objects of interest and instead requiring that we learn only the values of certain observables of our state or process. In the case of quantum states, notable such learning tasks include ``pretty good tomography'' in the spirit of \textit{probably approximately correct} (PAC) learning~\cite{Aaronson07}, shadow tomography~\cite{Aar17, badescu2021improved, HKP21, king2024triply}, online learning~\cite{ACH+19, chen2022adaptive}, and classical shadows~\cite{HKP20, elben2022randomized,ONE2013}. Inspired by this progress in understanding state learning, also new perspectives on quantum channel learning have been explored~\cite{HBC+21, LLC21, KTCT21, huang2022learning, caro2022learning, angrisani2023learning, wadhwa2023learning, zhao2023learning}.

\begin{figure}
    \centering
    \includegraphics[scale=0.87]{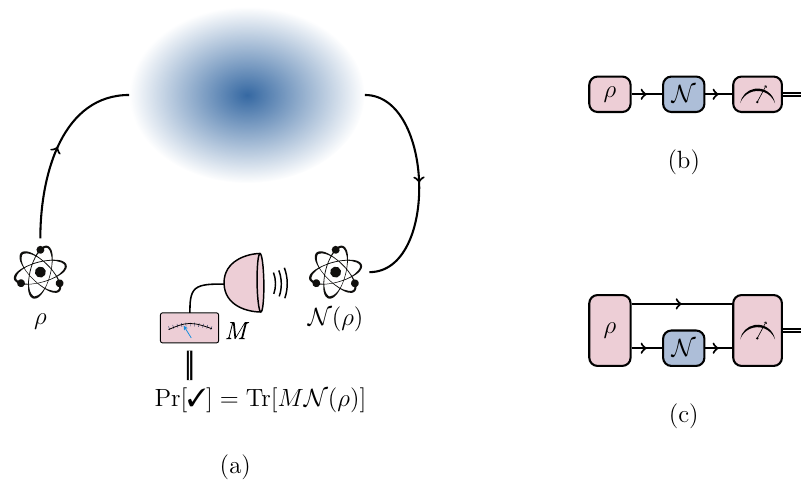}
    \caption{\textbf{Learning of quantum processes.} (a) To learn about the unknown evolution of a quantum system (symbolized by the blue shaded region and represented mathematically by the quantum channel $\mathcal{N}$), we prepare a probe quantum state $\rho$, let it evolve, and then measure it according to the POVM $\{M,\mathbbm{1}-M\}$. We encapsulate this process in the circuit diagram shown in (b). (c) More generally, we can prepare an entangled probe state of two systems, let only one of them evolve, and then jointly measure both systems. Our results apply to this more general class of tests, and also more generally to classes of multi-time quantum processes, in which the unknown evolution could be non-Markovian.}
    \label{fig:channel_tests}
\end{figure}

We consider the following channel learning task, see also Figure~\ref{fig:channel_tests}. Let $\mathcal{N}$ be an unknown quantum channel, describing the evolution of a quantum system. In order to learn the behavior of the channel, we can prepare our system in a state of our choice, let it evolve according to the channel $\mathcal{N}$, and then measure the output. The input state and measurement constitute a ``test'' for the quantum channel, and the probability of ``passing'' the test is given by $\Tr[M\mathcal{N}(\rho)]$, where $\rho$ is the input state and $M$ is the measurement operator corresponding to passing the test. Tests of this type are ubiquitous in real-world physical setups used to learn the dynamics of quantum systems, see, e.g., Refs.~\cite{chuang1997processtomography,mohseni2008processtomography,PhysRevLett.97.220407,Martinis,RandomSequences}. Our task is to accurately predict values of quantities of the form $f(x)=\Tr[M\mathcal{N}(\rho)]$ for test pairs $x=(\rho,M)$, and in this sense ``learn'' the evolution of the system.

In the typical setting of supervised learning used to achieve our task of interest, there is a \emph{training phase}, in which a set $\{(x,f(x))\}_x$ of pairs of tests and their passing probabilities is given to the learner, and the learner uses these tests to form a hypothesis for the unknown channel. This hypothesis should accurately predict passing probabilities on new, unseen tests, typically drawn from the same distribution that generated the tests during training. Learning algorithms in this setting are analyzed within the framework of PAC learning~\cite{valiant1984theory}.
In this work, we go beyond the setting of PAC learning. Instead of the data pairs $(x,f(x))$ being given to the learner in a batch, we suppose that they are given to the learner sequentially, one by one, and perhaps even in an adaptive and adversarial manner. The learner must produce a hypothesis for the unknown channel at every step, using which they estimate the passing probability of the test pair. Upon learning the true passing probability, they can update their hypothesis. The goal now is for the learner to devise a sequence of hypotheses such that, over time, they make few mistakes in their estimates; see Section~\ref{sec-framework} for a more formal description of this setting.

The framework of \textit{online learning}~\cite{littlestone1988learning} (see, e.g., Refs.~\cite[Chapter~21]{SBD14} and \cite[Chapter~8]{MRT18} for pedagogical introductions) has been developed precisely to address this arguably more realistic learning setting. After all, data will often be processed sequentially, and it only makes sense to update the hypothesis step by step. Indeed, the importance of online learning derives from its ability to describe scenarios in which data is presented to the learner sequentially and adaptively. Thereby, it removes assumptions on the data-generating process, such as the i.i.d.~assumption typical in PAC learning~\cite{valiant1984theory}. As such, it provides a more stringent type of learning compared to PAC learning. In fact, broadly speaking, it has been shown that online learning implies PAC learning~\cite{kearns1987recent, littlestone1989online, gretta2023improved,cesabianchi2001generalization, cesabianchi2004generalization}. Furthermore, if the learning algorithm is allowed unbounded computational resources, then something stronger holds: any concept class of Boolean functions is learnable in the online model if and only if is also learnable in the (distribution-free) PAC model~\cite{blum1994separating}.

While online learning of quantum states has been considered already~\cite{ACH+19,chen2022adaptive}, and it has also been lifted to shadow tomography of quantum states with adaptively chosen observables~\cite{Aar17,badescu2021improved}, the overwhelming majority of results on quantum process learning so far do not allow for accurate predictions based on an adaptive choice of the state-measurement pairs. To fill this gap, we initiate the study of online learning for quantum processes. 
We first show that general quantum channels cannot be online learned with subexponential regret or number of mistakes. However, a priori knowledge about the complexity or the structure of the unknown channel can make online learning feasible. Indeed, we identify two physically relevant classes of channels---efficiently implementable channels and Pauli channels---that can be online learned with regret and mistake bounds scaling polynomially in the system size. We extend these results to classes of more general multi-time processes, in particular quantum processes that are non-Markovian, establishing that they, too, can be online learned with regret and mistake bounds scaling polynomially in the system size.

\subsection{Statement of the problem}\label{sec-framework}
Consider an interaction between a learner and an adversary. At time step $t\in\mathbb{N}$, the adversary picks a state-measurement pair, $(\rho^{(t)}, M^{(t)})$, where $\rho^{(t)}$ is an input state and $M^{(t)}$ is an effect operator of a two-outcome POVM. More generally, we can allow for states and measurements with arbitrary auxiliary systems, as in Figure~\ref{fig:channel_tests}(c) (see \Cref{subsection:prob-statement-online-learning-qchannels}). The task for the learner is to predict $\Tr[M^{(t)}\mathcal{N}(\rho^{(t)})]$ for an unknown channel $\mathcal{N}$. 
To do so, the learner produces their own channel hypothesis, $\mathcal{N}^{(t)}$, and outputs $\Tr[M^{(t)}\mathcal{N}^{(t)}(\rho^{(t)})]$ as their prediction. The adversary then provides the learner with feedback on what would have been the correct expectation value\footnote{In general, the adversary may only provide an approximation to the true expectation value (see \Cref{sec-online_learning_basics}), but here we restrict our attention to the simpler version for ease of exposition in the introduction.}.
The goal of the learner is to ensure that their output values are not too far from correct in most rounds of the interaction. We can quantify the loss suffered by the learner at time step $t$ by the absolute difference between learner’s estimate of the expectation value and the correct expectation value, 
\begin{equation}\label{eq-loss_function_intro}
    \ell(\mathcal{N}^{(t)},\rho^{(t)},M^{(t)})\coloneqq \Abs{\Tr[M^{(t)}\mathcal{N}(\rho^{(t)})]-\Tr[M^{(t)}\mathcal{N}^{(t)}(\rho^{(t)})]}.
\end{equation}
The learner-adversary interaction proceeds for a total of $T\in\mathbb{N}$ rounds.

Now, how do we measure the learner's performance over the course of $T$ rounds of its interaction with the adversary? Note that due to the adversarial nature of the problem, statistical extrapolations are of little use. Indeed, as soon as the learner models the interaction using some probability distribution, the adversary can immediately change their strategy to make the learner fail. One way to gauge the learner’s performance is to compare their total loss at the end of the $T$ rounds of interaction with the loss that they would have incurred if they were allowed to make all predictions at the end of the $T$ rounds, after having seen all of the state-measurement pairs. We do this by considering the quantity
\begin{equation}
    R_T \coloneqq \sum_{t=1}^T  \ell (\mathcal{N}^{(t)}, \rho^{(t)}, M^{(t)}) -  \min_{\mathcal{N}}  \sum_{t=1}^T \ell(\mathcal{N}, \rho^{(t)}, M^{(t)}),
\end{equation}
where the minimization is over channels $\mathcal{N}$ from some class of interest. The larger this quantity, the more the learner would lament their choice of hypotheses $\mathcal{N}^{(t)}$ at the end of the $T$ rounds; hence, this quantity is called \textit{regret}. It is well-known~\cite{OCO_hazan_book} that any online learner suffers $\Omega(\sqrt{T})$ regret in general. We aim for online learners that saturate this lower bound and achieve a regret scaling as $\mathcal{O}(\sqrt{T \mathrm{poly}(\log D)})$, where $D$ is the dimension of the quantum system acted on by the channel.

Another intuitive way to evaluate whether the learner's performance is ``good'' is in terms of the number of rounds $t \in [T]$ in which the learner makes a mistake. By a ``mistake'', we mean that the learner's estimate of the expectation value is more than a given accuracy $\varepsilon$ away (in absolute-value distance) from the correct expectation value revealed by the adversary. In other words, the learner should minimize the number of rounds in which $\ell(\mathcal{N}^{(t)},\rho^{(t)},M^{(t)}) > \varepsilon$. More formally, we say that the learner makes an \textit{$\varepsilon$-mistake} in round $t$ if $\ell(\mathcal{N}^{(t)},\rho^{(t)},M^{(t)}) > \varepsilon$.
Viewed this way, the goal in online learning is to upper bound the number of $\varepsilon$-mistakes for any number $T$ of rounds and any adversarial/adaptive choice of state-measurement pairs presented to the learner. This is the so-called mistake-bounded model of Littlestone \cite{littlestone1988learning}. In our work, we more specifically are interested in online learners that incur a mistake bound scaling logarithmically with $D$ and inverse polynomially in the accuracy $\varepsilon$ for, ideally, a low-degree polynomial.

\subsection{Overview of the main results}\label{sec:overview_main_results}

For the task of online learning arbitrary $n$-qubit quantum states, it was shown in Ref.~\cite{ACH+19} that there exist procedures that make at most linearly-in-$n$ many mistakes. (Here, for ease of presentation, we consider $\varepsilon$ to be a constant, say $1/3$, and then simply speak of a mistake instead of an $\varepsilon$-mistake.) In terms of channel learning, this implies the same mistake bound for online learning arbitrary $n$-qubit state-preparation channels. However, as pointed out in Ref.~\cite[Footnote 18]{Aaronson07} and as we formalize further in \Cref{subsection:mistake-lower-bounds-general}, if the underlying concept class consists of all $n$-qubit unitary channels, then any (even computationally unbounded) online channel learner can be forced to make exponentially-in-$n$ many mistakes. 
Thus, in contrast to the case of quantum states, we have to consider restricted classes of channels to achieve online channel learning with a polynomial-in-$n$ number of $\varepsilon$-mistakes. 
In fact, it was left open in Ref.~\cite{Aaronson07} to find restricted classes of quantum channels for which this goal, which is an online version of ``pretty good process tomography'', can be realized. While recent years have seen some progress on the batch version of this task, see, e.g., Refs.~\cite{CHY16, caro2020pseudo, popescu2021learning, cai2022sample}, the online case remains open.

In this work, we answer the question of pretty good process tomography within the online learning framework for the following two concrete classes of channels (see Table~\ref{tab:summary-of-results} for a summary of our results):
\begin{enumerate}
    \item Channels that can be implemented by dissipative quantum circuits with a limited number of local gates. Channels in this class can be regarded as having limited complexity.
    \item Pauli channels or, more generally, mixtures of a fixed set of (potentially exponentially many) \textit{known} channels, each of which could have arbitrarily high gate complexity. This is a structural assumption on the channel.
\end{enumerate}

\begin{table}
\begin{center}
\begingroup
\setlength{\tabcolsep}{10pt} 
\renewcommand{\arraystretch}{1.5} 
\begin{tabular}{|c|c|c|c|}
\hline
Online learning
& $G$-gate channels 
& Pauli channels \\ 
\hline\hline
Regret bounds
& $\begin{matrix}
    \mathcal{O}\left(L\sqrt{T G\log(Gn)}\right)\\ \text{(\Cref{theorem:bounded-complexity-regret})}
\end{matrix}$
& $ \begin{matrix}
    \mathcal{O}\left(L \sqrt{T n} \right)\\ \text{(\Cref{theorem:pauli-regret})}
\end{matrix} $
\\
\hline
Mistake bounds
& $\begin{matrix}
    \mathcal{O}\left(\frac{L^2 G\log(Gn)}{\varepsilon^2}\right)\\
    \text{(\Cref{corollary:bounded-complexity-mistake})}
\end{matrix}$
& $\begin{matrix}
    \mathcal{O}\left(\frac{L^2 n}{\varepsilon^2}\right)\\
    \text{(\Cref{corollary:pauli-mistake-bound})}
\end{matrix}$
\\
& $\begin{matrix}
    \Omega(\min\{2^n, G\})\\
    \text{(\Cref{corollary:mistake-lower-bound-bounded-complexity,corollary:mistake-lower-bound-pauli-channels})}
\end{matrix}$
& $\begin{matrix}
    \Omega(n)\\
    \text{(\Cref{corollary:mistake-lower-bound-pauli-channels})}
\end{matrix}$ 
\\
\hline
$\begin{matrix}
    \textrm{Computational} \\
    \textrm{complexity}
\end{matrix}$
& $\begin{matrix}
    \omega(\mathrm{poly}(n)) \textrm{ already for } G=\mathcal{O}(n\mathrm{polylog}(n))\\
    \text{(\Cref{corollary:computational-hardness-bounded-complexity})}
\end{matrix}$
& $\begin{matrix}
    \tilde{\Omega}(4^n)\\
    \text{(\Cref{theorem:computational-hardness-pauli-channel-online-learning})}
\end{matrix}$\\
\hline
\end{tabular}
\endgroup
\caption{\textbf{Overview of our main results.} For channels of gate complexity $G$ as well as for $n$-qubit Pauli channels, we give regret and mistake upper bounds that scale favorably with $G$ and $n$, respectively. Our bounds hold for general loss functions with Lipschitz constant $L$. Additionally, complementary mistake lower bounds show that the dependencies on $G$ and $n$ are almost optimal. Finally, we give computational complexity lower bounds for online learning either of the two classes of channels with polynomially many mistakes.}
\label{tab:summary-of-results}
\end{center}
\end{table}

First, we show that there exists an online learner whose regret and number of mistakes can be controlled in terms of the gate complexity of the class of channels to be learned.

\begin{theorem}[Online learning channels of bounded complexity---informal]\label{informal-theorem:bounded-complexity-online-learning}
    The class of $n$-qubit channels that can be implemented by circuits consisting of $G$ arbitrary two-qubit channels can be online learned with regret bound $ \mathcal{O}\left(\sqrt{T G\log(Gn)}\right)$ and with $\varepsilon$-mistake bound $\mathcal{O}\left(\frac{G\log(Gn)}{\varepsilon^2}\right)$.
\end{theorem}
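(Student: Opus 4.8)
The plan is to reduce online learning over the continuous class of $G$-gate channels to online learning over a finite set of ``experts'' obtained by discretization, and then to invoke the generic multiplicative-weights guarantee for finitely many experts set up in \Cref{sec-online_learning_basics}. The entire difficulty is thereby concentrated in bounding the metric entropy (covering number) of the $G$-gate class: once its $\log$-covering number $\log\mathcal M$ is shown to be $\mathcal{O}(G\log(Gn))$, the regret bound $\mathcal{O}(\sqrt{T\log\mathcal M})$ and the $\varepsilon$-mistake bound $\mathcal{O}(\log\mathcal M/\varepsilon^2)$ of the standard machinery immediately yield the claimed bounds.

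First I would fix the correct metric. Since each test reads out $\Tr[M\mathcal N(\rho)]$, and since for any effect $0\le M\le\mathbbm 1$ and any (possibly ancilla-assisted) input one has $\Abs{\Tr[M\mathcal N(\rho)]-\Tr[M\mathcal N'(\rho)]}\le\norm{\mathcal N-\mathcal N'}_\diamond$, the learner's prediction is $1$-Lipschitz in the hypothesis with respect to the diamond norm; for a general loss of Lipschitz constant $L$ this becomes $L$-Lipschitz, accounting for the factor $L$ in Table~\ref{tab:summary-of-results}. Hence it suffices to cover the $G$-gate class in diamond norm: a $\delta'$-net induces a set of experts whose predictions are within $\delta'$ of the truth uniformly over all tests.

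The core estimate is the covering number. I would write a generic element of the class as $\mathcal N=\mathcal G_G\circ\cdots\circ\mathcal G_1$, where each $\mathcal G_j$ is a two-qubit channel placed on one of the $\binom{n}{2}$ qubit pairs. Two ingredients control the net size. (i) The two-qubit channels form a compact convex set inside a real vector space of fixed dimension (at most $4^4$), so its $\delta$-covering number in diamond norm is at most $(C_0/\delta)^{\mathcal{O}(1)}$ for an absolute constant $C_0$. (ii) Diamond distance is subadditive under composition: replacing each $\mathcal G_j$ by a net point $\widetilde{\mathcal G}_j$ with $\norm{\mathcal G_j-\widetilde{\mathcal G}_j}_\diamond\le\delta$ gives, via the telescoping identity $\mathcal N-\widetilde{\mathcal N}=\sum_j\widetilde{\mathcal G}_G\circ\cdots\circ\widetilde{\mathcal G}_{j+1}\circ(\mathcal G_j-\widetilde{\mathcal G}_j)\circ\mathcal G_{j-1}\circ\cdots\circ\mathcal G_1$ together with submultiplicativity of the diamond norm under composition and $\norm{\mathcal G}_\diamond=1$ for channels, the bound $\norm{\mathcal N-\widetilde{\mathcal N}}_\diamond\le G\delta$. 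Choosing the per-gate precision $\delta=\delta'/G$ and multiplying the placement choices $\binom{n}{2}^{G}$ by the per-gate net sizes $(C_0G/\delta')^{\mathcal{O}(G)}$ yields $\log\mathcal M(\delta')=\mathcal{O}\!\big(G\log(Gn/\delta')\big)$.

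Finally I would plug in. For the mistake bound, set $\delta'=\varepsilon/2$, so that every $G$-gate channel---in particular the true one---is $(\varepsilon/2)$-approximated by an expert and $\log\mathcal M=\mathcal{O}(G\log(Gn))$; the realizable multiplicative-weights mistake bound of \Cref{sec-online_learning_basics} then gives $\mathcal{O}(G\log(Gn)/\varepsilon^2)$ $\varepsilon$-mistakes. For the regret bound, the same net with $\log\mathcal M=\mathcal{O}(G\log(Gn))$ fed into the standard $\mathcal{O}(\sqrt{T\log\mathcal M})$ guarantee gives $\mathcal{O}(\sqrt{TG\log(Gn)})$, the residual discretization error being absorbed by choosing $\delta'$ inverse-polynomial in $T$. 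The main obstacle is the covering-number step---specifically the composition-Lipschitz property (ii), which is exactly what converts ``bounded gate complexity'' into polynomially small metric entropy and thereby evades the $\Omega(2^n)$ lower bound that holds for unstructured channels; the reduction to finite experts and the weights analysis are then routine given \Cref{sec-online_learning_basics}.
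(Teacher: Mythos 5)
Your covering-number step is exactly the paper's \Cref{lemma:covering-number-gate-complexity}: decompose $\mathcal N$ into $G$ two-qubit channels, cover the set of two-qubit channels (a fixed-dimensional body) at scale $\delta'/G$, account for the $\binom{n}{2}^G$ placements, and use subadditivity of the diamond norm under composition via the telescoping identity. The Lipschitz reduction from diamond-norm distance to prediction error is likewise the content of \Cref{corollary:sequential-covering-number-gate-complexity}. Where you diverge is in how the metric entropy is converted into a regret bound: you discretize once at a single scale $\delta'$ and run multiplicative weights over the resulting finite set of experts, whereas the paper feeds the covering numbers \emph{at all scales} into the sequential-covering (chaining) regret bound of Rakhlin et al.\ (\Cref{theorem:regret-bound-sequential-covering}), i.e.\ a Dudley-type integral $\int_0^1\sqrt{\log N_T(\mathcal F,\beta,2)}\,\mathrm{d}\beta$. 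Your route is more elementary and more constructive (the paper's \Cref{theorem:regret-bound-sequential-covering} is a non-constructive minimax guarantee), but note that the averaged MWU prediction is a convex mixture of $G$-gate channels and hence an \emph{improper} hypothesis for $\mathsf{CPTP}_{n,G}$, whereas \Cref{problem:online-learning-classes-of-channels} and \Cref{theorem:regret-bound-sequential-covering} are phrased for proper learners.

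The quantitative issue is that a single-scale net cannot quite deliver the stated bounds. With $\log\mathcal M(\delta')=\mathcal{O}(G\log(Gn/\delta'))$, balancing the MWU regret $\mathcal{O}(\sqrt{T\log\mathcal M(\delta')})$ against the discretization drift $LT\delta'$ forces $\delta'=\mathcal{O}(1/\mathrm{poly}(T))$ and hence a regret of $\mathcal{O}\bigl(L\sqrt{TG\log(GnT)}\bigr)$, carrying an extra $\sqrt{\log T}$; similarly, fixing $\delta'=\varepsilon/2$ gives a mistake bound of $\mathcal{O}\bigl(G\log(Gn/\varepsilon)/\varepsilon^2\bigr)$, carrying an extra $\log(1/\varepsilon)$. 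The chaining integral is precisely what removes these: $\int_0^1\sqrt{\log(6G/\beta)}\,\mathrm{d}\beta\leq\sqrt{\log(6G)}+\sqrt{\pi}/2$, so the scale-dependence of the entropy contributes only an additive constant rather than a $\log(1/\delta')$ inside the square root. If you want the bounds exactly as stated, you either need the sequential-covering machinery or a multi-scale refinement of your expert construction (which essentially reconstructs chaining); as written, your argument establishes the theorem only up to these logarithmic factors.
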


\Cref{informal-theorem:bounded-complexity-online-learning} extends beyond the absolute-value loss function from \Cref{eq-loss_function_intro} to more general loss functions, in which case our bounds depend also on their Lipschitz constant $L$.
General channels require exponentially many gates to be implemented by a $2$-local circuit, so \Cref{informal-theorem:bounded-complexity-online-learning} does not provide useful guarantees in this case. 
However, if we focus on the physically relevant class of channels with a polynomial gate complexity, then \Cref{informal-theorem:bounded-complexity-online-learning} gives polynomial regret and mistake bounds.
Additionally, we show with an $\mathcal{O}(1)$-mistake lower bound of $\Omega(\min\{2^n, \sqrt{G}\})$ for general $G$ and of $\Omega(G)$ for $G\leq n$ that the $G$-dependence in the online learning guarantees of \Cref{informal-theorem:bounded-complexity-online-learning} cannot be significantly improved (\Cref{corollary:mistake-lower-bound-bounded-complexity}).

Second, we prove regret and mistake bounds for Pauli channel online learning that scale efficiently in the number of qubits. Pauli channels play an important role in quantum information theory, specifically in the field of quantum computation, where Pauli channel noise either naturally emerges or is achievable via group twirls~\cite{wallman2016randomizedcompiling}. Thus, the following result establishes adaptive/online learnability of an important class of quantum noise channels.

\begin{theorem}[Online learning Pauli channels---informal]\label{informal-theorem:pauli-channel-online-learning}
    The class of $n$-qubit Pauli channels can be online learned with regret bound $\mathcal{O}\left(\sqrt{T n} \right)$ and with $\varepsilon$-mistake bound $\mathcal{O}\left(\frac{n}{\varepsilon^2}\right)$.
\end{theorem}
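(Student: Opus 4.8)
The plan is to exploit the defining structure of Pauli channels to reduce the task to online convex optimization over a probability simplex, where the multiplicative weights update method yields both bounds.

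First I would record the key structural fact: every $n$-qubit Pauli channel is a probabilistic mixture of the $N \coloneqq 4^n$ \emph{fixed and known} conjugation channels $\mathcal{P}_a(\cdot) \coloneqq P_a (\cdot) P_a$, indexed by the Pauli strings $P_a$, with mixing distribution $p = (p_a)_a$ in the probability simplex $\Delta_N$ equal to the vector of Pauli error probabilities. Consequently, for any test $(\rho, M)$ the passing probability is \emph{linear} in $p$:
\begin{equation}
    \Tr[M\,\mathcal{N}(\rho)] = \sum_{a} p_a \, \Tr[M\, P_a \rho P_a] = \langle p, w\rangle, \qquad w_a \coloneqq \Tr[M\, P_a \rho P_a].
\end{equation}
The crucial point is that each coefficient $w_a$ is itself a passing probability for the fixed channel $\mathcal{P}_a$, so $w_a \in [0,1]$ and hence $\norm{w}_\infty \le 1$, \emph{uniformly in} $n$. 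This boundedness of the gradient is exactly what keeps the eventual bounds from picking up any dimension factor beyond the entropy of the simplex.

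Second, with the learner's hypotheses parametrized by distributions $p^{(t)} \in \Delta_N$ and predictions $\langle p^{(t)}, w^{(t)}\rangle$, the per-round loss $p \mapsto \ell(\mathcal{N}_p, \rho^{(t)}, M^{(t)})$ is an $L$-Lipschitz convex function of $\langle p, w^{(t)}\rangle$, hence convex in $p$ with subgradients $g^{(t)}$ obeying $\norm{g^{(t)}}_\infty \le L\,\norm{w^{(t)}}_\infty \le L$. I would therefore run online mirror descent over $\Delta_N$ with the negative-entropy regularizer---equivalently, the multiplicative weights / Hedge update $p^{(t+1)}_a \propto p^{(t)}_a \exp(-\eta\, g^{(t)}_a)$ initialized at the uniform distribution. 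The standard regret guarantee for entropic mirror descent \cite{OCO_hazan_book}, with step size $\eta \asymp \sqrt{\log N/(L^2 T)}$, gives
\begin{equation}
    R_T \;\le\; O\!\left(L\sqrt{T \log N}\right) \;=\; O\!\left(L\sqrt{T n}\right),
\end{equation}
using $\log N = \log(4^n) = 2n\log 2 = O(n)$. Since every Pauli channel is some $\mathcal{N}_{p^*}$ with $p^* \in \Delta_N$, the comparator $\min_{\mathcal{N}}$ ranges over exactly the set we optimize over, so this matches the claimed regret bound with $L=1$ for the absolute-value loss.

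Third, for the mistake bound I would use the mistake-driven (conservative) variant of the same update, which modifies the weights only on rounds where an $\varepsilon$-mistake occurs and otherwise leaves them unchanged. Tracking the relative-entropy potential $\Phi^{(t)} \coloneqq D(p^* \,\|\, p^{(t)})$ to the true distribution $p^*$, the uniform initialization gives $\Phi^{(1)} \le \log N = O(n)$, and $\Phi^{(t)} \ge 0$ throughout. On an update round, convexity together with realizability ($\ell^{(t)}(p^*)=0$) and $\norm{g^{(t)}}_\infty \le L$ yield, via the standard mirror-descent inequality, a potential drop of at least $\eta\varepsilon - \tfrac{1}{2}\eta^2 L^2$; choosing $\eta \asymp \varepsilon/L^2$ makes this $\Omega(\varepsilon^2/L^2)$. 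As the potential is unchanged on non-mistake rounds and cannot fall below zero, the number of $\varepsilon$-mistakes is at most $\Phi^{(1)}/\Omega(\varepsilon^2/L^2) = O(L^2 n/\varepsilon^2)$, as claimed.

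The main obstacle is not the optimization machinery, which is entirely standard once the problem is linearized, but the clean reduction itself: verifying that (i) the convex body of Pauli channels is \emph{exactly} the simplex $\Delta_{4^n}$ under the mixture parametrization, so the comparator class is captured without slack, and (ii) the coefficient vector $w$ is bounded in $\ell_\infty$ by a constant independent of $n$. Both follow from the Pauli-conjugation structure, but they are precisely what lets an exponentially large expert set ($N = 4^n$) produce bounds scaling only as $\sqrt{n}$ and $n$. I emphasize that the argument is agnostic to computational cost---the weights live in $\mathbb{R}^{4^n}$---which is consistent with, and complementary to, the $\tilde{\Omega}(4^n)$ computational hardness recorded in the table; the theorem is an information-theoretic (mistake/regret) guarantee. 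Finally, the same scheme applies verbatim to probabilistic mixtures of any fixed set of $N$ known channels, with $n$ replaced by $\log N$, which is the more general statement from which the Pauli case follows as a special instance.
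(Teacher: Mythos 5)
Your proposal is correct and, for the regret bound, follows essentially the same route as the paper: linearize the convex loss via its derivative, observe that $\Tr[M\mathcal{N}(\rho)]=\langle p,w\rangle$ with $w_a=\Tr[M P_a\rho P_a]\in[0,1]$, and run exponential weights over $\Delta_{4^n}$ to get regret $\mathcal{O}(L\sqrt{T\log 4^n})$; the paper uses the Arora--Hazan--Kale $(1-\eta m)$ update rather than the Hedge update, which differs only in constants, and it phrases the coefficients as $\Tr[E^{(t)}\Gamma^{\vec{z},\vec{x}}]$ so that the same $[0,1]$ bound covers tests with a reference system. The one genuine difference is the mistake bound: the paper extracts it generically from the regret bound via \Cref{lemma:regret-to-mistake-template}, whereas you give a direct relative-entropy potential argument for the mistake-driven variant; both are standard and yield $\mathcal{O}(L^2 n/\varepsilon^2)$, though note that in the paper's realizable setting the comparator loss is only guaranteed to be $\leq\varepsilon/3$ rather than $0$, so your per-mistake potential drop should be $\eta\cdot 2\varepsilon/3-\tfrac{1}{2}\eta^2L^2$, which changes nothing beyond constants.
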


The linear-in-$n$ scaling in the $\varepsilon$-mistake bound is optimal for constant $\varepsilon$ (\Cref{corollary:mistake-lower-bound-pauli-channels}). Moreover, establishing a connection to important notions from classical learning theory, we demonstrate that \Cref{informal-theorem:pauli-channel-online-learning} yields bounds on the (sequential) fat-shattering dimension of Pauli channels, and gives rise to a sample compression scheme for Pauli channels (\Cref{section:learning-theory-implications}).

\Cref{informal-theorem:bounded-complexity-online-learning,informal-theorem:pauli-channel-online-learning} give favorably scaling regret and mistake bounds. 
However, the respective online learning procedures are computationally inefficient. 
We show that, under reasonable cryptographic assumptions, this is unavoidable when aiming for good regret and mistake bounds in these online learning problems:

\begin{theorem}[Computational lower bounds for online learning---informal]\label{informal-theorem:computational-hardness}
    On the one hand, any online learner that makes at most $\mathcal{O}(\mathrm{poly}(n))$ many $(1/3)$-mistakes in online learning $n$-qubit Pauli channels has to use runtime exponential in $n$.
    On the other hand, assuming that \textsf{RingLWE} cannot be solved by classical polynomial-time algorithms, then already for $G=\mathcal{O}(n\mathrm{polylog}(n))$ there is no polynomial-time online learner that makes at most $\mathcal{O}(\mathrm{poly}(n))$ many $(1/3)$-mistakes in online learning $G$-gate $n$-qubit quantum channels.
\end{theorem}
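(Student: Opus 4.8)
The plan is to prove both statements by contradiction: in each case I would exhibit, inside the relevant channel class, a family that \emph{encodes} instances of a classically intractable problem, and then argue that an online learner with a $\mathrm{poly}(n)$ bound on its number of $(1/3)$-mistakes running in polynomial time per round could be turned into an efficient solver. Since \Cref{informal-theorem:pauli-channel-online-learning,informal-theorem:bounded-complexity-online-learning} already guarantee that a $\mathrm{poly}(n)$ mistake bound is achievable information-theoretically, the content of these lower bounds is that the obstruction is purely \emph{computational}. The common conversion engine is the defining property of a mistake bound: on all but at most $\mathrm{poly}(n)$ of the rounds the prediction $\Tr[M^{(t)}\mathcal{N}^{(t)}(\rho^{(t)})]$ is $(1/3)$-accurate, so that, with the test pairs supplied by the reduction, the learner effectively acts as an efficient predictor for the test values of the hidden channel. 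It then remains to design the test pairs $(\rho^{(t)},M^{(t)})$ so that these predicted values reveal the secret defining the hard instance.

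For the unconditional statement about Pauli channels, I would exploit the correspondence between $n$-qubit Pauli channels and probability distributions $\{p_a\}_{a\in\mathbb{Z}_2^{2n}}$ over the $4^n$ Pauli operators, together with the fact that preparing Pauli eigenstates lets a single test pair read off an individual Pauli eigenvalue $\lambda_b=\sum_a(-1)^{\langle a,b\rangle}p_a$. I would encode a hidden-string or search instance into the support of $\{p_a\}$ so that recovering the eigenvalue pattern is equivalent to solving it, while the feedback discloses the secret only through these eigenvalue reads. Because the learner accesses the hidden channel only as a black box, a direct adversary/hybrid argument across the $4^n$ Pauli labels should force $\tilde\Omega(4^n)$ computational steps, matching the parameter count of the channel. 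The step I expect to be the main obstacle is making the bound genuinely \emph{unconditional}: the encoding must avoid structures—such as clean parities, which Gaussian elimination recovers efficiently—whose intractability is only conjectural, so that the induced solver provably requires exponentially many steps in the black-box access model. For this reason I would keep the Pauli argument as a direct query lower bound, rather than routing it through an online-to-batch reduction whose target prediction problem might be easy.

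For the statement conditional on the classical hardness of \textsf{RingLWE}, I would instead instantiate a classically secure pseudorandom function $f_k$ whose reversible implementation, and hence the $n$-qubit channel $\mathcal{N}_k$ computing $f_k$ in the computational basis, uses only $G=\mathcal{O}(n\,\mathrm{polylog}(n))$ two-qubit gates. Feeding independent uniformly random inputs $\rho^{(t)}=\ketbra{x^{(t)}}{x^{(t)}}$ together with a computational-basis effect $M^{(t)}$ makes $\Tr[M^{(t)}\mathcal{N}_k(\rho^{(t)})]$ equal to an indicator of a bit of $f_k(x^{(t)})$; a polynomial-time learner with a $\mathrm{poly}(n)$ mistake bound would then, via the standard online-to-batch conversion, yield an efficient predictor of $f_k$ on fresh random inputs with nontrivial advantage, contradicting the pseudorandomness that \textsf{RingLWE} guarantees against classical polynomial-time adversaries. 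Here the crux is the \emph{gate-efficiency} of the primitive: one must invoke or build a \textsf{RingLWE}-based pseudorandom function implementable in $\mathcal{O}(n\,\mathrm{polylog}(n))$ gates so that the hard instances genuinely lie in the $G$-gate channel class at this small $G$; granting such a primitive, the online-to-batch step and the security contradiction are routine.
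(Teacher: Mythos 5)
Your treatment of the conditional statement (\textsf{RingLWE} and $G$-gate channels) is essentially the paper's argument: the paper also instantiates a \textsf{RingLWE}-based PRF (via Ref.~\cite{banerjee2012pseudorandom}), notes it lies in $\mathsf{TC}^0$ and hence is implementable by $\mathcal{O}(n\,\mathrm{polylog}(n))$-size circuits, embeds it as $U_f\ket{x}\ket{b}=\ket{x}\ket{b\oplus f(x)}$ with computational-basis tests, and derives a contradiction with pseudorandomness. The paper's formalization (\Cref{theorem:no-efficient-online-learning-prfs}) builds the distinguisher by simulating the online game and counting mistakes, using the sequential covering number of $\mathsf{CPTP}_{n,G}$ plus a Chernoff bound to show any \emph{proper} learner must err on roughly half the rounds against a truly random function; your ``online-to-batch, predictor with advantage'' phrasing is a standard equivalent route, though you would still need the properness/covering step (or something like it) to control what an arbitrary learner can do against a random function.

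The genuine gap is in the unconditional Pauli-channel part, and you have correctly identified where it is but not how to close it. Your plan --- encode a hidden-string/search instance into the error-rate distribution and prove a query lower bound ``in the black-box access model'' --- does not match the access model of the online learning game: the learner never queries the channel as an oracle; it receives a classical description of each test operator and then the true value as feedback, so there is no black-box query complexity to lower-bound, and any reduction to a \emph{computational} search problem would at best yield a conditional statement (which you acknowledge). The paper's \Cref{theorem:computational-hardness-pauli-channel-online-learning} obtains the unconditional bound by an entirely different and much more elementary mechanism: the challenge, expanded in the Bell basis, is a $4^n$-entry vector $\Vec{e}^{(t)}$, and a learner running in time $q^{(t)}(n)$ can inspect only $q^{(t)}(n)$ of its entries. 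The adversary presents the all-zeros challenge every round, always reports $\neg y_t$ as the correct answer, and at the end exhibits a consistent Pauli error-rate vector $\Vec{p}^\ast$ together with challenge vectors $\tilde{\Vec{e}}^{(t)}$ (each realizable as $\Gamma_{A,B}^{\Vec{z}',\Vec{x}'}/4^n$) whose single nonzero entry sits at a coordinate the learner never read. Consistency is maintainable as long as the total number of entries read is at most $4^n-1$, which forces $(4^n-1)/Q$ mistakes. So the obstruction is the exponential \emph{description length} of the challenges, not the hardness of any underlying search problem; your route, as sketched, would not produce an unconditional bound.
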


While the computational inefficiency of our online learning procedures is of course undesirable, \Cref{informal-theorem:computational-hardness} shows that this is not a flaw of our specific procedures. Rather, it is simply not possible to computationally efficiently learn Pauli channels with a good mistake bound. And given that \textsf{RingLWE} is widely believed to be hard \cite{regev2009lattices, ananth2023revocable, aggarwal2023lattice}, then as soon as $G$ scales only slightly superlinearly in $n$, we also do not expect there to be any computationally efficient online learners that achieve a good regret for the class of $G$-gate channels. 

A problem related to online learning quantum processes is \textit{shadow tomography} of quantum processes. By analogy with shadow tomography of quantum states~\cite{Aar17,badescu2021improved}, shadow tomography of quantum processes is a stricter form of learning than mistake-bounded online learning: while in the online learning task the number of mistakes should be bounded, in shadow tomography one has to correctly estimate (i.e., with error $\varepsilon$) the expectation values of all the $M$ observables provided, with probability at least $1-\delta$. (See \Cref{problem:shadow-tomography} for a formal statement of the problem.) Furthermore, in shadow tomography, only quantum access to the channel is provided, while classical descriptions of tests and their passing probabilities (with respect to the unknown channel) are provided in online learning. The main observation underlying our proof of \Cref{informal-theorem:pauli-channel-online-learning} implies that techniques from classical adaptive data analysis~\cite{dwork2015adaptivedataanalysis, bassily2016adaptivedataanalysis} directly carry over to Pauli channel shadow tomography. In particular, we obtain the following result.

\begin{theorem}[Pauli channel shadow tomography---informal]\label{informal-corollary:pauli-channel-shadow-tomography}
    Shadow tomography of an arbitrary $n$-qubit Pauli channel can be solved using 
    \begin{equation}\label{eq-shadow_tomography_pauli_main}
        k = \mathcal{O}\left(\frac{\sqrt{n}\log (M) \log^{3/2}((\varepsilon \delta)^{-1}) }{\varepsilon^3}\right)
    \end{equation}
    copies of the channel. The strategy runs in time $\mathrm{poly}(4^n, k)$ per channel use.
\end{theorem}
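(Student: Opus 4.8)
The plan is to reduce Pauli channel shadow tomography to the problem of answering adaptively chosen bounded statistical queries against a single fixed classical distribution, and then to invoke the private-multiplicative-weights machinery from classical adaptive data analysis. The whole argument hinges on the structural observation already driving the online-learning result: that every test expectation value is a bounded statistical query against the Pauli error distribution.

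First I would exploit the structure of Pauli channels. An $n$-qubit Pauli channel $\mathcal{N}(\cdot)=\sum_{a} p_a P_a(\cdot)P_a$ is completely specified by its error distribution $p=\{p_a\}_a$, a probability distribution over the universe $\mathcal{X}=\{I,X,Y,Z\}^n$ of size $\abs{\mathcal{X}}=4^n$. The key point is that each test query is a bounded statistical query against $p$: for a test $x=(\rho,M)$ with $0\le M\le\mathbbm{1}$,
\[
\Tr[M\mathcal{N}(\rho)]=\sum_a p_a\,\Tr[M P_a\rho P_a]=\mathbb{E}_{a\sim p}[\phi_x(a)],\qquad \phi_x(a)\coloneqq\Tr[M P_a\rho P_a]\in[0,1],
\]
where the containment holds because $P_a\rho P_a$ is again a density operator and $0\le M\le\mathbbm{1}$. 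Thus every observable we must estimate corresponds to a query function $\phi_x:\mathcal{X}\to[0,1]$, and estimating it to accuracy $\varepsilon$ is exactly estimating its population mean under $p$. The generalization to tests with auxiliary systems, as in \Cref{fig:channel_tests}(c), proceeds identically, since the conjugated input state stays a valid state.

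Second I would obtain the required samples from $p$ by Bell sampling. Preparing the maximally entangled state $\ket{\Phi^+}=2^{-n/2}\sum_i\ket{i}\ket{i}$, applying $\mathcal{N}\otimes\id$ to form the Choi state, and measuring in the Bell basis yields outcome $a$ with probability exactly $p_a$, because the Choi state of a Pauli channel is diagonal in the Bell basis with eigenvalues $p_a$. Hence $k$ uses of the channel produce $k$ i.i.d.\ samples $a_1,\dots,a_k\sim p$, which constitute the dataset against which all queries are answered. Third, and at the crux, I would feed this dataset and the stream of (possibly adaptively chosen) queries $\phi_{x_1},\dots,\phi_{x_M}$ into the private multiplicative weights mechanism and invoke the generalization (transfer) theorem of adaptive data analysis. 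Because the universe has size $4^n$, the mechanism maintains a multiplicative-weights hypothesis distribution over $\mathcal{X}$ and updates only when a query is poorly answered by the current hypothesis; the sample complexity for answering $M$ adaptive $[0,1]$-valued queries to accuracy $\varepsilon$ with confidence $1-\delta$ then scales as $\sqrt{\log\abs{\mathcal{X}}}=\mathcal{O}(\sqrt{n})$ in the universe size and only as $\log M$ in the number of queries. Here the on-sample accuracy contributes a factor $\varepsilon^{-2}$ and the differential-privacy level required by the transfer theorem contributes a further $\varepsilon^{-1}$, combining to $\varepsilon^{-3}$, while setting the failure probability $\delta_{\mathrm{dp}}\sim\varepsilon\delta$ produces the $\log^{3/2}((\varepsilon\delta)^{-1})$ factor. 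Substituting $\log\abs{\mathcal{X}}=2n\log 2$ yields the claimed bound on $k$, and the runtime per channel use is dominated by the multiplicative-weights update, which touches all $4^n$ entries of the hypothesis distribution, giving $\mathrm{poly}(4^n,k)$.

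The main obstacle I anticipate is not the reduction but importing the adaptive data analysis guarantee with exactly the right parameter dependence. I must verify that the $[0,1]$-valued $\phi_x$ are precisely the bounded (low-sensitivity) queries handled by the private-multiplicative-weights transfer theorem, carefully track how the accuracy/privacy trade-off converts into the $\varepsilon^{-3}$ and $\log^{3/2}((\varepsilon\delta)^{-1})$ factors rather than a cleaner $\varepsilon^{-2}$, and confirm that the adaptivity of the observer's choices is fully absorbed by the mechanism's generalization guarantee rather than requiring a union bound, which would spoil the favorable $\log M$ scaling. A secondary point to check is that the Bell-sampling outcome distribution is exactly $p$ with no auxiliary-system artifacts, so that the abstract adaptive-data-analysis dataset is genuinely i.i.d.\ as the transfer theorem demands.
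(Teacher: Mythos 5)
Your proposal is correct and follows essentially the same route as the paper: recognize each test expectation value as a bounded statistical query $\mathbb{E}_{a\sim p}[\phi_x(a)]$ against the Pauli error-rate distribution, obtain i.i.d.\ samples of that distribution by Bell sampling the Choi state, and invoke the adaptive data analysis transfer theorem (the paper cites Ref.~\cite[Corollary~6.3]{bassily2016adaptivedataanalysis}, which is exactly the private-multiplicative-weights mechanism you describe) to get the $\sqrt{n}\log(M)\varepsilon^{-3}$ scaling. Your extra bookkeeping on where the $\varepsilon^{-3}$ and $\log^{3/2}((\varepsilon\delta)^{-1})$ factors originate is consistent with the cited result, so there is no gap.
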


We leverage \Cref{informal-corollary:pauli-channel-shadow-tomography} to make a more general statement about shadow tomography of arbitrary channels. In particular, we show in \Cref{corollary:shadow_tomography_channels} that we can solve the shadow tomography problem for an arbitrary quantum channel $\mathcal{N}$ with a number of copies scaling as in \eqref{eq-shadow_tomography_pauli_main}, with $\varepsilon$ therein replaced by $\varepsilon-\frac{1}{2}\norm{\mathcal{N}-\mathcal{N}^{\mathsf{P}}}_{\diamond}$, for all $\varepsilon>\frac{1}{2}\norm{\mathcal{N}-\mathcal{N}^{\mathsf{P}}}_{\diamond}$. Here $\mathcal{N}^{\mathsf{P}}$ is the Pauli-twirled version of $\mathcal{N}$.

\subsection{Extensions of our results}

\begin{figure}
    \centering
    \includegraphics[scale=1]{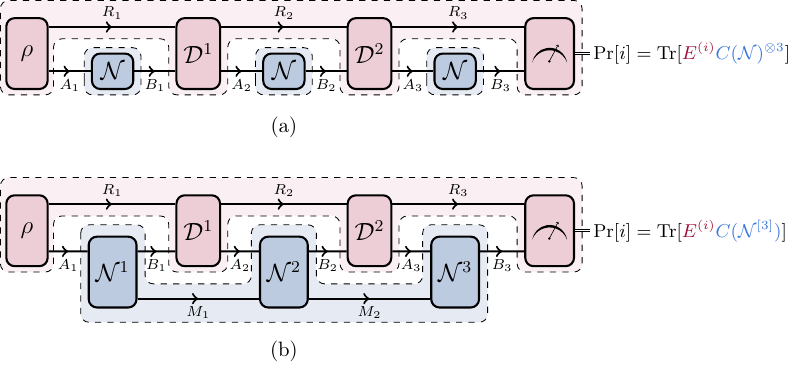}
    \caption{\textbf{Extensions of our results to general quantum processes.} (a) Going beyond one use of a channel $\mathcal{N}$, as shown in Figure~\ref{fig:channel_tests}, we may want to learn the value of the channel on tests that make multiple, adaptive uses of the channel. Shown are three independent uses of $\mathcal{N}$, whose Choi representation is $C(\mathcal{N})$. (b) We can similarly perform adaptive tests of a non-Markovian process $\mathcal{N}^{[3]}$, characterized by the blue quantum comb, with Choi representation $C(\mathcal{N}^{[3]})$. The generalized Born rule~\cite{CDP09} tells us that the outcome probabilities of measurements, or ``tests'', of quantum channels and multi-time quantum processes can be determined by an analogue of the usual Born rule for quantum states, in which the Choi representation takes the place of the quantum state, and the test is characterized by operators $E^{(i)}$ that are generalizations of effect operators for quantum states. (See Section~\ref{sec-basics_quantum_info} for details.)
    }\label{fig-channel_test_gen}
\end{figure}

Many of the techniques underlying our main results in Section~\ref{sec:overview_main_results} can be applied to more general settings, going beyond $G$-gate channels and Pauli channels. We now summarize these extensions.

\paragraph*{Convex mixtures of known channels.} While \Cref{informal-theorem:pauli-channel-online-learning} is phrased for Pauli channels, we in fact show the following more general statement: If we consider a class of channels that can be written as probabilistic mixtures of a fixed set of $K$ known quantum channels, then we can achieve online learning for this class with regret and number of mistakes bounded in terms of $\log (K)$. Notably, these bounds apply even if channels with high circuit complexity occur in the mixtures.

\begin{theorem}[Online learning convex mixtures of known channels---informal]\label{informal-theorem:convex-mixture-channel-online-learning}
    Given an arbitrary set of $K>0$ known and fixed quantum channels, any convex mixture of these channels can be online learned with regret bound $\mathcal{O}\left(\sqrt{T \log K}\right)$ and $\varepsilon$-mistake bound $\mathcal{O}(\frac{\log(K)}{\varepsilon^2})$.
\end{theorem}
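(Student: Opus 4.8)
The plan is to reduce the problem to the classical task of prediction with expert advice over the probability simplex, and then to run the multiplicative weights update (MWU) / exponential-weights algorithm. Write the unknown channel as $\mathcal{N} = \sum_{i=1}^{K} p_i \mathcal{M}_i$ for the fixed, known family $\{\mathcal{M}_i\}_{i=1}^K$ and an unknown mixing vector $p$ in the probability simplex $\Delta_K$. The crucial observation is that, because the $\mathcal{M}_i$ are known and the test $(\rho^{(t)}, M^{(t)})$ is revealed in round $t$, the learner can explicitly compute the numbers $a_i^{(t)} \coloneqq \Tr[M^{(t)} \mathcal{M}_i(\rho^{(t)})] \in [0,1]$. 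The true passing probability is then \emph{linear} in the mixing weights, $\Tr[M^{(t)}\mathcal{N}(\rho^{(t)})] = \sum_i p_i a_i^{(t)} = \langle p, a^{(t)}\rangle$, and the adversary's feedback is exactly $b^{(t)} = \langle p, a^{(t)}\rangle$. The same reduction goes through verbatim for tests with auxiliary systems, or more generally for multi-time comb tests, since those remain linear in the channel and bounded in $[0,1]$. The learner's per-round loss is thus the convex, $L$-Lipschitz function $q \mapsto \ell_t(q) \coloneqq \mathcal{L}(\langle q, a^{(t)}\rangle, b^{(t)})$ of a candidate mixing vector $q \in \Delta_K$, so the whole task is an instance of online convex optimization over $\Delta_K$.

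For the regret bound I would run exponential weights: maintain weights $w_i^{(t)}$, set $q^{(t)} = w^{(t)}/\lVert w^{(t)}\rVert_1$, predict $\langle q^{(t)}, a^{(t)}\rangle$, and upon receiving $b^{(t)}$ update $w_i^{(t+1)} = w_i^{(t)} \exp(-\eta\, g_i^{(t)})$, where $g^{(t)}$ is the subgradient of $\ell_t$ at $q^{(t)}$, namely $g^{(t)} = \mathcal{L}'(\langle q^{(t)}, a^{(t)}\rangle, b^{(t)})\, a^{(t)}$. Since $\lVert a^{(t)}\rVert_\infty \le 1$ and $\mathcal{L}$ is $L$-Lipschitz, the gradients satisfy $\lVert g^{(t)}\rVert_\infty \le L$. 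This is precisely online mirror descent with the negative-entropy regularizer, whose standard analysis gives, for $\eta = \Theta(\sqrt{\log K/T}/L)$, the regret bound $\sum_t \ell_t(q^{(t)}) - \min_{p\in\Delta_K}\sum_t \ell_t(p) = \mathcal{O}(L\sqrt{T\log K})$ against the best mixture in hindsight. Instantiating with the $K = 4^n$ Pauli channels and $\log K = \mathcal{O}(n)$ recovers the regret bound of the informal Pauli theorem (the stated $\mathcal{O}(\sqrt{T\log K})$ corresponds to absolute-value loss, $L=1$).

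For the $T$-independent mistake bound I would not route through the regret bound, which would only yield a $T$-dependent count; instead I would run a mistake-driven potential argument. Take as potential the relative entropy $\Phi^{(t)} \coloneqq \mathrm{KL}(p \,\|\, q^{(t)})$ to the true mixing vector $p$, update only in rounds where a mistake occurs (i.e. $\ell_t(q^{(t)}) > \varepsilon$), and initialize $q^{(1)}$ uniform so that $\Phi^{(1)} = \log K - H(p) \le \log K$. A one-line computation for the exponential-weights step gives $\Phi^{(t)} - \Phi^{(t+1)} \ge \eta\,\langle q^{(t)} - p,\, g^{(t)}\rangle - \tfrac{1}{2}\eta^2 L^2$ via Hoeffding's lemma on the bounded coordinates of $g^{(t)}$. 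Here realizability enters decisively: since $\langle p, a^{(t)}\rangle = b^{(t)}$ we have $\ell_t(p) = 0$, so convexity yields $\langle q^{(t)}-p, g^{(t)}\rangle \ge \ell_t(q^{(t)}) - \ell_t(p) = \ell_t(q^{(t)}) > \varepsilon$ on every mistake round. Choosing $\eta = \varepsilon/L^2$ then forces a per-mistake drop $\Phi^{(t)} - \Phi^{(t+1)} \ge \varepsilon^2/(2L^2)$; since $\Phi \ge 0$ always and $\Phi^{(1)} \le \log K$, the number of mistakes is at most $\mathcal{O}(L^2\log K/\varepsilon^2)$.

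The main obstacle is precisely the step just highlighted. The reduction naturally invites treating the $K$ channels as experts with individual losses $|a_i^{(t)} - b^{(t)}|$, but the true mixture $p$ has zero \emph{prediction} loss while its expected \emph{per-expert} loss $\sum_i p_i|a_i^{(t)} - b^{(t)}|$ can be large, so the plain experts analysis would only compare against the best single channel rather than the best mixture. The fix is to work with the gradient of the convex loss in $q$—equivalently, online convex optimization with entropic regularization—rather than with per-expert losses; this is what lets both the regret and the mistake bounds be stated against arbitrary convex mixtures. Everything else, namely the Hoeffding estimate and the telescoping of the potential, is routine.
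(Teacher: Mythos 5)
Your regret argument is essentially the paper's own proof of \Cref{corollary:regret-mixture}: you linearize the convex loss at the current iterate (the paper's \Cref{lem-regret_via_pseudoregret}) and feed the resulting per-channel costs $g_j^{(t)}=\ell_t'(\langle q^{(t)},a^{(t)}\rangle)\,a_j^{(t)}$ into multiplicative weights, whose guarantee (\Cref{thm:MWU}) holds against an arbitrary comparator distribution $\Vec{q}$ and therefore, after undoing the linearization, against the best convex mixture rather than merely the best single channel. Your diagnosis of the ``obstacle''---that running vanilla experts on the per-channel losses $\lvert a_i^{(t)}-b^{(t)}\rvert$ would only compete with the best single channel---is exactly the point the paper's gradient trick is designed to circumvent, so on the regret side the two proofs coincide up to the cosmetic difference between the $(1-\eta m)$ update and the Hedge update.

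The one place you genuinely diverge is the mistake bound, and there your stated motivation rests on a misconception: you claim that routing through the regret bound ``would only yield a $T$-dependent count,'' but the paper's \Cref{lemma:regret-to-mistake-template} avoids this by making the learner mistake-driven---updating only on $\varepsilon$-mistake rounds---so that the regret bound is applied with $T$ replaced by the number of mistakes $T'$, and combining $R_{T'}\leq CL\sqrt{T'\log K}$ with the realizability lower bound $R_{T'}> 2T'\varepsilon/3$ immediately gives $T'\leq\mathcal{O}(L^2\log(K)/\varepsilon^2)$, independent of the horizon. Your direct KL-potential argument is a correct alternative (it is, in effect, the MWU regret proof unrolled and restricted to mistake rounds), and it yields the same bound; its only fragility is that you assume exact feedback $b^{(t)}=\langle p,a^{(t)}\rangle$ so that $\ell_t(p)=0$, whereas the paper's realizable model only guarantees $\lvert b_t-\langle p,a^{(t)}\rangle\rvert\leq\varepsilon/3$, which costs you an $\ell_t(p)\leq L\varepsilon/3$ term in the per-mistake drop; this is absorbed by the constants when $L=\mathcal{O}(1)$ (the paper's mistake lemma is stated for the absolute-value loss, $L=1$) but should be tracked if you want the general-$L$ statement.
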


\paragraph*{Adaptive tests of channels.} We may also want to learn the passing probabilities of more general channel tests that make use of the channel multiple times, perhaps adaptively. Such tests have the form shown in Figure~\ref{fig-channel_test_gen}(a). We can directly import results about convex mixtures of known channels to this setting. Indeed, given a channel $\mathcal{N}=\sum_{j=1}^K p_j\mathcal{N}_j$ that is a convex mixture of known channels $\mathcal{N}_j$, it holds that
\begin{equation}
    \mathcal{N}^{\otimes k}=\sum_{j_1,j_2,\dotsc,j_k=1}^K p_{j_1}p_{j_2}\dotsb p_{j_k}\mathcal{N}_{j_1}\otimes\mathcal{N}_{j_2}\otimes\dotsb\otimes\mathcal{N}_{j_k}\, .
\end{equation}
This is itself a convex mixture of $K^k$ known channels, so Theorem~\ref{informal-theorem:convex-mixture-channel-online-learning} applies and yields regret and mistake bounds scaling with $k\log(K)$. We note, however, that Theorem~\ref{informal-theorem:convex-mixture-channel-online-learning} in this scenario will generally not give a \emph{proper} online learner. Namely, the learner's hypotheses will be $(nk)$-qubit channels given by convex combinations of $\{\mathcal{N}_{j_1}\otimes\mathcal{N}_{j_2}\otimes\dotsb\otimes\mathcal{N}_{j_k}\}_{j_1,\ldots, j_k=1}^K$, which in general cannot be factorized into $k$ copies of a single $n$-qubit channel.

\paragraph*{Non-Markovian quantum processes.} At the heart of the quantities that we aim to estimate is Born's rule~\cite{born1926BornRule}, which tells us that the expected value of an observable (Hermitian operator) $H$ for a quantum state $\rho$ is given by $\Tr[H\rho]$. The Born rule generalizes not only to quantum channels but also to multi-time/non-Markovian quantum processes~\cite{Ziman08,CDP09}, where we model multi-time processes mathematically as ``quantum combs''~\cite{CDP09}, also called ``quantum strategies''~\cite{GW07}; see also Refs.~\cite{pollock2018nonmarkovian,MM20,berk2021multitimeprocesses}. Within the framework of quantum combs, the Choi representation of the process takes the place of the state $\rho$ and the observable $H$ is replaced by a generalized ``process observable'' $O$; see Figure~\ref{fig-channel_test_gen}(b). We provide formal definitions of quantum combs and process observables in Section~\ref{sec-basics_quantum_info}. Consequently, we can readily translate our results on bounded complexity channels and convex mixtures of known channels to bounded complexity non-Markovian processes and convex mixtures of known non-Markovian processes. While full tomography of multi-time/non-Markovian processes has been considered~\cite{white2022nonMarkoviantomography}, to the best of our knowledge, the restricted tomographic setting that we consider here has so far not been considered for multi-time quantum processes.

We start by considering multi-time processes of bounded complexity. We formally define these processes by analogy with quantum channels of bounded gate complexity in Section~\ref{sec-online_learning_multi_time_processes}. We can then extend Theorem~\ref{informal-theorem:bounded-complexity-online-learning} as follows.

\begin{theorem}[Online learning multi-time processes of bounded complexity---informal]\label{informal-theorem:online_learning_multi_time_bounded_complexity}
    The class of $n$-qubit multi-time processes with complexity parameter $G$ can be online learned with regret bound $\mathcal{O}\left(\sqrt{T G\log(Gn)}\right)$ and with $\varepsilon$-mistake bound $\mathcal{O}\!\left(\frac{G\log(Gn)}{\varepsilon^2}\right)$.
\end{theorem}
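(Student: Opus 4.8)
\textbf{Proof plan for \Cref{informal-theorem:online_learning_multi_time_bounded_complexity}.}
The plan is to reduce the multi-time case to the bounded-complexity channel case (\Cref{informal-theorem:bounded-complexity-online-learning}) by exploiting the fact that, via the generalized Born rule for quantum combs~\cite{CDP09}, every quantity the learner must predict has the same bilinear form as in the channel setting, with the Choi representation $C(\mathcal{N}^{[N]})$ of the process playing the role of the channel and a process observable $O^{(t)}$ playing the role of the test pair $(\rho^{(t)}, M^{(t)})$. Concretely, the loss in round $t$ will be $\ell(\cdot) = \abs{\langle O^{(t)}, C(\mathcal{N}^{[N]})\rangle - \langle O^{(t)}, C(\mathcal{M}^{(t)})\rangle}$ for the learner's comb hypothesis $\mathcal{M}^{(t)}$, so the learning problem is still a linear-functional prediction problem over a structured convex set of Choi operators. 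The key observation is that the analysis underlying \Cref{informal-theorem:bounded-complexity-online-learning} never used anything specific about the channel constraint beyond (i) the linearity of the loss in the Choi operator, (ii) a bound on the operator norms of the test observables, and (iii) a covering/counting bound on the number of distinguishable hypotheses of complexity at most $G$. Each of these has a clean comb analogue.

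First I would recall the formal definition of a multi-time process of complexity parameter $G$ from \Cref{sec-online_learning_multi_time_processes}: a comb built from at most $G$ elementary (say two-qubit) operations, with the usual causal/trace conditions on the Choi operator. I would then establish the counting bound: the number of combs of complexity $G$ that are $\varepsilon$-distinguishable (in the relevant norm induced by the process observables) is at most $\exp(\mathcal{O}(G\log(Gn)))$, exactly as in the channel case, since a comb on $N$ time steps is specified by a sequence of $G$ two-qubit gates drawn from a net of size $\exp(\mathcal{O}(\log(Gn)))$ on the positions and local operations. This net argument is essentially identical to the channel setting because the number of available ``slots'' for the $G$ gates is still polynomial in $n$ and in the number of time steps (which is absorbed into the complexity parameter).

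Next I would instantiate the multiplicative weights update (MWU) / online-learning-from-experts machinery on this finite net of candidate combs, just as in the proof of \Cref{informal-theorem:bounded-complexity-online-learning}. The experts are the net elements; each round, the learner predicts $\langle O^{(t)}, C(\mathcal{M}^{(t)})\rangle$ for the MWU-weighted mixture $\mathcal{M}^{(t)}$ of combs, and upon receiving the true value suffers the Lipschitz loss with constant $L$. The regret bound of the standard MWU guarantee then reads $\mathcal{O}(L\sqrt{T \log \mathcal{K}})$ where $\mathcal{K} = \exp(\mathcal{O}(G\log(Gn)))$ is the number of experts, giving $\mathcal{O}(L\sqrt{T G\log(Gn)})$; the standard online-to-mistake-bound conversion (choosing the step size to trade off the $\varepsilon$-margin against the regret) then yields the $\mathcal{O}(L^2 G\log(Gn)/\varepsilon^2)$ mistake bound. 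The only genuine adaptation relative to the channel proof is verifying that the generalized Born rule makes the loss a valid bounded, Lipschitz, linear functional of the comb hypothesis, which follows from the boundedness of process observables ($0 \preceq O^{(t)} \preceq \mathbbm{1}$ in the appropriate sense) and the trace normalization of combs.

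\textbf{Main obstacle.} The step I expect to require the most care is the net/counting bound for combs rather than the MWU step, which is essentially plug-and-play. The subtlety is that combs satisfy nontrivial \emph{causal-ordering} constraints that channels do not, so I must check that discretizing the constituent elementary operations still yields a net whose elements are themselves valid combs (or that the MWU mixture remains a valid hypothesis), and that the induced error in the predicted expectation value is controlled by the continuity of the generalized Born rule in the comb's Choi operator. Equivalently, I need a Lipschitz-type bound $\abs{\langle O, C(\mathcal{N}^{[N]}) - C(\tilde{\mathcal{N}}^{[N]})\rangle} \leq \mathcal{O}(\text{distance between gate sequences})$ that survives the causal constraints; once this continuity of comb contraction is in hand, the net size and hence the whole argument match the channel case verbatim, and the stated bounds follow.
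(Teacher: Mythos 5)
Your plan has the right skeleton — a covering/counting bound for bounded-complexity combs, a generic online-learning engine, and the standard regret-to-mistake conversion — and you have correctly located the crux. But that crux is left unresolved, and it is precisely the paper's main new technical ingredient. The "Lipschitz-type bound that survives the causal constraints" you ask for is the statement that if $N=N_1\star\dotsb\star N_G$ and $\widetilde{N}=\widetilde{N}_1\star\dotsb\star\widetilde{N}_G$ are combs built from elementary units composed via the link product, then $\norm{N-\widetilde{N}}_{\diamond r}\leq\sum_i\norm{N_i-\widetilde{N}_i}_{\diamond 2}$ (\Cref{cor:strategy_norm_subadditivity_composition}), which in turn rests on submultiplicativity of the strategy norm under the link product (\Cref{prop-submult_strategy_norm}). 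This does \emph{not} follow "verbatim" from the channel case: the standard proof of subadditivity of the diamond norm under composition of maps does not carry over to the link product, and the paper has to prove it afresh via the SDP characterization $\norm{H}_{\diamond r}=\inf\{t: -tP\leq H\leq tP,\,P\in\mathsf{COMB}_r\}$, combining feasible points multiplicatively and using that the link product preserves positive semi-definiteness. Without this lemma (and the H\"older inequality $\Abs{\Tr[E(N-\widetilde{N})]}\leq\norm{E}_{\diamond r}^{\ast}\norm{N-\widetilde{N}}_{\diamond r}$ with $\norm{E}_{\diamond r}^{\ast}\leq 1$ for test operators — note the correct normalization is against a co-strategy, not $E\leq\mathbbm{1}$, and combs are not unit-trace), your net has no controlled approximation guarantee, so the argument does not close. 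This is a genuine gap, albeit one you flagged honestly.

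Separately, your online-learning engine diverges from the paper's and is slightly lossy. The paper does not run MWU over a finite net of experts for the bounded-complexity case (that is the engine for Pauli channels and mixtures, \Cref{theorem:pauli-regret}); instead it bounds \emph{sequential} covering numbers at all scales and invokes the chaining-type regret bound of \Cref{theorem:regret-bound-sequential-covering}, exactly as in \Cref{theorem:bounded-complexity-regret}. Your single-scale net-plus-experts route accumulates an approximation error of $LT\varepsilon$ against the best comb in the continuous class, forcing $\varepsilon\sim 1/\sqrt{T}$ and hence $\log K=\mathcal{O}(G\log(GnT))$, which yields regret $\mathcal{O}(L\sqrt{TG\log(GnT)})$ — an extra $\log T$ inside the square root relative to the stated $\mathcal{O}(L\sqrt{TG\log(Gn)})$, and a correspondingly weaker implicit mistake bound. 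The Dudley-type integral over scales in the paper's approach is what removes this factor. So even granting the covering lemma, your route proves a slightly weaker (though still polynomially equivalent) version of the theorem.
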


We also extend Theorem~\ref{informal-theorem:convex-mixture-channel-online-learning} to an online learning result for convex mixtures of arbitrary known multi-time processes.

\begin{theorem}[Online learning of convex mixtures of known multi-time processes---informal]\label{informal-theorem:convex-mixture-multi_time-online-learning}
    Given an arbitrary set of $K>0$ known and fixed quantum multi-time processes, any convex mixture of these processes can be online learned with regret bound $\mathcal{O}\left(\sqrt{T \log K}\right)$ and $\varepsilon$-mistake bound $\mathcal{O}(\frac{\log(K)}{\varepsilon^2})$.
\end{theorem}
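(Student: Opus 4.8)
The plan is to reduce Theorem~\ref{informal-theorem:convex-mixture-multi_time-online-learning} to the already-established channel case, Theorem~\ref{informal-theorem:convex-mixture-channel-online-learning}, by exploiting the fact that the generalized Born rule makes multi-time processes behave formally like quantum states (and channels like a special case thereof). The key structural observation is that, for quantum combs, an adaptive test computes an expectation value $\Tr[O\, C(\mathcal{N}^{[\cdot]})]$ that is \emph{linear} in the Choi representation of the process, exactly as $\Tr[M\mathcal{N}(\rho)]$ is linear in $\mathcal{N}$ and $\Tr[H\rho]$ is linear in $\rho$. Consequently, if the unknown process is a convex mixture $\mathcal{N}=\sum_{j=1}^K p_j \mathcal{N}_j$ of $K$ known processes, linearity of the Choi map gives $C(\mathcal{N})=\sum_{j=1}^K p_j\, C(\mathcal{N}_j)$, and hence for any process observable $O$ the quantity to be predicted factorizes as $\Tr[O\,C(\mathcal{N})]=\sum_{j=1}^K p_j \Tr[O\, C(\mathcal{N}_j)]$.

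First I would set up the reduction to a ``learning a distribution over $K$ known experts'' problem. The $K$ known processes $\mathcal{N}_1,\dots,\mathcal{N}_K$ play the role of $K$ experts; at each round the process observable $O^{(t)}$ supplied by the adversary induces a loss vector whose $j$-th entry is $\ell$ evaluated against expert $\mathcal{N}_j$. Because the loss in \eqref{eq-loss_function_intro} (and its Lipschitz generalizations with constant $L$) depends on the predicted expectation value, and that value is affine in the mixture weights, the learner's admissible hypotheses form the probability simplex $\Delta_K$ over the experts, and the comparator class is precisely this same simplex. This is the standard setting in which the multiplicative weights update (MWU) / Hedge algorithm applies. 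Running MWU over the $K$ experts with the appropriately scaled loss yields regret $\mathcal{O}(L\sqrt{T\log K})$ against the best fixed expert, and convexity of the loss in the mixture weights upgrades this to regret against the best fixed \emph{mixture}, which is exactly the comparator in the definition of $R_T$.

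Next I would convert the regret bound into the mistake bound by the standard online-to-mistake conversion already used for Theorem~\ref{informal-theorem:convex-mixture-channel-online-learning}: a regret bound of $\mathcal{O}(L\sqrt{T\log K})$ against a comparator that itself suffers zero loss (the realizable case, where the true process lies in the mixture class so $\min_{\mathcal{N}}\sum_t \ell = 0$) forces the number of rounds with loss exceeding $\varepsilon$ to be at most $\mathcal{O}(\log(K)/\varepsilon^2)$. The only genuinely new ingredient relative to the channel proof is verifying that the generalized Born rule for quantum combs preserves all the properties used in the reduction---linearity of $O\mapsto\Tr[O\,C(\mathcal{N})]$ in the Choi operator, boundedness of the per-round loss (so that MWU's loss values lie in a bounded range), and convexity of $\ell$ in the mixture weights. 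I would invoke the comb formalism of Section~\ref{sec-basics_quantum_info} to confirm these.

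I expect the main obstacle to be purely formal rather than conceptual: ensuring that the space of ``tests'' on multi-time processes is rich enough that the adversary's choices are captured by genuine process observables $O$, and that the feedback the adversary provides (the true value $\Tr[O\,C(\mathcal{N})]$, possibly up to approximation) is well-defined and bounded in the comb setting. Once the generalized Born rule is in place, the hypothesis class is simply $\Delta_K$ and the argument is identical to the channel case; the subtlety is checking that the Choi representation of a convex mixture of combs is the convex mixture of the Choi representations (so that a \emph{proper} learner outputting a mixture of the known combs is available) and that the loss remains $L$-Lipschitz under this identification. No step requires new machinery beyond MWU and the comb Born rule, so the proof should be a short translation of the channel proof with ``channel'' replaced by ``comb'' and ``$\Tr[M\mathcal{N}(\rho)]$'' replaced by ``$\Tr[O\,C(\mathcal{N}^{[\cdot]})]$''.
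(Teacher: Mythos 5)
Your proposal is correct and follows essentially the same route as the paper: the formal version (\Cref{theorem:regret-bound-multi-time-convex-combination} and its corollary) runs the MWU algorithm over the $K$ known processes as experts, uses linearity of the Choi representation so that $\Tr[E^{(t)}N]=\sum_j q_j\Tr[E^{(t)}N_j]$, and converts the regret bound to a mistake bound via \Cref{lemma:regret-to-mistake-template}; the only genuinely new technical content is exactly the boundedness check you flag, namely that $\Tr[E^{(t)}N_j]\in[0,1]$, which the paper establishes by a telescoping partial-trace argument from the comb constraints \eqref{eq-strategies}--\eqref{eq-costrategies} and $\norm{E^{(t)}}_{\diamond r}^{\ast}\leq 1$. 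One caution: the convexity step must be implemented as in \Cref{lem-regret_via_pseudoregret}, i.e., you feed MWU the \emph{linearized} costs $m_j^{(t)}\propto\ell_t'(\Tr[E^{(t)}N^{(t)}])\Tr[E^{(t)}N_j]$ (for which best-expert and best-mixture regret coincide, since a linear function on the simplex is minimized at a vertex) and then use convexity of $\ell_t$ to transfer the surrogate regret to the true regret; if you instead literally set the $j$-th cost to $\ell_t(\Tr[E^{(t)}N_j])$, convexity does not ``upgrade'' best-expert regret to best-mixture regret, since the best mixture can strictly outperform every vertex for convex losses.
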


Finally, we extend the shadow tomography result of \Cref{informal-corollary:pauli-channel-shadow-tomography} to arbitrary multi-time processes. Here, the shadow tomography problem for multi-time processes is defined analogously as for channels; we refer to Problem~\ref{problem:shadow_tomography_multi_time_processes} for the formal problem statement.

\begin{theorem}[Shadow tomography of multi-time processes---informal]
    Shadow tomography of an arbitrary $n$-qubit multi-time process with $r\in\{1,2,\dotsc\}$ time steps can be solved to accuracy $\varepsilon > \frac{1}{2}\norm{N-N^{\mathsf{P}}}_{\diamond r}$ using
    \begin{equation}
        k = \mathcal{O}\left(\frac{\sqrt{nr}\log(M) \log^{3/2}(((\varepsilon - \frac{1}{2}\norm{N-N^{\mathsf{P}}}_{\diamond r}) \delta)^{-1})}{(\varepsilon - \frac{1}{2}\norm{N-N^{\mathsf{P}}}_{\diamond r})^3}\right)
    \end{equation}
    copies of the process, where $N$ is the Choi representation of the process, $N^{\mathsf{P}}$ is the Choi representation of the Pauli-twirled version of the process, and $\norm{\cdot}_{\diamond r}$ is the strategy $r$-norm.
\end{theorem}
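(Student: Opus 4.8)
The plan is to port the proof of \Cref{informal-corollary:pauli-channel-shadow-tomography} into the quantum comb formalism: establish a comb analogue of Bell sampling for the Pauli-twirled process, express each process observable as a bounded statistical query against the resulting sample distribution, answer these queries adaptively with the same multiplicative-weights-based adaptive data analysis procedure used in the channel case, and finally pass from the Pauli-twirled process to an arbitrary one by a triangle-inequality argument controlled by the strategy $r$-norm.

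First, I would define the Pauli twirl $N \mapsto N^{\mathsf{P}}$ of an $r$-step process by inserting independent, uniformly random $n$-qubit Paulis at each of the $r$ time steps. The resulting Choi representation $N^{\mathsf{P}}$ is diagonal in a tensor product of $r$ Bell-type bases — the comb analogue of the Bell-diagonality of a Pauli channel's Choi state — and is therefore specified by a probability distribution over $nr$-qubit Pauli labels, i.e., over a universe of size $4^{nr}$. I would then verify that a generalized Bell measurement on copies of the process yields i.i.d.\ draws from this distribution, so that $k$ uses of the process produce $k$ classical samples $P_1,\dots,P_k$ from it.

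Second, by the generalized Born rule, each process observable $O^{(i)}$ has passing probability $\Tr[O^{(i)} N^{\mathsf{P}}] = \mathbb{E}_{P}[\,f_i(P)\,]$ for some function $f_i$ valued in $[0,1]$, so that estimating it is a bounded statistical query against the sample distribution. Answering the $M$ adaptively chosen queries via the same adaptive data analysis estimator as in the channel case then gives the sample complexity. The $\sqrt{nr}$ dependence is exactly the $\sqrt{\log(4^{nr})} = \Theta(\sqrt{nr})$ dependence on the log-size of the Pauli-label universe appearing in the multiplicative-weights guarantee, while the remaining factors $\log M$, $\log^{3/2}((\cdot)^{-1})$, and $(\cdot)^{-3}$ are inherited verbatim from \Cref{informal-corollary:pauli-channel-shadow-tomography} with $n$ replaced by $nr$.

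Finally, for an arbitrary process $N$ I would approximate it by $N^{\mathsf{P}}$: for every admissible comb test, the passing probabilities of $N$ and $N^{\mathsf{P}}$ differ by at most $\frac{1}{2}\norm{N - N^{\mathsf{P}}}_{\diamond r}$, by the operational characterization of the strategy $r$-norm as the optimal distinguishing bias over all such tests. Estimating $N^{\mathsf{P}}$ to accuracy $\varepsilon - \frac{1}{2}\norm{N - N^{\mathsf{P}}}_{\diamond r}$ therefore estimates $N$ to accuracy $\varepsilon$, which yields the claimed bound after substituting this shifted accuracy. The main obstacle I anticipate is the first step — setting up the comb Pauli twirl, proving that $N^{\mathsf{P}}$ is diagonal in the correct $nr$-qubit tensor-Bell basis, and verifying that generalized Bell sampling produces genuine i.i.d.\ samples from its eigenvalue distribution; once this is in place, the statistical-query reduction and the strategy-$r$-norm error bound follow the channel case with only notational changes.
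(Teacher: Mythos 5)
Your proposal matches the paper's proof essentially step for step: the paper likewise performs time-local Bell measurements to sample from the error-rate distribution of the time-local Pauli twirl, observes that $N^{\mathsf{P}}$ is exactly the Choi representation of an $(nr)$-qubit Pauli channel so that the Pauli-channel shadow tomography result (and its underlying adaptive data analysis reduction) applies verbatim with $n\mapsto nr$, and then uses the variational characterization of $\norm{\cdot}_{\diamond r}$ in a triangle inequality to pass from $N^{\mathsf{P}}$ to $N$ at the shifted accuracy $\varepsilon-\frac{1}{2}\norm{N-N^{\mathsf{P}}}_{\diamond r}$. The setup step you flag as the main obstacle is handled in the paper by combining its lemmas on the Pauli twirl as a Bell-basis pinching with the tensor-product structure of the time-local twirl, exactly as you anticipate.
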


\subsection{Related work}

\paragraph{Online learning of quantum states.} Ref.~\cite{ACH+19} introduced the problem of online learning quantum states and proposed three conceptually different approaches that all achieve mistake bounds scaling linearly in the system size. Recently, Ref.~\cite{chen2022adaptive} investigated an adaptive variant of this online learning problem, in which the underlying state may change over time. A notable application of online state learning is its use as a subroutine in recent shadow tomography protocols~\cite{Aar17, badescu2021improved, abbas2023quantum}.

A natural question to ask is whether we can simply apply known results on online learning quantum states to the Choi state of the unknown channel to accomplish the channel online learning task laid out above. This is certainly possible; however, we aim to predict quantities of the form $\Tr[M_B\mathcal{N}_{A\to B}(\rho_A)]$. And while Choi states are efficiently (online) learnable, the so-called ``Choi-to-channel'' translation incurs a dimension factor:
\begin{equation}\label{eq-choi-channel-translation}
    \mathcal{N}_{A\to B}(X_A)=\Tr_A[(X_A^{\t}\otimes\mathbbm{1}_B)C_{A,B}^{\mathcal{N}}]=d_A\Tr_A[(X_A^{\t}\otimes\mathbbm{1}_B)\Phi_{A,B}^{\mathcal{N}}],
\end{equation}
where $C_{A,B}^{\mathcal{N}}$ is the Choi \emph{matrix} of $\mathcal{N}$, $\Phi_{A,B}^{\mathcal{N}}=\frac{1}{d_A}C_{A,B}^{\mathcal{N}}$ is the Choi \emph{state} of $\mathcal{N}$, and $A$ and $B$ are the input and output systems, respectively, of $\mathcal{N}$; see Section~\ref{sec-basics_quantum_info} for formal definitions. 
Because of the dimension factor in the right-most equality above, we have
\begin{equation}\label{eq-channel_expectation_main}
    \Tr[M_B\mathcal{N}_{A\to B}(\rho_A)]=\Tr[(\rho_A^{\t}\otimes M_B)C_{A,B}^{\mathcal{N}}]=d_A\Tr[(\rho_A^{\t}\otimes M_B)\Phi_{A,B}^{\mathcal{N}}].
\end{equation}
Consequently, good regret and mistake bounds for online learning the Choi state do not give rise to good regret and mistake bounds for online learning the channel. In \Cref{sec-online_learning_Choi_state}, we discuss this and further issues arising when merely applying quantum state learning algorithms from Ref.~\cite{ACH+19} (such as the \emph{matrix multiplicative weights} (MMW) algorithm) to the Choi state for channel online learning. In particular, noting that the MMW algorithm for online learning quantum states cannot be directly applied to the learning of Choi states 
(due to the fact that Choi states have an additional partial trace requirement), we provide a projected MMW algorithm that can be used to learn the Choi state, along with the associated regret bound analysis.

\paragraph{Learning channels of bounded complexity.}  With general quantum channels being impossible to learn in many scenarios, recent work has investigated the learnability of channels with bounded gate complexity in different settings.
In variational quantum machine learning, a variety of works derived bounds on learning-theoretic complexity measures, and hence the sample complexity sufficient for good PAC generalization bounds, in terms of the number of gates~\cite{caro2020pseudo, popescu2021learning, chen2021expressibility, du2022efficient, caro2022generalization, cai2022sample, caro2023out-of-distribution}.
For learning classical-to-quantum mappings, Refs.~\cite{chung2021sample, caro2021binary, fanizza2022learning} gave similar-in-spirit sample complexity bounds derived from gate complexity assumptions. Finally, Refs.~\cite{HBC+21, zhao2023learning} considered different scenarios of state and process learning under assumptions of limited gate complexity.

\paragraph{Learning Pauli channels.}  Pauli channel learning has been considered in different (mostly non-online) scenarios.
Ref.~\cite{flammia2020efficient} gave procedures for approximating the Pauli error rates of a general unknown Pauli channel in different $\ell_p$ norms, with a recent improvement for the $\ell_\infty$ norm in Ref.~\cite{flammia2021pauli}.
Ref.~\cite{fawzi2023lower} proved that the query complexities achieved by these procedures are optimal among non-adaptive incoherent strategies, and also gave lower bounds for adaptive incoherent strategies.
If the Pauli noise is known to have a local structure, the query complexity can be improved beyond the results of Refs.~\cite{flammia2020efficient, harper2020efficient}, even if the conditional independence structure is not known in advance~\cite{rouze2023efficient}.
Refs.~\cite{chen2022quantum, chen2023tight, chen2023futility} highlight the importance of auxiliary systems and entanglement in learning the eigenvalues of an unknown Pauli channel. In Ref.~\cite{chen2023efficientPauli}, the authors provide an online algorithm for learning the eigenvalues of a Pauli channel\footnote{We note that Ref.~\cite{chen2023efficientPauli} is an updated version of Ref.~\cite{chen2023futility}.}, while in this work we consider the task of learning the error rates of Pauli channels.
Finally, Ref.\ \cite{caro2022learning} investigates the more general task of learning the Pauli transfer matrix of a general channel, giving both non-adaptive and adaptive procedures.

\subsection{Techniques and proof overview}

\paragraph*{Sequential covering numbers.} In our proof of \Cref{informal-theorem:bounded-complexity-online-learning}, we combine tools from two recent lines of work in classical online learning and quantum machine learning.
On the one hand, we rely on regret bounds for online learning in terms of sequential complexity measures for the underlying hypothesis class. In particular, we use regret bounds via sequential covering numbers \cite{rakhlin2015online, rakhlin2015sequential}, which can be viewed as a sequential version of Dudley's theorem. Namely, as we recall in \Cref{theorem:regret-bound-sequential-covering},
\begin{equation}
    \sum_{t=1}^T \ell_t(f_t(x_t)) - \min_{f\in\mathcal{F}}\sum_{t=1}^T \ell_t(f(x_t))
    \leq 2LT \inf_{\alpha>0} \left\{4\alpha + \frac{12}{\sqrt{T}}\int_{\alpha}^{1} \sqrt{\log N_{T}(\mathcal{F},\beta,2)}~\mathrm{d}\beta\right\}\, ,
\end{equation}
where $N_{T}(\mathcal{F},\beta,2)$ denotes the worst-case sequential $2$-norm $\beta$-covering number of $\mathcal{F}\subseteq [0,1]^{\mathsf{X}}$ over all complete binary trees $\mathbf{x}$ of depth $T$ whose nodes are labeled by elements of $\mathsf{X}$.
On the other hand, for the class of channels with a given gate complexity $G$, we invoke covering number bounds with respect to the diamond norm distance \cite{caro2022generalization, HBC+21, zhao2023learning}, and we demonstrate that these also control sequential covering numbers.
Concretely, in \Cref{corollary:sequential-covering-number-gate-complexity}, we show that 
\begin{equation}
    N_{T}(\mathsf{CPTP}_{n, G},\varepsilon,p)
    \leq \binom{n}{2}^G\left(\frac{6G}{\varepsilon}\right)^{512 G} \, , 
\end{equation}
where $\mathsf{CPTP}_{n, G}$ denotes the class of $n$-qubit channels with gate complexity $G$.
Together, these results imply the regret bound from \Cref{informal-theorem:bounded-complexity-online-learning}. 
This in turn leads to the claimed mistake bound via a standard argument (see \Cref{lemma:regret-to-mistake-template}).

\paragraph*{Online convex optimization.} Theorem \ref{informal-theorem:pauli-channel-online-learning} is based on efficient mistake-bounded learning of a convex mixture of exponentially many \textit{known} channels. A simple but crucial observation is that the only unknown about such channels is a classical probability distribution $\Vec{p}$ (for example, in the special case of Pauli channels, the Pauli error rate distribution). Consequently, online learning of these channels corresponds to online learning of $\Vec{p}$ when the input states and the measurements (on the output state evolved by the unknown channel) are adversarially revealed to the learner. We show that this task can be achieved via an alternative online learning scenario, in which any
adversarially chosen state and measurement can be encoded in a `channel observable' $E_{A,B}^{(t)}$, which when revealed to the learner is associated to a ``challenge'' vector $\Vec{m}^{(t)}$ given by
\begin{equation}\label{eq-challenge-vectors-propor}
    \Vec{m}^{(t)}=(m_{\Vec{z},\Vec{x}}^{(t)})_{\Vec{z},\Vec{x}\in\{0,1\}^n} \text{ with } m_{\Vec{z},\Vec{x}}^{(t)}
    \propto \Tr[E_{A,B}^{(t)}\Gamma_{A,B}^{\Vec{z},\Vec{x}}],
\end{equation}
where $E_{A,B}^{(t)}=(\rho_A^{(t)})^{\t}\otimes M_B^{(t)}$, and $\Gamma_{A,B}^{\Vec{z},\Vec{x}}$ is the Choi representation of the Pauli unitary channel $\rho\mapsto P^{\Vec{z},\Vec{x}}\rho P^{\Vec{z},\Vec{x}\dagger}$. The learner produces hypotheses $\Vec{p}^{(t)}$ of the unknown probability distribution at times $t\in\{1,2,\dotsc,T\}$. The learner's hypotheses should be such that, for $T\in\{1,2,\dotsc\}$, $\sum_{t=1}^T\Vec{m}^{(t)}\cdot \Vec{p}^{(t)}$ is not too different from $\sum_{t=1}^T\Vec{m}^{(t)}\cdot \Vec{p}$. 
Using known guarantees~\cite{arora2012multiplicative}, the difference between these two sums can be shown to scale only logarithmically with the size of $\Vec{p}$'s support for the multiplicative weights update method. This implies regret bounds, and therefore mistake bounds, that scale only linearly in the number of qubits.

\paragraph*{Mistake lower bounds.} We prove our lower bounds by embedding classical online learning problems into the quantum tasks of interest, thereby inheriting classical lower bounds.
First, it is easy to embed the task of online learning a general function $f:\{0,1\}^{n-1}\to\{0,1\}$ into that of online learning a general $n$-qubit unitary. (Alternatively, when allowing for non-unitary channels, we give such an embedding into $(n-1)$-qubit channels.) Thus, the folklore mistake lower bound of $\Omega(2^n)$ for the former problem immediately carries over to the latter, showing that the class of all $n$-qubit unitaries cannot be online learned with a sub-exponential number of mistakes.

Second, to prove mistake lower bounds for learning channels of gate complexity $G$, we use the fact that $\mathcal{O}(2^k)$ many gates suffice to implement an arbitrary function $f:\{0,1\}^{k}\to\{0,1\}$ with a classical circuit. 
We can quantumly realize such a circuit either by first measuring in the computational basis and then applying the classical circuit, or by implementing irreversible AND and OR gates with reversible Toffoli gates, where the auxiliary system gets reset to a suitable value after every gate. 
Both constructions demonstrate that the class of $n$-qubit channels of gate complexity $G$ contains the class of all Boolean functions on the first $q=\min\{\log_2(\Theta(G)), n-1\}$ inputs (extended to the remaining subsystems by a trivial action).
So, the classical folklore lower bound gives an $\Omega(2^q)=\Omega(\min\{2^n, G\})$ mistake bound here.

Third, by associating a general function $f:\{1,\ldots,n\}\to\{0,1\}$ with the $n$-qubit Pauli channel $\mathcal{N}_f$ given by $\mathcal{N}_f (\rho) = \left(\bigotimes_{i=1}^n Z_i^{f(i)}\right)\rho \left(\bigotimes_{i=1}^n Z_i^{f(i)}\right)$, we prove that online learning $n$-qubit Pauli channels is at least as hard as online learning a general $\{0,1\}$-valued function on $\lfloor\log(n)\rfloor$ bits. Hence, the classical mistake lower bound for online learning arbitrary functions becomes a $\Omega(n)$ mistake lower bound for Pauli channel online learning.

\paragraph*{Computational complexity lower bounds.} Our exponential computational complexity lower bound for online learning Pauli channels with a polynomial mistake bound is a simple formalization of the following intuition: The channel observables posed as challenges by the adversary are exponentially-sized objects, and a successful online learner has to process all the exponentially many entries. We give a simple adversary demonstrating that this intuition is correct, and applies even if the adversary provides the channel observables already in the basis expansion that is most natural for Pauli channels, namely the (unnormalized) $n$-qubit Bell basis.

To prove our computational complexity lower bound for online learning bounded-complexity channels, we require a hardness assumption. Here, we consider the so-called \emph{ring learning with errors} (LWE) problem~\cite{lyubashevsky2010ideal} (\textsf{RingLWE}), which underlies much of lattice-based cryptography. To establish that hardness of \textsf{RingLWE} implies hardness of online learning slightly superlinear-sized quantum circuits, we first show, via a known construction, that this class of quantum circuits can implement a pseudorandom function class. This implies hardness of online learning, because pseudorandom function classes are hard to learn, an intuition that we formalize specifically for mistake-bounded online learning. Importantly, as the hardness arises from an underlying classical pseudorandom function class, it persists even when all challenges consists of input states and projective effect operators that are computational basis elements. In particular, online learning remains hard here, even though the learner can efficiently read all challenges. 
Additionally, as our pseudorandom function class is secure against adversaries that have query access to the function, the computational hardness of online learning holds even if the learner can actively choose the (pairwise distinct) challenges, rather than those being chosen by the adversary.

\paragraph*{Classical adaptive data analysis.} In the problem of shadow tomography of quantum channels, the learner's task is to output values $b_t\in\mathbb{R}$ such that $\abs{b_t - \Tr[M_B^{(t)}\mathcal{N}_{A\to B}(\rho_A^{(t)})]} \leq \varepsilon$ for all $t\in\{1,2,\dotsc,T\}$, where $\{(\rho_A^{(t)}, M_B^{(t)})\}_{t=1}^T$ is a set of state-measurement pairs. They should do so while minimizing the number $k$ of times that they access the channel. The shadow tomography problem for multi-time processes is formulated in a similar manner.

Our shadow tomography results are based on Bell sampling of the Choi state of the channel, i.e., sending one-half of the maximally-entangled state through the channel and then performing a joint Bell-basis measurement on the output system and the second entangled copy of the input system. If the unknown channel is a Pauli channel, then this procedure directly gives us samples from the Pauli error rate distribution. Consequently, the desired expectation values $\Tr[M_B^{(t)}\mathcal{N}_{A\to B}(\rho_A^{(t)})]$ can be interpreted as (classical) statistical queries with respect to the (unknown) Pauli error rate vector. Specifically, we have that $\Tr[M_B^{(t)}\mathcal{N}_{A\to B}(\rho_A^{(t)})]=\Vec{p}\cdot\Vec{e}^{(t)}$, where $\Vec{p}$ is the Pauli error rate vector and $\Vec{e}^{(t)}=(e_{\Vec{z},\Vec{x}}^{(t)})_{\Vec{z},\Vec{x}\in\{0,1\}^n}$, $e_{\Vec{z},\Vec{x}}=\Tr[((\rho_A^{(t)})^{\t}\otimes M_B^{(t)})\Gamma_{A,B}^{\Vec{z},\Vec{x}}]$. With this observation, we can make use of the classical adaptive data analysis algorithms presented in Ref.~\cite{bassily2016adaptivedataanalysis} in order to obtain our bound in \Cref{informal-corollary:pauli-channel-shadow-tomography} on the number $k$ of accesses to the channel. For an arbitrary unknown channel, we show that the same Bell sampling strategy enables us to sample from the error-rate vector of the \textit{Pauli-twirled} version of the channel, which we denote by $\mathcal{N}_{A\to B}^{\mathsf{P}}$. Then, as long as $\varepsilon>\frac{1}{2}\norm{\mathcal{N}-\mathcal{N}^{\mathsf{P}}}_{\diamond}$, we obtain a similar guarantee as in \Cref{informal-corollary:pauli-channel-shadow-tomography}.

\paragraph*{Quantum combs.} Our results on multi-time quantum processes make use of theory of quantum combs~\cite{CDP09}, also known as ``quantum strategies''~\cite{GW07}. We provide formal definitions of quantum combs in Appendix~\ref{sec-multi_time_processes}. In particular, the analysis of multi-time processes entails use of the so-called ``strategy norm'' and its H\"{o}lder dual~\cite{Gut12}, which are the multi-time generalizations of the trace and spectral norm, respectively, used in the analysis of algorithms for quantum states. They also generalize the diamond norm and its H\"{o}lder dual in the case of quantum channels. A crucial ingredient in the proof of Theorem~\ref{informal-theorem:online_learning_multi_time_bounded_complexity} is submultiplicativity of the strategy norm under composition of multi-time processes. As the composition of multi-time processes is given by the link product~\cite{CDP09}, the technique for proving submultiplicativity of the diamond norm does not generalize straightforwardly to the strategy norm. To the best of our knowledge, a proof of submultiplicativity of the strategy norm under link product has not been provided before, and we provide such a proof in Appendix~\ref{sec-strategy_norms} based on semi-definite programming duality.

\subsection{Directions for future work}

Motivated by the well known impossibility of online learning general unitaries and channels with a subexponential number of mistakes, our work initiates a study of subclasses of channels that allow for good regret and mistake bounds in online learning. We have identified two such classes: Channels of bounded gate complexity and mixtures of arbitrary known channels, with Pauli channels as a notable special case.
However, we show that achieving favorably scaling regret and mistake bounds for these classes is not possible in a computationally efficient manner.
Our results open up several directions for future research; here we outline some of them.

First, while our computational complexity lower bounds put limitations on where we can hope for computationally efficient channel online learning, they still leave relevant regions to explore. 
On the one hand, by the same reasoning as in \Cref{sec:computational-lower-bounds-bounded-gate-comlexity}, the channel class under consideration must not be expressive enough to implement known pseudorandom function constructions. Here, inspired by the recent work~\cite{huang2024learningshallow}, one may investigate the online learnability of shallow quantum circuits.
On the other hand, as we argue in detail in \Cref{sec:computational-lower-bounds-pauli-channels}, a necessary condition for efficient online learning is that the channels of interest admit efficient descriptions.
Candidates for such channel classes may be Clifford circuits or channels represented by matrix product operators of low bond-dimension.

Second, we believe that there is room for ``onlinification'' of other quantum learning scenarios. For instance, recent work~\cite{huang2022learning, caro2022learning} has proposed to avoid the exponential bottleneck of general process tomography (compare, e.g., Refs.~\cite{haah2023query, oufkir2023sample}) by considering learning tasks with arbitrary channels but restricted or structured input states and output measurements. Similarly, one may attempt to circumvent exponential lower bounds in online learning arbitrarily complex channels by imposing restrictions on the behavior of the adversary, for instance, with respect to the challenges that they can pose.
Another line of quantum learning research that has recently seen significant progress is learning a Hamiltonian from access to the associated dynamics~\cite{Haah2021optimal, franca2022efficient, Gu2022Practical, wilde2022scalably, caro2022learning, Yu2023robustefficient,  huang2023learning, li2023heisenberglimited, mobus2023dissipation, castaneda2023hamiltonian, bakshi2023learning, haah2024learning, bluhm2024hamiltonian, bakshi2024structure}. Online learning variants of the standard Hamiltonian learning task could give insights into whether one can learn to fine-tune a Hamiltonian evolution according to adaptive feedback.

Finally, while general shadow tomography for quantum channels is not possible, we have demonstrated that it can become feasible for a suitable subclass, in our case Pauli channels. Given the important role of online learning quantum states in state tomography procedures, we envision positive results, both ours and future ones, on online learning restricted classes of channels to serve as a stepping stone towards shadow tomography procedures for such classes.
Achieving the latter would likely require analogues of threshold search~\cite{badescu2021improved} for these kinds of channels.

\subsection*{Acknowledgments} 
The authors thank Akshay Bansal, Ian George, Soumik Ghosh, Jamie Sikora, and Alice Zheng for sharing a draft of their independent and concurrent work on online learning quantum objects. The authors gratefully acknowledge support from the BMBF (QPIC-1, HYBRID), the ERC (DebuQC), the Munich Quantum Valley,
the Einstein Foundation, and Berlin Quantum. MCC was partially supported by a DAAD PRIME fellowship.

\newpage

\tableofcontents

\newpage

\section{Preliminaries}

\subsection{Basics of quantum information}\label{sec-basics_quantum_info}

Here we provide a brief review of fundamental quantum information concepts that we make use of throughout this work. We refer to, e.g., Ref.~\cite{Wat18_book} for further details on the concepts and definitions presented in this subsection.

\paragraph*{Pauli operators.} For a quantum system of $n\in\mathbb{N}$ qubits, we define the \textit{$n$-qubit Pauli operators} as
\begin{align}
    P^{\Vec{z},\Vec{x}}&\coloneqq (+\I)^{\Vec{z}\cdot\Vec{x}}Z^{\Vec{z}}X^{\Vec{x}},\quad \Vec{x},\Vec{z}\in\{0,1\}^n, \label{eq-n_qubit_Pauli_ops}\\ 
    Z^{\Vec{z}}&\coloneqq Z^{z_1}\otimes Z^{z_2}\otimes\dotsb\otimes Z^{z_n},\quad Z^{z}=\ketbra{0}{0}+(-1)^z\ketbra{1}{1},\\
    X^{\Vec{x}}&\coloneqq X^{x_1}\otimes X^{x_2}\otimes\dotsb\otimes X^{x_n},\quad X^{x}=\ketbra{x}{0}+\ketbra{x\oplus 1}{1},
\end{align}
where the operation ``$\oplus$'' denotes addition modulo two. From this, we can define the \textit{$n$-qubit Bell states} as 
\begin{equation}\label{eq-n_qubit_Bell_states}
    \Phi^{\Vec{z},\Vec{x}}=\ketbra{\Phi^{\Vec{z},\Vec{x}}}{\Phi^{\Vec{z},\Vec{x}}},\quad \ket{\Phi^{\Vec{z},\Vec{x}}}\coloneqq (\mathbbm{1}\otimes P^{\Vec{z},\Vec{x}})\ket{\Phi},\quad \ket{\Phi}=\frac{1}{\sqrt{2^n}}\sum_{\Vec{x}\in\{0,1\}^n}\ket{\Vec{x},\Vec{x}},
\end{equation}
for all $\Vec{z},\Vec{x}\in\{0,1\}^n$. The Bell state vectors $\ket{\Phi^{\Vec{z},\Vec{x}}}$ form an orthonormal basis for $(\mathbb{C}^{2})^{\otimes n}\otimes(\mathbb{C}^2)^{\otimes n}$, and the set $\{\Phi^{\Vec{z},\Vec{x}}\}_{\Vec{z},\Vec{x}\in\{0,1\}^n}$ forms a \emph{positive operator-valued measure} (POVM).

\paragraph*{Quantum channels.} A quantum channel is a completely positive trace-preserving (CPTP) linear map $\mathcal{N}:\Lin(\mathbb{C}^d)\to\Lin(\mathbb{C}^{d'})$. We often write $\mathcal{N}_{A\to B}$ to refer to a quantum channel with input system $A$ and output system $B$, with corresponding Hilbert spaces $\mathcal{H}_A\cong\mathbb{C}^{d_A}$ and $\mathcal{H}_B\cong\mathbb{C}^{d_B}$, respectively. We let $\mathsf{CPTP}(A;B)$ denote the set of all quantum channels mapping system $A$ to system $B$. In this work, we mostly consider the case $d_A=d_B=d=2^n$, corresponding to a system of $n$ qubits, and we use the notation $\mathsf{CPTP}_n$ to refer to the set of all quantum channels mapping $n$ qubits to $n$ qubits.

The \textit{Choi representation} (or \textit{Choi matrix}) of a linear map $\mathcal{N}:\Lin(\mathbb{C}^d)\to\Lin(\mathbb{C}^{d'})$ is defined as
\begin{equation}\label{eq-choi_rep}
    C(\mathcal{N})\coloneqq(\id_{d}\otimes\mathcal{N})(\ketbra{\Gamma_d}{\Gamma_d})=\sum_{i,j=0}^{d-1}\ketbra{i}{j}\otimes\mathcal{N}(\ketbra{i}{j}),
\end{equation}
where $\id_d:\Lin(\mathbb{C}^d)\to\Lin(\mathbb{C}^d)$ is the identity superoperator, and
\begin{equation}
    \ket{\Gamma_d}\coloneqq\sum_{i=0}^{d-1}\ket{i,i}.
\end{equation}
The \textit{Choi state} of $\mathcal{N}$ is the normalized Choi matrix, defined as
\begin{equation}
    \Phi(\mathcal{N})\coloneqq\frac{1}{d}(\id_d\otimes\mathcal{N})(\ketbra{\Gamma_d}{\Gamma_d}) =  \frac{C(\mathcal{N})}{d}.
\end{equation}
We sometimes write $C_{A,B}^{\mathcal{N}}\equiv C(\mathcal{N})$ and $\Phi_{A,B}^{\mathcal{N}}\equiv\Phi(\mathcal{N})$ for the Choi matrix and Choi state, respectively, of a quantum channel $\mathcal{N}_{A\to B}$ when we want to indicate explicitly the input and output systems $A$ and $B$ of the channel. For a quantum channel $\mathcal{N}_{A\to B}$, its Choi representation $C_{A,B}^{\mathcal{N}}$ is positive semi-definite and satisfies $\Tr_B[C_{A,B}^{\mathcal{N}}]=\mathbbm{1}_A$.

The function $C$ defined in \eqref{eq-choi_rep} has an inverse, such that we can identify every Hermitian operator $H_{A,B}\in\Lin(\mathcal{H}_A\otimes\mathcal{H}_B)$ with a Hermiticity-preserving map $C^{-1}(H_{A,B}):\Lin(\mathcal{H}_A)\to\Lin(\mathcal{H}_B)$ as
\begin{equation}\label{eq-Choi_iso_inv}
    \mathcal{N}_{A\to B}^{H}\coloneqq C^{-1}(H_{A,B}),\quad \mathcal{N}_{A\to B}^H(X_A)=\Tr_A[(X_A^{\t}\otimes\mathbbm{1}_B)H_{A,B}].
\end{equation}
In particular, if $H_{A,B}$ is positive semi-definite, then $\mathcal{N}^H$ is completely positive. If in addition $\Tr_B[H_{A,B}]=\mathbbm{1}_A$, then $\mathcal{N}^H$ is a quantum channel. Consequently, the set $\mathsf{CPTP}(A;B)$ of quantum channels is in one-to-one correspondence with the set $\mathsf{CPTP}^{\prime}(A;B)\coloneqq\{N_{A,B}\in\Lin(\mathcal{H}_A\otimes\mathcal{H}_B):N_{A,B}\geq 0,\,\Tr_B[N_{A,B}]=\mathbbm{1}_A\}$. We let
\begin{equation}\label{eq-CPTP_choi_matrices}
    \mathsf{CPTP}_n^{\prime}\coloneqq\Big\{N_{A,B}\in\Lin(\mathcal{H}_A\otimes\mathcal{H}_B):\mathcal{H}_A\cong\mathcal{H}_B\cong(\mathbb{C}^2)^{\otimes n},\,N_{A,B}\geq 0,\,\Tr_B[N_{A,B}]=\mathbbm{1}_A\Big\}
\end{equation}
denote the set of Choi matrices of quantum channels mapping $n$ qubits to $n$ qubits.

Quantum channels also have a Kraus representation, such that
\begin{equation}
    \mathcal{N}(\rho)=\sum_{\ell=1}^r K_{\ell}\rho K_{\ell}^{\dagger},\quad\sum_{\ell=1}^r K_{\ell}^{\dagger}K_{\ell}=\mathbbm{1}_A,
\end{equation}
where $r\in\mathbb{N}$ and $K_{\ell}:\mathcal{H}_A\to\mathcal{H}_B$ is a linear operator for every $\ell\in\{1,2,\dotsc,r\}$.

\paragraph*{Pauli channels.} A Pauli channel is a quantum channel whose Kraus operators are proportional to the Pauli operators defined in \Cref{eq-n_qubit_Pauli_ops}. Specifically, an $n$-qubit Pauli channel (by definition) has the form
\begin{equation}\label{eq-pauli_channel}
    \mathcal{P}(\rho)=\sum_{\Vec{z},\Vec{x}\in\{0,1\}^n} p_{\Vec{z},\Vec{x}} P^{\Vec{z},\Vec{x}}\rho P^{\Vec{z},\Vec{x}\dagger},
\end{equation}
where the \textit{Pauli error rates} $p_{\Vec{z},\Vec{x}}$ form a probability distribution, i.e., $p_{\Vec{z},\Vec{x}}\in[0,1]$ for all $\Vec{z},\Vec{x}\in\{0,1\}^n$ and $\sum_{\Vec{z},\Vec{x}\in\{0,1\}^n}p_{\Vec{z},\Vec{x}}=1$. The Choi representation of a Pauli channel $\mathcal{P}$ is
\begin{equation}\label{eq-Pauli_channel_Choi_rep}
    C(\mathcal{P})=\sum_{\Vec{z},\Vec{x}\in\{0,1\}^n}p_{\Vec{z},\Vec{x}} \Gamma^{\Vec{z},\Vec{x}},
\end{equation}
where
\begin{equation}
    \Gamma^{\Vec{z},\Vec{x}}\coloneqq\ketbra{\Gamma^{\Vec{z},\Vec{x}}}{\Gamma^{\Vec{z},\Vec{x}}},\quad \ket{\Gamma^{\Vec{z},\Vec{x}}}=(\mathbbm{1}\otimes P^{\Vec{z},\Vec{x}})\ket{\Gamma}=\sqrt{2^n}\ket{\Phi^{\Vec{z},\Vec{x}}},
\end{equation}
are the unnormalized versions of the $n$-qubit Bell states defined in \Cref{eq-n_qubit_Bell_states}. We let $\Vec{p}=(p_{\Vec{z},\Vec{x}}:\Vec{z},\Vec{x}\in\{0,1\}^n)$ denote the $4^n$-dimensional probability vector of error rates.

We let $\mathsf{PAULI}_n$ be the set of all Pauli channels acting on $n$-qubit systems, and analogously to \eqref{eq-CPTP_choi_matrices}, we let
\begin{equation}
    \mathsf{PAULI}_n^{\prime}\coloneqq\left\{N_{A,B}:N_{A,B}=\sum_{\Vec{z},\Vec{x}\in\{0,1\}^n}p_{\Vec{z},\Vec{x}}\Gamma_{A,B}^{\Vec{z},\Vec{x}},\, \Vec{p}=(p_{\Vec{z},\Vec{x}})_{\Vec{z},\Vec{x}}\in\Delta_{4^n}\right\}
\end{equation}
be the set of all Choi matrices of Pauli channels, where
\begin{equation}
    \Delta_m\coloneqq\left\{\Vec{p}=(p_1,p_2,\dotsc,p_m)\in\mathbb{R}^m:p_k\in[0,1]~\forall k\in\{1,2,\dotsc,m\},\,\sum_{k=1}^m p_k=1\right\}
\end{equation}
denotes the probability simplex of $m\in\mathbb{N}$ elements. Depending on the context, we refer to a vectors $\Vec{p}=(p_1,p_2,\dotsc,p_m)$ with values $p_1,p_2,\dotsc,p_m\in[0,1]$ and $\sum_{k=1}^m p_k=1$ as both a \textit{probability vector} and a \textit{probability distribution}.

Associated to every quantum channel $\mathcal{N}$ is its Pauli-twirled version, defined as~\cite{DHCB05}
\begin{equation}
    \mathcal{N}^{\mathsf{P}}(\rho)\coloneqq\frac{1}{4^n}\sum_{\Vec{z},\Vec{x}\in\{0,1\}^n}P^{\Vec{z},\Vec{x}\dagger}\mathcal{N}(P^{\Vec{z},\Vec{x}}\rho P^{\Vec{z},\Vec{x}\dagger})P^{\Vec{z},\Vec{x}},
\end{equation}
where the superscript ``$\mathsf{P}$'' in $\mathcal{N}^{\mathsf{P}}$ refers to the set $\mathsf{P}\coloneqq\{P^{\Vec{z},\Vec{x}}:\Vec{z},\Vec{x}\in\{0,1\}^n\}$ of $n$-qubit Pauli operators. In other words, the Pauli-twirled version of a channel is given by applying a Pauli operator, chosen uniformly at random, and its inverse at the input and output of the channel. As we recall in Appendix~\ref{sec-Pauli_twirl_quantum_channel}, the Pauli-twirled channel $\mathcal{N}^{\mathsf{P}}$ is indeed a Pauli channel, and the error rates are given by $p_{\Vec{z},\Vec{x}}=\frac{1}{d}\Tr[\Phi^{\Vec{z},\Vec{x}}C(\mathcal{N})]$ for all $\Vec{z},\Vec{x}\in\{0,1\}^n$. In particular, we show that the Choi representation of the Pauli-twirled channel can be obtained via the \textit{pinching channel} $\mathcal{S}_{\mathsf{P}}$ in the Bell basis, defined as
\begin{equation}
    \mathcal{S}_{\mathsf{P}}(X)\coloneqq\sum_{\Vec{z},\Vec{x}\in\{0,1\}^n}\ketbra{\Phi^{\Vec{z},\Vec{x}}}{\Phi^{\Vec{z},\Vec{x}}}X\ketbra{\Phi^{\Vec{z},\Vec{x}}}{\Phi^{\Vec{z},\Vec{x}}}=\sum_{\Vec{x},\Vec{z}\in\{0,1\}^n}\Tr[\Phi^{\Vec{z},\Vec{x}}X]\Phi^{\Vec{z},\Vec{x}},
\end{equation}
for every linear operator $X\in\Lin((\mathbb{C}^2)^{\otimes n})$. In particular, then, the Choi representation of the Pauli-twirled version $\mathcal{N}^{\mathsf{P}}$ of a channel $\mathcal{N}$ is given by
\begin{equation}
     C(\mathcal{N}^{\mathsf{P}})
     =\sum_{\Vec{z},\Vec{x}\in\{0,1\}^n}\frac{1}{d}\Tr[\Phi^{\Vec{z},\Vec{x}}C(\mathcal{N})]\Gamma^{\Vec{z},\Vec{x}}
     =\sum_{\Vec{z},\Vec{x}\in\{0,1\}^n}\Tr[\Phi^{\Vec{z},\Vec{x}}C(\mathcal{N})]\Phi^{\Vec{z},\Vec{x}}
     =\mathcal{S}_{\mathsf{P}}(C(\mathcal{N})).\label{eq-Pauli_twirl_pinching}
\end{equation}

\paragraph*{Channel measurements and observables.} An $m$-outcome measurement of a quantum state is given by a \emph{positive operator-valued measure} (POVM), i.e., a set $\{M^{(i)}\}_{i=1}^m$ of $m$ operators satisfying $0\leq M^{(i)}\leq\mathbbm{1}$ such that $\sum_{i=1}^m M^{(i)}=\mathbbm{1}$. Observables for states are simply Hermitian operators $H$, and the expected value of the observable $H$ when measured on a state $\rho$ is $\Tr[H\rho]$.

Now, a measurement, or a \textit{test}, for a quantum channel $\mathcal{N}_{A\to B}$ is given by a pair consisting of a bipartite state $\rho_{R,A}$ and a POVM $\{M_{R,B}^{(i)}\}_{i=1}^m$, where $R$ is an arbitrary memory/reference system; see Figure~\ref{fig-channel_test_gen}(a). Using \Cref{eq-choi-channel-translation}, the probability of obtaining a particular outcome $i\in\{1,2,\dotsc,m\}$ of the test is given by
\begin{equation}\label{eq-choi-matrix-to-channel-prob}
    \Tr[M_{R,B}^{(i)}\mathcal{N}_{A\to B}(\rho_{R,A})]=\Tr[E_{A,B}^{(i)}C_{A,B}^{\mathcal{N}}],
\end{equation}
where $E_{A,B}^{(i)}=\Tr_R[(\mathbbm{1}_B\otimes\rho_{R,A}^{\t_A})(\mathbbm{1}_A\otimes M_{R,B}^{(i)})]$ for all $i\in\{1,2,\dotsc,m\}$. The right-hand side of this equation is the generalized Born rule for quantum channels; see Figure~\ref{fig:channel_tests}(c) for a depiction. The channel test operators $E_{A,B}^{(i)}$ satisfy $E_{A,B}^{(i)}\geq 0$ for all $i\in\{1,2,\dotsc,m\}$ and $\sum_{i=1}^m E_{A,B}^{(i)}=\rho_A^{\t}\otimes\mathbbm{1}_B$, where $\rho_A\equiv\Tr_R[\rho_{R,A}]$. The converse is also true~\cite{Ziman08,BGS20}, meaning that every channel test can be characterized by a set $\{E_{A,B}^{(i)}\}_{i=1}^m$ such that $0\leq E_{A,B}^{(i)}\leq\sigma_A\otimes\mathbbm{1}_B$ for all $i\in\{1,2,\dotsc,m\}$, for some density operator $\sigma_A$, and $\sum_{i=1}^m E_{A,B}^{(i)}=\sigma_A\otimes\mathbbm{1}_B$. The corresponding ``physical realization'' of the channel test is given by $\rho_{R,A}=\ketbra{\psi^{\sigma}}{\psi^{\sigma}}_{R,A}$, where $\mathcal{H}_R\cong\mathcal{H}_A$, $\ket{\psi^{\sigma}}$ is a purification of $\sigma$, and 
\begin{equation}    
    M_{R,B}^{(i)}=\sigma_R^{-\frac{1}{2}}E_{R,B}^{(i)}\sigma_R^{-\frac{1}{2}}.
\end{equation}
An especially simple example of a channel test is one without memory, involving an input state $\rho_A$ at the input of the channel and a measurement $\{M_B^{(i)}\}_{i=1}^m$ at the output of a channel; see Figure~\ref{fig:channel_tests}(b). In this case, the channel test operators are in tensor-product form, given by $E_{A,B}^{(i)}=\rho_A^{\t}\otimes M_B^{(i)}$ for all $i\in\{1,2,\dotsc,m\}$.
    
The statements above for measurements readily generalize to statements about observables (Hermitian operators), due to the fact that every Hermitian operator has a spectral decomposition, and the spectral projections form a POVM. Consequently, the expected value of an observable $H_{R,B}$, measured according to the general scenario depicted in Figure~\ref{fig:channel_tests}(c), is equal to
\begin{equation}
    \Tr[H_{R,B}\mathcal{N}_{A\to B}(\rho_{R,A})]=\Tr[O_{A,B}C_{A,B}^{\mathcal{N}}],
\end{equation}
where the ``channel observable'' is $O_{A,B}=\Tr_R[(\mathbbm{1}_B\otimes\rho_{R,A}^{\t_A})(\mathbbm{1}_A\otimes H_{R,B})]$. If the measurement scheme does not contain a memory, then $O_{A,B}=\rho_A^{\t}\otimes H_B$.

Let us now consider the case that the memory system $R$ has the same dimension as the input system $A$ of the channel, so let us make the relabeling $R\equiv A'$. Let us also suppose that $\rho_{R,A}\equiv\psi_{A'A}$ is a pure state. Then, for every bipartite pure state $\psi_{A'A}$, there exists a state $\rho_A$ such that $\psi_{A'A}=\rho_A^{\frac{1}{2}}\Gamma_{A'A}\rho_A^{\frac{1}{2}}$. The overall observable is then
\begin{align}
\nonumber
    O_{A,B}&=\Tr_{A'}[(\mathbbm{1}_A\otimes H_{A'B})((\rho_A^{\frac{1}{2}}\Gamma_{A'A}\rho_A^{\frac{1}{2}})^{\t_A}\otimes\mathbbm{1}_B)]\\
    \nonumber
    &=\Tr_{A'}[(\mathbbm{1}_A\otimes H_{A'B})((\rho_{A'}^{\t})^{\frac{1}{2}}F_{A'A}(\rho_{A'}^{\t})^{\frac{1}{2}}\otimes\mathbbm{1}_B)]\\
    \nonumber
    &=\sum_{i,j=0}^{d_A-1}\Tr_{A'}\!\left[(\mathbbm{1}_A\otimes H_{A'B})((\rho_{A'}^{\t})^{\frac{1}{2}}\ketbra{j,i}{i,j}_{A'A}(\rho_{A'}^{\t})^{\frac{1}{2}}\otimes\mathbbm{1}_B)\right]\\
    \nonumber
    &=\sum_{i,j=0}^{d_A-1}\ketbra{i}{j}_A\otimes \Tr_{A'}\!\left[H_{A'B}(\rho_{A'}^{\t})^{\frac{1}{2}}\ketbra{j}{i}_{A'}(\rho_{A'}^{\t})^{\frac{1}{2}}\right]\nonumber\\
    \nonumber
    &=\sum_{i,j=0}^{d_A-1}\ketbra{i}{j}_A\otimes\bra{i}_{A'}(\rho_{A'}^{\t})^{\frac{1}{2}}H_{A'B}(\rho_{A'}^{\t})^{\frac{1}{2}}\ket{j}_{A'}\\
    &=(\rho_A^{\t})^{\frac{1}{2}}H_{A,B}(\rho_A^{\t})^{\frac{1}{2}}.\label{eq-channel_observable_with_memory}
\end{align}
Note that a special case of the observable in \Cref{eq-channel_observable_with_memory} is when $\rho_A=\frac{\mathbbm{1}_A}{d_A}$, which corresponds to measuring $H_{A,B}$ on the Choi state of the channel. In this case
\begin{equation}
    O_{A,B}=\frac{1}{d_A}H_{A,B}\quad (\rho_A=\mathbbm{1}_A/d_A),
\end{equation}
which means that
\begin{equation}\label{eq-channel_expectation_Choi_state}
    \Tr[O_{A,B}C_{A,B}^{\mathcal{N}}]=\Tr[H_{A,B}\Phi_{A,B}^{\mathcal{N}}].
\end{equation}

Now, the operator/spectral norm $\norm{\cdot}_{\infty}$ is used to characterize measurement operators and observables for quantum states, because it is the (H\"{o}lder) dual to the trace norm $\norm{\cdot}_1$. For quantum channels, the relevant norm is the diamond norm~\cite{kitaev1997}. Let $\mathcal{N}:\Lin(\mathcal{H}_A)\to\Lin(\mathcal{H}_B)$ be a Hermiticity-preserving linear map. The diamond norm of $\mathcal{N}_{A\to B}$ can be expressed as
\begin{align}
    \nonumber \norm{\mathcal{N}}_{\diamond}&=\norm{C_{A,B}^{\mathcal{N}}}_{\diamond 1}\\
    \nonumber &\coloneqq\sup_{\substack{S_{A,B},T_{A,B}\geq 0\\\sigma_A\geq 0}}\left\{\Tr[C_{A,B}^{\mathcal{N}}(S_{A,B}-T_{A,B})]:S_{A,B}+T_{A,B}=\sigma_A\otimes\mathbbm{1}_B,\,\Tr[\sigma_A]=1\right\}\\
    \nonumber 
    &=\inf_{\substack{t\geq 0\\Y_{A,B}\geq 0}}\left\{t:-Y_{A,B}\leq C_{A,B}^{\mathcal{N}}\leq Y_{A,B},\,\Tr_B[Y_{A,B}]=t\mathbbm{1}_A\right\}
    \nonumber \\
    &=\inf_{\substack{t\geq 0\\N_{A,B}\geq 0}}\left\{t:-tN_{A,B}\leq C_{A,B}^{\mathcal{N}}\leq tN_{A,B},\,\Tr_B[N_{A,B}]=\mathbbm{1}_A\right\}.\label{eq-strategy_1norm}
\end{align}
The norm $\norm{\cdot}_{\diamond 1}$ is referred to as the \textit{strategy 1-norm}, and we define it formally in Appendix~\ref{sec-multi_time_processes}. The relevant norm for channel observables is thus the H\"{o}lder dual of the strategy 1-norm, which is given by~\cite{gutoski2010thesis,Gut12}
\begin{align}
    \norm{O_{A,B}}_{\diamond 1}^{\ast}&\coloneqq\sup_{\substack{X_{A,B}\geq 0,\\Y_{A,B}\geq 0}}\Big\{\Tr[O_{A,B}(X_{A,B}-Y_{A,B})]:\Tr_B[X_{A,B}+Y_{A,B}]\leq\mathbbm{1}_A\Big\} \nonumber \\
    \nonumber&=\inf_{Y_A\geq 0}\Big\{\Tr[Y_A]:-Y_A\otimes\mathbbm{1}_B\leq O_{A,B}\leq Y_A\otimes\mathbbm{1}_B\Big\}\\
    &=\inf_{\substack{t\geq 0,\\\sigma_A\geq 0}}\Big\{t:-t\sigma_A\otimes\mathbbm{1}_B\leq O_{A,B}\leq t\sigma_A\otimes\mathbbm{1}_B,\,\Tr[\sigma_A]=1\Big\},\label{eq-diamond_norm_dual_3}
\end{align}
for every Hermitian operator $O_{A,B}$ acting on $\mathcal{H}_A\otimes\mathcal{H}_B$. We note that channel test operators satisfy $\norm{E_{A,B}}_{\diamond 1}^{\ast}\leq 1$. Also, for channel observables without memory, it holds that 
\begin{equation}
    \norm{\rho_A^{\t}\otimes H_B}_{\diamond 1}^{\ast}=\norm{\rho_A^{\t}}_1\norm{H_B}_{\infty}=\norm{H_B}_{\infty},
\end{equation}
as we might expect, and which is a property interesting in its own right; we refer to Appendix~\ref{sec-multi_time_processes} for a proof. In general, we have that $\norm{O_{A,B}}_{\diamond 1}^{\ast}\geq\norm{O_{A,B}}_{\infty}$ for all Hermitian $O_{A,B}\in\Lin(\mathcal{H}_A\otimes\mathcal{H}_B)$. This follows immediately from \eqref{eq-diamond_norm_dual_3}, on account of the fact that $\norm{\sigma_A}_{\infty}\leq\norm{\sigma_A}_1=\Tr[\sigma_A]=1\Rightarrow-\mathbbm{1}_A\leq\sigma_A\leq\mathbbm{1}_A$ for every $\sigma_A$ in the optimization in \eqref{eq-diamond_norm_dual_3}, and the fact that $\norm{H}_{\infty}=\inf\{t:-t\mathbbm{1}_d\leq H\leq t\mathbbm{1}_d\}$ for all Hermitian $H\in\Lin(\mathbb{C}^d)$.

\paragraph*{Multi-time quantum processes.} Multi-time quantum processes are those that occur over multiple time steps, as opposed simply one time step for a quantum channel. A simple example of a multi-time quantum process is the one depicted in Figure~\ref{fig-channel_test_gen}(a), in which the process (in blue) consists of three (independent) uses of a quantum channel $\mathcal{N}$. In Figure~\ref{fig-channel_test_gen}(b), we depict in blue a quantum multi-time process with memory, which can model non-Markovian dynamics~\cite{pollock2018nonmarkovian}. Measurements, or \textit{testers}, for multi-time processes consist of an input state to the process, several (possibly adaptive) interactions with the process over multiple time steps, and finally a measurement. The testers are depicted as the red processes in Figure~\ref{fig-channel_test_gen}.

The mathematical objects describing multi-time processes and testers are known as \textit{quantum combs}~\cite{CDP08,CDP09} and \textit{quantum strategies}~\cite{GW07}, and we provide formal definitions of these objects in Appendix~\ref{sec-multi_time_processes}. Briefly, quantum combs are multipartite operators that are defined as the Choi representations of multi-time quantum processes. For $k\in\mathbb{N}$, let $\mathcal{H}_{A,B}^{(k)}\equiv\mathcal{H}_{A_1}\otimes\mathcal{H}_{B_1}\otimes\mathcal{H}_{A_2}\otimes\mathcal{H}_{B_2}\otimes\dotsb\otimes\mathcal{H}_{A_k}\otimes\mathcal{H}_{B_k}$, such that the systems $A_1,A_2,\dotsc,A_k$ are the input systems and $B_1,B_2,\dotsc,B_k$ are the output systems; see Figure~\ref{fig-channel_test_gen}. Then, the set of quantum combs describing $r$-step multi-time quantum processes is defined as follows:
\begin{multline}
    \mathsf{COMB}_r(A_1,\dotsc,A_r;B_1,\dotsc,B_r)\\\coloneqq\Big\{N_r\in\Lin_{+}(\mathcal{H}_{A,B}^{(r)}):\exists\,N_k\in\Lin_{+}(\mathcal{H}_{A,B}^{(k)}),\,\Tr_{B_k}[N_k]=N_{k-1}\otimes\mathbbm{1}_{A_k},\\k\in\{2,3,\dotsc,r\},\,\Tr_{B_1}[N_1]=\mathbbm{1}_{A_1}\Big\}.\label{eq-strategies}
\end{multline}
We suppress the system labels and simply write $\mathsf{COMB}_r$ when the systems are unimportant or understood in the context being considered. For the process in blue in Figure~\ref{fig-channel_test_gen}(b), its Choi representation belongs to the set $\mathsf{COMB}_3$. We also note that $\mathsf{COMB}_1=\mathsf{CPTP}^{\prime}$. The set of inputs to multi-time processes, sometimes called \textit{co-strategies}, are defined as the Choi representations of multi-time processes in which the first input system is trivial. Specifically, let $\widetilde{\mathcal{H}}_{A,B}^{(k)}\equiv\mathbb{C}\otimes\mathcal{H}_{A_1}\otimes\dotsb\otimes\mathcal{H}_{B_{k-1}}\otimes\mathcal{H}_{A_k}$, for $k\in\mathbb{N}$. Then, the set of inputs to multi-time processes is defined as
\begin{align}
    &\mathsf{COMB}_r^{\ast}(A_1,\dotsc,A_r;B_1,\dotsc,B_{r-1})\nonumber\\
    \nonumber &\qquad\coloneqq\mathsf{COMB}_r(\varnothing,B_1,\dotsc,B_{r-1};A_1,\dotsc,A_r)\\
    &\qquad=\Big\{D_r\in\Lin_{+}(\widetilde{\mathcal{H}}_{A,B}^{(r)}):\exists\,D_k\in\Lin_{+}(\widetilde{\mathcal{H}}_{A,B}^{(k)}),\Tr_{A_k}[D_k]=D_{k-1}\otimes\mathbbm{1}_{B_{k-1}},\nonumber\\
    &\qquad\qquad\qquad k\in\{2,3,\dotsc,r\},\,\Tr_{A_1}[D_1]=1\Big\}.\label{eq-costrategies}
\end{align}
An example is the red process in Figure~\ref{fig-channel_test_gen}(b) (excluding the measurement), which belongs to the set $\mathsf{COMB}^{\ast}_3$. Again, we suppress the system labels and simply write $\mathsf{COMB}^{\ast}_r$ when the systems are unimportant or understood in the context being considered. Observe that every element of $\mathsf{COMB}^{\ast}_r$ is a positive semi-definite operator with unit trace. 
In particular, the elements of $\mathsf{COMB}^{\ast}_r$ are density operators, and we can think of them as multi-time analogues of quantum states.

The analogue of a POVM for multi-time processes, and thus the multi-time generalization of a channel test as we defined them above, is a \textit{mult-time tester}: a set $\{E^{(i)}\}_{i=1}^m$ of positive semi-definite \textit{test operators} $E^{(i)}\in\Lin(\mathcal{H}_{A,B}^{(r)})$ such that $0\leq E^{(i)}\leq S\otimes\mathbbm{1}_{B_r}$ for all $i\in\{1,2,\dotsc,m\}$, for some $S\in\mathsf{COMB}^{\ast}_r$, and $\sum_{i=1}^m E_{r}^{(i)}=S\otimes\mathbbm{1}_{B_r}$.

The multi-time analogues of the norms in \eqref{eq-strategy_1norm} and \eqref{eq-diamond_norm_dual_3} are the \textit{strategy $r$-norm} and its H\"{o}lder dual, which can be expressed as~\cite{Gut12}
\begin{align}
    \norm{O}_{\diamond r}&=\inf\Big\{t:t\geq 0,\,-tN\leq O\leq tN,\,N\in\mathsf{COMB}_r\Big\},\\
    \norm{O}_{\diamond r}^{\ast}&=\inf\Big\{t:t\geq 0,\,-tS\otimes\mathbbm{1}_{B_r}\leq O\leq tS\otimes\mathbbm{1}_{B_r},\,S\in\mathsf{COMB}^{\ast}_r\Big\},\label{eq-strategy_norm_dual_main}
\end{align}
for every Hermitian operator $O\in\Lin(\mathcal{H}_{A,B}^{(r)})$. It holds that $\norm{O}_{\diamond r}^{\ast}\geq\norm{O}_{\infty}$ for every Hermitian operator $O\in\Lin(\mathcal{H}_{A,B}^{(r)})$. For a multi-time test $\{E^{(i)}\}_{i=1}^m$ with $r$ time steps, it holds that every test operator $E^{(i)}$, $i\in\{1,2,\dotsc,m\}$, satisfies $\norm{E^{(i)}}_{\diamond r}^{\ast}\leq 1$. We also have the H\"{o}lder inequality
\begin{equation}\label{eq-Holder_inequality_strategy_norm}
    \Abs{\Tr[ON]}\leq\norm{O}_{\diamond r}^{\ast}\norm{N}_{\diamond r},
\end{equation}
for all Hermitian $O,N\in\Lin(\mathcal{H}_{A,B}^{(r)})$.

\subsection{Basics of online learning}\label{sec-online_learning_basics}

We view online learning a hypothesis class $\mathcal{F}\subseteq\mathsf{Y}^\mathsf{X}$ of functions as an interactive game between a learner and an adversary. 
In round $t$ of the interaction, the adversary challenges the learner with an input $x_t\in\mathsf{X}$.
Then, the learner predicts the output $f_t (x_t)\in\mathsf{Y}$ based on their current hypothesis $f_t$. (We focus on \textit{proper} online learning, where $f_t\in\mathcal{F}$ for all $t$.)
To conclude the round, the adversary provides the learner with feedback in the form of a loss $\ell_t (f_t (x_t))$, and the learner uses this piece of information to update\footnote{We assume the updating procedure to be deterministic. One can think of randomized learners as deterministically updating a probability distribution over hypotheses, whose performance is measured by its expected loss.} their hypothesis to $f^{(t+1)}$.
For our purposes, the target space is $\mathsf{Y}=[0,1]$ and every $\ell_t:[0,1]\to \mathbb{R}$ is convex and Lipschitz. 
Specific losses of interest are often of the form $\ell_t(y) = \ell(y-b_t)$ for some convex and Lipschitz $\ell:[0,1]\to [0,\infty)$ and $b_t\in [0,1]$; for example, this gives rise to the $L_p$-losses with $\ell(\cdot)=|\cdot|^p$. 
In these cases, we assume that the adversary provides the loss $\ell_t(f_t(x_t)) = \ell (f_t(x_t) - b_t)$ by explicitly revealing the value of $b_t$, and that $\ell$ is known in advance.
Note that in general, the $b_t$ here can be arbitrary. If there is an ``approximately true'' underlying concept $f^\ast\in\mathcal{F}$ such that $|b_t - f^\ast(x_t)|\leq\varepsilon/3$ holds for all $t$, then we speak of a realizable scenario, otherwise we call the setting non-realizable. 

\paragraph{Regret-bounded online learning.}

One way of phrasing desiderata in online learning is in terms of bounds on the difference between the incurred and the in hindsight optimal incurred loss, the so-called regret. Namely, after $T$ rounds of interaction, we say that the learner has incurred a regret of\footnote{Here, we implicitly assume that the minimum exists, which will be the case for our purposes because of compactness and continuity. More generally, one may replace the minimum by an infimum.} 
\begin{equation}\label{eq-regret_general}
    R_T = \sum_{t=1}^T \ell_t(f_t(x_t)) - \min_{f\in\mathcal{F}}\sum_{t=1}^T \ell_t(f(x_t))\, .
\end{equation}
The goal of an online learner now is to achieve a small regret, in particular scaling sub-linearly with $T$. 
Note that this model makes sense even if no restrictions on how the adversary chooses the $\ell_t$ (or, in the more concrete case of $L_p$-losses, the $b_t$ above) are imposed.

We make use of the following lemma throughout this work.

\begin{lemma}[Regret bound]\label{lem-regret_via_pseudoregret}
    Consider the online learning scenario described above, in which the functions $\ell_t$ are convex and differentiable. Then, for every $f\in\mathcal{F}$, it holds that
    \begin{equation}\label{eq-regret_via_pseduoregret_general}
        \sum_{t=1}^T\Big(\ell_t(f_t(x_t))-\ell_t(f(x_t))\Big)\leq\sum_{t=1}^T\Big(\ell_t'(f_t(x_t))(f_t(x_t)-f(x_t))\Big).
    \end{equation}
    The regret in \eqref{eq-regret_general} is therefore bounded from above as 
    \begin{equation}\label{eq-regret_via_pseduoregret_general_2}
        R_T\leq \sum_{t=1}^T \ell'_t(f_t(x_t))f_t(x_t)-\min_{f\in\mathcal{F}}\sum_{t=1}^T\ell'_t(f_t(x_t))f(x_t).
    \end{equation}
\end{lemma}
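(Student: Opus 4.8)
The plan is to obtain both claimed inequalities from a single application of the first-order convexity condition, followed by elementary rearrangement; the two displays are really one core estimate together with its specialization to the optimal comparator.

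First I would prove the per-round version of \eqref{eq-regret_via_pseduoregret_general}. Since each $\ell_t$ is convex and differentiable, the gradient (tangent-line) inequality gives, for any two points $a$ and $b$ in the domain, $\ell_t(b)\geq\ell_t(a)+\ell_t'(a)\,(b-a)$. Taking $a=f_t(x_t)$ and $b=f(x_t)$ and rearranging yields the round-$t$ bound $\ell_t(f_t(x_t))-\ell_t(f(x_t))\leq\ell_t'(f_t(x_t))\,(f_t(x_t)-f(x_t))$. Summing over $t\in\{1,\dotsc,T\}$ immediately produces \eqref{eq-regret_via_pseduoregret_general}, valid for every fixed $f\in\mathcal{F}$.

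Next I would specialize to the in-hindsight optimal hypothesis to reach the regret bound \eqref{eq-regret_via_pseduoregret_general_2}. Let $f^\ast\in\mathcal{F}$ attain the minimum in the definition \eqref{eq-regret_general} of $R_T$ (it exists by the compactness and continuity assumptions recorded in the excerpt). Then $R_T=\sum_{t=1}^T\big(\ell_t(f_t(x_t))-\ell_t(f^\ast(x_t))\big)$, and applying the inequality just established with $f=f^\ast$ upper-bounds this by $\sum_{t=1}^T\ell_t'(f_t(x_t))\,(f_t(x_t)-f^\ast(x_t))$. Splitting this sum into two, and using that $f^\ast\in\mathcal{F}$ forces $-\sum_{t=1}^T\ell_t'(f_t(x_t))f^\ast(x_t)\leq-\min_{f\in\mathcal{F}}\sum_{t=1}^T\ell_t'(f_t(x_t))f(x_t)$, gives exactly \eqref{eq-regret_via_pseduoregret_general_2}.

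I do not anticipate a genuine obstacle: this is the standard ``linearize the loss via convexity and compare against the best fixed comparator'' step of online convex optimization. The only points needing care are confirming that the gradient inequality is licensed (immediate from convexity and differentiability of each $\ell_t$) and the sign bookkeeping when passing from the fixed comparator $f^\ast$ to the minimum over all of $\mathcal{F}$, where I would double-check that the minimization lands on the correct side of the inequality.
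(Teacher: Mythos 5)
Your proof is correct and follows essentially the same route as the paper: both linearize each $\ell_t$ via the first-order convexity (tangent-line) inequality at $f_t(x_t)$, sum over rounds, and then pass to the minimum over the comparator class. Your explicit handling of the minimizer $f^\ast$ and the sign of the final minimization is careful and matches the paper's (more tersely stated) argument.
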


\begin{proof}
    Since $\ell_t$ is convex and differentiable, it readily follows that
    \begin{equation}
        \ell_t(y_1-y_2)\leq \ell_t'(y_1)(y_1-y_2),
    \end{equation}
    for all $y_1,y_2\in[0,1]$. (See, e.g., Ref.~\cite[Chapter~2]{OCO_hazan_book}.) Consequently, for every $f\in\mathcal{F}$,
    \begin{equation}
        \ell_t(f_t(x_t))-\ell_t(f(x_t))\leq \ell_t'(f_t(x_t))(f_t(x_t)-f(x_t)),
    \end{equation}
    for all $t\in\{1,2,\dotsc,T\}$. Therefore,
    \begin{align}
        \sum_{t=1}^T\Big(\ell_t(f_f(x_t))-\ell_t(f(x_t))\Big)&\leq\sum_{t=1}^T\Big(\ell_t'(f_t(x_t))(f_t(x_t)-f(x_t))\Big)\\
        \nonumber
        &=\sum_{t=1}^T \ell_t'(f_t(x_t))f_t(x_t)-\sum_{t=1}^T\ell'_t(f_t(x_t))f(x_t)\\
        \nonumber
        &\leq\sum_{t=1}^T\ell'_t(f_t(x_t))-\min_{f\in\mathcal{F}}\sum_{t=1}^T\ell'_t(f_t(x_t))f(x_t).
        \nonumber
    \end{align}
    The first inequality is \eqref{eq-regret_via_pseduoregret_general}. The inequality in \eqref{eq-regret_via_pseduoregret_general_2} follows because the function $f\in\mathcal{F}$ is arbitrary.
\end{proof}

Next, we recall results from 
Refs.\ \cite{rakhlin2015online, rakhlin2015sequential}, which demonstrate how to obtain regret bounds in terms of sequential complexity measures of the hypothesis class.
To formulate these results, we introduce the following pieces of notation: We use $\mathbf{z}=(\mathbf{z}_t)_{t=1}^T$ with labeling functions $\mathbf{z}_t:\{\pm 1\}^{t-1}\to \mathsf{Z}$ to describe a complete rooted binary tree of depth $T$ with nodes labeled by elements of $\mathsf{Z}$.
That is, if we arrive at a node by following a path $\pi=(\pi_s)_{s=1}^{t-1}\in\{\pm 1\}^{t-1}$ of length $t-1$ from the root, then $\mathbf{z}$ assigns the label $\mathbf{z}_t(\pi)$ to that node. 
For notational convenience, if $\pi\in\{\pm 1\}^T$ is a path of length $T>t-1$, we identify $\mathbf{z}_t(\pi) = \mathbf{z}_t(\pi_1,\ldots,\pi_{t-1})$.
We can now introduce a notion of sequential covering to capture the effective size of a function class.

\begin{definition}[Sequential covering numbers \cite{rakhlin2015online}]\label{definition:sequential-covering}
    Let $\mathcal{G}\subseteq\mathbb{R}^{\mathsf{Z}}$, let $\mathbf{z}=(\mathbf{z}_t)_{t=1}^T$ be a complete rooted binary tree of depth $T$, and let $\varepsilon>0$, $p\geq 1$.
    We call a set $V$ of $\mathbb{R}$-valued complete binary trees of depth $T$ a \emph{sequential $p$-norm $\varepsilon$-cover of $\mathcal{G}$ on $\mathbf{z}$} if the following holds:
    \begin{equation}
        \forall g \in \mathcal{G}~\forall\pi\in\{\pm 1\}^T~\exists \mathbf{v}\in V:~\left(\frac{1}{T}\sum_{t=1}^T \lvert \mathbf{v}_t(\pi) - g(\mathbf{z}_t(\pi))\rvert^p\right)^{1/p}\leq\varepsilon\, .
    \end{equation}
    The \emph{sequential $p$-norm $\varepsilon$-covering number of $\mathcal{G}$ on $\mathbf{z}$} is defined to be
    \begin{equation}
        N_\mathbf{z} (\mathcal{G}, \varepsilon,p)
        \coloneqq\inf\left\{\lvert V\rvert : V\textrm{ is a sequential }p\textrm{-norm }\varepsilon\textrm{-cover of }\mathcal{G}\textrm{ w.r.t.\  }\mathbf{z}\right\}\, .
    \end{equation}
    We write $N_T (\mathcal{G}, \varepsilon,p)=\sup_{\mathbf{z}} N_\mathbf{z} (\mathcal{G}, \varepsilon,p)$, where the supremum is over trees of depth $T$.
\end{definition}

Note that \Cref{definition:sequential-covering} does not require a cover to consist of $\mathbb{R}$-valued trees that can be realized within $\mathcal{G}$. In that sense, the above notion is one of exterior covering.
Also, it will be useful to observe that sequential covering numbers satisfy the monotonicity relation $N_{\mathbf{z}}(\mathcal{G},\varepsilon,p)\leq N_{\mathbf{z}}(\mathcal{G},\varepsilon,q)$ for $1\leq p\leq q\leq\infty$.

Similarly to how empirical metric entropies, defined as the logarithm of covering numbers, control the generalization error in probably approximately correct learning, sequential metric entropies, the logarithm of sequential covering numbers, can be used to bound the regret in online learning. This result goes back to 
Refs.\ \cite{rakhlin2015sequential, rakhlin2015online}, we state it in a form similar to 
Ref.\ \cite[Theorem 9]{ACH+19}.

\begin{theorem}[Regret bound from sequential covering {\cite[Theorem 3]{rakhlin2015sequential} and \cite[Theorem 7]{rakhlin2015online}}]\label{theorem:regret-bound-sequential-covering}
    Let $\mathcal{F}\subseteq [0,1]^{\mathsf{X}}$. 
    For every $t\in\mathbb{N}_{\geq 1}$, let $\ell_t:[0,1]\to \mathbb{R}$ be convex and $L$-Lipschitz.
    Then, there exists an online learning strategy that, when presented sequentially with $x_1,\ldots,x_T\in\mathsf{X}$ and associated loss functions $\ell_1,\ldots,\ell_T$, outputs a sequence $f_1,\ldots,f_T\in\mathcal{F}$ of hypotheses whose regret is bounded as
    \begin{align}
        \sum_{t=1}^T \ell_t(f_t(x_t)) - \min_{f\in\mathcal{F}}\sum_{t=1}^T \ell_t(f(x_t))
        &\leq 2LT \sup_{\mathbf{x}}\inf_{\alpha>0} \left\{4\alpha + \frac{12}{\sqrt{T}}\int_{\alpha}^{1} \sqrt{\log N_{\mathbf{x}}(\mathcal{F},\beta,2)}~\mathrm{d}\beta\right\}\\
        &\leq 2LT \inf_{\alpha>0} \left\{4\alpha + \frac{12}{\sqrt{T}}\int_{\alpha}^{1} \sqrt{\log N_{T}(\mathcal{F},\beta,2)}~\mathrm{d}\beta\right\}\, .
        \nonumber
    \end{align}
    Here, the $\sup_{\mathbf{x}}$ is a supremum over all complete binary trees of depth $T$ with nodes labeled by elements of $\mathsf{X}$.
\end{theorem}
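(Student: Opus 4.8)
The plan is to identify the best achievable regret with the minimax value of the learner--adversary game, bound that value by the sequential Rademacher complexity of $\mathcal{F}$, and finally control the sequential Rademacher complexity by the chaining (Dudley) integral appearing on the right-hand side. The first step is to linearize the losses: since each $\ell_t$ is convex and $L$-Lipschitz, \Cref{lem-regret_via_pseudoregret} bounds the regret by the linearized quantity $\sum_{t=1}^T g_t f_t(x_t) - \min_{f\in\mathcal{F}}\sum_{t=1}^T g_t f(x_t)$, with gradients $g_t = \ell_t'(f_t(x_t))$ satisfying $|g_t|\le L$. This decouples the shape of the loss from the combinatorial structure of $\mathcal{F}$ and reduces the problem to a game with $L$-bounded linear losses in $f(x_t)$.

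Next, I would pass to the minimax value of this linearized game and apply the sequential symmetrization argument of Rakhlin--Sridharan--Tewari. Writing the value as the alternating $\inf_{f_t}\sup_{(x_t, g_t)}$ over the $T$ rounds and invoking a minimax theorem (justified by convexity of the learner's randomized play and compactness of the relevant sets), one replaces the adversary's adaptive choices by a worst-case $\mathsf{X}$-labeled complete binary tree $\mathbf{x}$ of depth $T$ together with independent Rademacher signs $\epsilon_t$. This yields $R_T \le 2LT\cdot\mathfrak{R}_T(\mathcal{F})$, where $\mathfrak{R}_T(\mathcal{F}) = \sup_{\mathbf{x}}\mathbb{E}_\epsilon \sup_{f\in\mathcal{F}}\frac{1}{T}\sum_{t=1}^T \epsilon_t f(\mathbf{x}_t(\epsilon))$ is the sequential Rademacher complexity; the factor $L$ enters through the gradient bound $|g_t|\le L$ together with a sequential Lipschitz-contraction step.

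It then remains to bound $\mathfrak{R}_T(\mathcal{F})$ by the Dudley integral via multi-scale chaining along the tree. At each dyadic scale $\beta = 2^{-k}$ one fixes a sequential $2$-norm $\beta$-cover of $\mathcal{F}$ on $\mathbf{x}$ of size $N_{\mathbf{x}}(\mathcal{F},\beta,2)$ (as in \Cref{definition:sequential-covering}), approximates each $f$ along every root-to-leaf path by its nearest cover tree, and telescopes the successive approximations across scales. Each per-scale increment is handled by a maximal inequality for the expected supremum over the finitely many tree-indexed Rademacher sums arising from the cover, which contributes the $\sqrt{\log N_T(\mathcal{F},\beta,2)}$ factor; summing over scales and optimizing the truncation level $\alpha$ produces exactly $4\alpha + \frac{12}{\sqrt{T}}\int_{\alpha}^1 \sqrt{\log N_T(\mathcal{F},\beta,2)}\,\mathrm{d}\beta$. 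Since the minimax value is attained (or approached along a sequence of strategies), a concrete strategy realizing the resulting regret bound exists, and the monotonicity $N_{\mathbf{x}}\le N_T$ yields the tree-uniform second inequality in the statement.

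The step I expect to be the main obstacle is the sequential symmetrization itself. Unlike in the i.i.d.\ setting, the adversary here is adaptive, so the symmetrization must respect the conditional, path-dependent structure of the game: the minimax theorem has to be applied round by round with the learner playing distributions over hypotheses, and the object it produces is indexed by complete binary trees rather than by a flat sample. Correctly propagating this tree structure through both the symmetrization and the subsequent chaining---in particular, ensuring that the covering numbers of \Cref{definition:sequential-covering} are the right complexity measure at each scale---is the technically delicate part of the argument, and it is precisely here that one must defer to the machinery of Refs.~\cite{rakhlin2015sequential, rakhlin2015online}.
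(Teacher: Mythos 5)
The paper does not prove this theorem itself; it imports it verbatim (up to restatement) from Refs.~\cite{rakhlin2015sequential, rakhlin2015online}, and your outline --- linearization of the convex Lipschitz losses via \Cref{lem-regret_via_pseudoregret}, sequential symmetrization of the minimax value to the tree-indexed sequential Rademacher complexity, and multi-scale chaining against sequential $2$-norm covers to produce the Dudley-type integral with the stated constants --- is precisely the proof strategy of those references. Your sketch is correct and correctly flags the adaptive symmetrization as the delicate step to be deferred to the cited machinery, so there is nothing to add.
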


\paragraph{Mistake-bounded online learning.}

As an alternative to measuring the performance of an online learner in terms of regret, we can count the number of rounds in which the learner incurs a loss that exceeds a certain threshold. More formally, given $\varepsilon\in (0,1)$, we say that the learner makes an $\varepsilon$-mistake in round $t$ if $\ell_t (f_t(x_t)) > \varepsilon$. The goal of the online learner then becomes to make only a small number of $\varepsilon$-mistakes. Note that this mistake-bounded model of online learning only makes sense in the realizable scenario, since the number of $\varepsilon$-mistakes is in general infinite in the non-realizable scenario.

To conclude our introductory discussion of online learning, we note a well known connection between the models of regret- and mistake-bounded online learning. Informally, we can say that good regret bounds lead to good mistake bounds, and the next result makes this formal.

\begin{lemma}[From regret to mistake bounds (compare, e.g., 
Ref.\ {\cite[Section 3.3]{ACH+19}} or {\cite[Corollary 2.1.4]{lowe2021learning}})]\label{lemma:regret-to-mistake-template}
	Let $\mathcal{F}\subseteq\mathbb{R}^{\mathsf{X}}$.
    Suppose we have an online learner for $\mathcal{F}$ that is sequentially presented with $x_1,\ldots,x_T\in\mathsf{X}$ and losses evaluated according to $\ell_t(\cdot) = \lvert (\cdot ) - b_t \rvert$, where there exists $f_\ast\in\mathcal{F}$ such that $|b_t - f_\ast(x_t)|\leq\varepsilon/3$ holds for all $t$. For an update rule that results in a sequence $f_1,\ldots,f_T\in\mathcal{F}$ of outputs from the learner, suppose that the regret is bounded as $R_T\leq h_1(\varepsilon,T) + h_2(\varepsilon)$, where $h_1:(0,1)\times \mathbb{N}_{\geq 1}\to\mathbb{R}_{\geq 0}$ and $h_2:(0,1)\to \mathbb{R}_{\geq 0}$.
    Assume that $h_1(\varepsilon,T)\in o(T)$ for every fixed $\varepsilon$, and that $h_2(\varepsilon)<2\varepsilon/3$.
    Let $T^\ast = T^\ast (h_1,h_2,\varepsilon)$ be the smallest natural number such that 
    \begin{equation}
        T'> T^\ast \textrm{ implies } T' > \frac{h_1(\varepsilon, T')}{(2\varepsilon/3) - h_2(\varepsilon)}\, .
    \end{equation}
    Then, applying the update rule only in rounds in which the learner makes an $\varepsilon$-mistake, and outputting the previous hypothesis otherwise, leads to a total number of $\varepsilon$-mistakes bounded by $T^\ast$, independently of the overall number of rounds.
\end{lemma}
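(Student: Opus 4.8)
The plan is to reduce the mistake count to the given regret bound, but evaluated only on the subsequence of rounds in which the learner errs, and then to extract a bound on the number of such rounds by playing the incurred loss off against that of the near-optimal hypothesis $f_\ast$.

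First I would isolate the mistake rounds. Let $t_1 < t_2 < \dots < t_M$ be the rounds (up to an arbitrary horizon) in which an $\varepsilon$-mistake occurs, so $M$ is the total number of mistakes. The crucial point is that, because the update rule is \emph{deterministic} and is applied \emph{only} on mistake rounds, the hypothesis the lazy learner holds on round $t_i$ is exactly the hypothesis the underlying regret-bounded learner would output after being fed precisely the data $(x_{t_1},\ell_{t_1}),\dots,(x_{t_{i-1}},\ell_{t_{i-1}})$ of the earlier mistake rounds; the hypotheses on non-mistake rounds are mere repetitions that trigger no update. Hence the lazy learner's trajectory restricted to mistake rounds coincides with an honest run of the regret-bounded learner on a length-$M$ input sequence, and the assumed guarantee applies verbatim with $T$ replaced by $M$, giving $R_M \leq h_1(\varepsilon,M) + h_2(\varepsilon)$.

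Next I would lower-bound the regret on this subsequence. On each mistake round we have $\ell_{t_i}(f_{t_i}(x_{t_i})) > \varepsilon$, so the learner's cumulative loss exceeds $M\varepsilon$. Choosing the comparator $f = f_\ast$ (which lies in $\mathcal{F}$ by realizability) and using $\ell_{t_i}(f_\ast(x_{t_i})) = |f_\ast(x_{t_i}) - b_{t_i}| \leq \varepsilon/3$ bounds the in-hindsight optimal loss by $M\varepsilon/3$. Subtracting, the regret on the subsequence is strictly greater than $2M\varepsilon/3$, which together with the regret upper bound yields $2M\varepsilon/3 < h_1(\varepsilon,M) + h_2(\varepsilon)$.

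Finally I would convert this into the stated bound. Since $M \geq 1$ and $h_2(\varepsilon) \geq 0$, one has $2M\varepsilon/3 - h_2(\varepsilon) \geq M(2\varepsilon/3 - h_2(\varepsilon))$, so the previous inequality gives $M(2\varepsilon/3 - h_2(\varepsilon)) < h_1(\varepsilon,M)$; as $h_2(\varepsilon) < 2\varepsilon/3$ the factor $(2\varepsilon/3) - h_2(\varepsilon)$ is positive, whence $M < h_1(\varepsilon,M)/((2\varepsilon/3) - h_2(\varepsilon))$. Suppose toward a contradiction that $M > T^\ast$. The defining property of $T^\ast$ then forces $M > h_1(\varepsilon,M)/((2\varepsilon/3) - h_2(\varepsilon))$, contradicting the inequality just derived, so $M \leq T^\ast$. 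That $T^\ast$ is finite follows from $h_1(\varepsilon,\cdot) \in o(T)$, which guarantees $h_1(\varepsilon,T')/((2\varepsilon/3) - h_2(\varepsilon)) < T'$ for all sufficiently large $T'$. I expect the main obstacle to be the first step: carefully justifying that the regret guarantee transfers to the adversarially determined mistake subsequence, i.e. that lazily updating only on mistakes genuinely reproduces an honest run of the regret learner on those rounds even though which rounds are mistakes depends on the learner's own predictions.
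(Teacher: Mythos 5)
Your proof is correct and follows essentially the same route as the paper's: restrict to the mistake rounds, lower-bound the regret there by $2M\varepsilon/3$ via the comparator $f_\ast$, and play this off against the assumed regret upper bound to invoke the definition of $T^\ast$. If anything you are slightly more careful than the paper on two points: you explicitly justify that the lazy, mistake-driven learner's trajectory on the mistake subsequence reproduces an honest run of the regret-bounded learner (which the paper takes for granted), and you work with the additive $h_2(\varepsilon)$ exactly as stated, whereas the paper's own proof silently compares against $h_1(\varepsilon,T')+h_2(\varepsilon)T'$.
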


Notice that the online learning procedure achieving the mistake bound $T^\ast$ in \Cref{lemma:regret-to-mistake-template} is \emph{mistake-driven}: It updates the hypothesis only when an $\varepsilon$-mistake is made. After a round without an $\varepsilon$-mistake, the learner just proceeds without changing their hypothesis.

\begin{proof}
    Let $T'$ denote the number of $\varepsilon$-mistakes (i.e., the number of updates) and let us focus on the subsequence of rounds in which the learner makes a mistake.
	First note that, by assumption, there is a function $f_\ast\in\mathcal{F}$ such that the prediction $f_t(x_t)$ achieves loss $\ell(f_\ast(x_t))\leq \varepsilon/3$ for all $t$. Hence, the optimal accumulated loss after $T'$ rounds is $\leq T'\varepsilon/3$.    
    As the described procedure applies the update rule only when a loss $>\varepsilon$ is incurred, this mistake-driven online learning procedure incurs an accumulated loss of $> T'\varepsilon$.
    Thus, its regret is $> 2T'\varepsilon/3$.
    Comparing this with the assumed regret upper bound of $R_{T'}\leq h_1(\varepsilon,T') + h_2(\varepsilon)T'$ and rearranging, we see that $T'\leq \tfrac{h_1(\varepsilon, T')}{(2\varepsilon/3) - h_2(\varepsilon)}$. Thus, by the definition of $T^\ast$, we conclude $T'\leq T^\ast$.
\end{proof}

\subsection{The multiplicative weights framework and corresponding guarantees}

In this section, we provide a brief overview of the multiplicative weights framework and refer to Ref.~\cite{arora2012multiplicative} for a more comprehensive review.
This toolkit has a variety of applications in areas such as game theory and economics~\cite{freund1999adaptive}, machine learning~\cite{littlestone1988learning, littlestone1994weighted}, and semidefinite programming~\cite{arora2016MMW, arora2005fast}. In quantum computing, the matrix multiplicative weights algorithm was used to prove that the complexity classes QIP (problems with a quantum interactive proof system) and PSPACE (problems solvable in polynomial amount of space) coincide~\cite{jain2011qip}.

Consider an interactive game, in which a learner is tasked with picking the best option from a set of $d\in\mathbb{N}$ decisions over $T\in\mathbb{N}$ rounds of interaction. In each round $1\leq t\leq T$, every decision $i\in\{1,2,\dotsc,d\}$ is associated with a cost $m_i^{(t)}\in[-1,1]$. Upon making their decision, the learner is informed of the associated cost, which they can use to make their decisions in subsequent rounds. The learner's goal is to output a sequence of decisions over the $T$ rounds such that their total accumulated cost after $T$ rounds is minimized.

In the multiplicative weights framework, the decision in round $1\leq t\leq T$ is made according to a $d$-dimensional probability vector $\Vec{p}^{(t)}=(p^{(t)}_1,p^{(t)}_2,\dotsc,p^{(t)}_d)$. Explicitly, $p^{(t)}_i$ is the probability of making the decision $i\in\{1,2,\dotsc,d\}$. 
The expected cost of the distribution $\Vec{p}^{(t)}$ is then $\Vec{m}^{(t)}\cdot\Vec{p}^{(t)}$, where $\Vec{m}^{(t)}=(m^{(t)}_1,\dotsc,m^{(t)}_d)$ is the vector of costs. 
The expected accumulated cost after $T$ rounds is thus given by $\sum_{t=1}^{T}\Vec{m}^{(t)}\cdot\Vec{p}^{(t)}$.
The \textit{multiplicative weights update} (MWU) algorithm, presented in Algorithm~\ref{alg:mwu_general}, is a method for obtaining a sequence of probability distributions over decisions based on the costs incurred in the previous rounds. 
 
\begin{algorithm} 
    \begin{algorithmic}[1]
    \Require{$\eta\in(0,\frac{1}{2}]$; $T\in\mathbb{N}$ rounds; $d\in\mathbb{N}$ decisions; initial weights $\Vec{w}^{(1)} = (1,1,\dotsc,1)$.}
    \For{\texttt{$t = 1, 2, \ldots T$}}
        \State Output the decision/estimate $\Vec{p}^{(t)}=\frac{\Vec{w}^{(t)}}{\phi^{(t)}}$, $\phi^{(t)}=\sum_{i=1}^d w_i^{(t)}$.
        \State Receive the cost vector $\Vec{m}^{(t)}=(m_1^{(t)},m_2^{(t)},\dotsc,m_d^{(t)})$, $m_i^{(t)}\in[-1,1]$, $i\in\{1,2,\dotsc,d\}$.
        \State Update the weights as $w_i^{(t+1)}=w_i^{(t)}(1-\eta m_i^{(t)})$, $i\in\{1,2,\dotsc,d\}$.
        \State Set $\Vec{w}^{(t+1)}=(w_1^{(t+1)},w_2^{(t+1)},\dotsc,w_d^{(t+1)})$.
    \EndFor
    \end{algorithmic}
    \caption{Multiplicative weights update (MWU) algorithm~\cite{arora2012multiplicative}}\label{alg:mwu_general}
\end{algorithm}

\begin{algorithm}
     \begin{algorithmic}[1]
     \Require {$\eta \in (0, 1]$; Initialize $W^{(1)} = \mathbbm{1}_d$}
     \For{\texttt{$t = 1, 2, \ldots, T$}}
        \State Output the decision/estimate $\omega^{(t)} = \frac{W^{(t)}}{\Tr[W^{(t)}]}$.
        \State Receive the cost matrix $L^{(t)}$, $ -\mathbbm{1}_d \leq L^{(t)} \leq \mathbbm{1}_d$.
        \State Update the weight matrix: $W^{(t+1)} = \exp\!\left[-\eta \sum_{t' = 1}^t L^{(t')}\right]=\exp[\log(W^{(t)})-\eta L^{(t)}]$.
        \EndFor
     \end{algorithmic}
     \caption{Matrix multiplicative weights (MMW) algorithm~\cite{tsuda2005MMW,arora2016MMW}}\label{alg:mmw}
\end{algorithm}

Arora, Hazan and Kale~\cite{arora2012multiplicative} have shown that if a learner makes their decisions according to the MWU algorithm, then their expected accumulated cost only grows logarithmically in $d$, the number of possible decisions.

\begin{theorem}[{\cite[Theorem~2.1 \& Corollary~2.2]{arora2012multiplicative}}]\label{thm:MWU}
    Consider the setting of an interactive game, as described above, with cost vectors $\Vec{m}^{(t)}$ satisfying $m_i^{(t)}\in[-1,1]$ for all $i\in\{1,2,\dotsc,d\}$ and for all $t\in\{1,2, \dotsc, T\}$, and let $\Vec{q}$ be an arbitrary probability distribution over the $d$ decisions. Using the MWU algorithm (Algorithm~\ref{alg:mwu_general}), the expected accumulated cost over $T\in\mathbb{N}$ rounds is bounded from above as
    \begin{equation}
        \sum_{t=1}^T \Vec{m}^{(t)} \cdot \Vec{p}^{(t)} 
        \leq \sum_{t=1}^T (\Vec{m}^{(t)}+\eta|\Vec{m}^{(t)}|)\cdot\Vec{q}+\frac{\log d}{\eta}
        \leq \sum_{t=1}^T \Vec{m}^{(t)}\cdot\Vec{q} + \eta T + \frac{\log d}{\eta}. \label{Eq-mwu-arora}
    \end{equation}
\end{theorem}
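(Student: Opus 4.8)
The plan is to run the standard potential-function argument for multiplicative weights, tracking the total weight $\phi^{(t)}=\sum_{i=1}^d w_i^{(t)}$ that already appears in Algorithm~\ref{alg:mwu_general}; note that $\phi^{(1)}=d$ since the weights are initialized to $1$. First I would derive an \emph{upper} bound on the final potential. Using the update rule $w_i^{(t+1)}=w_i^{(t)}(1-\eta m_i^{(t)})$ together with $p_i^{(t)}=w_i^{(t)}/\phi^{(t)}$, one computes
\begin{equation}
\phi^{(t+1)}=\sum_{i=1}^d w_i^{(t)}(1-\eta m_i^{(t)})=\phi^{(t)}\Big(1-\eta\,\Vec{m}^{(t)}\cdot\Vec{p}^{(t)}\Big)\leq\phi^{(t)}\exp\!\Big(-\eta\,\Vec{m}^{(t)}\cdot\Vec{p}^{(t)}\Big),
\end{equation}
where the inequality is $1-x\leq e^{-x}$. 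Telescoping over $t$ then gives $\phi^{(T+1)}\leq d\exp(-\eta\sum_{t=1}^T\Vec{m}^{(t)}\cdot\Vec{p}^{(t)})$.

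Second, I would \emph{lower} bound the same potential by the contribution of a single decision $i$, namely $\phi^{(T+1)}\geq w_i^{(T+1)}=\prod_{t=1}^T(1-\eta m_i^{(t)})$. Taking logarithms of the upper and lower bounds and combining yields
\begin{equation}
\log d-\eta\sum_{t=1}^T\Vec{m}^{(t)}\cdot\Vec{p}^{(t)}\geq\sum_{t=1}^T\log\!\big(1-\eta m_i^{(t)}\big).
\end{equation}
The crucial analytic input is the elementary bound $\log(1-x)\geq-x-x^2$, valid for $|x|\leq\tfrac12$; this is exactly where the hypotheses $\eta\leq\tfrac12$ and $m_i^{(t)}\in[-1,1]$ enter, guaranteeing $|\eta m_i^{(t)}|\leq\tfrac12$. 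Applying it term by term gives $\sum_t\log(1-\eta m_i^{(t)})\geq-\eta\sum_t m_i^{(t)}-\eta^2\sum_t (m_i^{(t)})^2$, and rearranging (dividing by $\eta>0$) produces, for every fixed decision $i$,
\begin{equation}
\sum_{t=1}^T\Vec{m}^{(t)}\cdot\Vec{p}^{(t)}\leq\sum_{t=1}^T m_i^{(t)}+\eta\sum_{t=1}^T (m_i^{(t)})^2+\frac{\log d}{\eta}.
\end{equation}

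Finally, I would convert this per-decision estimate into the claimed comparator bound against an arbitrary distribution $\Vec{q}$. Since the displayed inequality holds for each $i$, multiplying by $q_i\geq 0$ and summing over $i$ (using $\sum_i q_i=1$) replaces $m_i^{(t)}$ by $\Vec{m}^{(t)}\cdot\Vec{q}$ and $(m_i^{(t)})^2$ by $\sum_i q_i (m_i^{(t)})^2$. Bounding the quadratic term via $(m_i^{(t)})^2\leq|m_i^{(t)}|$ (because $|m_i^{(t)}|\leq 1$) gives $\sum_i q_i (m_i^{(t)})^2\leq|\Vec{m}^{(t)}|\cdot\Vec{q}$, which yields the first inequality in \eqref{Eq-mwu-arora}. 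The second inequality follows from the trivial estimate $|\Vec{m}^{(t)}|\cdot\Vec{q}=\sum_i q_i|m_i^{(t)}|\leq\sum_i q_i=1$, so that $\eta\sum_t|\Vec{m}^{(t)}|\cdot\Vec{q}\leq\eta T$. I expect the main obstacle to be the second step: one must invoke the \emph{two-sided} logarithmic inequality that remains valid for possibly negative costs, and carefully track the role of the $\eta\leq\tfrac12$ constraint in keeping $|\eta m_i^{(t)}|\leq\tfrac12$; once that inequality is in place, the remainder is bookkeeping and convex averaging over $\Vec{q}$.
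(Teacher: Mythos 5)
Your proof is correct. The paper does not prove this statement itself---it imports it verbatim from Ref.~\cite{arora2012multiplicative}---and your argument is essentially the standard potential-function proof given there: upper-bound $\phi^{(T+1)}$ via $1-x\leq e^{-x}$, lower-bound it by a single weight $w_i^{(T+1)}=\prod_t(1-\eta m_i^{(t)})$ (which stays positive since $\eta\leq\tfrac12$ and $|m_i^{(t)}|\leq 1$), and combine. The only cosmetic difference is that you invoke $\log(1-x)\geq -x-x^2$ for $|x|\leq\tfrac12$ directly, whereas Arora--Hazan--Kale route through $(1-\eta)^{x}\leq 1-\eta x$ and $(1+\eta)^{-x}\leq 1-\eta x$ and then expand the logarithms; both land on the same per-expert bound, and your final convex-averaging over $\Vec{q}$ together with $(m_i^{(t)})^2\leq|m_i^{(t)}|\leq 1$ correctly yields both inequalities in \eqref{Eq-mwu-arora}.
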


The \textit{matrix multiplicative weights} (MMW) algorithm~\cite{tsuda2005MMW, arora2016MMW} is a generalization of the MWU algorithm to costs specified $d\times d$ Hermitian matrices $L^{(t)}$ that satisfy $-\mathbbm{1}_d\leq L^{(t)}\leq\mathbbm{1}_d$; equivalently, $\norm{L^{(t)}}_{\infty}\leq 1$. Here, the decisions are described by density operators $\omega^{(t)}$, and the expected cost in round $t$ is equal to $\Tr[L^{(t)}\omega^{(t)}]$. The MMW algorithm, presented in Algorithm~\ref{alg:mmw}, is a method for obtaining a sequence of density operators $\omega^{(1)},\omega^{(2)},\dotsc,\omega^{(T)}$, based on the costs incurred in the previous rounds.

A bound on the expected accumulated cost for the MMW algorithm has been shown in Ref.~\cite[Theorem~3.1]{arora2016MMW}. By modifying the arguments in Ref.~\cite{arora2016MMW} via use of the relative entropy, we provide a bound on the expected accumulated cost for the MMW algorithm that can in general be tighter than the one obtained in Ref.~\cite[Theorem~3.1]{arora2016MMW}.

\begin{proposition}[Bound on the expected accumulated cost for the MMW algorithm]\label{prop-MMW_amortization_regret_bound}
    Let $\rho$ be an arbitrary density operator.
    Let $T\in\mathbb{N}$ be the number of rounds of interaction, and consider a sequence $L^{(1)},L^{(2)},\dotsc,L^{(T)}$ of cost matrices in dimension $d\in\{2,3,\dotsc\}$ along with the updates $\omega^{(t)}$ provided by the MMW algorithm in Algorithm~\ref{alg:mmw}. Then, the expected accumulated cost over the $T$ rounds is bounded from above as 
    \begin{equation}\label{eq-MMW_amortization_regret_bound}
        \sum_{t=1}^T \Tr[L^{(t)}\omega^{(t)}]\leq \Tr\!\left[\rho\left(\sum_{t=1}^TL^{(t)}\right)\right]+\eta\sum_{t=1}^T\Tr[(L^{(t)})^2\omega^{(t)}]+\frac{\log d-H(\rho)}{\eta},
    \end{equation}
    where $H(\rho)\coloneqq-\Tr[\rho\log\rho]$ is the von Neumann entropy of $\rho$.
\end{proposition}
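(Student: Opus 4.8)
The plan is to run the standard potential-function argument for exponentiated-gradient schemes, but to extract the sharper entropic term $\log d - H(\rho)$ by invoking nonnegativity of the quantum relative entropy rather than a crude spectral bound. Write $S^{(t)}\coloneqq\sum_{t'=1}^t L^{(t')}$ (so $S^{(0)}=0$), note that the algorithm produces $W^{(t)}=\exp(-\eta S^{(t-1)})$, and introduce the potential $Z^{(t)}\coloneqq\Tr[W^{(t)}]$, so that $\omega^{(t)}=W^{(t)}/Z^{(t)}$ and $Z^{(1)}=\Tr[\mathbbm{1}_d]=d$. The entire argument then rests on telescoping $\log Z^{(t)}$: I would sandwich $\log Z^{(T+1)}-\log Z^{(1)}$ between a per-round upper bound on each increment $\log Z^{(t+1)}-\log Z^{(t)}$ and a single global lower bound on $\log Z^{(T+1)}$ expressed through the comparator $\rho$.

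For the lower bound I would use the Gibbs variational principle in the form $\log\Tr[\exp(A)]\geq\Tr[\rho A]+H(\rho)$, valid for every density operator $\rho$ and Hermitian $A$. This is exactly where the relative entropy enters: the slack in this inequality equals $D(\rho\Vert \exp(A)/\Tr[\exp(A)])\geq 0$, so it is nonnegativity of the relative entropy that upgrades the usual $\log d$ to $\log d - H(\rho)$. Applying it with $A=-\eta S^{(T)}$ gives
\[
\log Z^{(T+1)}\geq -\eta\Tr[\rho S^{(T)}]+H(\rho)=-\eta\sum_{t=1}^T\Tr[\rho L^{(t)}]+H(\rho).
\]

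For each per-round increment, the difficulty is that $S^{(t)}=S^{(t-1)}+L^{(t)}$ need not commute, so the exponential cannot simply be factored. Here I would invoke the Golden--Thompson inequality $\Tr[\exp(A+B)]\leq\Tr[\exp(A)\exp(B)]$ with $A=-\eta S^{(t-1)}$ and $B=-\eta L^{(t)}$, yielding
\[
\frac{Z^{(t+1)}}{Z^{(t)}}=\frac{\Tr[\exp(-\eta S^{(t)})]}{\Tr[\exp(-\eta S^{(t-1)})]}\leq\Tr[\omega^{(t)}\exp(-\eta L^{(t)})].
\]
Since $\eta\in(0,1]$ and $-\mathbbm{1}_d\leq L^{(t)}\leq\mathbbm{1}_d$, the spectrum of $-\eta L^{(t)}$ lies in $[-1,1]$, so the elementary scalar bound $e^{-y}\leq 1-y+y^2$ on $[-1,1]$ lifts through functional calculus to the operator inequality $\exp(-\eta L^{(t)})\preceq\mathbbm{1}_d-\eta L^{(t)}+\eta^2(L^{(t)})^2$. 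Tracing this against $\omega^{(t)}$ and then using $\log(1+u)\leq u$ gives
\[
\log Z^{(t+1)}-\log Z^{(t)}\leq-\eta\Tr[L^{(t)}\omega^{(t)}]+\eta^2\Tr[(L^{(t)})^2\omega^{(t)}].
\]

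Finally I would sum the per-round bounds over $t=1,\dots,T$, use $\log Z^{(1)}=\log d$, combine with the global lower bound on $\log Z^{(T+1)}$, and rearrange (dividing by $\eta>0$) to reach the claimed inequality, recognizing that $\sum_t\Tr[\rho L^{(t)}]=\Tr[\rho(\sum_t L^{(t)})]$. I expect the only genuinely delicate steps to be the appeal to Golden--Thompson to tame the noncommutativity of the cost matrices, and the verification that the scalar estimate $e^{-y}\leq 1-y+y^2$ really holds on the full interval $[-1,1]$ (checking the endpoints together with the single inflection of $2-e^{-y}$ at $y=-\log 2$ suffices); the remainder is bookkeeping.
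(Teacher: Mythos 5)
Your proof is correct and follows essentially the same route as the paper: both rest on the per-round potential bound $\log\Tr[W^{(t+1)}]-\log\Tr[W^{(t)}]\le-\eta\Tr[L^{(t)}\omega^{(t)}]+\eta^2\Tr[(L^{(t)})^2\omega^{(t)}]$ (which the paper imports from the Arora--Kale analysis and you re-derive via Golden--Thompson and the scalar estimate $e^{-y}\le 1-y+y^2$), combined with nonnegativity of the quantum relative entropy to produce the $\log d-H(\rho)$ term. The only cosmetic difference is that the paper telescopes $D(\rho\Vert\omega^{(t)})$ and invokes $D(\rho\Vert\omega^{(T+1)})\ge 0$ at the end, whereas you telescope $\log\Tr[W^{(t)}]$ and invoke the Gibbs variational principle once for $\log\Tr[W^{(T+1)}]$ --- the same inequality in a different guise.
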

\begin{proof}
    See \Cref{sec-MMW_entropic}, where we also describe how the bound in Ref.~\cite[Theorem~3.1]{arora2016MMW} arises as a special case of our bound in \Cref{eq-MMW_amortization_regret_bound}.
\end{proof}

\begin{remark}[The Hedge algorithm]
    If the cost matrices $L^{(t)}$ in the MMW algorithm are all diagonal in the same basis, then Algorithm~\ref{alg:mmw} reduces to the so-called \textit{Hedge} algorithm, introduced by Freund and Schapire~\cite{freund1997decision}. We state this algorithm in \Cref{sec-MMW_entropic}, and in \Cref{cor-Hedge_amortization_regret_bound} we state the Hedge algorithm counterpart to \Cref{prop-MMW_amortization_regret_bound} above.~\qedgen
\end{remark}

\subsection{Problem statement: Online learning classes of quantum channels}\label{subsection:prob-statement-online-learning-qchannels}

Our task is to online learn a class $\mathsf{C}\subseteq\mathsf{CPTP}_n$ of $n$-qubit quantum channels, in the sense of predicting the quantities $\Tr[M_{R,B}^{(t)}\mathcal{N}_{A\to B}(\rho_{R,A}^{(t)})]$, with the state-measurement pairs $(\rho_{R,A}^{(t)},M_{R,B}^{(t)})$ provided by an adversary, where $R$ is an arbitrary finite-dimensional reference system; see Figure~\ref{fig:channel_tests}. Let us now cast this problem in terms of the general framework of online learning laid out in Section~\ref{sec-online_learning_basics}.

We fix finite-dimensional Hilbert spaces $\mathcal{H}_A$ and $\mathcal{H}_B$. Then, the input set/domain $\mathsf{X}$ comprises state-measurement pairs $(\rho_{R,A},M_{R,B})$, where $R$ is an arbitrary finite reference system. Equivalently, due to \eqref{eq-choi-matrix-to-channel-prob}, $\mathsf{X}$ comprises channel test operators $E_{A,B}$. Precisely,
\begin{align}
    \mathsf{X}&=\Big\{E_{A,B}\in\Lin(\mathcal{H}_A\otimes\mathcal{H}_B): \exists\,\sigma_A\textrm{ s.t. } 0\leq E_{A,B}\leq\sigma_A\otimes\mathbbm{1}_B,\,\sigma_A\geq 0,\,\Tr[\sigma_A]=1\Big\}\\
    &=\Big\{E_{A,B}\in\Lin(\mathcal{H}_A\otimes\mathcal{H}_B):E_{A,B}\geq 0,\,\norm{E_{A,B}}_{\diamond 1}^{\ast}\leq 1\Big\}\, ,\nonumber
\end{align}
where in the second equality we have used \Cref{eq-diamond_norm_dual_3}.
The output set is $\mathsf{Y}=[0,1]$, and $\mathcal{F}$ is defined by the subclass $\mathsf{C}\subseteq\mathsf{CPTP}_n$ of $n$-qubit channels (in which case $\mathcal{H}_A\cong\mathcal{H}_B\cong(\mathbb{C}^2)^{\otimes n}$), such that for every $\mathcal{N}\in\mathsf{C}$ we define the function $f_{\mathcal{N}}$ as $f_{\mathcal{N}}(E)=\Tr[EC(\mathcal{N})]$ for all $E\in\mathsf{X}$. In other words, 
\begin{equation}
    \mathcal{F}=\Big\{f_{\mathcal{N}}:\mathsf{X}\to[0,1]:\mathcal{N}\in\mathsf{C},\,f_{\mathcal{N}}(E)=\Tr[EC(\mathcal{N})]~\forall\,E\in\mathsf{X}\Big\}.
\end{equation}
Notably, due to \Cref{eq-choi-matrix-to-channel-prob}, we can see that our online learning task is equivalently formulated as the task of learning a class $\mathsf{C}^{\prime}$ of \textit{Choi matrices} defined by the channels in $\mathsf{C}$, where
\begin{equation}
    \mathsf{C}^{\prime}\coloneqq\Big\{N:N=C(\mathcal{N}),\,\mathcal{N}\in\mathsf{C}\Big\}\subseteq\mathsf{CPTP}_n^{\prime}\, .
\end{equation}
We therefore view the adversary as providing channel test operators $E_{A,B}^{(t)}$ and having the learner predict the quantities $\Tr[E_{A,B}^{(t)}C_{A,B}^{\mathcal{N}}]$ based on hypotheses for the \textit{Choi matrix} of the unknown channel. We denote the learner's hypothesis Choi matrices by $N_{A,B}^{(t)}$ for $t\in\{1,2,\dotsc,T\}$.

As outlined in \Cref{sec-online_learning_basics}, we will evaluate the performance of an online learner in terms of either the regret or the number of mistakes. 
That is, on the one hand, the learner aims to achieve a small regret
\begin{equation}\label{eq-regret-channels}
    R_T = \sum_{t=1}^T \ell_t(\Tr[E_{A,B}^{(t)}N_{A,B}^{(t)}]) - \min_{\mathcal{N}\in\mathsf{C}}\sum_{t=1}^T \ell_t(\Tr[E_{A,B}^{(t)}C_{A,B}^{\mathcal{N}}])\, ,
\end{equation}
where the losses $\ell_t$ are revealed by the adversary.
More precisely, we will aim for regret bounds scaling as $\mathcal{O}(\sqrt{T\cdot\text{poly}(n)})$.
On the other hand, in the realizable scenario, where there exists a (to the learner unknown) channel $\mathcal{N}_{A\to B}\in\mathsf{C}$ such that all losses take the form
\begin{equation}\label{eq-loss-channels}
    \ell_t(\cdot )= \ell( (\cdot) - b_t)\, ,
\end{equation}
where $\ell$ is some fixed function and each $b_t$ satisfies $\lvert b_t - \Tr[E_{A,B}^{(t)}C_{A,B}^{\mathcal{N}}] \rvert \leq \varepsilon/3$, the learner aims to achieve a small number of $\varepsilon$-mistakes.
Here, our goal will be to guarantee mistake bounds scaling as $\mathcal{O}(\text{poly}(n, \varepsilon^{-1}))$.
We can summarize the formulation of our channel online learning task as follows:

\begin{problem}[Online learning classes of quantum channels]\label{problem:online-learning-classes-of-channels}
    Consider a subset $\mathsf{C} \subseteq \mathsf{CPTP}_n$ of $n$-qubit quantum channels, and let $\mathcal{N}_{A\to B}\in\mathsf{C}$, with Choi representation $C_{A,B}^{\mathcal{N}}\in\mathsf{C}^{\prime}$, be unknown. Given a sequence of $T\in\mathbb{N}$ interactive rounds, in which two-outcome channel test operators $E_{A,B}^{(1)}, E_{A,B}^{(2)}, \cdots,E_{A,B}^{(T)}\in\mathsf{X}$ are presented sequentially by an adversary, the problem is to output a sequence of Choi matrices $N_{A,B}^{(1)},N_{A,B}^{(2)},\dotsc,N_{A,B}^{(T)}\in\mathsf{C}^{\prime}$, such that for losses $\ell_t$ as defined in \eqref{eq-loss-channels}, the regret in \eqref{eq-regret-channels} scales as $\mathcal{O}(\sqrt{T\cdot\text{poly}(n)})$ and the number of $\varepsilon$-mistakes scales as $\mathcal{O}(\text{poly}(n, \varepsilon^{-1}))$.~\qedgen
\end{problem}

\begin{remark}\label{rem-online_state_learning}
    In this work, we primarily consider the case that the input system dimension $d_A$ and the output system dimension $d_B$ are equal and satisfy $d_A=d_B=d=2^n$, although Problem~\ref{problem:online-learning-classes-of-channels} applies also to quantum channels with different input and output system dimensions. In particular, if $d_A=1$ and $d_B=d=2^n$, then every channel is a state-preparation channel for some $n$-qubit quantum state, and Problem~\ref{problem:online-learning-classes-of-channels} reduces to online learning of quantum states, as considered in Ref.~\cite{ACH+19}.~\qedgen
\end{remark}

\begin{remark}
    Successfully solving Problem~\ref{problem:online-learning-classes-of-channels} does not imply learning of the unknown channel with respect to the diamond norm, i.e., the learner's hypotheses $N_{A,B}^{(t)}$ could be very far from the true Choi matrix $C_{A,B}^{\mathcal{N}}$ with respect to the strategy 1-norm in \eqref{eq-strategy_1norm}. In our scenario, the learner's only goal is to ensure that their hypotheses are such that $\Tr[E_{A,B}^{(t)}N_{A,B}^{(t)}]$ well approximates $\Tr[E_{A,B}^{(t)}C_{A,B}^{\mathcal{N}}]$ in most rounds, and this can be achieved by hypotheses that are not necessarily close to the true Choi matrix with respect to the strategy 1-norm.~\qedgen
\end{remark}

\subsection{Obstacles to online learning via the Choi state}\label{sec-online_learning_Choi_state}

As pointed out above, the problem of online learning classes of quantum channels (Problem~\ref{problem:online-learning-classes-of-channels}) is equivalent to the problem of online learning classes of \textit{Choi matrices}. A natural first strategy for solving this problem might then be to simply online learn the Choi \textit{state} of the unknown channel using the protocols presented in Ref.~\cite{ACH+19} for online learning of quantum states, such as the MMW algorithm (Algorithm~\ref{alg:mmw}). However, we immediately encounter two issues.
    
First, the MMW algorithm (Algorithm~\ref{alg:mmw}) cannot be applied out of the box: while Choi states $\Phi_{A,B}^{\mathcal{N}}$ of quantum channels $\mathcal{N}_{A\to B}$ have unit trace, they also have to satisfy $\Tr_B[\Phi_{A,B}^{\mathcal{N}}]=\mathbbm{1}_A/d_A$, which the iterates of Algorithm~\ref{alg:mmw} will generally not guarantee. Furthermore, the proof of Ref.\ \cite[Theorem~3.1]{arora2016MMW}, as well as the proof of Proposition~\ref{prop-MMW_amortization_regret_bound} above, relies crucially on the fact that the iterates of the MMW algorithm have unit trace. So, potential modifications of the update rule would have to simultaneously ensure the unit trace and the partial trace conditions.

We can modify the MMW algorithm by adding a \textit{projection step}: for every iterate $W_{A,B}^{(t)}$ of the MMW algorithm, we find the closest Choi state $\rho_{A,B}^{(t)}$ with respect to relative entropy and use that as our hypothesis for the unknown Choi state. In other words, we let
\begin{equation}
    \rho_{A,B}^{(t)}\coloneqq\argmin_{\rho_{A,B}}\Big\{D(\rho_{A,B}\Vert \omega_{A,B}^{(t)}):\rho_{A,B}\geq 0,\,\Tr_B[\rho_{A,B}]=\frac{\mathbbm{1}_A}{d_A}\Big\},
\end{equation}
where $\omega_{A,B}^{(t)}=W_{A,B}^{(t)}/\Tr[W_{A,B}^{(t)}]$ and the relative entropy $D(P\Vert Q)$ of two positive semi-definite operators $P$ and $Q$ is defined as~\cite{Wat18_book}
\begin{equation}\label{eq-relative_entropy}
    D(P\Vert Q)\coloneqq\left\{\begin{array}{l l} \Tr[P\log P-P\log Q] & \text{if }\text{supp}(P)\subseteq\text{supp}(Q), \\ +\infty & \text{otherwise}. \end{array}\right.
\end{equation}
We present the modified MMW algorithm in Algorithm~\ref{alg:projected-mmw-choi-state}. Then, because the relative entropy is a Bregman divergence~\cite{petz2007bregman}, we can regard Algorithm~\ref{alg:projected-mmw-choi-state} as a particular instance of the \textit{online mirror descent} algorithm~\cite[Section~5.3]{OCO_hazan_book}. Consequently, we obtain a regret bound for Choi states that is similar to the regret bound in Proposition~\ref{prop-MMW_amortization_regret_bound}, thus similar to the MMW regret bound in Ref.~\cite{ACH+19} for online learning of quantum states. We provide the details in Appendix~\ref{sec-projected_MMW}.

\begin{algorithm}
     \begin{algorithmic}[1]
     \Require {$\eta \in (0, 1)$; Initialize $W_{A,B}^{(1)} = \mathbbm{1}_A \otimes \mathbbm{1}_B$ and $\rho_{A,B}^{(1)}=\frac{1}{d_Ad_B}\mathbbm{1}_A\otimes\mathbbm{1}_B$}
     \For{\texttt{$t = 1, 2, \ldots, T$}}
        \State Output the decision/estimate $\rho_{A,B}^{(t)}$
        \State Receive the cost matrix $L_{A,B}^{(t)}$, $-\mathbbm{1}_{A,B}\leq L_{A,B}^{(t)}\leq\mathbbm{1}_{A,B}$
        \State Update the weight matrix: $W_{A,B}^{(t+1)}=\exp[\log(W_{A,B}^{(t)})-\eta L_{A,B}^{(t)}]$, $\omega_{A,B}^{(t+1)}=W_{A,B}^{(t+1)}/\Tr[W_{A,B}^{(t+1)}]$
        \State Project: $\rho_{A,B}^{(t+1)}\coloneqq\argmin_{\rho_{A,B}}\Big\{D(\rho_{A,B}\Vert\omega_{A,B}^{(t+1)}):\rho_{A,B}\geq 0,\,\Tr_B[\rho_{A,B}]=\mathbbm{1}_A/d_A\Big\}$
        \EndFor
     \end{algorithmic}
     \caption{Projected MMW algorithm for Choi states of quantum channels }\label{alg:projected-mmw-choi-state}
\end{algorithm}

The second issue is that the above strategy of learning the Choi state of the unknown channel leads to favorable regret and mistake bounds, but only as long as we modify Problem~\ref{problem:online-learning-classes-of-channels} as follows: instead of predicting expectation values of the form $\Tr[E_{A,B}C_{A,B}^{\mathcal{N}}]$, where $C_{A,B}^{\mathcal{N}}$ is the Choi matrix of the unknown channel $\mathcal{N}_{A\to B}$, we aim to predict values of the form $\Tr[E_{A,B}\Phi_{A,B}^{\mathcal{N}}]$, where $\Phi_{A,B}^{\mathcal{N}}$ is the Choi state of $\mathcal{N}_{A\to B}$. Now, because of \Cref{eq-choi-channel-translation}, we see that $\Tr[E_{A,B}C_{A,B}^{\mathcal{N}}]=d_A\Tr[E_{A,B}\Phi_{A,B}^{\mathcal{N}}]$. Thus, even though we show that the $\mathcal{O}(L\sqrt{Tn})$ regret bound of \cite{ACH+19} carries over to (properly) online learning the Choi state, this implies only a regret bound of $\mathcal{O}(dL\sqrt{Tn})=\mathcal{O}(2^n L\sqrt{Tn})$ when it comes to our actual task of learning the Choi matrix (and thus the channel), which has a favorable square root scaling in $T$ but an unfavorable exponential scaling in the number of qubits\footnote{Note that the dimension factor $d=d_A=2^n$ is dimension of the input system of the channel. As explained in Remark~\ref{rem-online_state_learning}, in the case $d_A=1$, our problem reduces to online learning of quantum states, and the dimension factor in the regret bound becomes $1$, as expected~\cite{ACH+19}.}.

\section{Online learning upper bounds}

\subsection{Regret bound for channels of bounded gate complexity}\label{section:online-learning-bounded-complexity}

In this section, we show online learnability for quantum channels of bounded gate complexity\footnote{While we focus on exact gate complexity here, an extension to approximate gate complexities, defined in terms of the number of two-qubit channel gates sufficient to achieve a desired degree of approximation in diamond norm distance, is straightforward based on the triangle inequality.}. We say that a quantum channel $\mathcal{N}$ has (exact) gate complexity (at most) $G\in\mathbb{N}$ \cite{BCH+19,HFK+21} if there exists a (not necessarily geometrically) two-local\footnote{The results naturally extend to circuits with $k$-local gates, compare Ref.\ \cite[Supplementary~Note~3, Remark~1]{caro2022generalization}.} quantum circuit with $G$ two-qubit channels as gates that implements $\mathcal{N}$.
Such quantum channels include, for example, noisy quantum circuits modeled in terms of perfect two-qubit unitary gates followed by single-qubit noise channels. The set of $n$-qubit channels of gate complexity (at most) $G$ is denoted by $\mathsf{CPTP}_{n, G}$. We abuse notation and also use $\mathsf{CPTP}_{n, G}$ to denote the class of $[0,1]$-valued functions on channel test operators that arises from $G$-gate channels as described in \Cref{subsection:prob-statement-online-learning-qchannels}.
We first bound the sequential covering numbers for (the $[0,1]$-valued function classes associated to) such channels.
\Cref{theorem:regret-bound-sequential-covering} then gives us a regret bound, from which we can derive a mistake bound via \Cref{lemma:regret-to-mistake-template}.

The (interior) covering number of the set $\mathsf{CPTP}_{n,G}$ of quantum channels with gate complexity at most $G$ is defined to be
\begin{equation}
    N(\mathsf{CPTP}_{n,G},\varepsilon,\norm{\cdot}_{\diamond})\coloneqq\inf\{\Abs{\mathsf{N}}:\mathsf{N}\subseteq\mathsf{CPTP}_{n,G},\;\forall\mathcal{N}\in\mathsf{CPTP}_{n,G}\;\exists\mathcal{M}\in\mathsf{N},\;\norm{\mathcal{N}-\mathcal{M}}_{\diamond}\leq\varepsilon\}.
\end{equation}
The fact that quantum channels of bounded gate complexity also have bounded complexity in the sense of covering numbers has previously been observed in Refs.~\cite{du2022efficient, caro2022generalization, HBC+21, zhao2023learning}. We recall this insight and its proof in the following lemma.

\begin{lemma}[Covering number bounds from gate complexity (see Refs.\ {\cite[Theorem C.2]{caro2022generalization}} and {\cite[Theorem 8]{zhao2023learning}})]\label{lemma:covering-number-gate-complexity}
    Let $\varepsilon\in (0,1)$. The (interior) $\varepsilon$-covering number of $\mathsf{CPTP}_{n, G}$ with respect to $\norm{\cdot}_\diamond$ is bounded as
    \begin{equation}
        N(\mathsf{CPTP}_{n, G},\varepsilon,\norm{\cdot}_\diamond)
        \leq \binom{n}{2}^G\left(\frac{6G}{\varepsilon}\right)^{512 G} \, .
    \end{equation}
\end{lemma}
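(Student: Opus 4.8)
The plan is to build an explicit interior $\varepsilon$-cover by separating the two ``degrees of freedom'' of a $G$-gate channel: the discrete \emph{circuit architecture} (which pair of qubits each of the $G$ gates acts on) and the continuous \emph{gate content} (the two-qubit channels themselves). First I would fix the architecture. Since a two-local circuit assigns to each of its $G$ gates one of the $\binom{n}{2}$ qubit pairs, there are at most $\binom{n}{2}^G$ architectures, which is exactly the combinatorial factor in the claimed bound. For a fixed architecture, every channel in the class is a composition $\mathcal{N} = \mathcal{G}_G \circ \cdots \circ \mathcal{G}_1$, where each $\mathcal{G}_i = \mathcal{M}_i \otimes \id$ is a two-qubit channel $\mathcal{M}_i \in \mathsf{CPTP}_2$ acting on the designated pair and trivially elsewhere, so the only remaining freedom is the choice of the $G$ two-qubit channels.

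The key structural step is to show that covering each gate independently suffices, i.e. that errors accumulate only additively across gates. For this I would use three standard properties of the diamond norm: it is submultiplicative under composition ($\norm{\mathcal{A} \circ \mathcal{B}}_\diamond \leq \norm{\mathcal{A}}_\diamond \norm{\mathcal{B}}_\diamond$), it equals $1$ on any channel, and it is invariant under tensoring with the identity (it is the stabilized norm). Expanding $\mathcal{N} - \mathcal{N}'$ as a telescoping sum over the $G$ positions where $\mathcal{G}_i$ is replaced by a nearby $\mathcal{G}'_i$ and bounding each summand by sandwiching it between unit-diamond-norm channels, one obtains
\begin{equation}
    \norm{\mathcal{G}_G \circ \cdots \circ \mathcal{G}_1 - \mathcal{G}'_G \circ \cdots \circ \mathcal{G}'_1}_\diamond \leq \sum_{i=1}^G \norm{\mathcal{G}_i - \mathcal{G}'_i}_\diamond = \sum_{i=1}^G \norm{\mathcal{M}_i - \mathcal{M}'_i}_\diamond \, ,
\end{equation}
where the last equality uses identity-invariance. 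Hence, if for each gate we select $\mathcal{M}'_i$ from an $(\varepsilon/G)$-net of $\mathsf{CPTP}_2$ in diamond norm, the total error is at most $G \cdot (\varepsilon/G) = \varepsilon$; since the resulting composite is again a $G$-gate channel, the cover is automatically \emph{interior}.

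It then remains to control $N(\mathsf{CPTP}_2, \varepsilon/G, \norm{\cdot}_\diamond)$, which I expect to be the main obstacle. The approach is a volumetric net argument in a suitable parametrization: a two-qubit channel admits a Kraus representation with at most $d^2 = 16$ operators, each a $4 \times 4$ complex matrix, giving a bounded set of $2 \cdot 16 \cdot 16 = 512$ real parameters---precisely the exponent in the target bound. The delicate points are (i) a Lipschitz estimate showing that a small Euclidean perturbation of the Kraus data induces a correspondingly small diamond-norm perturbation of the channel, so that a Euclidean net transfers to a diamond-norm net with a benign constant, and (ii) arranging the net to lie inside $\mathsf{CPTP}_2$ so that the overall cover is interior. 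Granting such a Lipschitz bound, the standard volume estimate $(3/\delta)^{512}$ for a $\delta$-net of the unit ball in $512$ real dimensions, with $\delta \sim \varepsilon/G$ up to the Lipschitz constant, yields $N(\mathsf{CPTP}_2, \varepsilon/G, \norm{\cdot}_\diamond) \leq (6G/\varepsilon)^{512}$. Multiplying the $\binom{n}{2}^G$ architectures by the $G$-fold product of per-gate nets then gives the claimed bound; this is the route taken in Refs.~\cite{caro2022generalization, zhao2023learning}, which I would invoke for the quantitative Lipschitz and volume estimates.
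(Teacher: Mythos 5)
Your proposal is correct and follows essentially the same route as the paper's proof: count the $\binom{n}{2}^G$ gate placements, cover each two-qubit gate to precision $\varepsilon/G$ using the $(6G/\varepsilon)^{512}$ bound from Ref.~\cite[Lemma C.2]{caro2022generalization}, and combine via the telescoping/subadditivity-under-composition property of the diamond norm. The only cosmetic difference is the order of counting (you fix the architecture first and then take the $G$-fold product of per-gate nets, whereas the paper bounds the cover of $\mathsf{CPTP}_2^{(n)}$ by $\binom{n}{2}(6G/\varepsilon)^{512}$ and raises that to the power $G$), which yields the identical final bound.
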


\begin{proof}
    Let $\varepsilon'=\frac{\varepsilon}{G}$. From Ref.~\cite[Lemma~C.2]{caro2022generalization}, we have that the covering number for the set of two-qubit channels is
    \begin{equation}
        N(\mathsf{CPTP}_2,\varepsilon',\norm{\cdot}_{\diamond})\leq \left(\frac{6}{\varepsilon'}\right)^{512}=\left(\frac{6G}{\varepsilon}\right)^{512}.
    \end{equation}
    If we consider the set of all possible channels that act on two out of $n$ qubits, which we denote by $\mathsf{CPTP}_2^{(n)}$, then the covering number of this set is given by
    \begin{equation}
        N(\mathsf{CPTP}_2^{(n)},\varepsilon',\norm{\cdot}_{\diamond})\leq\binom{n}{2}\left(\frac{6G}{\varepsilon}\right)^{512},
    \end{equation}
    with the binomial factor $\binom{n}{2}$ coming from the fact that we allow the two-qubit channels to act on any pair of qubits.

    Let us now consider an arbitrary $\mathcal{N}\in\mathsf{CPTP}_{n,G}$. By definition, every such channel has the form $\mathcal{N}=\mathcal{N}_G\circ\mathcal{N}_{G-1}\circ\dotsb\circ\mathcal{N}_1$, where $\mathcal{N}_i\in\mathsf{CPTP}_2^{(n)}$ for all $i\in\{1,2,\dotsc,G\}$. By definition of $N(\mathsf{CPTP}_2^{(n)},\varepsilon',\norm{\cdot}_{\diamond})$, for every channel $\mathcal{N}_i$, we can find a corresponding $\widetilde{\mathcal{N}}_i$ in an $\varepsilon'$-covering net for $\mathsf{CPTP}_2^{(n)}$ such that $\norm{\mathcal{N}_i-\widetilde{\mathcal{N}_i}}_{\diamond}\leq\varepsilon'$. Then, the channel $\widetilde{\mathcal{N}}\coloneqq\widetilde{\mathcal{N}}_G\circ\widetilde{\mathcal{N}}_{G-1}\circ\dotsb\circ\widetilde{\mathcal{N}}_1\in\mathsf{CPTP}_{n,G}$ satisfies
    \begin{equation}
        \norm{\mathcal{N}-\widetilde{\mathcal{N}}}_{\diamond}\leq\sum_{i=1}^G \norm{\mathcal{N}_i-\widetilde{\mathcal{N}}_i}_{\diamond}\leq G\varepsilon'=\varepsilon,
    \end{equation}
    where we made use of the subadditivity-under-composition property of the diamond norm; see, e.g., Ref.~\cite[Proposition~3.48]{Wat18_book}. Therefore, if we let $\widetilde{\mathsf{N}}_{\varepsilon'}\subseteq\mathsf{CPTP}_2^{(n)}$ be an $\varepsilon'$-covering net for $\mathsf{CPTP}_2^{(n)}$, then the set
    \begin{equation}
        \mathsf{N}_{\varepsilon}\coloneqq\Big\{\mathcal{N}_G\circ\mathcal{N}_{G-1}\circ\dotsb\circ\mathcal{N}_1:\mathcal{N}_i\in\widetilde{\mathsf{N}}_{\varepsilon'}\Big\}
    \end{equation}
    is an $\varepsilon$-covering net for $\mathsf{CPTP}_{n,G}$. Noting that $|\mathsf{N}_{\varepsilon}|=|\widetilde{\mathsf{N}}_{\varepsilon'}|^G$, we obtain $N(\mathsf{CPTP}_{n,G},\varepsilon,\norm{\cdot}_{\diamond})\leq |\mathsf{N}_{\varepsilon}|\leq|\widetilde{\mathsf{N}}_{\varepsilon'}|^G$, for every $\varepsilon'$-covering net for $\mathsf{CPTP}_2^{(n)}$. As the covering number $N(\mathsf{CPTP}_{2}^{(n)},\varepsilon',\norm{\cdot}_{\diamond})$ is, by definition, the size of the smallest $\varepsilon'$-covering net for $\mathsf{CPTP}_2^{(n)}$, we can conclude that
    \begin{equation}
        N(\mathsf{CPTP}_{n,G},\varepsilon,\norm{\cdot}_{\diamond})
        \leq N(\mathsf{CPTP}_2^{(n)},\varepsilon',\norm{\cdot}_{\diamond})^G
        \leq\binom{n}{2}^G\left(\frac{6G}{\varepsilon}\right)^{512G},
    \end{equation}
    as required.
\end{proof}

We now observe that \Cref{lemma:covering-number-gate-complexity} immediately implies similar sequential covering number bounds. 
In fact, this can be seen by a reasoning analogous to how Ref.~\cite{caro2022generalization} went from covering w.r.t.~a norm on the level of the channel to empirical covering.

\begin{corollary}[Sequential covering number bounds from gate complexity]\label{corollary:sequential-covering-number-gate-complexity}
    Let $T\in\mathbb{N}_{\geq 1}$ and let $\varepsilon\in (0,1)$, $p\geq 1$.
    Then, 
    \begin{equation}
        N_{T}(\mathsf{CPTP}_{n, G},\varepsilon,p)
        \leq \binom{n}{2}^G\left(\frac{6G}{\varepsilon}\right)^{512 G} \, .
    \end{equation}
\end{corollary}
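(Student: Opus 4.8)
The plan is to deduce the sequential covering bound directly from the interior diamond-norm covering bound of \Cref{lemma:covering-number-gate-complexity}, exploiting the fact that closeness in diamond norm forces uniform closeness of the associated $[0,1]$-valued functions on \emph{every} admissible input. This mirrors the passage from norm-level covering to empirical covering in Ref.~\cite{caro2022generalization}, except that here the relevant dual object is the strategy $1$-norm rather than a standard operator norm.

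First I would record a pointwise Lipschitz-type estimate. Fix two channels $\mathcal{N},\mathcal{M}\in\mathsf{CPTP}_{n,G}$ and an arbitrary channel test operator $E\in\mathsf{X}$. Since $C(\mathcal{N})-C(\mathcal{M})$ is Hermitian, the H\"{o}lder inequality \eqref{eq-Holder_inequality_strategy_norm} for the strategy $1$-norm (the $r=1$ case) gives
\begin{equation}
  \Abs{f_{\mathcal{N}}(E)-f_{\mathcal{M}}(E)}
  =\Abs{\Tr[E(C(\mathcal{N})-C(\mathcal{M}))]}
  \leq\norm{E}_{\diamond 1}^{\ast}\,\norm{C(\mathcal{N})-C(\mathcal{M})}_{\diamond 1}
  \leq\norm{\mathcal{N}-\mathcal{M}}_{\diamond}\, ,
\end{equation}
where I use $\norm{E}_{\diamond 1}^{\ast}\leq 1$ for channel test operators together with the identity $\norm{C(\mathcal{N})-C(\mathcal{M})}_{\diamond 1}=\norm{\mathcal{N}-\mathcal{M}}_{\diamond}$ from \eqref{eq-strategy_1norm}. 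Thus a diamond-norm $\varepsilon$-cover of $\mathsf{CPTP}_{n,G}$ is automatically a uniform ($\ell_\infty$-in-$E$) $\varepsilon$-cover of the associated function class.

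Next I would convert this uniform cover into a sequential cover in the sense of \Cref{definition:sequential-covering}. Let $\mathsf{N}_\varepsilon=\{\mathcal{M}_1,\dots,\mathcal{M}_K\}$ be a diamond-norm $\varepsilon$-net of minimal size $K=N(\mathsf{CPTP}_{n,G},\varepsilon,\norm{\cdot}_{\diamond})$, and fix any complete binary tree $\mathbf{x}$ of depth $T$ with nodes labeled by inputs in $\mathsf{X}$. For each $j$ I define the $\mathbb{R}$-valued tree $\mathbf{v}^{(j)}$ by $\mathbf{v}^{(j)}_t(\pi)\coloneqq f_{\mathcal{M}_j}(\mathbf{x}_t(\pi))$. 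Given any $f_{\mathcal{N}}$ in the class, choose $j$ with $\norm{\mathcal{N}-\mathcal{M}_j}_{\diamond}\leq\varepsilon$; crucially, this choice is independent of the path $\pi$. Then for every $\pi\in\{\pm 1\}^T$ the previous estimate bounds each summand by $\varepsilon^p$, so
\begin{equation}
  \left(\frac1T\sum_{t=1}^T\Abs{\mathbf{v}^{(j)}_t(\pi)-f_{\mathcal{N}}(\mathbf{x}_t(\pi))}^p\right)^{1/p}\leq\varepsilon\, ,
\end{equation}
showing that $V=\{\mathbf{v}^{(1)},\dots,\mathbf{v}^{(K)}\}$ is a sequential $p$-norm $\varepsilon$-cover of the class on $\mathbf{x}$ with $\abs{V}\leq K$. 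Since $K$ does not depend on $\mathbf{x}$, taking the supremum over depth-$T$ trees yields $N_T(\mathsf{CPTP}_{n,G},\varepsilon,p)\leq K$, and \Cref{lemma:covering-number-gate-complexity} gives the stated bound.

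The conceptual content all sits in the first step: the main thing to get right is that the adversary's inputs are genuine channel test operators with $\norm{E}_{\diamond 1}^{\ast}\leq 1$, so that a single diamond-norm net controls the function values simultaneously over the whole continuous input domain $\mathsf{X}$ (whose reference dimension is unbounded). Once that uniform control is in hand the reduction is essentially bookkeeping: a uniform cover is the strongest possible sequential cover, since the witness trees can be taken path-independent, so no genuinely tree-adaptive argument is needed, and the $p$-dependence disappears because each deviation is bounded by $\varepsilon$ nodewise regardless of $p$.
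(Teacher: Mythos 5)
Your proposal is correct and follows essentially the same route as the paper: reduce the sequential covering number to the diamond-norm covering number of \Cref{lemma:covering-number-gate-complexity} via the uniform pointwise bound $\Abs{f_{\mathcal{N}}(E)-f_{\mathcal{M}}(E)}\leq\norm{\mathcal{N}-\mathcal{M}}_{\diamond}$, the only cosmetic difference being that you obtain this bound from the strategy-norm H\"{o}lder inequality \eqref{eq-Holder_inequality_strategy_norm} applied to test operators with $\norm{E}_{\diamond 1}^{\ast}\leq 1$, whereas the paper bounds $\Abs{\Tr[M_{R,B}(\mathcal{N}-\widetilde{\mathcal{N}})(\rho_{R,A})]}$ via $\norm{M_{R,B}}_{\infty}\norm{\mathcal{N}-\widetilde{\mathcal{N}}}_{\diamond}\norm{\rho_{R,A}}_1$. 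Your explicit construction of path-independent witness trees for general $p$ is just an unpacked version of the paper's appeal to monotonicity of $N_{\mathbf{z}}(\cdot,\varepsilon,p)$ in $p$ together with the inequality $N_{\mathbf{z}}(\mathcal{F},\varepsilon,\infty)\leq N(\mathcal{F},\varepsilon,\norm{\cdot}_{\diamond})$.
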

\begin{proof}
    Because of \Cref{lemma:covering-number-gate-complexity} and the monotonicity of sequential covering numbers w.r.t.~$p$, it suffices to show that $N_{\mathbf{z}}(\mathsf{CPTP}_{n, G},\varepsilon,\infty) \leq N(\mathsf{CPTP}_{n, G},\varepsilon,\norm{\cdot}_\diamond)$ holds for any complete rooted binary tree $\mathbf{z}$ of depth $T$. 
    This follows immediately from the fact that, if $\mathcal{N}$ and $\widetilde{\mathcal{N}}$ are two quantum channels, then
    \begin{align}
        \left\lvert \Tr[M_{R,B} \mathcal{N} (\rho_{R,A})] - \Tr[M_{R,B} \widetilde{\mathcal{N}} (\rho_{R,A})]\right\rvert
        &\leq \norm{M_{R,B}}_\infty \norm{(\mathcal{N} - \widetilde{\mathcal{N}})(\rho_{R,A})}_1\\
        \nonumber
        &\leq \norm{M_{R,B}}_\infty \norm{\mathcal{N} - \widetilde{\mathcal{N}}}_{\diamond} \norm{\rho_{R,A}}_1\\
        &\leq \norm{\mathcal{N} - \widetilde{\mathcal{N}}}_{\diamond}
        \nonumber
    \end{align}
    holds for any bipartite effect operator $M_{R,B}$ and for any bipartite state $\rho_{R,A}$.
\end{proof}

We can now plug this sequential covering number bound into \Cref{theorem:regret-bound-sequential-covering}. This leads to the following regret bound for online learning channels of bounded gate complexity:

\begin{theorem}[Regret bound for online learning channels of bounded complexity]\label{theorem:bounded-complexity-regret}
    Let $\ell:[0,1]\to\mathbb{R}$ be convex and $L$-Lipschitz.
    There exists an online learning strategy that, when presented sequentially with channel test operators $E_{A,B}^{(t)}$, $t\in\{1,2,\dotsc,T\}$, and associated loss functions $\ell_t (\cdot) = \ell ((\cdot) -b_t)$, outputs a sequence of hypothesis Choi matrices $N_{A,B}^{(t)}\in\mathsf{CPTP}_{n,G}^{\prime}$ whose regret is bounded as
    \begin{equation}
        \sum_{t=1}^T \ell_t(\Tr[E_{A,B}^{(t)}N_{A,B}^{(t)}]) - \min_{\mathcal{N}\in\mathsf{CPTP}_{n,G}}\sum_{t=1}^T \ell_t(\Tr[E_{A,B}^{(t)}C_{A,B}^{\mathcal{N}}])
        \leq \mathcal{O}\left(L\sqrt{T G\log(Gn)}\right)\, .
    \end{equation}
\end{theorem}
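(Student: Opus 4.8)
The plan is to obtain the regret bound by feeding the sequential covering number estimate of \Cref{corollary:sequential-covering-number-gate-complexity} into the Dudley-type regret bound of \Cref{theorem:regret-bound-sequential-covering}. First I would identify the relevant function class: by the reformulation in \Cref{subsection:prob-statement-online-learning-qchannels}, learning $G$-gate channels amounts to online learning the class $\mathcal{F} = \mathsf{CPTP}_{n,G} \subseteq [0,1]^{\mathsf{X}}$ of functions $f_{\mathcal{N}}(E) = \Tr[E\, C(\mathcal{N})]$ on channel test operators $E \in \mathsf{X}$, which indeed takes values in $[0,1]$ since it equals a passing probability. The losses $\ell_t(\cdot) = \ell((\cdot) - b_t)$ are convex and $L$-Lipschitz because $\ell$ is, so \Cref{theorem:regret-bound-sequential-covering} applies and produces a proper sequence $N_{A,B}^{(t)} \in \mathsf{CPTP}_{n,G}^{\prime}$ of hypotheses (properness being inherited directly from that theorem) with regret at most $2LT \inf_{\alpha > 0}\{4\alpha + \frac{12}{\sqrt{T}}\int_\alpha^1 \sqrt{\log N_T(\mathcal{F}, \beta, 2)}\, \mathrm{d}\beta\}$.

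The main work is then to bound this Dudley entropy integral. Substituting \Cref{corollary:sequential-covering-number-gate-complexity} and using $\binom{n}{2} \leq n^2$ gives $\log N_T(\mathcal{F}, \beta, 2) \leq A + 512\, G\log(1/\beta)$ with $\beta$-independent part $A = G\log\binom{n}{2} + 512\,G\log(6G) = \mathcal{O}(G\log(Gn))$. Applying subadditivity of the square root, $\sqrt{\log N_T} \leq \sqrt{A} + \sqrt{512\,G}\,\sqrt{\log(1/\beta)}$, so the integral splits into a constant piece bounded by $\sqrt{A}$ over an interval of length at most one, plus $\sqrt{512\,G}\int_0^1 \sqrt{\log(1/\beta)}\,\mathrm{d}\beta$. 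The key observation that makes the argument clean is that the latter integral converges at $\beta = 0$: the substitution $u = \log(1/\beta)$ identifies it with $\int_0^\infty \sqrt{u}\,e^{-u}\,\mathrm{d}u = \Gamma(3/2) = \sqrt{\pi}/2$. Hence $\int_0^1 \sqrt{\log N_T(\mathcal{F}, \beta, 2)}\,\mathrm{d}\beta = \mathcal{O}(\sqrt{G\log(Gn)}) + \mathcal{O}(\sqrt{G}) = \mathcal{O}(\sqrt{G\log(Gn)})$, uniformly in $T$ and independently of the lower limit once it is sent to $0$.

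Because the integral converges at the origin, I would take $\alpha \to 0^+$ in the infimum: the $4\alpha$ term vanishes while the integral stays bounded, yielding $\inf_{\alpha>0}\{\cdots\} \leq \frac{12}{\sqrt{T}}\,\mathcal{O}(\sqrt{G\log(Gn)})$. Multiplying by the $2LT$ prefactor collapses the $T$ dependence to $\sqrt{T}$ and produces the claimed bound $\mathcal{O}(L\sqrt{T\,G\log(Gn)})$. I do not expect a genuine obstacle here, since both inputs are already established; the only points requiring care are the convergence of the entropy integral near $\beta = 0$ (which is precisely what permits sending $\alpha \to 0$ rather than having to balance $\alpha$ against $T$) and confirming that \Cref{theorem:regret-bound-sequential-covering} delivers \emph{proper} hypotheses, so that $N_{A,B}^{(t)} \in \mathsf{CPTP}_{n,G}^{\prime}$ as required by the statement.
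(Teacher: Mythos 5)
Your proposal is correct and follows essentially the same route as the paper's proof: both plug the sequential covering number bound of \Cref{corollary:sequential-covering-number-gate-complexity} into \Cref{theorem:regret-bound-sequential-covering}, split the entropy integral via $\sqrt{a+b}\leq\sqrt{a}+\sqrt{b}$, evaluate $\int_0^1\sqrt{\log(1/\beta)}\,\mathrm{d}\beta=\sqrt{\pi}/2$, and effectively take $\alpha\to 0$ in the infimum. The only cosmetic difference is that you justify the convergence of the integral via the Gamma-function substitution while the paper quotes the antiderivative in terms of the error function.
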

\begin{proof}
    Combining \Cref{theorem:regret-bound-sequential-covering,corollary:sequential-covering-number-gate-complexity}, we obtain the regret bounds from gate complexity
    as
    \begin{align}
    		&\sum_{t=1}^T \ell_t(\Tr[E_{A,B}^{(t)} N_{A,B}^{(t)}) - \min_{\mathcal{N}\in\mathsf{CPTP}_{n,G}}\sum_{t=1}^T \ell_t(\Tr[E_{A,B}^{(t)} C_{A,B}^{\mathcal{N}}])\\
            \nonumber
    		&\quad\leq 24 L \sqrt{512TG}\int_{0}^{1} \sqrt{\log\left(\frac{6G}{\beta}\right) + \log\binom{n}{2}}~\mathrm{d}\beta\\    	\nonumber	
    		&\quad\leq 24 L \sqrt{512TG}\left( \int_{0}^{1} \sqrt{\log\left(6G\right) + \log\left(\frac{1}{\beta}\right)}~\mathrm{d}\beta + \sqrt{2\log(n)}\right)\\
      \nonumber
    		&\quad\leq 24 L \sqrt{512TG}\left(  \sqrt{\log(6G)} + \int_{0}^{1} \sqrt{ \log\left(\frac{1}{\beta}\right)}~\mathrm{d}\beta + \sqrt{2\log(n)}\right)\\
      \nonumber
    		&\quad\leq 24 L \sqrt{512TG}\left( \sqrt{\log(6G)} + \frac{\sqrt{\pi}}{2} + \sqrt{2\log(n)}\right)\\
      \nonumber
    		&\quad\leq \mathcal{O}\left(L\sqrt{T G\log(Gn)}\right)\, ,
      \nonumber
    \end{align}
    where we have used the inequalities $\sqrt{a+b}\leq \sqrt{a} + \sqrt{b}\leq \sqrt{2(a+b)}$ for $a,b\geq 0$ and the integral identity $\int\sqrt{\log(1/x)}~\mathrm{d}x = x\sqrt{\log(1/x)} - (\sqrt{\pi}/2) \cdot \operatorname{erf}(\sqrt{\log(1/x)})$, with the error function given as $\operatorname{erf}(x)=\frac{2}{\sqrt{\pi}}\int_{0}^{x}\exp(-t^2)~\mathrm{d}t$.
\end{proof}

Finally, to conclude our discussion of online learning channels with gate complexity $G$, we can combine \Cref{theorem:bounded-complexity-regret} with \Cref{lemma:regret-to-mistake-template} to obtain the following mistake bound.

\begin{corollary}[Mistake bound for online learning channels of bounded complexity]\label{corollary:bounded-complexity-mistake}
    Let $\varepsilon\in (0,1)$.
    There exists an online learning strategy that, in a realizable setting for $\mathsf{CPTP}_{n,G}$, when presented sequentially with channel test operators $E_{A,B}^{(t)}$, $t\in\{1,2,\dotsc,T\}$, and associated loss functions $\ell_t (\cdot) = \ell ((\cdot) -b_t)$, for some $L$-Lipschitz $\ell$, outputs a sequence $N_{A,B}^{(t)}\in\mathsf{CPTP}_{n,G}^{\prime}$, of hypothesis Choi matrices that makes at most $\mathcal{O}\left(\frac{L^2 G\log(Gn)}{\varepsilon^2}\right)$ many $\varepsilon$-mistakes.
\end{corollary}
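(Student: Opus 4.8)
The plan is to feed the regret bound of \Cref{theorem:bounded-complexity-regret} directly into the regret-to-mistake conversion of \Cref{lemma:regret-to-mistake-template}. \Cref{theorem:bounded-complexity-regret} already supplies an online learner whose regret against $\mathsf{CPTP}_{n,G}$ is bounded by $R_T \le C L \sqrt{T G \log(Gn)}$ for an absolute constant $C$, and this bound holds uniformly over the adversary's choice of test operators and of the values $b_t$ defining the losses $\ell_t(\cdot) = \ell((\cdot) - b_t)$. The corollary then follows by running the learner of \Cref{theorem:bounded-complexity-regret} in mistake-driven mode, i.e.\ applying its update only in rounds where an $\varepsilon$-mistake occurs.

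First I would cast the regret bound in the form $R_T \le h_1(\varepsilon, T) + h_2(\varepsilon) T$ required by \Cref{lemma:regret-to-mistake-template}. Since the bound of \Cref{theorem:bounded-complexity-regret} carries no term growing linearly in $T$, the natural choice is $h_1(\varepsilon, T) = C L \sqrt{T G \log(Gn)}$ and $h_2(\varepsilon) = 0$. I would then verify the lemma's hypotheses: $h_1(\varepsilon, T) \in o(T)$ holds because of the $\sqrt{T}$ scaling, and $h_2(\varepsilon) = 0 < 2\varepsilon/3$ holds trivially. Next I would solve the defining inequality for $T^\ast$: by \Cref{lemma:regret-to-mistake-template}, the number of $\varepsilon$-mistakes is at most the smallest $T^\ast$ such that $T' > T^\ast$ forces $T' > h_1(\varepsilon, T')/(2\varepsilon/3)$. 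Substituting $h_1$ and dividing through by $\sqrt{T'}$ turns this into $\sqrt{T'} > \tfrac{3 C L}{2\varepsilon}\sqrt{G\log(Gn)}$, and squaring gives $T^\ast = \mathcal{O}\!\left(\tfrac{L^2 G\log(Gn)}{\varepsilon^2}\right)$. This is exactly where the square-root-in-$T$ regret scaling is inverted, so that the linear-in-$L$ regret becomes a quadratic-in-$L$ (and inverse-quadratic-in-$\varepsilon$) mistake bound, matching the claimed rate.

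The only point requiring genuine care—and the main obstacle—is the bookkeeping of the Lipschitz constant against the mistake threshold when $\ell$ is a general $L$-Lipschitz loss rather than the absolute value. \Cref{lemma:regret-to-mistake-template} is phrased for $\ell_t(\cdot) = |(\cdot) - b_t|$, where realizability ($|b_t - f_\ast(x_t)| \le \varepsilon/3$) translates directly into a per-round comparator loss of at most $\varepsilon/3$ and hence a regret lower bound of $2T'\varepsilon/3$ on the mistake subsequence. For an $L$-Lipschitz $\ell$ with $\ell(0) = 0$, realizability instead yields a comparator loss of order $L\varepsilon/3$, so to preserve the $2T'\varepsilon/3$ gap I would either sharpen the realizability requirement to $|b_t - f_\ast(x_t)| \le \varepsilon/(3L)$, tightening the per-round comparator loss back to $\varepsilon/3$, or equivalently rerun the reduction at the rescaled absolute-value threshold $\varepsilon/L$. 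Either route leaves the $\sqrt{T}$-inversion step unchanged and propagates the extra factor of $L$ into the final $L^2$, so no estimate is needed beyond \Cref{theorem:bounded-complexity-regret} and \Cref{lemma:regret-to-mistake-template}.
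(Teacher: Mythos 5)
Your proposal follows the paper's proof exactly: it feeds the regret bound of \Cref{theorem:bounded-complexity-regret} into \Cref{lemma:regret-to-mistake-template} with $h_1(\varepsilon,T)=CL\sqrt{TG\log(Gn)}$ and $h_2(\varepsilon)=0$, and solves for $T^\ast=\mathcal{O}\!\left(\frac{L^2 G\log(Gn)}{\varepsilon^2}\right)$. Your closing remark about rescaling the realizability tolerance (or the mistake threshold) to account for a general $L$-Lipschitz $\ell$ rather than the absolute-value loss is a legitimate point of care that the paper's one-line proof silently elides, and your proposed fix is sound.
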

\begin{proof}
    Thanks to \Cref{theorem:bounded-complexity-regret}, we can apply \Cref{lemma:regret-to-mistake-template} with $h_1(\varepsilon,T)=CL\sqrt{T G\log(Gn)}$ for a suitable constant $C>0$ and with $h_2(\varepsilon)=0$. With these choices, 
    \begin{equation}
        T^\ast 
        = T^\ast (h_1,h_2,\varepsilon)
        =\left(\frac{3CL\sqrt{G\log(Gn)}}{2\varepsilon}\right)^2\leq\mathcal{O}\left(\frac{L^2 G\log(Gn)}{\varepsilon^2}\right). 
    \end{equation}
    So, the mistake bound obtained from \Cref{lemma:regret-to-mistake-template} in this case is exactly as claimed in the statement of the corollary.
\end{proof}

Together, \Cref{theorem:bounded-complexity-regret} and \Cref{corollary:bounded-complexity-mistake} establish \Cref{informal-theorem:bounded-complexity-online-learning}. 
In particular, this implies: For the physically relevant class of channels implementable with polynomial-size circuits, we can solve the online learning task with only polynomially many mistakes.

\subsection{Regret bound for mixtures of known channels}\label{section:online-learning-mixtures}

In the previous section, it was shown that online learning quantum channels of bounded gate complexity is possible with good regret and number of mistakes. Here we show that even if the channel has unbounded (exponentially many in the number of qubits) gates that act on the input state, regret- and mistake-bounded online learning is still possible if we know the gates but not the probability with which they act. Even more generally, we show that any channel composed of mixture of known channels is efficiently online learnable, even if the mixture is over exponentially many known channels, which could be arbitrary quantum channels. A notable example of such channels are mixed unitary channels (with known unitaries), for example Pauli channels. Since the Pauli channel framework is better understood and well-known, we first prove regret upper bounds in this setting for clearer exposition. Later we generalize this to mixtures of general known channels.

\begin{theorem}[Regret bound for online learning Pauli channels]\label{theorem:pauli-regret}
    Let $\ell:[0,1]\to\mathbb{R}$ 
    be convex and $L$-Lipschitz. There exists an online learning strategy that, when presented sequentially with channel test operators $E_{A,B}^{(t)}$, $t\in\{1,2,\dotsc,T\}$, and associated losses $\ell_t (\cdot) = \ell ((\cdot) -b_t)$, outputs Pauli channel Choi matrix hypotheses $N_{A,B}^{(t)} \in \mathsf{PAULI}_n^{\prime}$ whose regret is bounded as
    \begin{equation}\label{eq-regret-Pauli-channels}
        \sum_{t=1}^T \ell_t(\Tr[E_{A,B}^{(t)}N_{A,B}^{(t)}]) - \sum_{t=1}^T \ell_t(\Tr[E_{A,B}^{(t)}C_{A,B}^{\mathcal{P}}]) = \mathcal{O}(L\sqrt{nT}),
    \end{equation}
   for every Pauli channel $\mathcal{P}_{A\to B}\in\mathsf{PAULI}_n$.
\end{theorem}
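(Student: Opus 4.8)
The plan is to exploit the defining feature of Pauli channels: a Pauli channel $\mathcal{P}$ is completely determined by its error-rate vector $\Vec{p}=(p_{\Vec{z},\Vec{x}})_{\Vec{z},\Vec{x}}\in\Delta_{4^n}$, and by \eqref{eq-Pauli_channel_Choi_rep} its Choi matrix is $C_{A,B}^{\mathcal{P}}=\sum_{\Vec{z},\Vec{x}}p_{\Vec{z},\Vec{x}}\Gamma_{A,B}^{\Vec{z},\Vec{x}}$. Hence the quantity the learner must track is \emph{linear} in $\Vec{p}$: setting $\Vec{e}^{(t)}=(e_{\Vec{z},\Vec{x}}^{(t)})_{\Vec{z},\Vec{x}}$ with $e_{\Vec{z},\Vec{x}}^{(t)}\coloneqq\Tr[E_{A,B}^{(t)}\Gamma_{A,B}^{\Vec{z},\Vec{x}}]$, we get $\Tr[E_{A,B}^{(t)}C_{A,B}^{\mathcal{P}}]=\Vec{p}\cdot\Vec{e}^{(t)}$, and restricting the learner's hypotheses to $N_{A,B}^{(t)}=\sum_{\Vec{z},\Vec{x}}p_{\Vec{z},\Vec{x}}^{(t)}\Gamma_{A,B}^{\Vec{z},\Vec{x}}\in\mathsf{PAULI}_n^{\prime}$ likewise gives $\Tr[E_{A,B}^{(t)}N_{A,B}^{(t)}]=\Vec{p}^{(t)}\cdot\Vec{e}^{(t)}$. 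So online learning Pauli channels reduces to online learning the distribution $\Vec{p}\in\Delta_{4^n}$ against the adversarial linear functionals $\Vec{e}^{(t)}$, a task tailor-made for the MWU algorithm (\Cref{alg:mwu_general}) over $d=4^n$ decisions; moreover, since the MWU iterates $\Vec{p}^{(t)}$ are genuine probability distributions, the resulting learner is automatically proper.

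First I would verify that the challenge vectors satisfy the MWU cost constraint. Since $\Gamma_{A,B}^{\Vec{z},\Vec{x}}$ is the Choi matrix of the Pauli-unitary channel $\rho\mapsto P^{\Vec{z},\Vec{x}}\rho P^{\Vec{z},\Vec{x}\dagger}$, the number $e_{\Vec{z},\Vec{x}}^{(t)}=\Tr[E_{A,B}^{(t)}\Gamma_{A,B}^{\Vec{z},\Vec{x}}]$ is precisely the passing probability of the test $E_{A,B}^{(t)}$ on that channel, so $e_{\Vec{z},\Vec{x}}^{(t)}\in[0,1]$. Equivalently, this follows from $E_{A,B}^{(t)}\geq 0$ together with the H\"older bound \eqref{eq-Holder_inequality_strategy_norm}, $\Tr[E_{A,B}^{(t)}\Gamma_{A,B}^{\Vec{z},\Vec{x}}]\leq\norm{E_{A,B}^{(t)}}_{\diamond 1}^{\ast}\norm{\Gamma_{A,B}^{\Vec{z},\Vec{x}}}_{\diamond 1}\leq 1$, using $\norm{E_{A,B}^{(t)}}_{\diamond 1}^{\ast}\leq 1$ for test operators and the fact that the strategy $1$-norm of a channel's Choi matrix equals the (unit) diamond norm of the channel.

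Next I would linearize the convex loss. Because each $\ell_t$ is convex and $L$-Lipschitz, \Cref{lem-regret_via_pseudoregret} (read with a subgradient $g_t\coloneqq\ell_t'(\Vec{p}^{(t)}\cdot\Vec{e}^{(t)})$, $\abs{g_t}\leq L$, at any point of non-differentiability) bounds the regret against a fixed $\mathcal{P}$ by
\[
\sum_{t=1}^T\ell_t(\Vec{p}^{(t)}\cdot\Vec{e}^{(t)})-\sum_{t=1}^T\ell_t(\Vec{p}\cdot\Vec{e}^{(t)})\leq\sum_{t=1}^T g_t\,(\Vec{p}^{(t)}-\Vec{p})\cdot\Vec{e}^{(t)}=L\sum_{t=1}^T\Vec{m}^{(t)}\cdot(\Vec{p}^{(t)}-\Vec{p}),
\]
where $\Vec{m}^{(t)}\coloneqq\tfrac{1}{L}g_t\Vec{e}^{(t)}$ has entries in $[-1,1]$ by the preceding step. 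Running MWU on these cost vectors and invoking \Cref{thm:MWU} with comparator $\Vec{q}=\Vec{p}$ gives $\sum_{t=1}^T\Vec{m}^{(t)}\cdot(\Vec{p}^{(t)}-\Vec{p})\leq\eta T+\tfrac{\log d}{\eta}$, so choosing $\eta=\sqrt{\log d/T}$ yields a regret bound of $2L\sqrt{T\log d}=\mathcal{O}(L\sqrt{Tn})$ since $\log d=2n\log 2$. Because the MWU guarantee is uniform over all $\Vec{q}\in\Delta_{4^n}$, the bound holds simultaneously for every comparator Pauli channel $\mathcal{P}$, as claimed.

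The main point to get right—and the only place quantum structure enters—is the cost-normalization step: establishing $e_{\Vec{z},\Vec{x}}^{(t)}\in[0,1]$ so that, after dividing by $L$, the $[-1,1]$ cost constraint of \Cref{thm:MWU} is met. Once this is in place, the result is a direct composition of the loss-linearization lemma with the standard MWU regret guarantee, the $\log d=\Theta(n)$ factor being exactly what produces the favorable $\sqrt{n}$ scaling.
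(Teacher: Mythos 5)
Your proposal is correct and follows essentially the same route as the paper's proof: reduce to online learning the error-rate vector $\Vec{p}\in\Delta_{4^n}$, verify $e_{\Vec{z},\Vec{x}}^{(t)}\in[0,1]$ (the paper does this directly via $E_{A,B}^{(t)}\leq\sigma_A\otimes\mathbbm{1}_B$ and $\Tr_B[\Gamma_{A,B}^{\Vec{z},\Vec{x}}]=\mathbbm{1}_A$ rather than via the H\"older inequality, but the content is identical), linearize the loss with \Cref{lem-regret_via_pseudoregret}, and apply \Cref{thm:MWU} with $\eta=\sqrt{\log d/T}$. The only cosmetic differences are your subgradient remark for non-differentiable $\ell_t$ and the equivalent choice $\eta=\sqrt{n/T}$ in the paper.
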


\begin{proof}
The result follows by applying Lemma~\ref{lem-regret_via_pseudoregret} and applying the multiplicative weights update (MWU) algorithm (Algorithm~\ref{alg:mwu_general}) with a particular choice of the loss vectors $\Vec{m}^{(t)}$. Specifically, we let
\begin{align}
    \Vec{m}^{(t)}&\coloneqq(m_{\Vec{z},\Vec{x}}^{(t)})_{\Vec{z},\Vec{x}\in\{0,1\}^n},\\
    m_{\Vec{z},\Vec{x}}^{(t)}&\coloneqq\frac{1}{L}\ell_t'(\Tr[E_{A,B}^{(t)}N_{A,B}^{(t)}])\Tr[E_{A,B}^{(t)}\Gamma_{A,B}^{\Vec{z},\Vec{x}}],\quad\forall~\Vec{z},\Vec{x}\in\{0,1\}^n,
\end{align}
where
\begin{equation}
    N_{A,B}^{(t)}\coloneqq\sum_{\Vec{z},\Vec{x}\in\{0,1\}^n}p_{\Vec{z},\Vec{x}}^{(t)}\Gamma_{A,B}^{\Vec{z},\Vec{x}},
\end{equation}
with the probability vectors $\Vec{p}^{(t)}\coloneqq(p_{\Vec{z},\Vec{x}}^{(t)})_{\Vec{z},\Vec{x}\in\{0,1\}^n}$ defined according to the MWU algorithm. Let us verify that $m_{\Vec{z},\Vec{x}}^{(t)}\in[-1,1]$ for all $\Vec{z},\Vec{x}\in\{0,1\}^n$, as required by the MWU algorithm. Indeed, we readily have that $\frac{1}{L}\ell_t'(\Tr[E_{A,B}^{(t)}N_{A,B}^{(t)}])\in[-1,1]$ because $\ell$ is assumed to be $L$-Lipschitz. In addition, we have that
\begin{equation}
    0
    \leq \Tr[E_{A,B}^{(t)}\Gamma_{A,B}^{\Vec{z}, \Vec{x}}] 
    \leq \Tr[(\sigma_A \otimes \mathbbm{1}_B)\Gamma_{A,B}^{\Vec{z}, \Vec{x}}] 
    = \Tr_A[\sigma_A \Tr_{B}[\Gamma_{A,B}^{\Vec{z}, \Vec{x}}]]
    = \Tr[\sigma_A]
    = 1 ,
\end{equation}
where we have used the fact that, because $E_{A,B}^{(t)}$ is a channel test operator, there exists a density operator $\sigma_A$ such that $E_{A,B}^{(t)}\leq\sigma_A\otimes\mathbbm{1}_B$. Therefore, $m_{\Vec{z},\Vec{x}}^{(t)}\in[-1,1]$ for all $\Vec{z},\Vec{x}\in\{0,1\}^n$.

Now, for every Pauli channel $\mathcal{P}_{A\to B}$ with associated error-rate vector $\Vec{q}=(q_{\Vec{z},\Vec{x}})_{\Vec{z},\Vec{x}\in\{0,1\}^n}$, we can use \Cref{eq-Pauli_channel_Choi_rep} to see that
\begin{align}
    \frac{1}{L}\ell'_t(\Tr[E_{A,B}^{(t)}N_{A,B}^{(t)}])\Tr[E_{A,B}^{(t)}C_{A,B}^{\mathcal{P}}]&=\sum_{\Vec{z},\Vec{x}\in\{0,1\}^n}q_{\Vec{z},\Vec{x}}\frac{1}{L}\ell_t'(\Tr[E_{A,B}N_{A,B}^{(t)}])\Tr[E_{A,B}^{(t)}\Gamma_{A,B}^{\Vec{z},\Vec{x}}]\\
    \nonumber
    &=\sum_{\Vec{z},\Vec{x}\in\{0,1\}^n} m_{\Vec{z},\Vec{x}}^{(t)}q_{\Vec{z},\Vec{x}}\\
    &=\Vec{m}^{(t)}\cdot\Vec{q},
    \nonumber
\end{align}
and similarly, we find that
\begin{equation}
    \frac{1}{L}\ell'_t(\Tr[E_{A,B}^{(t)}N_{A,B}^{(t)}])\Tr[E_{A,B}^{(t)}N_{A,B}^{(t)}]=\Vec{m}^{(t)}\cdot\Vec{p}^{(t)}.    
\end{equation}
Therefore, it follows from the known regret bound in Theorem~\ref{thm:MWU} on the MWU algorithm that
\begin{align}
    &\sum_{t=1}^T \frac{1}{L}\ell'_t(\Tr[E_{A,B}^{(t)}N_{A,B}^{(t)}])\Tr[E_{A,B}^{(t)}N_{A,B}^{(t)}] - \sum_{t=1}^T \frac{1}{L}\ell'_t(\Tr[E_{A,B}^{(t)}N_{A,B}^{(t)}])\Tr[E_{A,B}^{(t)}C_{A,B}^{\mathcal{P}}]\nonumber\\
    &\qquad=\sum_{t=1}^T\Vec{m}^{(t)}\cdot\Vec{p}^{(t)}-\sum_{t=1}^T\Vec{m}^{(t)}\cdot\Vec{q}\\
    &\qquad\leq\eta T+\frac{n\log 4}{\eta}.
\end{align}
Finally, combining this inequality with the result of Lemma~\ref{lem-regret_via_pseudoregret}, we obtain 
\begin{align}
  &\sum_{t=1}^T \left(\ell_t(\Tr[E_{A,B}^{(t)}N_{A,B}^{(t)}]) - \ell_t(\Tr[E_{A,B}^{(t)}C_{A,B}^{\mathcal{P}}])\right) \nonumber \\
  &\qquad\leq  \sum_{t=1}^T \left(\ell'_t(\Tr[E_{A,B}^{(t)}N_{A,B}^{(t)}]) (\Tr[E_{A,B}^{(t)}N_{A,B}^{(t)}] - \Tr[E_{A,B}^{(t)}C_{A,B}^{\mathcal{P}}]) \right) \label{eq:regret-loss-1} \\
  &\qquad\leq L \left(\eta T + \frac{n\log 4}{\eta}\right), \label{eq:regret-loss-3}
\end{align}
for every Pauli channel $\mathcal{P}_{A\to B}\in\mathsf{PAULI}_n$. The claimed bound then follows by setting $\eta = \sqrt{n/T}$.
\end{proof}

While Theorem~\ref{theorem:pauli-regret} focused solely on Pauli channels, as we show below, it readily translates to \emph{any} convex combination of \emph{known} channels---even exponentially many known channels that may have arbitrarily large gate complexity. This generalization captures a wide class of channels of interest, such as mixed unitary channels, in which each known channel is a unitary. Pauli channels are themselves a special case of such channels, because they are convex combinations of the $4^n$ unitary channels defined by the Pauli strings.

\begin{corollary}[Regret bound for convex combinations of known channels]\label{corollary:regret-mixture}
    Let $K \in \mathbb Z_{> 0}$, and let $\{\mathcal{N}_j\}_{j=1}^K$ be a set of $K$ known $n$-qubit channels $\mathcal{N}_j\in\mathsf{CPTP}_n$. Let $\ell:[0,1]\to\mathbb{R}$ be convex and $L$-Lipschitz. There exists an online learning strategy that, when presented sequentially with channel test operators $E_{A,B}^{(t)}$, $t\in\{1,2,\dotsc,T\}$, and associated losses $\ell_t (\cdot) = \ell ((\cdot) -b_t)$, outputs Choi matrix hypotheses $N_{A,B}^{(t)}$ of the form
    \begin{equation}
        N_{A,B}^{(t)}=\sum_{j=1}^K p_j^{(t)}C(\mathcal{N}_j),
    \end{equation}
    with $\Vec{p}^{(t)}\coloneqq(p_1^{(t)},p_2^{(t)},\dotsc,p_K^{(t)})\in\Delta_K$, whose regret is bounded as
    \begin{equation}
        \sum_{t=1}^T \ell_t(\Tr[E_{A,B}^{(t)}N_{A,B}^{(t)}]) - \sum_{t=1}^T \ell_t(\Tr[E_{A,B}^{(t)}C_{A,B}^{\mathcal{N}}]) = \mathcal{O}\left(L\sqrt{T \log K}\right)\, ,
    \end{equation}
    for every $\mathcal{N}\in\text{conv}(\{\mathcal{N}_j\}_{j=1}^K)$.
\end{corollary}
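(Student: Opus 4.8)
The plan is to follow the proof of \Cref{theorem:pauli-regret} essentially verbatim, replacing the $4^n$ fixed Choi matrices $\Gamma_{A,B}^{\Vec{z},\Vec{x}}$ of the Pauli unitaries by the $K$ fixed Choi matrices $C(\mathcal{N}_j)$. The crucial structural fact is that the Choi map is linear, so a convex mixture $\mathcal{N}=\sum_{j=1}^K q_j\mathcal{N}_j$ has Choi matrix $C(\mathcal{N})=\sum_{j=1}^K q_j C(\mathcal{N}_j)$. Hence every element of $\text{conv}(\{\mathcal{N}_j\}_{j=1}^K)$ is specified by a probability vector $\Vec{q}\in\Delta_K$, and the only unknown to be learned is this classical distribution. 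This reduces the channel online learning problem to running the MWU algorithm (\Cref{alg:mwu_general}) over $K$ decisions, one per known channel.

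Concretely, I would run \Cref{alg:mwu_general} with $d=K$ decisions, let $\Vec{p}^{(t)}\in\Delta_K$ be its iterate, and declare the learner's hypothesis to be $N_{A,B}^{(t)}=\sum_{j=1}^K p_j^{(t)}C(\mathcal{N}_j)$, which is manifestly the Choi matrix of a convex mixture of the $\mathcal{N}_j$. The cost vector fed to MWU in round $t$ is $\Vec{m}^{(t)}=(m_j^{(t)})_{j=1}^K$ with
\begin{equation}
    m_j^{(t)}\coloneqq\frac{1}{L}\ell_t'(\Tr[E_{A,B}^{(t)}N_{A,B}^{(t)}])\,\Tr[E_{A,B}^{(t)}C(\mathcal{N}_j)].
\end{equation}
The first step is to check the admissibility condition $m_j^{(t)}\in[-1,1]$: the factor $\frac{1}{L}\ell_t'(\cdot)\in[-1,1]$ since $\ell$ is $L$-Lipschitz, while $0\leq\Tr[E_{A,B}^{(t)}C(\mathcal{N}_j)]\leq 1$ follows exactly as in \Cref{theorem:pauli-regret} from $0\leq E_{A,B}^{(t)}\leq\sigma_A\otimes\mathbbm{1}_B$ together with the trace-preservation identity $\Tr_B[C(\mathcal{N}_j)]=\mathbbm{1}_A$.

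By linearity, for any target $\mathcal{N}=\sum_j q_j\mathcal{N}_j$ one gets $\frac{1}{L}\ell_t'(\cdot)\Tr[E_{A,B}^{(t)}C(\mathcal{N})]=\Vec{m}^{(t)}\cdot\Vec{q}$ and likewise $\frac{1}{L}\ell_t'(\cdot)\Tr[E_{A,B}^{(t)}N_{A,B}^{(t)}]=\Vec{m}^{(t)}\cdot\Vec{p}^{(t)}$. The MWU guarantee of \Cref{thm:MWU} then bounds $\sum_t\Vec{m}^{(t)}\cdot\Vec{p}^{(t)}-\sum_t\Vec{m}^{(t)}\cdot\Vec{q}\leq\eta T+\frac{\log K}{\eta}$, and invoking the convexity bound of \Cref{lem-regret_via_pseudoregret} lifts this pseudo-regret statement to a bound of $L(\eta T+\frac{\log K}{\eta})$ on the true regret, uniformly over all $\Vec{q}\in\Delta_K$, that is, over all $\mathcal{N}\in\text{conv}(\{\mathcal{N}_j\})$. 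Optimizing $\eta=\sqrt{(\log K)/T}$ yields the claimed $\mathcal{O}(L\sqrt{T\log K})$. I do not anticipate a genuine obstacle here: the argument is a direct abstraction of the Pauli proof, and the only point demanding care is recognizing that the comparator ranging over the full convex hull is handled automatically, since the MWU bound holds against an arbitrary fixed distribution $\Vec{q}$ and linearity of the Choi map identifies $\text{conv}(\{\mathcal{N}_j\})$ with $\Delta_K$.
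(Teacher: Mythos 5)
Your proposal matches the paper's proof essentially step for step: the same MWU instantiation over $K$ decisions with cost entries $m_j^{(t)}=\frac{1}{L}\ell_t'(\Tr[E_{A,B}^{(t)}N_{A,B}^{(t)}])\Tr[E_{A,B}^{(t)}C_{A,B}^{\mathcal{N}_j}]$, the same boundedness check, the same use of linearity of the Choi map to identify the comparator with a fixed $\Vec{q}\in\Delta_K$, and the same combination of \Cref{thm:MWU} with \Cref{lem-regret_via_pseudoregret} followed by the choice $\eta=\sqrt{\log K/T}$. The argument is correct and is exactly the paper's route.
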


\begin{proof}
    The proof is analogous to that of Theorem~\ref{theorem:pauli-regret}. In particular, we combine Lemma~\ref{lem-regret_via_pseudoregret} with the MWU algorithm (Algorithm~\ref{alg:mwu_general}) applied to a particular choice of loss vectors $\Vec{m}^{(t)}$. We let
    \begin{align}
        \Vec{m}^{(t)}&\coloneqq(m_j^{(t)})_{j\in\{1,2,\dotsc,K\}},\\
        m_j^{(t)}&\coloneqq\frac{1}{L}\ell'_t(\Tr[E_{A,B}^{(t)}N_{A,B}^{(t)}])\Tr[E_{A,B}^{(t)}C_{A,B}^{\mathcal{N}_j}]\quad\forall~j\in\{1,2,\dotsc,K\},
    \end{align}
    where
    \begin{equation}
        N_{A,B}^{(t)}\coloneqq\sum_{j=1}^K p_j^{(t)}C_{A,B}^{\mathcal{N}_j},
    \end{equation}
    with the probability vectors $\Vec{p}^{(t)}\coloneqq(p_j^{(t)})_{j\in\{1,2,\dotsc,K\}}$ defined according to the MWU algorithm. As in the proof of Theorem~\ref{theorem:pauli-regret}, it is straightforward to verify that $m_j^{(t)}\in[-1,1]$ for all $j\in\{1,2,\dotsc,K\}$, as required by the MWU algorithm.
    
    Now, consider an arbitrary channel $\mathcal{N}$ in the convex hull of the channels $\mathcal{N}_j$, specified as $\mathcal{N}=\sum_{j=1}^k q_j\mathcal{N}_j$ for some probability vector $\Vec{q}=(q_1,q_2,\dotsc,q_K)$. By linearity of the Choi representation, and using Theorem~\ref{thm:MWU}, we obtain
    \begin{align}
        &\sum_{t=1}^T \frac{1}{L}\ell'_t(\Tr[E_{A,B}^{(t)}N_{A,B}^{(t)}])\Tr[E_{A,B}^{(t)} N_{A,B}^{(t)}] - \sum_{t=1}^T \frac{1}{L}\ell'_t(\Tr[E_{A,B}^{(t)}N_{A,B}^{(t)}])\Tr[E_{A,B}^{(t)} C_{A,B}^{\mathcal{N}}]\\
         \nonumber
         &\qquad=\sum_{t=1}^T\Vec{m}^{(t)}\cdot\Vec{p}^{(t)}-\sum_{t=1}^T\Vec{m}^{(t)}\cdot\Vec{q}\\
          \nonumber
        &\qquad\leq\eta T + \frac{\log K}{\eta}.
    \end{align}
    Therefore, by Lemma~\ref{lem-regret_via_pseudoregret}, we obtain
    \begin{align}
      &\sum_{t=1}^T \left(\ell_t(\Tr[E_{A,B}^{(t)}N_{A,B}^{(t)}]) - \ell_t(\Tr[E_{A,B}^{(t)}C_{A,B}^{\mathcal{N}}])\right) \\
       \nonumber
      &\qquad\leq  \sum_{t=1}^T \left(\ell'_t(\Tr[E_{A,B}^{(t)}N_{A,B}^{(t)}]) (\Tr[E_{A,B}^{(t)}N_{A,B}^{(t)}] - \Tr[E_{A,B}^{(t)}C_{A,B}^{\mathcal{N}}]) \right) \\
       \nonumber
      &\qquad\leq L \left(\eta T + \frac{\log K}{\eta}\right).
    \end{align}
    Setting $\eta=\sqrt{\log K/T}$, we obtain the desired regret bound.
\end{proof}

\begin{corollary}[Mistake bounds for Pauli channels and convex combination of known channels]\label{corollary:pauli-mistake-bound}
    Let $\varepsilon\in (0,1)$.
    Let $K\in\mathbb{Z}_{>0}$, and let $\{\mathcal{N}_j\}_{j=1}^K$ be a set of $K$ known $n$-qubit channels $\mathcal{N}_j\in\mathsf{CPTP}_n$. Let $\mathcal{N}_{A\to B}\in\text{conv}(\{\mathcal{N}_j\}_{j=1}^K)$ be a fixed channel unknown to the learner. 
    There exists an online learning strategy that, in a realizable setting for $\text{conv}(\{\mathcal{N}_j\}_{j=1}^K)$, when presented sequentially with two-outcome channel test operators $E_{A,B}^{(t)}$, $t\in\{1,2,\dotsc,T\}$, and associated losses $\ell_t (\cdot) = \ell ((\cdot) -b_t)$, for some $L$-Lipschitz $\ell$, outputs a sequence $N_{A,B}^{(t)} \in \text{conv}(\{C_{A,B}^{\mathcal{N}_j}\}_{j=1}^K)$ of hypothesis Choi matrices that makes at most $\mathcal{O}(\frac{L^2 \log(K)}{\varepsilon^2})$ many $\varepsilon$-mistakes. 
\end{corollary}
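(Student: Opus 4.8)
The plan is to combine the regret bounds established immediately above with the generic regret-to-mistake conversion of \Cref{lemma:regret-to-mistake-template}, in direct analogy with the proof of \Cref{corollary:bounded-complexity-mistake}. For the general convex-combination case, \Cref{corollary:regret-mixture} supplies a regret bound of the form $R_T\leq CL\sqrt{T\log K}$ for some absolute constant $C>0$. I would then invoke \Cref{lemma:regret-to-mistake-template} with the choices $h_1(\varepsilon,T)=CL\sqrt{T\log K}$ and $h_2(\varepsilon)=0$. The side conditions of that lemma are immediate: $h_1(\varepsilon,T)\in o(T)$ because of the $\sqrt{T}$ dependence, and $h_2(\varepsilon)=0<2\varepsilon/3$ trivially. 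Since the regret bound is realized by a proper learner outputting hypotheses $N_{A,B}^{(t)}\in\text{conv}(\{C_{A,B}^{\mathcal{N}_j}\}_{j=1}^K)$, the mistake-driven version of that learner stays within the same hypothesis class.

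The mistake bound then follows by solving for the threshold $T^\ast$. With $h_2=0$, the defining condition $T'>h_1(\varepsilon,T')/(2\varepsilon/3)$ rearranges to $\sqrt{T'}>3CL\sqrt{\log K}/(2\varepsilon)$, so the smallest admissible threshold is
\begin{equation}
    T^\ast=\left(\frac{3CL\sqrt{\log K}}{2\varepsilon}\right)^2\leq\mathcal{O}\!\left(\frac{L^2\log(K)}{\varepsilon^2}\right).
\end{equation}
By \Cref{lemma:regret-to-mistake-template}, applying the MWU-based update rule only in rounds where an $\varepsilon$-mistake occurs, and otherwise retaining the previous hypothesis, yields a learner making at most $\mathcal{O}(L^2\log(K)/\varepsilon^2)$ many $\varepsilon$-mistakes, independently of $T$, as claimed.

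The Pauli-channel bound is then recovered as a special case. One can either specialize $K=4^n$ above, so that $\log K=n\log 4=\mathcal{O}(n)$, or equivalently start directly from the sharper regret bound $R_T\leq CL\sqrt{nT}$ of \Cref{theorem:pauli-regret} and run the identical argument with $h_1(\varepsilon,T)=CL\sqrt{nT}$; either route gives $T^\ast=\mathcal{O}(L^2 n/\varepsilon^2)$, matching the linear-in-$n$ mistake bound advertised in \Cref{informal-theorem:pauli-channel-online-learning}. I anticipate no genuine technical obstacle: the entire argument reduces to verifying the (trivially satisfied) hypotheses of the regret-to-mistake template and performing the elementary algebra that determines $T^\ast$; all the substantive work has already been done in establishing the underlying regret bounds via the multiplicative weights guarantee of \Cref{thm:MWU}.
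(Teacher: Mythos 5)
Your proposal is correct and follows essentially the same route as the paper: invoking \Cref{lemma:regret-to-mistake-template} with $h_1(\varepsilon,T)=CL\sqrt{T\log K}$ from \Cref{corollary:regret-mixture} and $h_2(\varepsilon)=0$, then computing $T^\ast=\bigl(\tfrac{3CL\sqrt{\log K}}{2\varepsilon}\bigr)^2$. The additional remarks on the side conditions of the lemma and on recovering the Pauli case via $K=4^n$ are consistent with the paper's treatment.
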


\begin{proof}
    Mimicking the proof of \Cref{corollary:bounded-complexity-mistake} by invoking \Cref{lemma:regret-to-mistake-template}, but now for the choice $h_1(\varepsilon, T) = CL\sqrt{T\log K}$ for a suitable $C > 0$ and $h_2(\varepsilon) = 0$, we get that $T^{\ast} = \left( \frac{3CL \sqrt{\log K}}{2\varepsilon}\right)^2 \leq \mathcal{O}(\frac{L^2\log (K)}{\varepsilon^2})$.
\end{proof}

\subsection{Regret bounds for multi-time processes}\label{sec-online_learning_multi_time_processes}

In this section, we generalize the developments of Sections~\ref{section:online-learning-bounded-complexity} and \ref{section:online-learning-mixtures} to multi-time processes. We start by presenting the problem of online learning classes of multi-time processes, by casting it in terms of the general framework of online learning laid out in Section~\ref{sec-online_learning_basics}. With that, we formally state the problem in Problem~\ref{problem:online_learning_classes_multi_time_processes} below.

Let $r\in\mathbb{N}$. We fix finite-dimensional Hilbert spaces $\mathcal{H}_{A_1},\mathcal{H}_{B_1},\dotsc,\mathcal{H}_{A_r},\mathcal{H}_{B_r}$, and we let $\mathcal{H}_{A,B}^{(r)}\equiv\mathcal{H}_{A_1}\otimes\mathcal{H}_{B_1}\otimes\mathcal{H}_{A_2}\otimes\mathcal{H}_{B_2}\otimes\dotsb\otimes\mathcal{H}_{A_r}\otimes\mathcal{H}_{B_r}$. Then, the input set/domain $\mathsf{X}$ comprises multi-time test operators corresponding to two-outcome multi-time tests
\begin{equation}
    \mathsf{X}\coloneqq\Big\{E\in\Lin(\mathcal{H}_{A,B}^{(r)}):E\geq 0,\,\norm{E}_{\diamond r}^{\ast}\leq 1\Big\}.
\end{equation}
The output set is $\mathsf{Y}=[0,1]$, and $\mathcal{F}$ is defined via a subclass $\mathsf{C}\subseteq\mathsf{COMB}_r$ of interest, such that for every $N\in\mathsf{C}$ we define the function $f_N: \mathsf{X}\to\mathsf{Y}$ as $f_N(E)=\Tr[EN]$ for all $E\in\mathsf{X}$. In other words,
\begin{equation}
    \mathcal{F}=\Big\{f_N:\mathsf{X}\to[0,1]:N\in\mathsf{C},\,f_N(E)=\Tr[EN]~\forall\,E\in\mathsf{X}\Big\}.
\end{equation}

As in the case of channels, online learning proceeds is an interactive procedure with an adversary, who provides test operators $E^{(t)}\in\Lin(\mathcal{H}_{A,B}^{(r)})$ for two-output multi-time tests to the learner, and the learner outputs hypotheses $N^{(t)}\in\mathsf{C}$. The goal of the learner is to achieve a small regret,
\begin{equation}\label{eq-multi_time_process_regret}
    R_T=\sum_{t=1}^T\ell_t(\Tr[E^{(t)}N^{(t)}])-\min_{N^{\ast}\in\mathsf{C}}\sum_{t=1}^T\ell_t(\Tr[E^{(t)}N^{\ast}]),
\end{equation}
where the losses $\ell_t$ are revealed by the adversary. We aim for regret bounds scaling as $\mathcal{O}(\sqrt{T\cdot\text{poly}(n)})$. On the other hand, in the realizable scenario, where there exists a (to the learner unknown) comb operator $N^{\ast}\in\mathsf{C}$ such that all losses take the form $\ell_t(\cdot)=\ell((\cdot)-b_t)$, where $\ell$ is some fixed function and each $b_t$ satisfies $|b_t-\Tr[E^{(t)}N^{\ast}]|\leq\varepsilon/3$, the learner aims to achieve a small number of $\varepsilon$-mistakes. Here, our goal is to guarantee mistake bounds scaling as $\mathcal{O}(\text{poly}(n,\varepsilon^{-1}))$. We summarize the formulation of our multi-time process online learning problem as follows.

\begin{problem}[Online learning of multi-time quantum processes]\label{problem:online_learning_classes_multi_time_processes}
    Consider a subset $\mathsf{C}\subseteq\mathsf{COMB}_r$, for $r\in\mathbb{N}$, and let $N^{\ast}\in\mathsf{C}$ be unknown. Given a sequence of $T\in\mathbb{N}$ interactive rounds, in which test operators $E^{(1)},E^{(2)},\dotsc,E^{(T)}\in\mathsf{X}$ are presented sequentially by an adversary, the problem is to output a sequence $N^{(1)},N^{(2)},\dotsc,N^{(T)}\in\mathsf{C}$ of comb operators such that for losses $\ell_t$ of the form $\ell_t(\cdot)=\ell((\cdot)-b_t)$, where $\ell$ is some fixed function and each $b_t$ satisfies $|b_t-\Tr[E^{(t)}N^{\ast}]|\leq\varepsilon/3$, the regret in \eqref{eq-multi_time_process_regret} scales as $\mathcal{O}(\sqrt{T\cdot\text{poly}(n)})$ and the number of $\varepsilon$-mistakes scales as $\mathcal{O}(\text{poly}(n,\varepsilon^{-1}))$.
\end{problem}

We note that for $r=1$, Problem~\ref{problem:online_learning_classes_multi_time_processes} is equivalent to Problem~\ref{problem:online-learning-classes-of-channels}. We also note that the projected MMW algorithm for Choi states of quantum channels (Algorithm~\ref{alg:projected-mmw-choi-state}) generalizes straightforwardly to Choi states of multi-time processes; see Algorithm~\ref{alg:projected-mmw-choi-state_multi_time}.

\begin{algorithm}
     \begin{algorithmic}[1]
     \Require {$\eta \in (0, 1)$; Initialize $W^{(1)} = \bigotimes_{k=1}^r\mathbbm{1}_{A_k,B_k}$ and $P^{(1)}=\frac{1}{d}W^{(1)}$, $d\equiv\prod_{k=1}^r d_{A_k}d_{B_k}$}
     \For{\texttt{$t = 1, 2, \ldots, T$}}
        \State Output the decision/estimate $N^{(t)}$
        \State Receive the cost matrix $L^{(t)}$, $\norm{L}_{\diamond r}^{\ast}\leq 1$.
        \State Update the weight matrix: $W^{(t+1)}=\exp[\log(W^{(t)})-\eta L^{(t)}]$, $\omega^{(t+1)}=W^{(t+1)}/\Tr[W^{(t+1)}]$.
        \State Project: $P^{(t+1)}\coloneqq\frac{1}{d_A}\argmin_{P}\Big\{D(\frac{1}{d_A}P\Vert \omega^{(t+1)}):P\in\mathsf{COMB}_r\Big\}$, $d_A=\prod_{k=1}^r d_{A_k}$.
        \EndFor
     \end{algorithmic}
     \caption{Projected MMW algorithm for Choi states of $r$-step multi-time processes}\label{alg:projected-mmw-choi-state_multi_time}
\end{algorithm}

We provide an analysis of Algorithm~\ref{alg:projected-mmw-choi-state_multi_time} in Appendix~\ref{sec-projected_MMW}; see Remark~\ref{rem:projected_MMW_choi_state_multi_time} therein.

\begin{figure}
    \centering
    \includegraphics[width=0.85\textwidth]{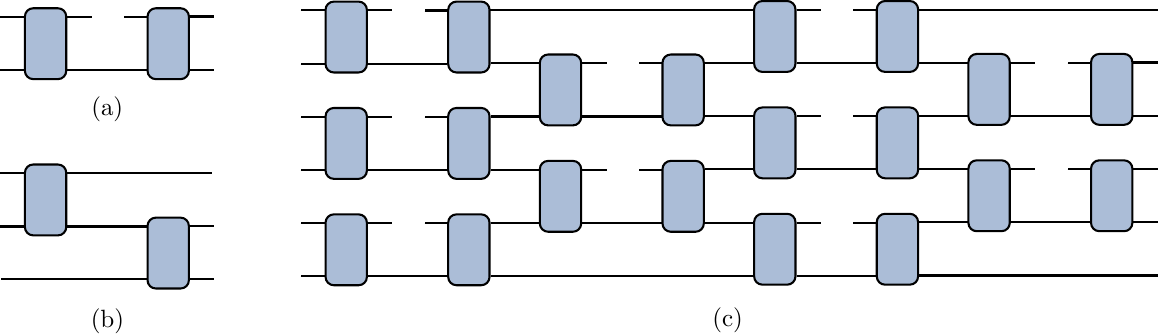}
    \caption{\textbf{Multi-time processes with bounded complexity.} (a) The basic unit of our multi-time processes with bounded complexity is a process consisting of two two-qubit channels connected by an inaccessible memory system. (b) By collapsing the causal structure of the inputs and outputs of the process in (a), we obtain a three-qubit channel belonging to the set $\mathsf{CPTP}_{3,2}$. (c) An example of a multi-time process obtained by composing (ten of) the basic elements in~(a) in a circuit.}
    \label{fig:G_gate_comb}
\end{figure}

\paragraph*{Multi-time processes with bounded gate complexity.} We can extend the results above to the case of multi-time processes with bounded complexity. The generalized form of multi-time processes allows for many possibilities for how to define multi-time processes with bounded complexity. Just as we defined $n$-qubit quantum channels with gate complexity $G$ as being composed of at most $G$ two-qubit quantum channels, here we consider multi-time processes composed of the basic ``unit'' shown in Figure~\ref{fig:G_gate_comb}(a). Specifically, we define $\mathsf{COMB}_{n,G}$ to be the set of all comb operators corresponding to multi-time processes on $n$ qubits that can be implemented by the composition of at most $G$ of the basic units in Figure~\ref{fig:G_gate_comb}(a), in the manner of a circuit as shown in Figure~\ref{fig:G_gate_comb}(c). The number of time steps, $r$, in the multi-time process depends on $G$.

First, we prove an analogue of \Cref{lemma:covering-number-gate-complexity} for multi-time processes with gate complexity $G$. While the diamond norm was the natural distance measure in the case of channels, here we use strategy norms instead (see \Cref{sec-multi_time_processes} for a definition and a discussion of their properties).

\begin{lemma}[Covering number bounds from gate complexity for multi-time processes]
    Let $\varepsilon\in (0,1)$.
    The (interior) $\varepsilon$-covering number of $\mathsf{COMB}_{n,G}$ with respect to $\norm{\cdot}_{\diamond r}$ is bounded as
    \begin{equation}
        N(\mathsf{COMB}_{n, G},\varepsilon,\norm{\cdot}_{\diamond r})
        \leq \binom{n}{2}^G\left(\frac{6G}{\varepsilon}\right)^{1024 G} \, .
    \end{equation}
\end{lemma}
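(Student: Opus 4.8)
The plan is to mirror the proof of \Cref{lemma:covering-number-gate-complexity}, replacing the diamond norm by the strategy $r$-norm and replacing composition of channels by the link product of combs \cite{CDP09}. First I would cover the constituent two-qubit channels: by \cite[Lemma~C.2]{caro2022generalization} the set $\mathsf{CPTP}_2$ of two-qubit channels admits a $\norm{\cdot}_\diamond$-cover of size at most $(6/\varepsilon')^{512}$, and allowing a channel to act on any of the $\binom{n}{2}$ pairs of qubits yields a cover of size $\binom{n}{2}(6/\varepsilon')^{512}$ for an embedded two-qubit channel. Since each basic unit of \Cref{fig:G_gate_comb}(a) is built from \emph{two} two-qubit channels joined through the inaccessible memory wire, covering a single basic unit means covering each of its two constituents, contributing a factor $(6/\varepsilon')^{1024}$ per unit. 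This is exactly the origin of the doubled exponent $1024 = 2\cdot 512$ relative to \Cref{lemma:covering-number-gate-complexity}.

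Second, I would assemble the comb cover by link-composing the unit covers, just as the net $\mathsf{N}_\varepsilon$ was built in \Cref{lemma:covering-number-gate-complexity}. The candidate cover consists of all combs obtained from a target process $N\in\mathsf{COMB}_{n,G}$ by replacing each of its $G$ basic units by a nearby unit drawn from the cover above. Its cardinality is at most $\big(\binom{n}{2}(6/\varepsilon')^{1024}\big)^{G} = \binom{n}{2}^G (6/\varepsilon')^{1024G}$, and the choice $\varepsilon' = \varepsilon/G$ produces the claimed $\binom{n}{2}^G(6G/\varepsilon)^{1024G}$.

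Third — and this is the step carrying the real content — I would verify that this construction is genuinely an $\varepsilon$-cover in the strategy $r$-norm, i.e.\ that perturbing each of the $G$ units by at most $\varepsilon'$ perturbs the overall comb by at most $G\varepsilon' = \varepsilon$. In the channel case this was immediate from subadditivity of the diamond norm under composition \cite[Proposition~3.48]{Wat18_book}. Here the composition is the link product $\star$, so I would run a hybrid/telescoping estimate,
\begin{equation}
    \norm{N_G\star\dotsb\star N_1 - \widetilde{N}_G\star\dotsb\star\widetilde{N}_1}_{\diamond r}\leq\sum_{k=1}^G\norm{\widetilde{N}_G\star\dotsb\star\widetilde{N}_{k+1}\star(N_k-\widetilde{N}_k)\star N_{k-1}\star\dotsb\star N_1}_{\diamond r},
\end{equation}
and bound each summand using submultiplicativity of the strategy norm under the link product together with the fact that every legitimate unit comb has strategy norm equal to one. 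This submultiplicativity is precisely the property I would import from \Cref{sec-strategy_norms}, where it is established via semidefinite-programming duality.

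The main obstacle I anticipate is this last step. Unlike the diamond norm, the strategy norm does not compose under a plain superoperator product but under the link product, whose interplay with $\norm{\cdot}_{\diamond r}$ (and its Hölder dual $\norm{\cdot}_{\diamond r}^{\ast}$) is delicate: the slots of the combs being composed carry different causal structure, so one must check that the intermediate hybrid objects in the telescoping are themselves combs with well-defined strategy norm $\leq 1$, and that the norm indices match correctly across each $\star$. Granting the submultiplicativity lemma of \Cref{sec-strategy_norms}, the remaining accounting is identical in spirit to that of \Cref{lemma:covering-number-gate-complexity}.
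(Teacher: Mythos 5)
Your overall strategy matches the paper's: cover the basic units, form the full cover by link-composing unit covers, and control the error by a telescoping estimate that relies on submultiplicativity (hence subadditivity under composition) of the strategy norm, which is exactly what \Cref{prop-submult_strategy_norm} and \Cref{cor:strategy_norm_subadditivity_composition} supply. The one place you genuinely diverge is the derivation of the per-unit covering number. You obtain the exponent $1024$ by covering each of the two constituent two-qubit channels separately in diamond norm via \cite[Lemma~C.2]{caro2022generalization} and multiplying the two $(6/\varepsilon')^{512}$ counts; the paper instead applies the standard volumetric bound $N(B_R,\varepsilon,\norm{\cdot})\leq(3R/\varepsilon)^K$ directly to the $\norm{\cdot}_{\diamond 2}$-unit ball of the ambient space of a basic unit, whose real dimension is $1024$ (two $16\times 16$ Choi matrices, i.e.\ $512$ complex parameters). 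Both routes are legitimate; the paper's avoids having to telescope \emph{inside} a unit, while yours reuses the channel-case lemma but costs you constants. Concretely: if each constituent is covered only to accuracy $\varepsilon'$, the unit is covered to $2\varepsilon'$ and the full comb to $2\varepsilon$, so you would need accuracy $\varepsilon'/2$ per constituent and would land at $(12G/\varepsilon)^{1024G}$ rather than $(6G/\varepsilon)^{1024G}$; likewise, covering each constituent's qubit-pair placement independently gives $\binom{n}{2}^{2G}$ rather than $\binom{n}{2}^{G}$ unless you fix the unit's location once. Neither issue affects the $\mathcal{O}(G\log(Gn))$ metric entropy that the downstream regret bound actually uses, but the constants in the lemma as stated would not come out exactly as claimed under your accounting.
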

\begin{proof}
    The proof again follows similar ideas as employed in the covering number bounds of \cite{caro2022generalization, zhao2023learning}.
    First, we recall (compare, e.g., Ref.~\cite{vershynin2018highdimensional}) a well-known fact about covering numbers of norm balls in $\mathbb{R}^K$: If $R>0$, $\varepsilon\in(0,R]$, and if $B_R^{\norm{\cdot}} (x)$ denotes the $\norm{\cdot}$-ball of radius $R$ around $x\in\mathbb{R}^K$ for some norm $\norm{\cdot}$, then 
    \begin{equation}
        N(B_R^{\norm{\cdot}} (x), \varepsilon, \norm{\cdot})
        \leq \left(1 + \frac{2R}{\varepsilon}\right)^K
        \leq \left(\frac{3R}{\varepsilon}\right)^K\, .
    \end{equation}
    To apply this in our scenario, notice that any basic unit as in Figure~\ref{fig:G_gate_comb}(a) lives in the $\norm{\cdot}_{\diamond 2}$-unit-ball in a $((2\cdot (2^4\times 2^4))=512)$-dimensional complex space, where the ambient dimension is that of two $2$-qubit channels.
    Consequently, via the approximate monotonicity of covering numbers w.r.t.~inclusion of sets, the above standard bound implies for our case that $\mathsf{COMB}_2$, the class of basic units, admits 
    \begin{equation}
        N(\mathsf{COMB}_{2},\varepsilon,\norm{\cdot}_{\diamond 2})
        \leq N\left(B_1^{\norm{\cdot}_{\diamond 2}} (0), \frac{\varepsilon}{2}, \norm{\cdot}_{\diamond 2}\right)
        \leq \left(\frac{6}{\varepsilon}\right)^{1024}\, 
    \end{equation}
    as a covering number bound.
    If we let $\mathsf{COMB}_2^{(n)}$ be the set of all basic units in Figure~\ref{fig:G_gate_comb}(a) that act on two out of $n$ qubits, then the covering number of this set is bounded from above as follows:
    \begin{equation}
        N(\mathsf{COMB}_2^{(n)},\varepsilon,\norm{\cdot}_{\diamond 2})\leq\binom{n}{2}\left(\frac{6}{\varepsilon}\right)^{1024},
    \end{equation}
    with the binomial factor $\binom{n}{2}$ coming from the fact that we allow the basic units to act on any pair of qubits.

    Now, consider an arbitrary $N\in\mathsf{COMB}_{n,G}$. By definition, every such comb operator has the form $N=N_1\star N_2\star\dotsb\star N_G$, where $N_i\in\mathsf{COMB}_2^{(n)}$. Let $\varepsilon'=\frac{\varepsilon}{G}$. By definition of the covering number $N(\mathsf{COMB}_2^{(n)},\varepsilon',\norm{\cdot}_{\diamond 2})$, for every $N_i\in\mathsf{COMB}_2^{(n)}$, we can find a corresponding $\widetilde{N}_i$ in an $\varepsilon'$-covering net for $\mathsf{COMB}_2^{(n)}$ such that $\norm{N_i-\widetilde{N}_i}_{\diamond 2}\leq\varepsilon'$. Then, if we let $\widetilde{N}\coloneqq\widetilde{N}_1\star\widetilde{N}_2\star\dotsb\star\widetilde{N}_G$, and by making use of the subadditivity-under-composition property of the strategy norm, as shown in \Cref{cor:strategy_norm_subadditivity_composition}, we obtain
    \begin{equation}
        \norm{N-\widetilde{N}}_{\diamond r}\leq\sum_{i=1}^G\norm{N_i-\widetilde{N}_i}_{\diamond 2}\leq G\varepsilon'=\varepsilon.
    \end{equation}
    Therefore, if we let $\widetilde{\mathsf{N}}_{\varepsilon'}\subseteq\mathsf{COMB}_2^{(n)}$ be an $\varepsilon'$-covering net for $\mathsf{COMB}_2^{(n)}$, then the set
    \begin{equation}
        \mathsf{N}_{\varepsilon}\coloneqq\Big\{\widetilde{N}_1\star \widetilde{N}_2\star\dotsb\star \widetilde{N}_G : N_i\in\widetilde{\mathsf{N}}_{\varepsilon'}\Big\}
    \end{equation}
    is an $\varepsilon$-covering net for $\mathsf{COMB}_{n,G}$. Noting that $|\mathsf{N}_{\varepsilon}|=|\widetilde{\mathsf{N}}_{\varepsilon'}|^G$, we obtain $N(\mathsf{COMB}_{n,G},\varepsilon,\norm{\cdot}_{\diamond r})\leq|\mathsf{N}_{\varepsilon}|\leq|\widetilde{\mathsf{N}}_{\varepsilon'}|^G$ for every $\varepsilon'$-covering net for $\mathsf{COMB}_2^{(n)}$. As the covering number $N(\mathsf{COMB}_2^{(n)},\varepsilon',\norm{\cdot}_{\diamond 2})$ is, by definition, the size of the smallest $\varepsilon'$-covering net for $\mathsf{COMB}_2^{(n)}$, we can conclude that 
    \begin{equation}
        N(\mathsf{COMB}_{n,G},\varepsilon,\norm{\cdot}_{\diamond r})\leq N(\mathsf{COMB}_2^{(n)},\varepsilon',\norm{\cdot}_{\diamond 2})^G\leq\binom{n}{2}^G\left(\frac{6G}{\varepsilon}\right)^{1024G},
    \end{equation}
    as required.
\end{proof}

\begin{theorem}[Regret bound for online learning multi-time processes of bounded complexity]\label{theorem:bounded-complexity-regret_multi_time}
    Let $\ell:[0,1]\to\mathbb{R}$ be convex and $L$-Lipschitz.
    There exists an online learning strategy that, when presented sequentially with multi-time test operators $E^{(t)}$ for $r$ time steps, $t\in\{1,2,\dotsc,T\}$, and associated loss functions $\ell_t (\cdot) = \ell ((\cdot) -b_t)$, outputs a sequence of hypothesis comb operators $N_{A,B}^{(t)}\in\mathsf{COMB}_{n,G}^{\prime}$, corresponding to $r$-step multi-time quantum processes, whose regret is bounded as
    \begin{equation}
        \sum_{t=1}^T \ell_t(\Tr[E^{(t)}N^{(t)}]) - \min_{N\in\mathsf{COMB}_{n,G}}\sum_{t=1}^T \ell_t(\Tr[E^{(t)}N])
        \leq \mathcal{O}\left(L\sqrt{T G\log(Gn)}\right).
    \end{equation}
\end{theorem}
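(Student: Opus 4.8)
The plan is to follow the same two-step route used to prove the channel regret bound in \Cref{theorem:bounded-complexity-regret}, with the diamond norm replaced throughout by the strategy $r$-norm and its H\"{o}lder dual. First I would convert the interior covering number bound for $\mathsf{COMB}_{n,G}$ with respect to $\norm{\cdot}_{\diamond r}$ (the lemma just proved) into a sequential covering number bound for the associated class of $[0,1]$-valued functions $f_N(E)=\Tr[EN]$. Concretely, the goal is to show
\[
    N_T(\mathsf{COMB}_{n,G},\varepsilon,2)\leq N(\mathsf{COMB}_{n,G},\varepsilon,\norm{\cdot}_{\diamond r})\leq \binom{n}{2}^G\left(\frac{6G}{\varepsilon}\right)^{1024G},
\]
in exact analogy with \Cref{corollary:sequential-covering-number-gate-complexity}.

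The key observation enabling this reduction is that proximity in strategy $r$-norm implies uniform proximity of the corresponding functions over all admissible test operators. Indeed, for any two combs $N,\widetilde{N}\in\mathsf{COMB}_{n,G}$ and any test operator $E\in\mathsf{X}$, which by definition satisfies $\norm{E}_{\diamond r}^{\ast}\leq 1$, the H\"{o}lder inequality \eqref{eq-Holder_inequality_strategy_norm} gives
\[
    \Abs{\Tr[EN]-\Tr[E\widetilde{N}]}=\Abs{\Tr[E(N-\widetilde{N})]}\leq \norm{E}_{\diamond r}^{\ast}\norm{N-\widetilde{N}}_{\diamond r}\leq \norm{N-\widetilde{N}}_{\diamond r}.
\]
Consequently, any $\varepsilon$-cover of $\mathsf{COMB}_{n,G}$ in strategy $r$-norm induces a sequential $\infty$-norm $\varepsilon$-cover of the function class on every depth-$T$ tree $\mathbf{z}$, so that $N_{\mathbf{z}}(\mathsf{COMB}_{n,G},\varepsilon,\infty)\leq N(\mathsf{COMB}_{n,G},\varepsilon,\norm{\cdot}_{\diamond r})$; the bound for $p=2$ then follows from the monotonicity of sequential covering numbers in $p$ noted after \Cref{definition:sequential-covering}.

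Having established the sequential covering bound, the second step is to substitute it into the regret bound of \Cref{theorem:regret-bound-sequential-covering} and carry out the Dudley-type integral. Taking $\alpha\to 0$ and using $\sqrt{a+b}\leq \sqrt{a}+\sqrt{b}$ to split the logarithm of the covering number into a $\sqrt{G\log\binom{n}{2}}$ contribution and a $\sqrt{G}\int_0^1\sqrt{\log(6G/\beta)}~\mathrm{d}\beta$ contribution, and evaluating the remaining integral via $\int_0^1\sqrt{\log(1/\beta)}~\mathrm{d}\beta=\sqrt{\pi}/2$, yields the claimed $\mathcal{O}(L\sqrt{TG\log(Gn)})$ bound. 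This is essentially verbatim the computation in the proof of \Cref{theorem:bounded-complexity-regret}, the only change being the harmless constant $512\mapsto 1024$ in the covering-number exponent, which is absorbed into the asymptotic notation.

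I expect no serious obstacle at the level of this theorem, since the genuinely new difficulty---submultiplicativity of the strategy norm under the link product, needed to control the covering number of compositions of basic units---has already been handled in the preceding covering-number lemma (via \Cref{cor:strategy_norm_subadditivity_composition}, ultimately proved in \Cref{sec-strategy_norms}). The only point requiring care is to invoke the correct dual norm: one must use that test operators have unit strategy-dual-norm and that the H\"{o}lder inequality \eqref{eq-Holder_inequality_strategy_norm} holds for the strategy $r$-norm, rather than naively reusing the diamond-norm bounds from the channel case.
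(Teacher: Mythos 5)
Your proposal matches the paper's proof essentially verbatim: the paper likewise passes from the $\norm{\cdot}_{\diamond r}$-covering number of $\mathsf{COMB}_{n,G}$ to a sequential $\infty$-norm (hence $p$-norm) covering bound via the H\"{o}lder inequality $\Abs{\Tr[E(N-\widetilde{N})]}\leq\norm{E}_{\diamond r}^{\ast}\norm{N-\widetilde{N}}_{\diamond r}\leq\norm{N-\widetilde{N}}_{\diamond r}$ together with monotonicity in $p$, and then invokes \Cref{theorem:regret-bound-sequential-covering} with the same Dudley-integral computation as in \Cref{theorem:bounded-complexity-regret}. No gaps; the approach and the key technical points (unit dual norm of test operators, the $512\mapsto 1024$ exponent change being absorbed into the $\mathcal{O}$) are identical.
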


\begin{proof}
    The proof is analogous to the proof of \Cref{theorem:bounded-complexity-regret}. First, it holds that
    \begin{equation}\label{eq-bounded-complexity-regret_multi_time_pf1}
        N_T(\mathsf{COMB}_{n,G},\varepsilon,p)\leq\binom{n}{2}^G \left(\frac{6G}{\varepsilon}\right)^{1024G},
    \end{equation}
    for $T\in\mathbb{N}_{\geq 1}$, $\varepsilon\in(0,1)$, and $p\geq 1$. This holds due to the fact that $N_{\mathbf{z}}(\mathsf{COMB}_{n,G},\varepsilon,p)\leq N_{\mathbf{z}}(\mathsf{COMB}_{n,G},\varepsilon,\infty)\leq N(\mathsf{COMB}_{n,G},\varepsilon,\norm{\cdot}_{\diamond r})$, where $\mathbf{z}$ is an arbitrary complete rooted binary tree of depth $T$. The first of these inequalities is due to monotonicity of the sequential covering numbers with respect to $p$, and the second follows from the fact that
    \begin{equation}
        \Abs{\Tr[EN]-\Tr[E\widetilde{N}]}\leq\norm{E}_{\diamond r}^{\ast}\norm{N-\widetilde{N}}_{\diamond r}\leq \norm{N-\widetilde{N}}_{\diamond r},
    \end{equation}
    for arbitrary multi-time test operators $E$ and arbitrary comb operators $N$ and $\widetilde{N}$, which holds because of the H\"{o}lder inequality for strategy norms in \eqref{eq-Holder_inequality_strategy_norm} and the fact that $\norm{E}_{\diamond r}^{\ast}\leq 1$, by definition of a multi-time test operator. We have thus established \eqref{eq-bounded-complexity-regret_multi_time_pf1}. From here, an application of Theorem~\ref{theorem:regret-bound-sequential-covering}, along with the reasoning in the proof of Theorem~\ref{theorem:bounded-complexity-regret}, gives us the desired result.
\end{proof}

\begin{corollary}[Mistake bound for online learning multi-time processes of bounded complexity]
    Let $\varepsilon\in(0,1)$. There exists an online learning strategy that, in a realizable setting for $\mathsf{COMB}_{n,G}$, when presented sequentially with multi-time test operator $E^{(t)}$, $t\in\{1,2,\dotsc,T\}$, and associated loss functions $\ell_t(\cdot)=\ell((\cdot)-b_t)$, for some $L$-Lipschitz $\ell$, outputs a sequence $N^{(t)}\in\mathsf{COMB}_{n,G}$ of hypothesis comb operators that makes at most $\mathcal{O}\!\left(\frac{L^2G\log(Gn)}{\varepsilon^2}\right)$ many $\varepsilon$-mistakes.
\end{corollary}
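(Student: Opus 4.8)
The plan is to derive this mistake bound from the regret bound of \Cref{theorem:bounded-complexity-regret_multi_time} in exactly the same way that \Cref{corollary:bounded-complexity-mistake} followed from \Cref{theorem:bounded-complexity-regret}. That is, I would invoke the regret-to-mistake conversion of \Cref{lemma:regret-to-mistake-template}, which converts any regret bound of the form $R_T\leq h_1(\varepsilon,T)+h_2(\varepsilon)T$ (with $h_1\in o(T)$ for fixed $\varepsilon$ and $h_2(\varepsilon)<2\varepsilon/3$) into a mistake bound governed by the threshold $T^\ast$.

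Concretely, I would first note that \Cref{theorem:bounded-complexity-regret_multi_time} provides a regret bound of $\mathcal{O}(L\sqrt{TG\log(Gn)})$ for the loss functions $\ell_t(\cdot)=\ell((\cdot)-b_t)$ in the realizable setting for $\mathsf{COMB}_{n,G}$. Matching this to the template, I would set
\begin{equation}
    h_1(\varepsilon,T)=CL\sqrt{TG\log(Gn)}
\end{equation}
for a suitable absolute constant $C>0$ absorbing the hidden factors, and $h_2(\varepsilon)=0$. Since $h_1(\varepsilon,T)$ grows as $\sqrt{T}$, it is indeed $o(T)$ for fixed $\varepsilon$, and $h_2(\varepsilon)=0<2\varepsilon/3$ trivially holds, so the hypotheses of \Cref{lemma:regret-to-mistake-template} are satisfied.

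It then remains to compute the threshold $T^\ast$. With $h_2=0$, the defining inequality $T'>h_1(\varepsilon,T')/(2\varepsilon/3)$ reads $T'>CL\sqrt{T'G\log(Gn)}/(2\varepsilon/3)$; solving for the crossover point gives
\begin{equation}
    T^\ast=\left(\frac{3CL\sqrt{G\log(Gn)}}{2\varepsilon}\right)^2\leq\mathcal{O}\!\left(\frac{L^2 G\log(Gn)}{\varepsilon^2}\right),
\end{equation}
which is exactly the claimed mistake bound. The proof is therefore a direct transcription of the proof of \Cref{corollary:bounded-complexity-mistake}, with the roles of the channel class and its regret bound played by the multi-time comb class $\mathsf{COMB}_{n,G}$ and \Cref{theorem:bounded-complexity-regret_multi_time}.

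I do not expect any genuine obstacle here, since all the structural work has already been done: the covering-number argument for combs (including the submultiplicativity of the strategy norm under link product, invoked in the covering lemma) feeds into \Cref{theorem:bounded-complexity-regret_multi_time}, and the regret-to-mistake machinery is entirely agnostic to whether the underlying objects are channels or combs — it only sees the scalar loss sequence and the functional form of the regret bound. The one point requiring a moment's care is simply verifying that the realizability assumption for the comb class translates correctly into the $|b_t-\Tr[E^{(t)}N^\ast]|\leq\varepsilon/3$ condition that \Cref{lemma:regret-to-mistake-template} requires, but this is immediate from the problem statement in \Cref{problem:online_learning_classes_multi_time_processes}.
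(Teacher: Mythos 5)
Your proposal is correct and matches the paper's proof exactly: the paper likewise derives this corollary by applying \Cref{lemma:regret-to-mistake-template} to the regret bound of \Cref{theorem:bounded-complexity-regret_multi_time}, mirroring the argument of \Cref{corollary:bounded-complexity-mistake} with $h_1(\varepsilon,T)=CL\sqrt{TG\log(Gn)}$ and $h_2(\varepsilon)=0$. Nothing is missing.
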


\begin{proof}
    The proof follows the analogous arguments as in the proof of \Cref{corollary:bounded-complexity-mistake}, in which we make use of the regret bound from Theorem~\ref{theorem:bounded-complexity-regret_multi_time}.
\end{proof}

\paragraph*{Convex mixtures of known multi-time processes.} We now show that Theorem~\ref{theorem:pauli-regret} and Corollary~\ref{corollary:regret-mixture} generalize straightforwardly to convex mixtures of arbitrary, known multi-time processes.

\begin{theorem}[Regret bound for online learning convex mixtures of multi-time processes]\label{theorem:regret-bound-multi-time-convex-combination}
    Let $K \in \mathbb Z_{> 0}$ and $r\in\mathbb{N}$. Consider a convex combination of $K$ known $r$-step multi-time processes, with Choi representations $N_j\in\mathsf{COMB}_r$, $j\in\{1,2,\dotsc,K\}$, such that $N^{\ast} = \sum_{j=1}^{K} q_{j} N_j$, where the unknown probability distribution is given by $\Vec{q}=(q_1,q_2,\dotsc,q_K)$. Let $\ell:[0,1]\to\mathbb{R}$ be convex and $L$-Lipschitz. There exists an online learning strategy that, when presented sequentially with multi-time test operators $E^{(t)}$, $t\in\{1,2,\dotsc,T\}$, and associated losses $\ell_t (\cdot) = \ell ((\cdot) -b_t)$, outputs hypotheses $N^{(t)}\in\mathsf{COMB}_r$ of the form
    \begin{equation}
        N^{(t)}=\sum_{j=1}^K p_j^{(t)}N_j,
    \end{equation}
    with $\Vec{p}^{(t)}\coloneqq(p_1^{(t)},p_2^{(t)},\dotsc,p_K^{(t)})\in\Delta_K$, whose regret is bounded as
    \begin{equation}
        \sum_{t=1}^T \ell_t(\Tr[E^{(t)}N^{(t)}]) - \sum_{t=1}^T \ell_t(\Tr[E^{(t)}N^{\ast}]) = \mathcal{O}\left(L\sqrt{T \log K}\right).
    \end{equation}
\end{theorem}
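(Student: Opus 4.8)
The plan is to follow the proof of \Cref{corollary:regret-mixture} essentially line for line, replacing channel test operators with multi-time test operators and Choi matrices of channels with comb operators. The conceptual point is identical to the channel case: because the constituent processes $N_j$ are \emph{known}, the only genuinely unknown object is the classical probability vector $\Vec{q}=(q_1,\dotsc,q_K)$, so the task reduces to running the MWU algorithm (Algorithm~\ref{alg:mwu_general}) on $\Vec{q}$ while feeding it a suitable sequence of cost vectors extracted from the adversary's test operators and losses.

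Concretely, I would let the learner output $N^{(t)}=\sum_{j=1}^K p_j^{(t)}N_j$, where $\Vec{p}^{(t)}\in\Delta_K$ is the distribution produced by MWU; convexity of $\mathsf{COMB}_r$ guarantees each $N^{(t)}$ is again a valid $r$-step comb, so the learner stays proper. I would feed MWU the cost vectors $\Vec{m}^{(t)}=(m_j^{(t)})_{j=1}^K$ with
\[
    m_j^{(t)}\coloneqq\frac{1}{L}\ell_t'(\Tr[E^{(t)}N^{(t)}])\,\Tr[E^{(t)}N_j].
\]
By linearity of $\Tr[E^{(t)}\cdot]$ in the comb argument, this gives $\Vec{m}^{(t)}\cdot\Vec{p}^{(t)}=\frac{1}{L}\ell_t'(\Tr[E^{(t)}N^{(t)}])\Tr[E^{(t)}N^{(t)}]$ and $\Vec{m}^{(t)}\cdot\Vec{q}=\frac{1}{L}\ell_t'(\Tr[E^{(t)}N^{(t)}])\Tr[E^{(t)}N^{\ast}]$, mirroring the channel argument exactly.

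The one step that genuinely requires the multi-time machinery—and which I expect to be the main (indeed only) obstacle—is verifying $m_j^{(t)}\in[-1,1]$, as demanded by \Cref{thm:MWU}. Since $\frac{1}{L}\ell_t'(\cdot)\in[-1,1]$ by $L$-Lipschitzness, it suffices to show $0\leq\Tr[E^{(t)}N_j]\leq 1$. The lower bound is immediate from $E^{(t)}\geq 0$ and $N_j\geq 0$. For the upper bound, the definition of a multi-time tester supplies a co-strategy $S\in\mathsf{COMB}_r^{\ast}$ with $E^{(t)}\leq S\otimes\mathbbm{1}_{B_r}$, so that $\Tr[E^{(t)}N_j]\leq\Tr[(S\otimes\mathbbm{1}_{B_r})N_j]$, and it remains to check the generalized Born-rule normalization $\Tr[(S\otimes\mathbbm{1}_{B_r})N_j]=1$ for every $N_j\in\mathsf{COMB}_r$ and $S\in\mathsf{COMB}_r^{\ast}$. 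This I would establish by peeling off the systems $B_r,A_r,\dotsc,B_1,A_1$ in turn using the recursive partial-trace conditions defining $\mathsf{COMB}_r$ in \eqref{eq-strategies} and $\mathsf{COMB}_r^{\ast}$ in \eqref{eq-costrategies}. Alternatively, and perhaps more cleanly, one can bound $\Tr[E^{(t)}N_j]\leq\norm{E^{(t)}}_{\diamond r}^{\ast}\norm{N_j}_{\diamond r}\leq 1$ via the H\"{o}lder inequality \eqref{eq-Holder_inequality_strategy_norm}, using $\norm{E^{(t)}}_{\diamond r}^{\ast}\leq 1$ (by definition of a test operator) and $\norm{N_j}_{\diamond r}\leq 1$ (immediate from the definition of the strategy $r$-norm upon taking $t=1$ and the comb $N=N_j$). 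Either way, this is the multi-time analogue of the bound $0\leq\Tr[E_{A,B}^{(t)}\Gamma_{A,B}^{\Vec{z},\Vec{x}}]\leq 1$ verified in the proof of \Cref{theorem:pauli-regret}.

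With boundedness in hand, the remainder is routine. Applying the MWU guarantee of \Cref{thm:MWU} to the above cost/hypothesis pair yields $\sum_{t=1}^T\Vec{m}^{(t)}\cdot\Vec{p}^{(t)}-\sum_{t=1}^T\Vec{m}^{(t)}\cdot\Vec{q}\leq\eta T+\frac{\log K}{\eta}$; substituting the two inner-product identities and clearing the factor $\frac{1}{L}$ produces a pseudoregret bound. Feeding this into Lemma~\ref{lem-regret_via_pseudoregret} converts it into a genuine regret bound of $L\bigl(\eta T+\frac{\log K}{\eta}\bigr)$, and the choice $\eta=\sqrt{\log K/T}$ optimizes this to $\mathcal{O}(L\sqrt{T\log K})$, as claimed.
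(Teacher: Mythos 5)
Your proposal matches the paper's proof essentially line for line: the same MWU cost vectors $m_j^{(t)}=\frac{1}{L}\ell_t'(\Tr[E^{(t)}N^{(t)}])\Tr[E^{(t)}N_j]$, the same telescoping partial-trace ("peeling") argument to verify $\Tr[E^{(t)}N_j]\leq 1$, and the same combination of \Cref{thm:MWU} with \Cref{lem-regret_via_pseudoregret} followed by the choice $\eta=\sqrt{\log K/T}$. Your alternative boundedness check via the H\"{o}lder inequality \eqref{eq-Holder_inequality_strategy_norm} together with $\norm{N_j}_{\diamond r}\leq 1$ is also valid (the paper invokes exactly this norm bound in the proof of \Cref{cor:strategy_norm_subadditivity_composition}), though the paper's own proof of this theorem uses the peeling argument.
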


\begin{proof}
We combine Lemma~\ref{lem-regret_via_pseudoregret} with the MWU algorithm (Algorithm~\ref{alg:mwu_general}) applied to a particular choice of loss vectors $\Vec{m}^{(t)}$. We let
\begin{align}
    \Vec{m}^{(t)}&\coloneqq(m_j^{(t)})_{j\in\{1,2,\dotsc,K\}},\\
    m_j^{(t)}&\coloneqq\frac{1}{L}\ell'_t(\Tr[E^{(t)}N^{(t)}])\Tr[E^{(t)}N_j]\quad\forall~j\in\{1,2,\dotsc,K\},
\end{align}
where
\begin{equation}
    N^{(t)}\coloneqq\sum_{j=1}^K p_j^{(t)}N_j,
\end{equation}
with the probability vectors $\Vec{p}^{(t)}\coloneqq(p_j^{(t)})_{j\in\{1,2,\dotsc,K\}}$ defined according to the MWU algorithm.

Let us verify that $m_j^{(t)}\in[-1,1]$ for all $j\in\{1,2,\dotsc,K\}$, as required by the MWU algorithm. First, we readily have that $\frac{1}{L}\ell'_t(\Tr[E^{(t)}N^{(t)}])\in[-1,1]$, because $\ell$ is assumed to be $L$-Lipschitz. In addition, we have that
\begin{align}
    \Tr[E^{(t)}N_j]&\leq\Tr[(S\otimes\mathbbm{1}_{B_r})N_j]\\
     \nonumber
    &=\Tr[S\Tr_{B_r}[N_j]]\\
     \nonumber
    &=\Tr[S(N_j^{(r-1)}\otimes\mathbbm{1}_{A_r})]\\
     \nonumber
    &=\Tr[\Tr_{A_j}[S]N_j^{(r-1)}]\\
     \nonumber
    &=\Tr[(S^{(r-1)}\otimes\mathbbm{1}_{B_{r-1}})N_j^{(r-1)}]\\
     \nonumber
    &~\vdots\nonumber\\
     \nonumber
    &=\Tr[S^{(1)}\Tr_{B_1}[N_j^{(1)}]]\\
     \nonumber
    &=\Tr[S^{(1)}]\\
     \nonumber
    &=1,
\end{align}
where the inequality is due to the fact that $\norm{E^{(t)}}_{\diamond r}^{\ast}\leq 1$, which means by \eqref{eq-strategy_norm_dual_main} that there exists $S\in\mathsf{COMB}^{\ast}_r$ such that $E^{(t)}\leq S\otimes\mathbbm{1}_{B_r}$; the chain of equalities holds because of the fact that $N_j\in\mathsf{COMB}_r$ and $S\in\mathsf{COMB}^{\ast}_r$, such that there exists $N_j^{(1)},N_j^{(2)},\dotsc,N_j^{(r-1)}$ satisfying the constraints in \eqref{eq-strategies} and there exists $S^{(1)},\dotsc,S^{(r-1)}$ satisfying the constraints in \eqref{eq-costrategies}.

Now, let $N^{\ast}=\sum_{j=1}^k q_j N_j$ for some probability vector $\Vec{q}=(q_1,q_2,\dotsc,q_K)$. Using Theorem~\ref{thm:MWU}, we obtain
\begin{align}
    &\sum_{t=1}^T \frac{1}{L}\ell'_t(\Tr[E^{(t)}N^{(t)}])\Tr[E^{(t)} N^{(t)}] - \sum_{t=1}^T \frac{1}{L}\ell'_t(\Tr[E^{(t)}N^{(t)}])\Tr[E^{(t)} N^{\ast}]\\
     \nonumber
     &\qquad=\sum_{t=1}^T\Vec{m}^{(t)}\cdot\Vec{p}^{(t)}-\sum_{t=1}^T\Vec{m}^{(t)}\cdot\Vec{q}\\
    &\qquad\leq\eta T + \frac{\log K}{\eta}.
    \nonumber
\end{align}
Therefore, by Lemma~\ref{lem-regret_via_pseudoregret}, we obtain
\begin{align}
  &\sum_{t=1}^T \left(\ell_t(\Tr[E^{(t)}N^{(t)}]) - \ell_t(\Tr[E^{(t)}N^{\ast}])\right) \\
   \nonumber
  &\qquad\leq  \sum_{t=1}^T \left(\ell'_t(\Tr[E^{(t)}N^{(t)}]) (\Tr[E^{(t)}N^{(t)}] - \Tr[E^{(t)}N^{\ast}]) \right) \\
   \nonumber
  &\qquad\leq L \left(\eta T + \frac{\log K}{\eta}\right).
\end{align}
Setting $\eta=\sqrt{\log K/T}$, we obtain the desired regret bound.
\end{proof}

\begin{corollary}[Mistake bound for convex mixtures of multi-time processes]
    Let $\varepsilon\in(0,1)$. 
    Let $K \in \mathbb Z_{> 0}$ and $r\in\mathbb{N}$. Consider a convex combination of $K$ known $r$-step multi-time processes, with Choi representations $N_j\in\mathsf{COMB}_r$, $j\in\{1,2,\dotsc,K\}$, such that $N^{\ast} = \sum_{j=1}^{K} q_{j} N_j$, where the unknown probability distribution is given by $\Vec{q}=(q_1,q_2,\dotsc,q_K)$.
    There exists an online learning strategy that, in a realizable setting for $\mathrm{conv}(\{N_j\}_{j=1}^K)$, when presented sequentially with multi-time test operator $E^{(t)}$, $t\in\{1,2,\dotsc,T\}$, and associated loss functions $\ell_t(\cdot)=\ell((\cdot)-b_t)$, for some $L$-Lipschitz $\ell$, outputs a sequence $N^{(t)}\in\mathsf{COMB}_{n,G}$ of hypothesis comb operators of the form
    \begin{equation}
        N^{(t)}=\sum_{j=1}^K p_j^{(t)}N_j,
    \end{equation}
    with $\Vec{p}^{(t)}\coloneqq(p_1^{(t)},p_2^{(t)},\dotsc,p_K^{(t)})\in\Delta_K$, that makes at most $\mathcal{O}(\frac{L^2 \log(K)}{\varepsilon^2})$ many $\varepsilon$-mistakes.
\end{corollary}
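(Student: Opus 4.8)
The plan is to obtain this mistake bound as a direct consequence of the regret bound just established in \Cref{theorem:regret-bound-multi-time-convex-combination}, in exactly the same way that \Cref{corollary:bounded-complexity-mistake} and \Cref{corollary:pauli-mistake-bound} were derived from their respective regret bounds. Concretely, I would invoke the regret-to-mistake conversion of \Cref{lemma:regret-to-mistake-template}, instantiating its abstract regret decomposition $R_T\leq h_1(\varepsilon,T)+h_2(\varepsilon)T$ with the choices $h_1(\varepsilon,T)=CL\sqrt{T\log K}$ for a suitable absolute constant $C>0$ and $h_2(\varepsilon)=0$.

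First, I would verify that the hypotheses of \Cref{lemma:regret-to-mistake-template} are met. The realizability assumption in the statement supplies an unknown target comb $N^{\ast}\in\mathrm{conv}(\{N_j\}_{j=1}^K)$ together with biases $b_t$ satisfying $|b_t-\Tr[E^{(t)}N^{\ast}]|\leq\varepsilon/3$, which is precisely the realizable setting the lemma requires. The growth condition $h_1(\varepsilon,T)\in o(T)$ holds thanks to the $\sqrt{T}$ scaling, and the condition $h_2(\varepsilon)<2\varepsilon/3$ is trivially satisfied since $h_2\equiv 0$. Substituting into the definition of $T^\ast$ and solving the crossover inequality $T'>h_1(\varepsilon,T')/((2\varepsilon/3)-h_2(\varepsilon))$ then yields
\begin{equation}
    T^\ast=\left(\frac{3CL\sqrt{\log K}}{2\varepsilon}\right)^2\leq\mathcal{O}\!\left(\frac{L^2\log K}{\varepsilon^2}\right),
\end{equation}
which is the claimed $\varepsilon$-mistake bound. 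The learner realizing this bound is the mistake-driven variant of the MWU-based strategy from \Cref{theorem:regret-bound-multi-time-convex-combination}: it updates its probability vector $\Vec{p}^{(t)}$ only in those rounds where an $\varepsilon$-mistake occurs, and otherwise repeats its current hypothesis, so that the output maintains the form $N^{(t)}=\sum_{j=1}^K p_j^{(t)}N_j$ with $\Vec{p}^{(t)}\in\Delta_K$.

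Since every ingredient is already in place, there is no genuine technical obstacle here; the only point requiring care is the routine bookkeeping check that the regret bound of \Cref{theorem:regret-bound-multi-time-convex-combination} fits the template form with $h_2=0$, so that the conversion lemma applies with the stated constants. In this sense the argument is identical, essentially line for line, to the proof of \Cref{corollary:pauli-mistake-bound}, with the role of $\log K$ being played by the same quantity but now for convex mixtures of quantum combs rather than channels.
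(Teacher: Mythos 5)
Your proposal is correct and follows exactly the paper's own route: the paper likewise derives this corollary by feeding the regret bound of \Cref{theorem:regret-bound-multi-time-convex-combination} into \Cref{lemma:regret-to-mistake-template} with $h_1(\varepsilon,T)=CL\sqrt{T\log K}$ and $h_2(\varepsilon)=0$, mirroring the derivation of \Cref{corollary:pauli-mistake-bound} from \Cref{corollary:regret-mixture}. Your write-up simply spells out the bookkeeping that the paper leaves implicit.
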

\begin{proof}
    Using \Cref{lemma:regret-to-mistake-template}, we can derive this mistake bound from the regret bound of \Cref{theorem:regret-bound-multi-time-convex-combination}, analogously to how \Cref{corollary:pauli-mistake-bound} followed from \Cref{corollary:regret-mixture}.
\end{proof}

\subsection{Learning-theoretic implications}\label{section:learning-theory-implications}

To conclude our discussion of regret and mistake upper bounds for online learning certain classes of quantum channels, we highlight some learning-theoretic implications of our bounds.
On the one hand, we note that our regret bounds immediately give rise to bounds on a complexity measure called \emph{sequential fat-shattering dimension} \cite{rakhlin2015online} of the respective channel classes.
On the other hand, our mistake-bounded online learner can be used to construct sample compression schemes. 
For simplicity of presentation, we focus on Pauli channels in this discussion. However, these implications immediately extend to all the classes of channels and the multi-time quantum processes that we established regret and mistake bounds for. That is, also for these classes we obtain complexity bounds and (approximate) compression schemes.

For the first implication, we rely on known results (\cite[Proposition 9]{rakhlin2015online} and \cite[Lemma 2]{rakhlin2015sequential}) to derive a sequential fat-shattering dimension bound from the regret bound established in \Cref{theorem:pauli-regret}:

\begin{corollary}[Sequential fat-shattering dimension bound]
    Let $\mathsf{PAULI}_n$ be the class of all $n$-qubit Pauli channels, and let $\varepsilon \in (0,1)$.
    Then, $\operatorname{sfat}_\varepsilon (\mathsf{PAULI}_n)\leq\mathcal{O}\!\left(\frac{n}{\varepsilon^2}\right)$.
\end{corollary}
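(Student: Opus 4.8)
The plan is to read off the bound on $\operatorname{sfat}_\varepsilon(\mathsf{PAULI}_n)$ by confronting the regret \emph{upper} bound of \Cref{theorem:pauli-regret} with a matching regret \emph{lower} bound that any sequentially $\varepsilon$-shattered tree of large depth would force. Recall that $\operatorname{sfat}_\varepsilon(\mathsf{PAULI}_n)$ is the largest depth $d$ of a complete binary tree $\mathbf{x}$, with nodes labeled by channel test operators $E\in\mathsf{X}$, that is sequentially $\varepsilon$-shattered: there is a real-valued witness tree $\mathbf{s}$ of depth $d$ such that for every path $\pi\in\{\pm1\}^d$ some Pauli channel $\mathcal{P}_\pi\in\mathsf{PAULI}_n$ satisfies $\pi_t\big(\Tr[\mathbf{x}_t(\pi)\,C(\mathcal{P}_\pi)]-\mathbf{s}_t(\pi)\big)\geq\varepsilon/2$ for all $t\in\{1,\dots,d\}$.

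First I would fix $\varepsilon\in(0,1)$, set $d\coloneqq\operatorname{sfat}_\varepsilon(\mathsf{PAULI}_n)$, and invoke the known lower bounds of \cite[Proposition~9]{rakhlin2015online} and \cite[Lemma~2]{rakhlin2015sequential}. These say that, for the absolute-loss online game, the minimax value $\mathcal{V}_T$ over $T$ rounds is bounded below by a constant multiple of the sequential Rademacher complexity of the hypothesis class, and that a sequentially $\varepsilon$-shattered tree of depth $d$ certifies a sequential Rademacher complexity of order $\varepsilon d$. The mechanism behind the latter is transparent: evaluating the supremum over the class at the shattering witness $\mathcal{P}_\sigma$ along a random Rademacher path $\sigma$, each summand obeys $\sigma_t\,\Tr[\mathbf{x}_t(\sigma)C(\mathcal{P}_\sigma)]\geq\varepsilon/2+\sigma_t\mathbf{s}_t(\sigma)$, and the witness term cancels in expectation because $\mathbf{s}_t(\sigma)$ depends only on $\sigma_{1:t-1}$, so that $\mathbb{E}_\sigma[\sigma_t\mathbf{s}_t(\sigma)]=0$. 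Running the game for exactly $T=d$ rounds on this tree therefore yields $\mathcal{V}_d\geq c\,\varepsilon d$ for an absolute constant $c>0$.

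Next I would apply \Cref{theorem:pauli-regret} with the absolute-value loss, for which $L=1$: there is an online learner (proper, with Pauli-channel hypotheses) whose regret against the best Pauli channel is at most $\mathcal{O}(\sqrt{nT})$ against \emph{every} adversary, so in particular $\mathcal{V}_d\leq\mathcal{O}(\sqrt{nd})$. Chaining the two bounds gives $c\,\varepsilon d\leq\mathcal{V}_d\leq C\sqrt{nd}$ for some constant $C>0$, whence $\varepsilon\sqrt{d}\leq(C/c)\sqrt{n}$ and therefore $d=\operatorname{sfat}_\varepsilon(\mathsf{PAULI}_n)\leq (C/c)^2\,n/\varepsilon^2=\mathcal{O}(n/\varepsilon^2)$, which is the claim.

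The step I expect to be the main obstacle is the regret lower bound $\mathcal{V}_d\geq\Omega(\varepsilon d)$, and in particular checking that the shattering adversary is admissible in the model underlying \Cref{theorem:pauli-regret}. Two points must be verified: the node labels $\mathbf{x}_t(\pi)$ must be genuine elements of the input domain $\mathsf{X}$ of channel test operators, and the losses must be expressible as $\ell_t(\cdot)=\lvert(\cdot)-b_t\rvert$ with admissible targets. The first is automatic because sequential shattering is defined relative to $\mathsf{X}$ itself; the second holds because routing the argument through the sequential Rademacher complexity, rather than a fixed target sequence, sidesteps the need to control the comparator's loss directly, as the supremum over the class absorbs it. Crucially, the regret bound of \Cref{theorem:pauli-regret} is agnostic, valid against arbitrary revealed losses, which is exactly the regime the shattering adversary exploits; this is what legitimizes placing the two bounds side by side. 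The remaining manipulations---the cancellation $\mathbb{E}_\sigma[\sigma_t\mathbf{s}_t(\sigma)]=0$ and the choice $T=d$---are routine and constitute the content of the cited lemmas.
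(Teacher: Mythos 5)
Your proposal is correct and follows essentially the same route as the paper: both combine the agnostic regret upper bound of \Cref{theorem:pauli-regret} (with absolute loss, $L=1$) with the lower bound on the minimax value in terms of sequential Rademacher complexity and hence sequential fat-shattering dimension from \cite[Proposition~9]{rakhlin2015online} and \cite[Lemma~2]{rakhlin2015sequential}, then rearrange. The only cosmetic difference is that you run the game for $T=d=\operatorname{sfat}_\varepsilon$ rounds and solve $\varepsilon d\lesssim\sqrt{nd}$, whereas the paper rearranges the normalized form of the same inequality; the two are equivalent.
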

\begin{proof}
    We start from \Cref{theorem:pauli-regret} for the special case of $\ell(\cdot) = |\cdot|$.
    This gives us a regret bound of $\mathcal{O}(L\sqrt{nT})$.
    Combining this with the first inequality in Ref.\ \cite[Proposition 9]{rakhlin2015online} or with \cite[Lemma 2]{rakhlin2015sequential}, we get
    \begin{equation}
        \frac{1}{4\sqrt{2}}\sup_{\varepsilon}\left\{ \varepsilon\sqrt{\frac{\min\{\operatorname{sfat}_\varepsilon(\mathcal{F}), T\}}{T}}\right\}
        \leq 2\sqrt{Tn\log 4}\, .
    \end{equation}
    We can now rearrange and, after 
    plugging in $T=1$ and using that clearly $\operatorname{sfat}_\varepsilon (\mathsf{PAULI}_n)\geq 1$, we get the claimed bound of $\operatorname{sfat}_\varepsilon (\mathsf{PAULI}_n)\leq\mathcal{O}\left(\frac{n}{\varepsilon^2}\right)$. 
\end{proof}

Via Ref.\ \cite[Corollary 1]{rakhlin2015sequential}, this sequential fat-shattering dimension bounds also implies sequential covering number bounds.
While obtaining these complexity bounds from our regret bounds (\Cref{theorem:pauli-regret}) is standard given prior work, we highlight that the obtained bounds are exponentially better than what would arise from naive parameter counting. Namely, while a Pauli channel is specified by a $4^n$-dimensional probability vector, these complexity measure bounds show that the ``effective'' dimension relevant for online learning is only linear in $n$. Finally, we note that the sequential fat-shattering dimension and covering numbers are upper bounds on their non-sequential counterparts. Thus, the above upper bounds immediately carry over to the complexity measures relevant for (agnostic) \emph{probably approximately correct} (PAC) learning and lead to corresponding generalization bounds for PAC learning Pauli channels. In particular, this implies that Pauli channels are a restricted class of operations that allow for ``pretty good process tomography'', as asked for in Ref.~\cite[Section 4]{Aaronson07}.
We note that Refs.~\cite{Aaronson07, ACH+19} obtained (sequential) fat-shattering bounds for quantum states from bounds on quantum random access coding. It would be interesting to see whether this implication can be reversed in our setting: Do our (sequential) fat-shattering bounds imply limitations for encoding classical information into Pauli channels in a random access coding fashion?

Next, we turn our attention to implications for compression.
For the case of $\{0,1\}$-valued functions, the connection between mistake-bounded online learning and sample compression via the so-called one-pass compression scheme has already been observed in Ref.~\cite[Section 4]{floyd1995sample}.
We notice that, with minor adaptations, this reasoning also applies to $[0,1]$-valued function classes when considering $\varepsilon$-mistakes and uniformly $\varepsilon$-approximate compression schemes (defined in
Ref.\ \cite{hanneke2019sample-compression}):

\begin{lemma}[Compression from mistake-driven online learning]\label{lemma:compression-from-online-learning}
    Let $\mathcal{F}\subseteq [0,1]^{\mathsf{X}}$. 
    Let $\varepsilon\in (0,1)$.
    Suppose $\mathsf{X}$ admits some total order and suppose $\mathcal{F}$ admits a mistake-driven online learner $\mathcal{A}$ that makes at most $M = M_{\mathcal{F}}(\varepsilon)$ many $\varepsilon$-mistakes when sequentially presented with challenges $x_1,\ldots,x_m$ and the corresponding values $f_\ast(x_1),\ldots,f_\ast(x_m)$ for some unknown $f_\ast\in\mathcal{F}$.
    Then, $\mathcal{F}$ admits a uniformly $\varepsilon$-approximate sample compression scheme.
\end{lemma}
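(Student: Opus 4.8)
The plan is to realize the classical one-pass compression scheme of Ref.~\cite{floyd1995sample} in the $[0,1]$-valued, $\varepsilon$-approximate setting. Given a sample $S = ((x_1, y_1), \dots, (x_m, y_m))$ that is realizable by some $f_\ast \in \mathcal{F}$ (so $y_i = f_\ast(x_i)$), I would first reorder $S$ according to the fixed total order on $\mathsf{X}$ and run the mistake-driven learner $\mathcal{A}$ on this sorted sequence, feeding it the labels $y_i$ as feedback. The compression function $\kappa$ outputs the set $C$ of those labeled examples on which $\mathcal{A}$ incurs an $\varepsilon$-mistake. Since $\mathcal{A}$ makes at most $M = M_{\mathcal{F}}(\varepsilon)$ many $\varepsilon$-mistakes regardless of the presentation order (this is the defining guarantee of mistake-bounded online learning), we have $|C| \le M$, so the scheme has size at most $M$.

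The reconstruction function $\rho$ takes a set $C$ of labeled examples and produces a hypothesis $h_C$ defined pointwise: if $x$ is the input of some stored pair $(x,y)\in C$, set $h_C(x) = y$; otherwise feed $\mathcal{A}$ (from its initial state) the members of $C$ whose inputs precede $x$ in the total order, in that order, and let $h_C(x)$ be the prediction $\mathcal{A}$ then makes on $x$. The crucial structural fact to establish is an invariant: because $\mathcal{A}$ is mistake-driven and, by assumption, updates deterministically, its internal hypothesis after processing any prefix of the sorted run depends only on the subset of that prefix on which $\varepsilon$-mistakes occurred. By induction on the prefix length, the state of $\mathcal{A}$ just before the sorted run reaches $x_i$ therefore coincides with its state after being fed exactly the mistake examples preceding $x_i$, which is precisely the data that $\rho(C)$ uses to evaluate $h_C(x_i)$.

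Using this invariant I would verify uniform $\varepsilon$-approximation on every point of $S$. If $x_i$ is a compression point, then $h_C(x_i) = y_i$ exactly. If it is not, the invariant shows that $h_C(x_i)$ equals the prediction $\mathcal{A}$ made on $x_i$ during the sorted run; since $x_i$ was \emph{not} a mistake there, that prediction satisfies $|h_C(x_i) - y_i| \le \varepsilon$. Hence $\sup_i |h_C(x_i) - y_i| \le \varepsilon$, which is exactly the uniform $\varepsilon$-approximation guarantee required by the definition in Ref.~\cite{hanneke2019sample-compression}. I expect the main obstacle to be precisely the treatment of the points on which the learner erred: the hypothesis produced at the end of the run is generally wrong on these, and the remedy is twofold — store their exact labels in $C$ so that $\rho$ answers them directly, and make $\rho$ \emph{order-dependent} via the total order on $\mathsf{X}$ so that the value at any non-stored point is computed from exactly the stored examples that precede it, thereby matching the learner's in-run behavior. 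The total order is what renders $\rho$ well-defined on the unordered set $C$ and enforces consistency between compression and reconstruction; ties from repeated inputs are harmless, since realizability forces such inputs to carry equal labels.
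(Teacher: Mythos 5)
Your proposal is correct and follows essentially the same route as the paper's proof: the identical one-pass compression map (store the $\varepsilon$-mistake examples from a run on the sorted sample), the identical order-dependent reconstruction map, and the same mistake-driven invariant to argue that the reconstructed prediction at a non-stored point coincides with the in-run prediction. Your explicit induction on the learner's internal state makes the "mistake-driven" step slightly more formal than the paper's phrasing, but the argument is the same.
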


The assumption that $\mathsf{X}$ admits a total order is trivially satisfied whenever $\mathsf{X}$ is finite, which is typically the case in computational learning theory. If we assume the ordering principle (i.e., that every set can be totally ordered), then we can apply \Cref{lemma:compression-from-online-learning} for a general instance space $\mathsf{X}$. 
We note that the ordering principle is strictly weaker than the well-ordering theorem \cite{halpern1964ordering, halpern1971boolean, gonzalez1995dense} (see also \cite[Section 4.4]{fraenkel1973foundations}), which in turn is well known to be equivalent to the axiom of choice.

\begin{proof}[Proof of Lemma~\ref{lemma:compression-from-online-learning}]
    We first describe the compression and reconstruction maps, then we prove that they have the desired compression scheme property.
    The compression map $\kappa:\bigcup_{m\geq 1} (\mathsf{X}\times [0,1])^m \to \bigcup_{1\leq m\leq M} (\mathsf{X}\times [0,1])^m$ is defined as follows:
    Given a dataset $S\subseteq \mathsf{X}\times [0,1]$, reorder $S$ according to the total order on $\mathsf{X}$, run the online learner $\mathcal{A}$ with an adversary that sequentially presents the learner with the reordered elements of $S$, and let $\kappa (S)$ be the set of (labeled) examples on which $\mathcal{A}$ made an $\varepsilon$-mistake.
    Next, we define the reconstruction map $\rho: \bigcup_{1\leq m\leq M} (\mathsf{X}\times [0,1])^m \to [0,1]^{\mathsf{X}}$. Let $S\subset \mathsf{X}\times [0,1]$ and $x\in\mathsf{X}$. If $\exists y\in [0,1]$ such that $(x,y)\in S$, then we set $\rho(S)(x)=y$.
    Otherwise, we reorder $S$ according to the total order on $X$, run the online learner $\mathcal{A}$ with an adversary that sequentially presents the learner with the reordered elements of $S$ that precede $x$ in the total order on $\mathsf{X}$, and let $\rho (S)(x)$ be the value predicted by $\mathcal{A}$ for $x$. 

    Now, we prove that $\kappa$ and $\rho$ as defined above indeed form a form a uniformly $\varepsilon$-approximate sample compression scheme for $\mathcal{F}$. That is, we show that, for all $f\in\mathcal{F}$ and for all $S=\{(x_i, f(x_i))\}_{i=1}^m\subset\mathsf{X}\times [0,1]$, the function $\hat{f} = \rho(\kappa(S))$ satisfies $\max_{1\leq i\leq m}\lvert \hat{f}(x_i) - f(x_i) \rvert\leq\varepsilon$.
    So, let $f\in\mathcal{F}$ and let $S=\{(x_i, f(x_i))\}_{i=1}^m\subset\mathsf{X}\times [0,1]$. 
    If $1\leq i\leq m$ is such that $\exists y_i\in [0,1]: (x_i,y_i)\in \kappa(S)$, then by definition of $\kappa$ and $\rho$ we have $\hat{f}(x_i) = y_i = f(x_i)$.
    If $1\leq i\leq m$ is such that $\nexists y_i\in [0,1]: (x_i,y_i)\in \kappa(S)$, then by definition of $\kappa$, this means that the online learner $\mathcal{A}$ does not make an $\varepsilon$-mistake on $x_i$ when presented sequentially with the elements of $S$ reordered according to the total order on $\mathsf{X}$. 
    As $\mathcal{A}$ is mistake-driven, this implies that $\mathcal{A}$ also does not make an $\varepsilon$-mistake on $x_i$ when presented sequentially only with the (reordered) elements of $S$ that precede $x_i$ in the total order on $\mathsf{X}$ and on which $\mathcal{A}$ made an $\varepsilon$-mistake. This is exactly the sequence of examples that $\mathcal{A}$ is run on when determining the value that $\hat{f}=\rho(\kappa(S))$ assigns to $x_i$, thus $\lvert \hat{f}(x_i) - f(x_i)\rvert\leq\varepsilon $.
\end{proof}

\begin{remark}\label{remark:improved-compression}
    Let us comment on a natural variant of \Cref{lemma:compression-from-online-learning}:
    If the online learner $\mathcal{A}$ for $\mathcal{F}$ makes at most $M=M_\mathcal{F}(\varepsilon)$ many $\varepsilon$-mistakes even when presented only with $(\varepsilon/3)$-accurate approximations to the true function values --  as is for instance the case for our MWU-based Pauli channel online learner --, this translates over to the resulting compression scheme. That is, even when sequentially presented with a training data set $S=\{(x_i,y_i)\}_{i=1}^m$ with $\lvert y_i - f(x_i)\rvert\leq \varepsilon/3$ for all $1\leq i\leq m$ (instead of a ``perfect'' data set with $y_i=f(x_i)$ for all $1\leq i\leq m$), the function $\hat{f} = \rho(\kappa(S))$ after compression still satisfies $\lvert \hat{f}(x_i) - f(x_i)\rvert\leq\varepsilon $ for all $1\leq i\leq m$.
    Thus, this sample compression scheme is successful also if applied to data that has been affected by (possibly adversarial) label noise of strength $\varepsilon/3$.~\qedgen
\end{remark}

We can combine this with our mistake bound for Pauli channel learning to get a compression scheme for Pauli channels:

\begin{corollary}[Compression scheme for Pauli channels]\label{corollary:pauli-compression}
    The set $\mathsf{PAULI}_n$ of $n$-qubit Pauli channels admits a uniformly $\varepsilon$-approximate sample compression scheme of size $\mathcal{O}\left(\frac{n}{\varepsilon^2}\right)$.
    This sample compression scheme even succeeds on training data whose labels have been corrupted by adversarial label noise of strength $\varepsilon/3$.
\end{corollary}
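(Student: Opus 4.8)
The plan is to obtain the compression scheme as an immediate consequence of our mistake-driven online learner for Pauli channels, by feeding it into \Cref{lemma:compression-from-online-learning}. The argument has two ingredients: the $\varepsilon$-mistake bound of \Cref{corollary:pauli-mistake-bound}, specialized to Pauli channels, together with the observation that the learner realizing this bound is mistake-driven, and the robustness statement of \Cref{remark:improved-compression}.

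First I would instantiate \Cref{corollary:pauli-mistake-bound}, viewing $\mathsf{PAULI}_n$ as the convex hull of the $K = 4^n$ \emph{known} Pauli-unitary channels $\rho \mapsto P^{\Vec{z},\Vec{x}} \rho P^{\Vec{z},\Vec{x}\dagger}$, and take the absolute-value loss $\ell(\cdot) = |\cdot|$, which is $1$-Lipschitz, so $L = 1$. This yields an online learner making at most $M = \mathcal{O}\!\left(\frac{\log K}{\varepsilon^2}\right) = \mathcal{O}\!\left(\frac{n}{\varepsilon^2}\right)$ many $\varepsilon$-mistakes, using $\log K = n\log 4$. Crucially, as noted after \Cref{lemma:regret-to-mistake-template}, the learner underlying this bound is \emph{mistake-driven}: it applies its MWU-based update only in rounds where an $\varepsilon$-mistake occurs and otherwise repeats its previous hypothesis. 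This is exactly the property demanded by \Cref{lemma:compression-from-online-learning}.

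Next I would invoke \Cref{lemma:compression-from-online-learning} with $M_{\mathsf{PAULI}_n}(\varepsilon) = \mathcal{O}(n/\varepsilon^2)$, which directly produces a uniformly $\varepsilon$-approximate sample compression scheme whose compressed sets have size at most $M$, i.e.\ size $\mathcal{O}(n/\varepsilon^2)$. The only point needing care is the hypothesis of \Cref{lemma:compression-from-online-learning} that the instance space $\mathsf{X}$ (here, the channel test operators) admit a total order; for any concrete finite training set this is immediate, and in full generality it is guaranteed by the ordering principle, as discussed below \Cref{lemma:compression-from-online-learning}, which I would simply cite.

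Finally, for the claimed robustness to adversarial label noise of strength $\varepsilon/3$, I would appeal to \Cref{remark:improved-compression}. The enabling property is that our MWU-based Pauli channel learner retains its $\varepsilon$-mistake bound even when the adversary reveals only $(\varepsilon/3)$-accurate values $b_t$ with $|b_t - \Tr[E^{(t)} C_{A,B}^{\mathcal{P}}]| \leq \varepsilon/3$, which is precisely the realizable regime of \Cref{corollary:pauli-mistake-bound}. By the reasoning of \Cref{remark:improved-compression}, this tolerance transfers to the compression scheme, so the reconstructed hypothesis $\hat f = \rho(\kappa(S))$ still satisfies $|\hat f(x_i) - f(x_i)| \leq \varepsilon$ on the noisy data. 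Since every step is a direct application of an already established result, there is no genuine technical obstacle; the only care required is the bookkeeping that the learner is mistake-driven and that its noise tolerance propagates, both of which are supplied by the cited statements.
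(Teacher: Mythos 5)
Your proposal is correct and follows essentially the same route as the paper: combine the $\mathcal{O}(n/\varepsilon^2)$ mistake bound of \Cref{corollary:pauli-mistake-bound} (with $K=4^n$, $L=1$) with \Cref{lemma:compression-from-online-learning} and \Cref{remark:improved-compression}. The one place where the paper does more work than you do is the total-order hypothesis, which is in fact the only non-trivial content of its proof: rather than invoking the ordering principle, it constructs an explicit total order on the space of channel test operators by mapping $E_{A,B}\mapsto (\Tr[E_{A,B}\Gamma_{A,B}^{\Vec{z},\Vec{x}}])_{\Vec{z},\Vec{x}}\in[0,1]^{4^n}$, noting this map is injective since the $\Gamma_{A,B}^{\Vec{z},\Vec{x}}$ form an orthogonal basis, and pulling back the lexicographic order. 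Your fallback to the ordering principle is sanctioned by the paper's own discussion and is acceptable, but your first suggestion---that finiteness of the training set already suffices---does not quite work: the reconstruction map $\rho$ sees only the compressed set $\kappa(S)$ and a query point $x$, so the order it uses to decide which elements of $\kappa(S)$ ``precede'' $x$ must be a single globally defined order consistent with the one used during compression; an order chosen per training set is not available to $\rho$. The explicit lexicographic construction (or the ordering principle) is therefore genuinely needed.
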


\begin{proof}
    Given our mistake bound for online learning Pauli channels (\Cref{corollary:pauli-mistake-bound}) and \Cref{lemma:compression-from-online-learning} (together with \Cref{remark:improved-compression}), we only have to show that our instance space, which is the space of channel test operators, admits a total order.
    This can be seen as follows:
    Similar to \Cref{section:online-learning-mixtures}, we can associate to any channel test operator $E_{A,B}$ the vector $(e_{\Vec{z},\Vec{x}}=\Tr[E_{A,B}\Gamma_{A,B}^{\Vec{z},\Vec{x}}])_{\Vec{z},\Vec{x}\in \{0,1\}^n}\in [0,1]^{4^n}$. The mapping $E_{A,B}\mapsto (e_{\Vec{z},\Vec{x}})_{\Vec{z},\Vec{x}\in \{0,1\}^n}$ is injective, because the operators $\Gamma_{A,B}^{\Vec{z},\Vec{x}}$ form an orthogonal basis. 
    Thus, any total order on $[0,1]^{4^n}$, for instance the lexicographic order, induces a total order on the set of channel test operators.
\end{proof}

\Cref{corollary:pauli-compression} implies that, if we care about the statistics of quantum experiments, then Pauli channels admit a significantly more parsimonious representation than via an exponentially long vector of Pauli error rates. 
This can be illustrated as follows: 
Suppose $A$(lice) and $B$(ob) want to understand how an unknown Pauli channel $\mathcal{N}$ in $A$'s lab acts on the channel test operators $E^{(i)}_{A,B}$, $1\leq i\leq m$.
To do so, $A$ performs experiments and collects data $S=\{(E^{(i)}_{A,B}, y_i)\}_{i=1}^m$, where the $y_i$ are $(\varepsilon/3)$-approximations of the corresponding expectation values $\Tr[E^{(i)}_{A,B} C^\mathcal{N}_{A,B}]$.
She now wants to communicate her findings to $B$.
Then, no matter how large $m$ is, $A$ can compress $S$ to a set of size at most $\mathcal{O}\left(\frac{n}{\varepsilon^2}\right)$ data points, send those to $B$, and $B$ can $\varepsilon$-approximately reconstruct the expectation values for all $m$ test operators without doing any further experiments.

\section{Mistake lower bounds}\label{sec:mistake-lower-bounds}

In the previous section, we provided regret and mistake upper bounds for online learning certain subclasses of quantum channels and multi-time processes. 
In this section, we prove complementary mistake lower bounds. 
While these, in principle, lead to regret lower bounds via (the contrapositive of) \Cref{lemma:regret-to-mistake-template}, we only discuss mistake lower bounds here.
Throughout this section, our focus is on the dependence on $\varepsilon$-mistake lower bounds for a constant $\varepsilon<1/2$. 
Thus, our lower bounds do not scale with $\varepsilon$. We conjecture that the $(1/\varepsilon^2)$-scaling achieved in \Cref{section:online-learning-bounded-complexity,section:online-learning-mixtures} is optimal, but leave the proof to future work.

\subsection{Mistake lower bounds for general unitaries and channels}\label{subsection:mistake-lower-bounds-general}

We first recall the folkore result that the class of arbitrary Boolean functions on $n$ bits cannot be online learned with subexponentially many mistakes.

\begin{lemma}[Arbitrary Boolean functions cannot be online learned with subexponentially many mistakes]\label{lemma:mistake-lower-bound-arbitrary-functions}
    Let $\mathcal{F}=\{0,1\}^{\{0,1\}^n}$ be the class of all Boolean functions on $n$ bits. 
    Any online learner for $\mathcal{F}$ makes at least $2^n$ mistakes against a worst-case adversary. 
    This remains true even if the adversary is forced to decide on a labeling function before the interaction with the learner.
\end{lemma}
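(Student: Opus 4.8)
The plan is to first prove the bound against an adaptive (worst-case) adversary, and then upgrade it to an adversary that must commit to a labeling function before the interaction, where the determinism of the learner is exactly what makes the upgrade go through. Fix an enumeration $x_1,\ldots,x_{2^n}$ of $\{0,1\}^n$ and run the game for $T=2^n$ rounds, presenting each input exactly once. Since the learner's update rule is deterministic, its prediction $\hat{y}_t\in\{0,1\}$ in round $t$ is a fixed function of the transcript $((x_s,b_s))_{s<t}$ together with $x_t$. The adaptive adversary simply negates this prediction, setting $b_t := 1-\hat{y}_t$. Then the loss is $\ell_t(\hat{y}_t) = \lvert \hat{y}_t - b_t\rvert = 1 > \varepsilon$ (recall we fixed $\varepsilon<1/2$), so the learner makes an $\varepsilon$-mistake in \emph{every} round, for a total of $2^n$.

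Next I would check realizability and perform the conversion to an oblivious adversary. The labels produced above define a function $f_\ast\colon\{0,1\}^n\to\{0,1\}$ by $f_\ast(x_t):=b_t$; this is well defined because the $x_t$ are distinct and exhaust $\{0,1\}^n$, and $f_\ast\in\mathcal{F}$ precisely because $\mathcal{F}$ is the class of \emph{all} Boolean functions. Hence the interaction is realizable, with $\lvert b_t - f_\ast(x_t)\rvert = 0 \leq \varepsilon/3$ for all $t$. Now consider instead the oblivious adversary that commits to $f_\ast$ before the game and then presents the pairs $(x_1,f_\ast(x_1)),\ldots,(x_{2^n},f_\ast(x_{2^n}))$. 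The crucial point is that, because the learner is deterministic, replaying the identical label sequence $b_t=f_\ast(x_t)$ reproduces the identical transcript round by round, and therefore the identical predictions $\hat{y}_t$. Consequently the learner again incurs loss $1$ in each round and makes $2^n$ mistakes, proving that the lower bound persists even under the restriction to adversaries that fix their labeling function in advance.

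The main obstacle is this adaptive-to-oblivious conversion: an adaptive adversary trivially forces mistakes by negation, but the statement's stronger clause demands a single fixed function that works against the learner's entire run. The determinism assumption (stated in \Cref{sec-online_learning_basics}) resolves this cleanly, since it guarantees that the adaptively generated transcript is faithfully reproduced when the same labels are fed obliviously. For a randomized learner I would instead argue by averaging over the uniform distribution on all $2^{2^n}$ Boolean functions: as each input is queried exactly once, the value $f_\ast(x_t)$ is independent of the transcript seen so far, so any prediction is wrong with probability $1/2$ regardless of the learner's internal randomness, yielding an expected $2^{n-1}$ mistakes and hence the same $\Omega(2^n)$ lower bound. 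I would note this extension only briefly, as the deterministic case already delivers the sharp constant claimed in the statement.
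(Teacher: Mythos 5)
Your proof is correct, but it takes a genuinely different route from the paper's on both halves of the statement. For the basic lower bound, the paper simply invokes $\operatorname{VCdim}(\mathcal{F})=2^n$ together with $\operatorname{Ldim}(\mathcal{F})\geq\operatorname{VCdim}(\mathcal{F})$ and the definition of the Littlestone dimension, whereas you construct an explicit negating adversary; your version is more self-contained, the paper's connects the bound to standard complexity measures. For the oblivious-adversary strengthening, the paper uses the probabilistic method: it draws the target function uniformly at random, observes that $F(\Vec{x}_{t+1})$ is independent of the transcript (and of the learner's internal randomness), so each round is a mistake with probability $1/2$, and concludes that some fixed function forces $2^n$ mistakes since the all-mistakes event has positive probability. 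You instead derandomize by simulation: run the adaptive negating adversary once, read off the function $f_\ast$ it implicitly defines (well-defined since each input is queried exactly once and $\mathcal{F}$ contains all Boolean functions), and replay it obliviously, using the learner's determinism to argue the transcript --- and hence every wrong prediction --- is reproduced exactly. This is a clean and constructive argument that yields the sharp count $2^n$ with certainty, and it correctly isolates determinism as the load-bearing assumption (which the paper indeed imposes via a footnote in its setup). The trade-off is that your main argument does not cover randomized learners, for which you fall back on an averaging argument that essentially recovers the paper's probabilistic proof (and yields $2^{n-1}$ in expectation rather than $2^n$ outright); the paper's single probabilistic argument handles both cases uniformly. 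No gaps.
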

\begin{proof}
    The class $\mathcal{F}$ of all $\{0,1\}$-valued functions on $\{0,1\}^n$ has VC-dimension \cite{vapnik1971uniform} $\operatorname{VCdim}(\mathcal{F})=2^n$ and thus Littlestone dimension \cite{littlestone1988learning} $\operatorname{Ldim}(\mathcal{F})\geq \operatorname{VCdim}(\mathcal{F})=2^n$. 
    Essentially by definition of the Littlestone dimension, any online learner for $\mathcal{F}$ makes at least $\operatorname{Ldim}(\mathcal{F})\geq 2^n$ mistakes. This proves the first part of the statement.

    Now for the second part of the statement.
    Fix an arbitrary learning algorithm. 
    Consider an adversary that initially chooses a function uniformly at random from $\{0,1\}^{\{0,1\}^n}$ and, in round $1\leq t\leq 2^n$, asks for the label of $\Vec{x}_t$, where $\{\Vec{x}_s\}_{s=1}^{2^n}$ is some (fixed) enumeration of $\{0,1\}^n$.
    Let $F$ denote the function-valued random variable describing the function chosen by the learner, let $S_t=\{(\Vec{x}_\tau, F(\Vec{x}_\tau))\}_{\tau=1}^t$ denote the instance-label pairs that the online learner has seen in the first $t$ rounds.
    Moreover, let $Y_{t+1}$ be the label predicted by the online learner in round $t+1$. Note that the random variable $Y_{t+1}$ depends only on $S_t$ (and on the internal randomness of the online learner).
    Thus, as $F(\Vec{x}_{t+1})$ is uniformly random and independent of $S_{t}$ (as well as of the online learning algorithm), also $Y_{t+1}$ and $F(\Vec{x}_{t+1})$ are independent.
    Therefore,
    \begin{align}
        \mathbb{P}[Y_{t+1} 
        = F(\Vec{x}_{t+1})]
        &= \mathbb{E}_{Y_{t+1}} [ \mathbb{P}_{F}[Y_{t+1} = F(\Vec{x}_{t+1}) | Y_{t+1} ] ]\\
        \nonumber
        &= \mathbb{E}_{Y_{t+1}} [ \mathbb{P}_{F}[Y_{t+1} = F(\Vec{x}_{t+1}) ] ]\\
         \nonumber
        &= \mathbb{E}_{Y_{t+1}}\!\left[ \frac{1}{2} \right]
        = \frac{1}{2} \, .
         \nonumber
    \end{align}
    Hence, in each round, the online learner makes a mistake with probability $1/2$. 
    As mistakes occur independently in each round, the probability that the online learner makes a mistake in every round is strictly greater than zero.
    Thus, there exists a function $f:\{0,1\}^n\to\{0,1\}$ that, when chosen initially by the adversary, forces the online learner to make $2^n$ mistakes. 
\end{proof}

We can now embed the problem of online learning an arbitrary classical Boolean function into that of learning an arbitrary quantum channel and therefore inherit similar mistake lower bounds. In the following, we describe two different ways of achieving such lower bounds.

\begin{corollary}\label{corollary:mistake-lower-bound-unitaries}
    Let $\mathbb{U}_{n}$ be the class of all unitary $n$-qubit channels, let $\varepsilon<1/2$.
    Any online learner for $\mathbb{U}_{n}$ makes $\Omega(2^n)$ many $\varepsilon$-mistakes against a worst-case adversary.
    This remains true even if the adversary is forced to decide on a unitary before the interaction with the learner.
\end{corollary}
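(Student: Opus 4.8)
The plan is to embed classical online learning of arbitrary Boolean functions on $n-1$ bits into the online learning of $n$-qubit unitary channels, and then to inherit the lower bound from \Cref{lemma:mistake-lower-bound-arbitrary-functions}.

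First I would fix the embedding. Given any $f:\{0,1\}^{n-1}\to\{0,1\}$, define the reversible-oracle unitary $U_f$ on $n$ qubits by $U_f\ket{x}\ket{b}=\ket{x}\ket{b\oplus f(x)}$ for all $x\in\{0,1\}^{n-1}$ and $b\in\{0,1\}$; since $U_f$ permutes the computational basis it is unitary, so the channel $\mathcal{U}_f(\cdot)=U_f(\cdot)U_f^{\dagger}$ lies in $\mathbb{U}_{n}$. To read off a single value of $f$ I would use the memoryless channel test with input state $\rho_x=\ketbra{x,0}{x,0}$ and two-outcome measurement effect $M=\mathbbm{1}_{n-1}\otimes\ketbra{1}{1}$ on the last qubit. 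A direct computation gives $\mathcal{U}_f(\rho_x)=\ketbra{x,f(x)}{x,f(x)}$ and hence $\Tr[M\,\mathcal{U}_f(\rho_x)]=f(x)\in\{0,1\}$, i.e.\ the passing probability of the test equals the bit $f(x)$. Equivalently, the associated channel test operator is $E_x=\rho_x^{\t}\otimes M$, with $\Tr[E_x\,C(\mathcal{U}_f)]=f(x)$.

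Next I would set up the reduction. Given any online learner $\mathcal{A}$ for $\mathbb{U}_{n}$ (no assumption of properness or computational efficiency is needed), I build a classical online learner $\mathcal{A}'$ for the class $\{0,1\}^{\{0,1\}^{n-1}}$: upon receiving a query $x_t$, $\mathcal{A}'$ presents the test operator $E_{x_t}$ to $\mathcal{A}$, reads off the real-valued prediction $p_t=\Tr[E_{x_t}C(\mathcal{U}^{(t)})]\in[0,1]$, outputs the thresholded bit $\hat{y}_t=\mathbbm{1}[p_t>1/2]$, and then forwards the true label $f(x_t)$ to $\mathcal{A}$ as feedback. The key observation, which uses $\varepsilon<1/2$, is that in any round in which $\mathcal{A}$ does \emph{not} make an $\varepsilon$-mistake we have $\Abs{p_t-f(x_t)}\leq\varepsilon<1/2$; since $f(x_t)\in\{0,1\}$, this forces $\hat{y}_t=f(x_t)$. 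Contrapositively, every mistake of $\mathcal{A}'$ occurs in a round in which $\mathcal{A}$ makes an $\varepsilon$-mistake, so $\mathcal{A}$ makes at least as many $\varepsilon$-mistakes as $\mathcal{A}'$ makes mistakes.

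Finally I would invoke \Cref{lemma:mistake-lower-bound-arbitrary-functions} with the class $\{0,1\}^{\{0,1\}^{n-1}}$: any classical online learner can be forced to make at least $2^{n-1}$ mistakes, and this holds even against an adversary that commits to its labeling function in advance. Pushing this through the embedding, committing to $f$ up front is the same as committing to $U_f$ up front, so $\mathcal{A}$ is forced to make $\Omega(2^{n-1})=\Omega(2^{n})$ many $\varepsilon$-mistakes, including against a worst-case adversary that fixes the unitary before the interaction. I expect the only delicate point to be the thresholding step: I must check that the strict gap $\varepsilon<1/2$ is exactly what guarantees that a non-mistaken real-valued prediction rounds to the correct bit, so that the binary classical lower bound transfers without loss. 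Everything else—the well-definedness of $U_f$, the computation $\Tr[M\,\mathcal{U}_f(\rho_x)]=f(x)$, and the faithful relay of feedback—is routine.
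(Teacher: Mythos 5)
Your proposal is correct and follows essentially the same route as the paper's proof: the same oracle embedding $U_f\ket{x,b}=\ket{x,b\oplus f(x)}$, the same computational-basis test operators yielding $\Tr[E_x C(\mathcal{U}_f)]=f(x)$, and the same rounding-at-$1/2$ reduction to \Cref{lemma:mistake-lower-bound-arbitrary-functions}. The only cosmetic difference is your choice of effect operator $\mathbbm{1}\otimes\ketbra{1}{1}$ versus the paper's $\ketbra{x,1}{x,1}$, which is immaterial.
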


\begin{proof}
    Our proof is via reduction to \Cref{lemma:mistake-lower-bound-arbitrary-functions}.
    To do so, we associate to every $f:\{0,1\}^{n-1}\to\{0,1\}$ the $n$-qubit unitary $U_f$ defined via $U_f \ket{x,b} = \ket{x,b\oplus f(x)}$ for $x\in\{0,1\}^{n-1}$ and $b\in\{0,1\}$. 
    We denote the corresponding unitary channel by $\mathcal{U}_f$\footnote{We note that this standard embedding of classical functions into unitaries has already been observed to give lower bounds for PAC learning quantum channels in 
    Ref.~\cite{Aaronson07}.}.
    Now, if we consider a channel test operator $E_{A,B}(x)=\rho_A(x)^\t \otimes M_B(x)$ with $\rho_A(x) = \ketbra{x,0}{x,0}$ and $M_B(x) = \ketbra{x,1}{x,1}$ for some $x\in\{0,1\}^{n-1}$, then 
    \begin{equation}
        \Tr[E_{A,B} (x) C^{\mathcal{U}_f}_{A,B}]
        = \Tr[M_{B}(x)  \mathcal{U}_f(\rho_A(x))]
        =\abs{\braket{1}{f(x)}}^2 = f(x)\, .
    \end{equation}
    Thus, if $\varepsilon<1/2$, then any online learner for $\mathbb{U}_n$ that makes at most $N$ $\varepsilon$-mistakes gives rise to an online learner for $\{0,1\}^{\{0,1\}^{n-1}}$ that makes at most $N$ mistakes, by rounding the produced estimates to obtain a label in $\{0,1\}$. 
    Hence, by \Cref{lemma:mistake-lower-bound-arbitrary-functions}, we conclude that $N\geq 2^{n-1} \geq\Omega(2^n)$.
\end{proof}

\begin{corollary}\label{corollary:mistake-lower-bound-channels}
    Let $\mathsf{CPTP}_n$ be the class of all $n$-qubit quantum channels, let $\varepsilon<1/2$.
    Any online learner for $\mathsf{CPTP}_n$ makes $\Omega(2^n)$ many $\varepsilon$-mistakes against a worst-case adversary.
    This remains true even if the adversary is forced to decide on a channel before the interaction with the learner.
\end{corollary}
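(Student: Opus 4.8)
The plan is to inherit the mistake lower bound from the unitary case, exploiting the class inclusion $\mathbb{U}_n\subseteq\mathsf{CPTP}_n$. We have already shown in \Cref{corollary:mistake-lower-bound-unitaries} that online learning $n$-qubit unitary channels forces $\Omega(2^n)$ many $\varepsilon$-mistakes (for $\varepsilon<1/2$, and even against an adversary committing to a unitary in advance). Since every unitary channel is in particular a quantum channel, any online learner for the strictly larger class $\mathsf{CPTP}_n$ is automatically an online learner for $\mathbb{U}_n$ once the adversary restricts its committed target to lie in $\mathbb{U}_n$. Hence the $\Omega(2^n)$ bound transfers verbatim, and the ``commit in advance'' clause is inherited because the witnessing target is a unitary, which is also a channel.

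I would additionally record a self-contained embedding that does not route through unitaries but instead uses the extra freedom of non-unitary channels to encode Boolean functions on all $n$ bits. To $f\colon\{0,1\}^n\to\{0,1\}$ I would associate the measure-and-prepare (entanglement-breaking) channel
\[
  \mathcal{N}_f(\rho)=\sum_{\Vec{x}\in\{0,1\}^n}\bra{\Vec{x}}\rho\ket{\Vec{x}}\,\ketbra{f(\Vec{x}),\Vec{0}}{f(\Vec{x}),\Vec{0}},
\]
where $\Vec{0}\in\{0,1\}^{n-1}$ and $\ket{f(\Vec{x}),\Vec{0}}$ is an $n$-qubit computational basis vector; this map is manifestly CPTP. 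Taking the test operators $E_{A,B}(\Vec{x})=\ketbra{\Vec{x}}{\Vec{x}}\otimes\ketbra{1,\Vec{0}}{1,\Vec{0}}$ in the role of $\rho_A^{\t}\otimes M_B$, a direct computation gives $\Tr[E_{A,B}(\Vec{x})C^{\mathcal{N}_f}_{A,B}]=\Tr[\ketbra{1,\Vec{0}}{1,\Vec{0}}\,\mathcal{N}_f(\ketbra{\Vec{x}}{\Vec{x}})]=f(\Vec{x})$. Thus, rounding a learner's real-valued predictions at threshold $1/2$ turns any online learner for $\mathsf{CPTP}_n$ making at most $N$ $\varepsilon$-mistakes into an online learner for $\{0,1\}^{\{0,1\}^n}$ making at most $N$ mistakes, so $N\geq 2^n$ by \Cref{lemma:mistake-lower-bound-arbitrary-functions}.

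There is essentially no obstacle here: the result is immediate from the earlier corollary together with class inclusion. The only points worth checking carefully are that (i) realizability is preserved, i.e., the adversary's committed target (a unitary, or the channel $\mathcal{N}_f$) genuinely lies in $\mathsf{CPTP}_n$, so the revealed values $b_t$ are within $\varepsilon/3$ of actual channel expectation values; and (ii) the reduction uses only that the learner outputs predictions in $[0,1]$, so it applies even though a learner for $\mathsf{CPTP}_n$ need not be proper with respect to $\mathbb{U}_n$. Neither requires more than the observations above. I would present the inclusion argument as the main proof and relegate the explicit embedding to a remark, since it reassures the reader that the hardness already arises within the physically natural subclass of entanglement-breaking channels, independently of the unitary construction.
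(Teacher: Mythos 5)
Your proposal is correct, and your ``self-contained embedding'' is in fact exactly the paper's proof: the paper associates to each $f:\{0,1\}^n\to\{0,1\}$ the measure-and-prepare channel $\mathcal{N}_f(\rho)=\sum_{x}\bra{x}\rho\ket{x}\,(\ketbra{f(x)}{f(x)}\otimes\ketbra{0^{n-1}}{0^{n-1}})$ and the test operators $\ketbra{x}{x}^{\t}\otimes(\ketbra{1}{1}\otimes\ketbra{0^{n-1}}{0^{n-1}})$, then rounds at threshold $1/2$ and invokes \Cref{lemma:mistake-lower-bound-arbitrary-functions} to get $N\geq 2^n$ --- identical to what you relegate to a remark. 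Your primary route, inheriting the bound from \Cref{corollary:mistake-lower-bound-unitaries} via the inclusion $\mathbb{U}_n\subseteq\mathsf{CPTP}_n$, is a valid and even shorter alternative; the two points you flag are indeed the only ones that need care, and you handle them correctly (realizability is preserved since a unitary channel is a channel, and the reduction never uses properness of the learner, only its real-valued predictions). The only trade-off is quantitative: the inclusion route embeds Boolean functions on $n-1$ bits and so yields the constant $2^{n-1}$ rather than the paper's $2^n$, which is immaterial for the stated $\Omega(2^n)$ bound. Your observation that the hardness already arises within entanglement-breaking channels is a nice bonus that the paper's construction also implicitly provides but does not remark upon.
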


\begin{proof}
    Our proof is via reduction to \Cref{lemma:mistake-lower-bound-arbitrary-functions}.
    To do so, we associate to every $f:\{0,1\}^{n}\to\{0,1\}$ the $n$-qubit channel $\mathcal{N}_f$ defined via 
    \begin{equation}
        \mathcal{N}_f(\rho)
        = \sum_{x\in\{0,1\}^n} \bra{x}\rho\ket{x} (\ketbra{f(x)}{f(x)}\otimes \ketbra{0^{n-1}}{0^{n-1}})\, .
    \end{equation}
    Now, if we consider a channel test operator $E_{A,B}(x)=\rho_A(x)^\t \otimes M_B(x)$ with $\rho_A(x) = \ketbra{x}{x}$ and $M_B(x) = \ketbra{1}{1}\otimes\ketbra{0^{n-1}}{0^{n-1}}$ for some $x\in\{0,1\}^{n}$, then 
    \begin{equation}
        \Tr[E_{A,B} (x) C^{\mathcal{N}_f}_{A,B}]
        = \Tr[M_{B}(x)  \mathcal{N}_f(\rho_A(x))]
        = \abs{\braket{1}{f(x)}}^2 = f(x)\, .
    \end{equation}
    Thus, if $\varepsilon<1/2$, then any online learner for $\mathsf{CPTP}_n \supset \{\mathcal{N}_f\}_{f\in\{0,1\}^{\{0,1\}^n}}$ that makes at most $N$ $\varepsilon$-mistakes gives rise to an online learner for $\{0,1\}^{\{0,1\}^{n}}$ that makes at most $N$ mistakes, by rounding the produced estimates to obtain a label in $\{0,1\}$.
    Hence, by \Cref{lemma:mistake-lower-bound-arbitrary-functions}, we conclude $N\geq 2^{n}$.
\end{proof}

These lower bounds demonstrate that, unsurprisingly, arbitrary quantum channels cannot be online learned with a number of mistakes that scales efficiently with the system size. 
This should be contrasted with the case of states: Ref.~\cite{ACH+19} proved that we can online learn the class of all $n$-qubit states with a number of mistakes that grows only linearly in $n$.
Moreover, the exponential mistake lower bounds above motivate the focus on restricted subclasses of channels, such as channels of bounded gate complexity or mixtures of known channels, which we considered in \Cref{section:online-learning-bounded-complexity,section:online-learning-mixtures}.
In the next two subsections, we prove mistake lower bounds to be juxtaposed with the upper bounds established in \Cref{section:online-learning-bounded-complexity,section:online-learning-mixtures}.

\subsection{Mistake lower bounds for channels of bounded complexity}

Recall that \Cref{theorem:bounded-complexity-regret,corollary:bounded-complexity-mistake} established regret and mistake upper bounds for online learning channels of gate complexity $G$ that scaled effectively linearly in $G$. Our next result, which follows by combining \Cref{corollary:mistake-lower-bound-channels} with a ``zooming in'' on a suitable subset of qubits (as previously employed in Ref.~\cite{zhao2023learning}) shows that this scaling is essentially optimal.

\begin{corollary}[Essentially optimal scaling]\label{corollary:mistake-lower-bound-bounded-complexity}
    Let $\mathsf{CPTP}_{n,G}$ be the class of all $n$-qubit $G$-gate channels, i.e., channels of gate complexity (at most) $G$, and let $\varepsilon<1/2$.
    Any online learner for $\mathsf{CPTP}_{n,G}$ makes $\Omega(\min\{2^n,G\})$ many $\varepsilon$-mistakes against a worst-case adversary.
    This remains true even if the adversary is forced to decide on a $G$-gate channel before the interaction with the learner.
\end{corollary}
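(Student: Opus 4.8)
The plan is to reduce from the problem of online learning arbitrary Boolean functions on $q$ bits, for a suitably chosen $q$, by embedding such functions into $\mathsf{CPTP}_{n,G}$ as measure-and-prepare channels whose gate complexity is controlled by $2^q$. Concretely, set $q=\min\{\lfloor\log_2(\Theta(G))\rfloor,\,n-1\}$, where the hidden constant is fixed below so that the construction stays within the gate budget $G$. To every $f:\{0,1\}^q\to\{0,1\}$ I associate the $n$-qubit channel $\mathcal{N}_f$ that reads out the first $q$ qubits in the computational basis, classically computes $f$, and prepares $\ketbra{f(x)}{f(x)}\otimes\ketbra{0^{n-1}}{0^{n-1}}$, resetting the remaining qubits. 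This is the same embedding used in the proof of \Cref{corollary:mistake-lower-bound-channels}, now localized to $q$ input bits (the ``zooming in'' onto a subset of qubits).

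First I would verify the gate-complexity claim: an arbitrary $f:\{0,1\}^q\to\{0,1\}$ admits a classical circuit of size $\mathcal{O}(2^q)$ (e.g., via its truth table / DNF expansion). I realize this quantumly by replacing each irreversible AND/OR with a reversible Toffoli gate, resetting each auxiliary qubit after use. In this realization every elementary operation is a two-qubit channel: single-qubit resets and basis measurements are special cases of two-qubit channels acting trivially on a second qubit, and each Toffoli decomposes into $\mathcal{O}(1)$ two-qubit gates. Hence $\mathcal{N}_f$ uses $\mathcal{O}(2^q)$ two-qubit channel gates, and choosing the constant inside $\Theta(G)$ so that this count is $\le G$ gives $\{\mathcal{N}_f\}_{f:\{0,1\}^q\to\{0,1\}}\subseteq\mathsf{CPTP}_{n,G}$.

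Next I would transfer the Boolean mistake lower bound. Exactly as in \Cref{corollary:mistake-lower-bound-channels}, for $x\in\{0,1\}^q$ take the test operator $E_{A,B}(x)=\rho_A(x)^{\t}\otimes M_B(x)$ with $\rho_A(x)=\ketbra{x,0^{n-q}}{x,0^{n-q}}$ and $M_B(x)=\ketbra{1}{1}\otimes\ketbra{0^{n-1}}{0^{n-1}}$, so that
\begin{equation}
    \Tr[E_{A,B}(x)C_{A,B}^{\mathcal{N}_f}]=\Tr[M_B(x)\mathcal{N}_f(\rho_A(x))]=\abs{\braket{1}{f(x)}}^2=f(x).
\end{equation}
Thus, for $\varepsilon<1/2$, any online learner for $\mathsf{CPTP}_{n,G}$ making at most $N$ many $\varepsilon$-mistakes yields, by rounding its estimates to $\{0,1\}$, an online learner for the class $\{0,1\}^{\{0,1\}^q}$ of all Boolean functions on $q$ bits that makes at most $N$ mistakes. \Cref{lemma:mistake-lower-bound-arbitrary-functions} then forces $N\ge 2^q=\Omega(\min\{2^{n},G\})$. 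The committed-adversary strengthening is immediate: \Cref{lemma:mistake-lower-bound-arbitrary-functions} already holds against an adversary that fixes its labeling function in advance, and each $f$ corresponds to a fixed $\mathcal{N}_f$, so the adversary may equivalently commit to $\mathcal{N}_f\in\mathsf{CPTP}_{n,G}$ before the interaction.

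The main obstacle I expect is the careful accounting in the gate-complexity step: confirming that the non-unitary measure-and-prepare structure together with ancilla resets genuinely fits into the two-qubit-channel circuit model of $\mathsf{CPTP}_{n,G}$ using only $\mathcal{O}(2^q)$ gates, and pinning down the constant in $q=\min\{\lfloor\log_2(\Theta(G))\rfloor,n-1\}$ so that the total count provably does not exceed $G$. Once this bookkeeping is in place, the remainder is a direct reduction to the classical folklore lower bound.
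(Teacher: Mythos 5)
Your proof is correct and follows essentially the same route as the paper's: embed all Boolean functions on $q=\Theta(\log G)$ bits into $\mathsf{CPTP}_{n,G}$ (the paper does this once via measurement followed by a classical circuit and once via Toffoli gates with ancilla resets, and you blend the two) and then invoke the folklore $2^q$ mistake lower bound of \Cref{lemma:mistake-lower-bound-arbitrary-functions}. The one point to tighten is the parenthetical justification of the $\mathcal{O}(2^q)$ classical circuit size: a naive truth-table/DNF expansion has $\Theta(q\,2^q)$ gates, which would cost a $\log G$ factor in the final bound, so you should instead invoke the standard Shannon/Lupanov bound (the paper uses the $\mathcal{O}(2^q/q)$ bound of Ref.~\cite{lupanov1958method}).
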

\begin{proof}
	As we explain in \Cref{remark:channel-mistake-lower-bound-G<=n} below, for $G\leq n$ the claimed lower bound follows from our analysis for Pauli channels. Therefore, for the rest of this proof, we consider only $G>n$.
	
	Recall that there is a universal constant $C>0$ such that every Boolean function $f:\{0,1\}^k\to\{0,1\}$ can be implemented with a de Morgan circuit of size at most $C\cdot 2^k/k$~\cite{lupanov1958method}. Here, a de Morgan circuit is a circuit consisting of AND, OR, and NOT gates, where the AND and OR gates have fan-in two. 
	Hence, as any classical two-bit gate can be implemented by a two-qubit quantum channel gate, and as a computational basis measurement on a single qubit corresponds to one single-qubit channel gate, we see that every $k$-qubit channel $\mathcal{N}_f$ as in the proof of \Cref{corollary:mistake-lower-bound-channels}, with $f:\{0,1\}^k\to\{0,1\}$, can be implemented with $C\cdot 2^k/k + k \leq C2^{k+1}$ many two-qubit channel gates. 
	Therefore, if we set $k=\lfloor \log_2( G/2C)\rfloor$, then $\mathsf{CPTP}_{n,G}\supseteq \{\mathcal{N}_f\otimes\operatorname{id}_{n-q}\}_{f:\{0,1\}^q\to\{0,1\}}$, where we consider the channels $\mathcal{N}_f$ from the proof of \Cref{corollary:mistake-lower-bound-channels} and where we set $q=\min\{n, k\}$.
	We can now straightforwardly modify the states and effect operators used in the proof of \Cref{corollary:mistake-lower-bound-channels} (by attaching, say, the all-zero state on the last $n-q$ qubits) to show that the $\varepsilon$-mistake bound from \Cref{lemma:mistake-lower-bound-arbitrary-functions}, with $n$ replaced by $q$, applies to $\{\mathcal{N}_f\otimes\operatorname{id}_{n-q}\}_{f:\{0,1\}^q\to\{0,1\}}$.
	Because of the previously observed inclusion, we conclude that also $\mathsf{CPTP}_{n,G}$ comes with an $\varepsilon$-mistake lower bound of $2^{q}\geq \min\{2^n, G/4C\} = \Omega(\min\{2^n, G\})$.
\end{proof}

\Cref{corollary:mistake-lower-bound-bounded-complexity} establishes a mistake lower bound for online learning the class $\mathsf{CPTP}_{n,G}$ whose $G$-dependence matches our upper bound up to logarithmic factors. However, as the construction in the proof above uses a measurement followed by an in general non-reversible classical circuit, it is far from unitary. Therefore, we next give an alternative proof for \Cref{corollary:mistake-lower-bound-bounded-complexity} that, while still using non-unitary building blocks, is motivated by reversible computation and therefore can maybe serve as a stepping stone towards an analogue of \Cref{corollary:mistake-lower-bound-bounded-complexity} for unitary channels. 

\begin{proof}[Alternative proof of \Cref{corollary:mistake-lower-bound-bounded-complexity}]
	As before, Ref.~\cite{lupanov1958method} tells us that every Boolean function $f:\{0,1\}^k\to\{0,1\}$ can be implemented with a de Morgan circuit of size at most $C\cdot 2^k/k$. As any OR gate can be rewritten in terms of three NOT gates and one AND gate, we can also achieve such implementations with circuit size $C\cdot 2^{k+2} / k$ using only AND and NOT gates.
	The NOT gate can trivially be quantumly implemented by the Pauli-$X$ gate.
	Using what Ref.~\cite{toffoli1980reversible} called the AND/NAND gate, and which (with a different ordering convention for the inputs) is now known as the Toffoli gate, we can implement an AND with a reversible three-bit gate when the ``source'' is a suitable constant bit, thereby producing two garbage output bits in the ``sink''.
	Therefore, we can quantumly implement any two-bit AND gate using one three-qubit unitary in conjunction with a single-qubit channel gate that resets one of the ``sink'' qubits to the needed constant input, so that it can serve as a ``source'' for the next AND. The reset also allows us to use a single auxiliary qubit only throughout. 
	Consequently, as any three-qubit unitary can be decomposed into a constant number of two-qubit unitaries, we can implement every function $f:\{0,1\}^k\to\{0,1\}$ with a $(k+1)$-qubit quantum circuit of size $C\cdot 2^{k+3}/k$ (where $C$ is now a new constant).
	
	By this line of reasoning, if we set $k=\lfloor \log_2( G/8C) \rfloor$ and $q=\min\{n-1, k\}$, then $\mathsf{CPTP}_{n,G}\supseteq \{\mathcal{N}_f\otimes \operatorname{id}_{n-(q+1)}~|~f:\{0,1\}^q\to\{0,1\}\}$, where we abused notation---by not writing out the restriction to computational basis inputs and measurements on the first $q+1$ qubits, and by ignoring the ``source'' and the ``sink'' subsystems at the output, which we can achieve by having identity tensor factors on the corresponding subsystems of the output effect operator. At this point, we again inherit a mistake lower bound from \Cref{corollary:mistake-lower-bound-channels}, which here becomes $2^q = \min\{2^{n-1}, G/16C\}\geq\Omega(\min\{2^n, G\})$.
\end{proof}

This second proof already hints at a challenge in establishing the same $\Omega(\min\{2^n,G\})$ mistake lower bound for $\mathbb{U}_{n,G}$, the class of all unitary $n$-qubit channels of gate complexity $G$. Namely, when aiming to implement a unitary $U_f$ for $f:\{0,1\}^q \to\{0,1\}$, a natural approach is to take a classical circuit implementation for $f$ and make it reversible. Above, we relied on Toffoli's construction to achieve this with a small overhead in gate complexity. However, since there we need a specific constant input in the ``source'', this required us to reset (some of) our ``sink'' qubits. Such a reset is a non-reversible operation. 
Without the ability to reset, making the implementation reversible naively requires to add one auxiliary qubit per gate, thus exceeding the number of available qubits if $G\geq n$.
While we do not yet know how to overcome this obstacle when only using unitary gates, the following result at least demonstrates a lower bound for the unitary case that deviates from the $G$-dependence in the upper bound by only a square root.

\begin{corollary}[Lower bound for the unitary case]\label{corollary:mistake-lower-bound-bounded-complexity-unitary}
    Let $\mathbb{U}_{n,G}$ be the class of all unitary $n$-qubit channels of gate complexity (at most) $G$, let $\varepsilon<1/2$.
    Any online learner for $\mathbb{U}_{n,G}$ makes $\Omega(\min\{2^n,\sqrt{G}\})$ many $\varepsilon$-mistakes against a worst-case adversary.
    This remains true even if the adversary is forced to decide on a $G$-gate channel before the interaction with the learner.
\end{corollary}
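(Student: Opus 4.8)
The plan is to mimic the reduction in \Cref{corollary:mistake-lower-bound-unitaries}, but to control the gate complexity of the embedded unitaries so that they actually lie in $\mathbb{U}_{n,G}$ rather than in the full class $\mathbb{U}_n$. Concretely, for a Boolean function $f:\{0,1\}^q\to\{0,1\}$ on $q\le n-1$ bits I would reuse the classical-reversible embedding $U_f\ket{x,b}=\ket{x,b\oplus f(x)}$ acting on $q+1$ qubits, extended by $\operatorname{id}$ on the remaining $n-q-1$ qubits, together with the computational-basis channel tests of \Cref{corollary:mistake-lower-bound-unitaries} padded with $\ket{0}$ on the extra input qubits and with identity on them in the effect operator. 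Exactly as before one gets $\Tr[E_{A,B}(x)\,C^{\mathcal{U}_f}_{A,B}]=f(x)$, so for $\varepsilon<1/2$ any online learner for $\mathbb{U}_{n,G}$ making at most $N$ many $\varepsilon$-mistakes yields, after rounding, an online learner for $\{0,1\}^{\{0,1\}^q}$ making at most $N$ mistakes.

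The one genuinely new ingredient is bounding the gate complexity of $U_f$. Unlike the non-unitary case, I cannot use Lupanov's $\mathcal{O}(2^q/q)$-size de Morgan circuit followed by a reset, since resets are non-unitary and, as noted above, the naive reversible simulation spends one auxiliary qubit per gate and thus runs out of qubits once the circuit exceeds $n$ gates. Instead I would simply invoke the standard universal-synthesis fact that every unitary on $m$ qubits can be implemented \emph{exactly} by $\mathcal{O}(4^m)$ two-qubit gates (which is legitimate here because \Cref{section:online-learning-bounded-complexity} allows arbitrary two-qubit gates). Since $U_f$ acts nontrivially only on $q+1$ qubits, this gives gate complexity $\mathcal{O}(4^{q+1})=\mathcal{O}(4^q)$, uniformly in $f$, and tensoring with the identity does not increase it. Hence there is a constant $C>0$ such that $\{U_f\otimes\operatorname{id}_{n-q-1}\}_{f:\{0,1\}^q\to\{0,1\}}\subseteq\mathbb{U}_{n,G}$ whenever $q+1\le n$ and $C\,4^q\le G$.

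Choosing $q=\min\{n-1,\lfloor\tfrac12\log_2(G/C)\rfloor\}$ makes both constraints hold simultaneously, so the entire family of functions on $q$ bits embeds into $\mathbb{U}_{n,G}$. \Cref{lemma:mistake-lower-bound-arbitrary-functions} then forces any online learner to make at least $2^q=\Omega(\min\{2^n,\sqrt{G}\})$ many $\varepsilon$-mistakes; since that lemma holds even when the adversary commits to the function — hence to $U_f$ — in advance, the same is true here. The only lossy step, and the reason this yields merely $\sqrt{G}$ in place of the $\Omega(\min\{2^n,G\})$ of \Cref{corollary:mistake-lower-bound-bounded-complexity}, is the reliance on the generic $\mathcal{O}(4^q)$ decomposition: the cheapest classical circuit for $f$ has size $\approx 2^q/q$, quadratically smaller, but realizing it as a comparably cheap \emph{unitary} without resets and within the $n$-qubit budget is precisely the obstacle flagged before the statement and left open. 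I expect that gap to be the main difficulty in any attempt to strengthen the bound toward $\Omega(\min\{2^n,G\})$.
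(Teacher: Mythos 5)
Your proposal is correct and follows essentially the same route as the paper: both embed arbitrary Boolean functions on $q$ bits via $U_f$, invoke the generic $\mathcal{O}(4^m)$ two-qubit-gate synthesis of an $m$-qubit unitary to place these in $\mathbb{U}_{n,G}$ for $q\approx\tfrac12\log_2 G$, and then inherit the $2^q=\Omega(\min\{2^n,\sqrt{G}\})$ mistake bound from \Cref{lemma:mistake-lower-bound-arbitrary-functions}. Your closing diagnosis of why the unitary case loses a square root (the inability to use reset-based reversible simulation of a size-$\mathcal{O}(2^q/q)$ classical circuit) also matches the discussion preceding the statement in the paper.
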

\begin{proof}
    We first recall from Refs.~\cite{vartiainen2004efficient, mottonen2004quantum, shende2005synthesis}: There is a universal constant $C>0$ such that any $k$-qubit unitary can be implemented with $C 4^k$ many two-qubit gates. 
    Thus, if we set $k = \lfloor\log_4(G/C)\rfloor$, then $G$ many two-qubit gates suffice to implement arbitrary unitaries on $k$ qubits.
    In particular, this implies that $\mathbb{U}_{n,G}\supset \mathbb{U}_q \otimes\operatorname{id}_{n-q}$, where $q=\min\{n,k\}$.
    Therefore, a quantum circuit with $G$ many unitary $2$-qubit gates is able to implement all the unitaries $U_f$ for functions $f:\{0,1\}^{q-1}\to\{0,1\}$ on the first $q$ qubits.
    Hence, with a straightforward modification of the reasoning used in proving \Cref{corollary:mistake-lower-bound-unitaries}---tensoring the input states and output effects used there with, say, the all-zero state on the remaining $n-q$ qubits---, we inherit the $\varepsilon$-mistake bound from \Cref{lemma:mistake-lower-bound-arbitrary-functions} with $n$ replaced by $q$.
    That is, we obtain a $\varepsilon$-mistake lower bound of $\geq 2^{q-1} = \min\{2^n/2, \sqrt{G/C}/4\} = \Omega(\min\{2^n, \sqrt{G}\})$.
\end{proof}

\subsection{Mistake lower bounds for Pauli channels}

To complement the linear-in-$n$ mistake upper bound from \Cref{section:online-learning-mixtures}, we now give a mistake lower bound for online learning Pauli channels. 
Again, we obtain this as a consequence of \Cref{lemma:mistake-lower-bound-arbitrary-functions}.

\begin{corollary}[Mistake lower bound for online learning Pauli channels]\label{corollary:mistake-lower-bound-pauli-channels}
    Let $\mathsf{PAULI}_n$ be the class of all $n$-qubit Pauli channels, let $\varepsilon<1/2$.
    Any online learner for $\mathsf{PAULI}_n$ makes $\Omega(n)$ many $\varepsilon$-mistakes against a worst-case adversary.
    This remains true even if the adversary is forced to decide on a Pauli channel before the interaction with the learner.
\end{corollary}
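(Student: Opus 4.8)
The plan is to reduce from the folklore Boolean lower bound of \Cref{lemma:mistake-lower-bound-arbitrary-functions}, in the same spirit as \Cref{corollary:mistake-lower-bound-unitaries,corollary:mistake-lower-bound-channels}, but now encoding each function value into a single-qubit phase flip so that an entire Boolean function fits inside one $n$-qubit Pauli channel. Concretely, set $m=\lfloor\log_2 n\rfloor$, fix an injection $\iota:\{0,1\}^m\hookrightarrow\{1,\dots,n\}$ (which exists since $2^m\le n$), and to every Boolean function $g:\{0,1\}^m\to\{0,1\}$ associate the function $f_g:\{1,\dots,n\}\to\{0,1\}$ with $f_g(\iota(s))=g(s)$ and $f_g(j)=0$ for $j\notin\operatorname{im}\iota$. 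The corresponding object is the Pauli channel $\mathcal{N}_{f_g}(\rho)=U\rho U^{\dagger}$ with $U=\bigotimes_{i=1}^n Z_i^{f_g(i)}$; this indeed lies in $\mathsf{PAULI}_n$, being the point-mass Pauli channel whose error-rate vector is supported on the single Pauli string $Z^{(f_g(1),\dots,f_g(n))}$.

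The crucial observation is that each bit of $g$ can be read off \emph{exactly} by a single memoryless test. For $s\in\{0,1\}^m$, write $i=\iota(s)$ and take the channel test operator $E^{(s)}_{A,B}=\rho(s)^{\t}\otimes M(s)$, where the input $\rho(s)$ places $\ket{+}$ on qubit $i$ and $\ket{0}$ on every other qubit, and the effect $M(s)$ places $\ketbra{-}{-}$ on qubit $i$ and $\mathbbm{1}$ on every other qubit. Since $Z\ket{+}=\ket{-}$ and $Z\ket{0}=\ket{0}$ (so that the reference qubits, on which $f_g$ vanishes, stay inert), one computes
\begin{equation}
    \Tr\!\big[E^{(s)}_{A,B}\,C^{\mathcal{N}_{f_g}}_{A,B}\big]
    =\Tr\!\big[M(s)\,\mathcal{N}_{f_g}(\rho(s))\big]
    =\Abs{\bra{-}Z^{g(s)}\ket{+}}^2
    =g(s).
\end{equation}
A direct check shows each $E^{(s)}_{A,B}$ is a legal channel test operator, and distinct $s$ probe distinct qubits $i=\iota(s)$, so the tests are well-defined and pairwise distinct.

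The reduction then proceeds as follows. Suppose a (possibly computationally unbounded) online learner for $\mathsf{PAULI}_n$ makes at most $N$ many $\varepsilon$-mistakes for some $\varepsilon<1/2$. We simulate the Boolean online learning game for the class $\{0,1\}^{\{0,1\}^m}$ of all Boolean functions on $m$ bits: whenever the Boolean adversary queries $s_t$, feed the Pauli learner the test $E^{(s_t)}_{A,B}$, round its prediction to the nearer element of $\{0,1\}$, output that as the Boolean prediction, and then relay the true value $b_t=g(s_t)\in\{0,1\}$ as (exact) feedback. Because $\varepsilon<1/2$, any prediction within $\varepsilon$ of the $\{0,1\}$-valued $g(s_t)$ rounds to $g(s_t)$, so every mistake in the Boolean game forces an $\varepsilon$-mistake of the Pauli learner; hence the simulated Boolean learner also makes at most $N$ mistakes. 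Applying \Cref{lemma:mistake-lower-bound-arbitrary-functions} with $n$ replaced by $m$, the Boolean learner must make at least $2^m=2^{\lfloor\log_2 n\rfloor}\ge n/2$ mistakes, so $N\ge n/2=\Omega(n)$. The ``committed adversary'' strengthening is inherited verbatim: since the lemma permits the Boolean adversary to fix $g$ in advance, the Pauli adversary may correspondingly fix $\mathcal{N}_{f_g}$ in advance.

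I do not expect a genuine technical obstacle here; the entire content is in the encoding. The only points requiring care are that the single-bit readout be \emph{exact} (so that the $\varepsilon<1/2$ rounding argument goes through with no realizability slack), that each $E^{(s)}_{A,B}$ genuinely satisfy $\norm{E^{(s)}_{A,B}}_{\diamond 1}^{\ast}\le 1$, and that the domain size be matched to $\{0,1\}^m$ so that \Cref{lemma:mistake-lower-bound-arbitrary-functions} can be invoked as a black box. Conceptually, the reason the bound is $\Omega(n)$ rather than exponential is precisely that a Pauli channel, although specified by a $4^n$-dimensional error vector, exposes only $\Theta(n)$ mutually independent, individually queryable phase degrees of freedom through product tests of this form; routing the $2^m\le n$ independent function values to distinct qubits via $\iota$ is exactly what lets the classical Littlestone-dimension lower bound transfer without loss.
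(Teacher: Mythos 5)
Your proof is correct and follows essentially the same route as the paper: both encode an arbitrary Boolean function on $\lfloor\log_2 n\rfloor$ bits into a tensor product of $Z^{f(i)}$ gates, read each bit off exactly with a $\ket{+}/\ketbra{-}{-}$ product test on a single qubit, and invoke \Cref{lemma:mistake-lower-bound-arbitrary-functions} via the $\varepsilon<1/2$ rounding argument. The only cosmetic differences are that you make the domain identification $\{0,1\}^m\hookrightarrow\{1,\dots,n\}$ explicit via $\iota$ and use $\mathbbm{1}$ rather than $\ketbra{0}{0}$ on the spectator qubits of the effect operator, neither of which changes the argument.
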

\begin{proof}
    Our proof is via reduction to \Cref{lemma:mistake-lower-bound-arbitrary-functions}.
    To do so, we associate to any $f:\{1,\ldots,n\}\to\{0,1\}$ the unitary $n$-qubit Pauli channel $\mathcal{N}_f$ defined via $\mathcal{N}_f (\rho) = \left(\bigotimes_{i=1}^n Z_i^{f(i)}\right)\rho \left(\bigotimes_{i=1}^n Z_i^{f(i)}\right)$.
    Now, if we consider a channel test operator $E_{A,B}(t)=\rho_A(t)^\t \otimes M_B(t)$ with $\rho_A(t)=\ketbra{0^{t-1}}{0^{t-1}}\otimes \ketbra{+}{+}\otimes \ketbra{0^{n-t}}{0^{n-t}}$ and $M_B(t) = \ketbra{0^{t-1}}{0^{t-1}}\otimes \ketbra{-}{-}\otimes \ketbra{0^{n-t}}{0^{n-t}}$ for some $t\in\{1,\ldots,b\}$, then
    \begin{equation}
        \Tr[E_{A,B} (t) C^{\mathcal{N}_f}_{A,B}]
        = \Tr[M_{B}(t)  \mathcal{N}_f(\rho_A(t))]
        = f(t)\, .
    \end{equation}
    Thus, if $\varepsilon<1/2$, then any online learner for $\mathsf{PAULI}_n$ that makes at most $N$ $\varepsilon$-mistakes gives rise to an online learner for $\{0,1\}^{\{1,\ldots,n\}}$ that makes at most $N$ mistakes, by rounding the produced estimates to obtain a label in $\{0,1\}$.
    Hence, by \Cref{lemma:mistake-lower-bound-arbitrary-functions}, we conclude $N\geq 2^{\lfloor\log(n)\rfloor}\geq  \Omega(2^{\log(n)}) = \Omega(n)$.
\end{proof}

\Cref{corollary:mistake-lower-bound-pauli-channels} shows that the linear-in-$n$ scaling achieved in \Cref{section:online-learning-mixtures} is optimal for the special case of Pauli channel online learning.
This also tells us that for learning a mixture of $K$ known channels, the $\log(K)$-dependence in the mistake bound can in general not be improved.

\begin{remark}\label{remark:channel-mistake-lower-bound-G<=n}
    The channels $\mathcal{N}_f$ used in the proof of \Cref{corollary:mistake-lower-bound-pauli-channels} are unitary channels of gate complexity $n$. 
    We can use essentially the same construction to show that for $G\leq n$, any online learner for $\mathsf{CPTP}_{n,G}$ makes at least $\Omega(G)$ many $\varepsilon$-mistakes if $\varepsilon <1/2$. To see this, consider unitary $n$-qubit channels of the form $\rho\mapsto \left(\bigotimes_{i=1}^G Z_i^{f(i)} \otimes \mathbbm{1}_2^{\otimes (n-G)}\right)\rho \left(\bigotimes_{i=1}^G Z_i^{f(i)} \otimes \mathbbm{1}_2^{\otimes(n-G)}\right)$ for functions $f:\{1,\ldots,G\}\to\{0,1\}$, and argue as in the proof of \Cref{corollary:mistake-lower-bound-pauli-channels}.~\qedgen
\end{remark}

\section{Computational complexity lower bounds}

\subsection{Computational complexity lower bounds for Pauli channels}\label{sec:computational-lower-bounds-pauli-channels}

In \Cref{sec:mistake-lower-bounds}, we established mistake lower bounds for channels. In this section, we focus on computational complexity lower bounds for online learning classes of quantum channels. In particular, we show that while we achieve polynomial mistake upper bounds for learning classes of quantum channels, the exponential computational complexity of our learning algorithms cannot be avoided under standard cryptographic assumptions. But let us first start with an \textit{unconditional} computational hardness lower bound for online learning Pauli channels. The heart of the proof strategy lies in exploiting the fact that any polynomial time learner is only ever able to access polynomially many entries of an exponentially sized input. By the adversarial nature of the game, the adversary can always `hide' the information useful for answering a challenge in an entry that was never seen by the (polynomial-time) learner. We make this intuition formal in the following result.

\begin{theorem}\label{theorem:computational-hardness-pauli-channel-online-learning}
    Consider any polynomial-time online learner of $n$-qubit Pauli channels that runs in time $q^{(t)}(n)$ at time step $t\in\{1,2,\dotsc,T\}$, for any polynomials $q^{(1)}(n), q^{(2)}(n),\dotsc, q^{(T)}(n)$. There exists an explicit adversarial strategy that forces the learner to make $\frac{4^n - 1}{Q}$ mistakes, where $Q=\min\{q^{(t)}(n):t\in\{1,2,\dotsc,T\}\}$.
\end{theorem}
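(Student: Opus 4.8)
The plan is to formalize the stated intuition---that a polynomial-time learner can only inspect polynomially many of the $4^n$ Bell-basis coordinates of a challenge---via an explicit ``information-hiding'' adversary. First I would reduce the problem to its coordinate form: by the observation underlying \Cref{theorem:pauli-regret}, for a Pauli channel with error-rate vector $\Vec p\in\Delta_{4^n}$ and any channel test operator $E$ one has $\Tr[E\,C^{\mathcal P}]=\Vec p\cdot\Vec e$, where $e_{\Vec z,\Vec x}=\Tr[E\Gamma^{\Vec z,\Vec x}]\in[0,1]$. The key enabling fact is that, for any subset $S\subseteq\{0,1\}^{2n}$, the operator $E_S\coloneqq\frac{1}{4^n}\sum_{(\Vec z,\Vec x)\in S}\Gamma^{\Vec z,\Vec x}$ is a \emph{valid} channel test operator (the $\Gamma^{\Vec z,\Vec x}$ are mutually orthogonal with nonzero eigenvalue $2^n$, so $0\le E_S\le \frac{\mathbbm 1_A}{2^n}\otimes\mathbbm 1_B$), and $\Tr[E_S\,C^{\mathcal P}]=p(S)\coloneqq\sum_{(\Vec z,\Vec x)\in S}p_{\Vec z,\Vec x}$. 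Presenting $E_S$ in the Bell basis just means handing the learner the indicator vector $\mathbbm 1_S\in\{0,1\}^{4^n}$. I would stress why such \emph{subset-mass} probes are the right choice: a single-index probe reveals only one $p_{\Vec z,\Vec x}$, and since $\sum p_{\Vec z,\Vec x}=1$ at most $O(1/\varepsilon)$ coordinates can exceed $\varepsilon$, so single-index probes cannot force many mistakes; subset probes, by contrast, can realize any value in $[0,1]$.

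Second, I would set up the hiding argument against a deterministic learner. Since each of the $q^{(t)}(n)$ time steps can touch at most one coordinate, in round $t$ the learner reads at most $q^{(t)}(n)$ coordinates of $\mathbbm 1_{S_t}$ before producing its estimate $\hat b_t$. Knowing the learner's code and the deterministic history, the adversary simulates the learner on the completion in which all unread coordinates are absent, thereby determining exactly which coordinate set $R_t$ (with $|R_t|\le q^{(t)}(n)$) it inspects and which value $\hat b_t$ it outputs; crucially, if $S_t$ is chosen disjoint from $R_t$ then the learner sees $0$ on every coordinate it reads, so both its adaptive read pattern and its output $\hat b_t$ are \emph{independent of how many unread coordinates are placed in $S_t$}. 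Fixing the true channel to be uniform over the $4^n-1$ non-identity Paulis (a fixed, hence trivially realizable, Pauli channel) gives $p(S_t)=|S_t\setminus\{I\}|/(4^n-1)$, and the adversary chooses the number of unread coordinates it puts into $S_t$ so that this mass lies more than $\varepsilon$ from $\hat b_t$ (almost all unread non-identity coordinates if $\hat b_t\le 1/2$, and $S_t=\emptyset$ otherwise). This forces a mistake in round $t$, and since the learner cannot recover $|S_t|$ from its partial reads, the revealed value $b_t=p(S_t)$ gives it no leverage in later rounds.

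Finally, I would carry out the mistake count. The construction produces one mistake per round and remains available as long as enough untouched coordinates survive to hide mass in, i.e.\ until the learner has read essentially all $4^n-1$ non-identity coordinates; bounding the per-round read budget by $Q=\min_t q^{(t)}(n)$ and dividing the $4^n-1$ hideable coordinates by this budget yields the claimed $\tfrac{4^n-1}{Q}$ mistakes. The main obstacle I anticipate is making the \emph{adaptive} reading rigorous in tandem with realizability: one must show that the learner's value-dependent read sequence is faithfully reproduced by the adversary's simulation on the all-absent completion, and that the single fixed uniform channel stays consistent with every revealed $b_t$ while still leaving a mistake-forcing choice of $S_t$ each round. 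A secondary point is extending from deterministic to randomized polynomial-time learners, which I would handle by conditioning on the internal randomness (applying the deterministic argument per fixed random tape and averaging), together with the careful cumulative-read accounting needed to land exactly on the $\tfrac{4^n-1}{Q}$ bound.
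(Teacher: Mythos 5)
Your construction is sound and reaches the stated bound, but it takes a genuinely different route from the paper's. The paper's adversary never commits to a channel: it plays the all-zeros challenge in every round, always answers $b_t=\neg y_t$, and only \emph{retroactively} exhibits a consistent story, namely a point-mass error-rate vector $\Vec{p}^{\ast}$ together with modified challenges $\tilde{\Vec{e}}^{(t)}$ that each have a single $1$-entry (realized by $E^{(t)}=\Gamma^{\Vec{z}',\Vec{x}'}/4^n$) hidden at an index the learner never queried in any round; the requirement that such an index survive all $T$ rounds is exactly what produces the $\frac{4^n-1}{Q}$ count. You instead fix the channel up front (uniform over the non-identity Paulis), use subset-mass probes $E_S=\frac{1}{4^n}\sum_{(\Vec{z},\Vec{x})\in S}\Gamma^{\Vec{z},\Vec{x}}$ --- which you correctly verify are valid test operators with $\Tr[E_S C^{\mathcal{P}}]=p(S)$ --- and choose $S_t$ adaptively, disjoint from the learner's (simulated) read set $R_t$, so that $p(S_t)$ lands far from the learner's forced output. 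This buys you something the paper's argument does not make manifest: realizability is immediate (one fixed channel, honest challenges, honest feedback), whereas the paper's ``retroactive consistency'' requires the reader to accept that the adversary may defer its choice of channel and challenges to the end of the game. Your observation about why single-index probes would fail is correct \emph{within your framework} (a fixed channel can put mass $>\varepsilon$ on only $O(1/\varepsilon)$ coordinates), but note it does not indict the paper's use of single-spike challenges, since there the channel is chosen retroactively as a point mass.

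One piece of bookkeeping deserves attention. To force a mistake in round $t$ you only need $S_t\cap R_t=\emptyset$ for the \emph{current} round's reads; nothing requires $S_t$ to avoid coordinates read in earlier rounds, since each round carries a fresh challenge vector and consistency with the fixed uniform channel is maintained by honestly reporting $b_t=p(S_t)$. Under that reading your adversary forces a mistake in \emph{every} round indefinitely (for $\varepsilon<1/2-o(1)$), which is strictly stronger than the theorem and subsumes the $\frac{4^n-1}{Q}$ bound. Your final paragraph, however, imposes a cumulative-read restriction (``until the learner has read essentially all $4^n-1$ coordinates'') that is both unnecessary and, if actually enforced, would \emph{not} yield $\frac{4^n-1}{Q}$: hiding mass exceeding $\hat b_t+\varepsilon\approx 1/2$ under the uniform distribution requires more than half of the $4^n-1$ coordinates to remain cumulatively untouched, so the count would degrade to roughly $\frac{(1/2-\varepsilon)(4^n-1)}{Q}$. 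You should either drop the cumulative restriction and state the stronger per-round conclusion, or switch to the paper's retroactive point-mass channel if you want the division-by-$Q$ accounting to land exactly on $4^n-1$. The randomized-learner extension you sketch (condition on the random tape, average) is fine and goes beyond what the paper addresses.
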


\begin{proof}
    Recall that for any unknown Pauli channel with associated Choi representation $N_{A,B}$ predicting $\Tr[E_{A,B}^{(t)}N_{A,B}]$ (the task of the online learner) for a given channel observable $E_{A,B}^{(t)}$ is exactly equivalent to predicting
    $\Vec{e}^{(t)}\cdot \Vec{p}$, where $\Vec{p}$ is the (unknown) error-rate distribution and the vector $\Vec{e}^{(t)}=(e_{\Vec{z},\Vec{x}}^{(t)})_{\Vec{z},\Vec{x}\in\{0,1\}^n}$ has entries $e_{\Vec{z},\Vec{x}}^{(t)} \coloneqq \Tr[E_{A,B}^{(t)}\Gamma_{A,B}^{\Vec{z},\Vec{x}}]$ for all $\Vec{z},\Vec{x}\in\{0,1\}^n$.
    
    We work in a simplified setting of a $\{0,1\}$-valued game, in which the learner does not have to evaluate entries $e_{\Vec{z},\Vec{x}}^{(t)} \coloneqq \Tr[E_{A,B}^{(t)}\Gamma_{A,B}^{\Vec{z},\Vec{x}}]$ of the vector $\Vec{e}^{(t)}=(e_{\Vec{z},\Vec{x}}^{(t)})_{\Vec{z},\Vec{x}\in\{0,1\}^n}$ for all $\Vec{z},\Vec{x}\in\{0,1\}^n$; the learner directly receives the vector $\Vec{e}^{(t)}$ (which we shall refer to as the `challenge' vector), instead of the test operators $E_{A,B}^{(t)}$. The learning task is still not obviously computationally tractable, because the challenge vector is of size $4^n$. We prove that a polynomial-time learner, who by definition only looks at polynomially-many entries of $\Vec{e}^{(t)}$, can be forced to make $\tilde{\Omega}(4^n)$ many mistakes by a simple adversarial strategy. The adversarial strategy is to always play $\Vec{e}^{(t)}=\Vec{0}=(0,0,\dotsc,0)$, i.e., the all-zeros vector, corresponding to $E_{A, B}^{(t)} = 0$.\footnote{Note that we need to ensure that there is always a channel observable $E_{A, B}^{(t)}$ that realizes $\Vec{e}^{(t)}$ in each round, because the input to the online learning game is defined with respect to $E_{A, B}^{(t)}$.} We now show that by using the same all-zeros challenge vector in every round, the adversary can always contradict the learner's prediction, and thereby claim that they made a mistake.
    In any $T$-round interaction with the adversary, let the (deterministic) learner predict $y_t \in\{0,1\}$ in round $t\in\{1,2,\dotsc,T\}$. In response, the adversary claims the correct answer to be $b_t=\neg y_t$. In other words, the adversary always contradicts the learner's prediction, and thereby forces the learner to make a mistake. Note that this contradictory feedback indeed constitutes a mistake when we take the loss function to be $\ell_t(y_t) \coloneqq |y_t - b_t|$, because $\ell_t(y_t) = |y_t - (\neg y_t)| = 1$ for all $t \in\{1,2,\dotsc,T\}$.

    Let us now prove that after the end of the $T$ rounds, the adversary can claim to be ``consistent'', despite their contradictory feedback, as long as it does not contradict the entries that the learner has seen. In other words, after the end of the $T$ rounds, the adversary can always exhibit a $\Vec{p}^{\ast}\in\Delta_{4^n}$ and challenge vectors $\tilde{\Vec{e}}^{(t)}$ such that, for all $t\in \{1,2,\ldots,T\}$, $\neg y_t=\Vec{p}^{\ast}\cdot\tilde{\Vec{e}}^{(t)}$ and such that $\tilde{\Vec{e}}^{(t)}$ has $0$-entries at the positions of the challenge vector that the online learner accessed in round $t$. The key to proving this is the fact that the learner is computationally bounded, and therefore, they can only query polynomially-many entries of $\Vec{e}^{(t)}$ in every round.
    
    In any given round, the adversarial feedback, $\neg y_t$, is either $0$ or $1$. For $\neg y_t = 0$, the adversary claims to actually have played the all-zero challenge vector. For the rounds where the adversary claimed $\neg y_t = 1$, the adversary needs to demonstrate that, while the learner only saw all zeroes in such rounds, there was in fact a non-zero entry in $\Vec{e}^{(t)}$ (that the learner did not look at) and that this entry leads to $\Vec{p}^{\ast}\cdot \Vec{e}^{(t)} = 1$. In fact, as we argue below, it suffices for the adversary to only claim that $\Vec{e}^{(t)}$ had a single non-zero entry. 
    Vectors $\Vec{e}^{(t)}$ with this structure are indeed realized in our learning scenario by choosing $E_{A, B}^{(t)} = \Gamma_{A,B}^{\Vec{z}',\Vec{x}'}/4^n$ for some $\Vec{z}',\Vec{x}'\in\{0,1\}^n$, because $\{\Gamma_{A,B}^{\Vec{z},\Vec{x}}\}_{\Vec{z},\Vec{x}\in\{0,1\}^n}$ forms an orthogonal basis and $\Tr[(\Gamma_{A,B}^{\Vec{z},\Vec{x}})^2] = 4^n$. Thus, choosing $E_{A,B}^{(t)} = \Gamma_{A,B}^{\Vec{z}',\Vec{x}'}/4^n$ ensures that $\Vec{e}^{(t)}$ is has $4^n - 1$ entries equal to $0$ and a single entry $e_{\Vec{z},\Vec{x}}$ equal to $1$, when $\Vec{z} = \Vec{z'}$ and $\Vec{x} = \Vec{x'}$. 

Let us partition the set $R\coloneqq\{1,2,\dotsc,T\}$ into two disjoint subsets as $R=R_0\cup R_1$, where $R_0$ is the set of time steps in which the adversary claimed the correct answer to be $0$ and $R_1$ is the set of time steps in which they claimed the correct answer to be $1$. Let $q^{(t)}(n)$ be the polynomial number of entries of $\Vec{e}^{(t)}$ accessed by the learner at time step $t$. Also, let $I_t\subseteq\{0,1\}^n\times\{0,1\}^n$ be the indices corresponding to the entries of $\Vec{e}^{(t)}$ accessed by the learner in round $t$. To be consistent, the adversary sets to $0$ all entries of $\Vec{e}^{(t)}$ and $\Vec{p}^{\ast}$ corresponding to the indices in $(\bigcup_{t\in R_0}I_{t})\cup(\bigcup_{t\in R_1}I_{t})$, i.e., all entries that the learner saw. Then, retroactively, for the rounds in which they claimed $1$ to be the correct answer, the adversary can always claim that the $1$-entry in both $\Vec{e}^{(t)}$ and $\Vec{p}^{\ast}$ was an entry that the learner never saw, i.e., an entry whose index is in $\left((\bigcup_{t\in R_0}I_{t})\cup(\bigcup_{t\in R_1}I_{t})\right)^c$. It is sufficient that there exists at least one such index. Then, this adversarial strategy works as long as the number of entries seen by the learner does not exceed $4^n-1$ (in order to account for at least one entry that the learner has not seen). In other words, 
\begin{align}
    4^{n} - 1 
    &\geq \sum_{t\in R_0} q^{(t)}(n) +  \sum_{t\in R_1} q^{(t)}(n) \\
    \nonumber
    &\geq q_0(n)|R_0| + q_1(n)|R_1| \\
    \nonumber
    &\geq QT,
\end{align}
where $q_0(n) \coloneqq \min\{q^{(t)}(n):t\in R_0\}$ and $q_1(n) \coloneqq \min\{q^{(t)}(n):t\in R_1\}$, and the final inequality holds because $q_0(n)\geq Q$, $q_1(n)\geq Q$ (recall that $Q\coloneqq\min\{q^{(t)}(n):t\in\{1,2,\dotsc,T\}\}$), and $|R_0|+|R_1|=T$. Thus, the adversary can force the learner to make mistakes for $\frac{4^n - 1}{Q}$ many rounds.
\end{proof}

\begin{remark}
    The computational complexity of MWU (\Cref{alg:mwu_general}) for online learning convex combinations of $K$ known channels scales polynomially with $K$, which in the worst case could be exponential in the number of qubits, as is the case for general Pauli channels. If, however, the learner is given challenge vectors $\Vec{e}^{(t)}$ with entries $e_{\Vec{z},\Vec{x}}^{(t)} \coloneqq \Tr[E_{A,B}^{(t)}\Gamma_{A,B}^{\Vec{z},\Vec{x}}]$ that are $\text{poly}(n)$-sparse (with known sparsity structure), then an online learner can learn such a channel computationally efficiently and also saturate optimal regret and mistake bounds using MWU (\Cref{alg:mwu_general}). A relevant example (from quantum error correction) of a class of channels that can be written as convex combinations of polynomially many known channels is that of polynomially-sparse Pauli channels with a known sparsity structure. Hence, our results imply that this class of channels is computationally efficiently online learnable with regret and mistake bounds that scale with $\log(n)$.~\qedgen
\end{remark}

\begin{remark}
    The proof strategy in \Cref{theorem:computational-hardness-pauli-channel-online-learning} straightforwardly implies that any polynomial time learner for online learning quantum \textit{states} (in the sense defined in Ref.\ \cite{ACH+19}) can be forced to make exponentially many $\varepsilon$-mistakes as long as the `challenge' effect operators admit exponentially long descriptions. To see this, write, in its spectral decomposition, any $n$-qubit state $\rho = \sum_{i=1}^{2^n} p_i \ketbra{\psi_i}{\psi_i}$ that a learner wishes online learn. For the lower bound, it suffices to work in a simpler scenario in which the learner knows the eigenbasis $\{\ket{\psi_i}\}_{i=1}^{2^n}$ in advance but not the eigenvalues $\{p_i\}_{i=1}^{2^n}$. For every effect operator $E$, we have $\Tr[\rho E] = \sum_{i=1}^{2^n} p_i \bra{\psi_i}E\ket{\psi_i}$. Defining a vector $\Vec{e}=(e_i)_{i=1}^{2^n}$ with entries $e_i = \bra{\psi_i}E\ket{\psi_i}$, we can rewrite this as $\Tr[\rho E] = \Vec{p}\cdot \Vec{e}$, where $\Vec{p}=(p_i)_{i=1}^{2^n}$. Since $\Vec{e}$ is still an exponentially long challenge vector, the strategy in the proof of \Cref{theorem:computational-hardness-pauli-channel-online-learning} suffices to make the online learner make exponentially many mistakes.~\qedgen
\end{remark}

\subsection{Computational complexity lower bounds for channels of bounded complexity}\label{sec:computational-lower-bounds-bounded-gate-comlexity}

The online learner for $\mathsf{CPTP}_{n,G}$ presented in \Cref{section:online-learning-bounded-complexity} achieves good regret and $\varepsilon$-mistake bounds, but is computationally inefficient. In this section, we prove that, under a widely held cryptographic hardness assumption, namely hardness of \textsf{RingLWE}, there cannot be a computationally efficient online learning algorithm for $\mathsf{CPTP}_{n,G}$ that achieves favorably scaling mistake bounds. Via \Cref{lemma:regret-to-mistake-template}, this also implies that good regret bounds for online learning $\mathsf{CPTP}_{n,G}$ cannot be achieved computationally efficiently, but we again focus on mistake bounds here. Our proof, which is conceptually analogous to arguments in \cite{zhao2023learning, mele2024efficient}, is yet another instance of the well known fact that pseudorandom functions cannot be learned efficiently, which we phrase in an online learning framework.

First, we recall the definition of pseudorandom functions.

\begin{definition}[Pseudorandom functions (PRFs)~\cite{goldreich1986construct}]\label{definition:prfs}
    Let $\lambda$ be a security parameter.
    Let $\mathcal{K} = \{\mathcal{K}_\lambda\}_{\lambda\in \mathbb{N}}$ be key space, assumed to be efficiently sampleable.
    Let $\mathcal{X} = \{\mathcal{X}_\lambda\}_{\lambda\in \mathbb{N}}, \{\mathcal{Y}_\lambda\}_{\lambda\in\mathbb{N}}$ be families of finite sets.
    Let $\mathcal{F} = \{f_\lambda\}_{\lambda\in\mathbb{N}}$ be a family of efficiently-computable functions $f_\lambda: \mathcal{K}_\lambda \times \mathcal{X}_\lambda \to \mathcal{Y}_\lambda$, where the input from $\mathcal{K}_\lambda$ corresponds to the function key.
    The family $\mathcal{F}$ is a \emph{pseudorandom function (family) secure against (classical) $t(\lambda)$-time adversaries} if for every $t(\lambda)$-time probabilistic algorithm $\mathsf{Adv}$, there exists a negligible function $\mathsf{negl}(\cdot)$---that is, a function satisfying $\frac{\mathsf{negl}(\lambda)}{p(\lambda)} = o(1)$ for every polynomial $p$---such that, for every security parameter $\lambda\in\mathbb{N}$, it holds that
    \begin{equation}
        \left|\Pr_{\mathbf{k} \sim \mathcal{K}_\lambda}[\mathsf{Adv}^{f(\mathbf{k}, \cdot)}(\cdot) = 1] - \Pr_{g \sim \mathcal{Y}_\lambda^{\mathcal{X}_\lambda}}[\mathsf{Adv}^{g}(\cdot) = 1]\right| \leq \mathsf{negl}(\lambda),
    \end{equation}
    where the key $\mathbf{k}$ is drawn uniformly at random from $\mathcal{K}_\lambda$ and $g$ is drawn uniformly at random from the set of all functions from $\mathcal{X}_\lambda$ to $\mathcal{Y}_\lambda$.
    Here, we use $\mathsf{Adv}$ with a function superscript to mean the action of the algorithm $\mathsf{Adv}$ when given oracle access to that function.~\qedgen
\end{definition}

Typically, the runtime $t(\cdot)$ of interest in this definition is taken to be polynomial. However, other choices of $t(\cdot)$ are possible.

Next, we formalize the hardness of learning PRFs in the context of online learning:

\begin{theorem}[Hardness of learning PRFs in online learning]\label{theorem:no-efficient-online-learning-prfs}
    Take the security parameter to be $\lambda=n$.
    Let $\mathcal{F}=\{f_\lambda\}_{\lambda\in\mathbb{N}}$ be a PRF that is secure against classical $\mathcal{O}(t(n))$-time adversaries. 
    Let $\Delta:\mathbb{N}\to\mathbb{N}$ be a polynomial and let $p:\mathbb{N}\to\mathbb{N}$ be a function such that $p(n),\ln (\frac{\Delta(n)}{\Delta(n)-1})\leq \mathcal{O}(t(n))$.
    Suppose $\mathcal{G}\subseteq [0,1]^{\{0,1\}^n}$ is a function class such that $\mathcal{F}\subseteq \mathcal{G}$ and $\ln\left(\mathsf{N}_{T} (\mathcal{G}, 1/6, \infty)\right)\leq p(n)$. 
    There exists no classical $\mathcal{O}(t(n))$-time algorithm for properly online learning $\mathcal{G}$ with at most $6(p(n) + \left\lceil \ln (\frac{\Delta(n)}{\Delta(n)-1})\right\rceil)$ many $(1/3)$-mistakes in an online game with $18(p(n) + \left\lceil \ln (\frac{\Delta(n)}{\Delta(n)-1})\right\rceil)$ rounds.
\end{theorem}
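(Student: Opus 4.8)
The plan is to prove the contrapositive via the standard "pseudorandom functions are not efficiently learnable" reduction, recast in the mistake-bounded online model. Suppose, for contradiction, that there is a classical $\mathcal{O}(t(n))$-time proper online learner $\mathcal{A}$ for $\mathcal{G}$ that, in any realizable game with $T := 18(p(n)+c)$ rounds (where $c := \lceil \ln(\frac{\Delta(n)}{\Delta(n)-1})\rceil$), makes at most $M := 6(p(n)+c) = T/3$ many $(1/3)$-mistakes. I would turn $\mathcal{A}$ into a classical $\mathcal{O}(t(n))$-time distinguisher $\mathsf{Adv}$ that, given oracle access to a Boolean $h:\{0,1\}^n\to\{0,1\}$, decides whether $h$ is a PRF instance $f(\mathbf{k},\cdot)$ or a uniformly random $g$, with advantage at least $1/\Delta(n)$. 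As $\Delta$ is a polynomial, $1/\Delta(n)$ is non-negligible, contradicting security of $\mathcal{F}$ against $\mathcal{O}(t(n))$-time adversaries.

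The distinguisher simulates $\mathcal{A}$ for $T$ rounds. In round $t$ it presents a fresh, pairwise-distinct input $x_t$ (e.g.\ the binary encoding of $t$, valid whenever $T\leq 2^n$, which holds since $T=\mathcal{O}(t(n))$ is $\mathrm{poly}(n)$ in the relevant regime), queries $b_t := h(x_t)$, records $\mathcal{A}$'s prediction $\hat{y}_t\in[0,1]$, feeds back the loss $\ell_t(\cdot)=\lvert(\cdot)-b_t\rvert$, and tallies a $(1/3)$-mistake whenever $\lvert \hat{y}_t - b_t\rvert > 1/3$; it outputs $1$ iff the number of mistakes is at most $M$. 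Case (i): if $h = f(\mathbf{k},\cdot)$, then $f_\ast := f(\mathbf{k},\cdot)\in\mathcal{F}\subseteq\mathcal{G}$ realizes every label exactly, i.e.\ $b_t = f_\ast(x_t)$ so $\lvert b_t - f_\ast(x_t)\rvert = 0\leq 1/9$, the run is realizable, and by assumption $\mathcal{A}$ makes at most $M$ mistakes, so $\mathsf{Adv}$ outputs $1$ with probability $1$. Case (ii): if $h=g$ is uniformly random, then since the $x_t$ are distinct, each $g(x_t)$ is a fresh fair coin independent of the history $H_{t-1}$ and hence of $\hat{y}_t$; as $\lvert \hat{y}_t - 0\rvert\leq 1/3$ and $\lvert \hat{y}_t - 1\rvert\leq 1/3$ are mutually exclusive for any $\hat{y}_t\in[0,1]$, at most one value of $g(x_t)$ avoids a mistake, whence the conditional mistake probability is at least $1/2$.

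The core estimate is then a concentration bound in case (ii). Writing $B_t$ for the mistake indicator, $\Pr[B_t = 1\mid H_{t-1}]\geq 1/2$, so the non-mistake count is stochastically dominated by $\mathrm{Bin}(T,1/2)$, and Hoeffding's inequality yields $\Pr[\text{number of mistakes}\leq M] = \Pr[\text{number of non-mistakes}\geq 2T/3]\leq e^{-2(1/6)^2 T} = e^{-T/18} = e^{-(p(n)+c)}$. By the choice of $c$ we have $e^{-c}\leq \frac{\Delta(n)-1}{\Delta(n)} = 1 - \frac{1}{\Delta(n)}$, so $\Pr_{g}[\mathsf{Adv}=1]\leq e^{-p(n)}(1-\frac{1}{\Delta(n)})\leq 1 - \frac{1}{\Delta(n)}$. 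Combining the two cases, the distinguishing advantage is $\Pr_{\mathbf{k}}[\mathsf{Adv}=1]-\Pr_{g}[\mathsf{Adv}=1]\geq \frac{1}{\Delta(n)}$. I would then verify efficiency: the hypotheses $p(n),\ln(\frac{\Delta(n)}{\Delta(n)-1})\leq\mathcal{O}(t(n))$ force $T=\mathcal{O}(t(n))$, running $\mathcal{A}$ costs $\mathcal{O}(t(n))$, and the per-round oracle query and comparison add only polynomial overhead, so $\mathsf{Adv}$ is $\mathcal{O}(t(n))$-time.

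I expect the main obstacle to be the concentration step under adaptivity: the mistake indicators are not independent, so one must argue from the conditional bound $\Pr[B_t=1\mid H_{t-1}]\geq 1/2$ via stochastic domination (or a supermartingale/Azuma argument) rather than a naive i.i.d.\ Chernoff bound, while tracking constants so that $T=18(p(n)+c)$ and $M=6(p(n)+c)$ land exactly on advantage $\geq 1/\Delta(n)$. I note that the covering-number hypothesis $\ln(\mathsf{N}_{T}(\mathcal{G},1/6,\infty))\leq p(n)$ does not enter the reduction itself; rather, combined with \Cref{theorem:regret-bound-sequential-covering} and \Cref{lemma:regret-to-mistake-template}, it certifies that $M$ mistakes are information-theoretically achievable by an (inefficient) learner, so that the theorem rules out efficient learning precisely at the achievable mistake budget.
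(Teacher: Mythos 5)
Your reduction is correct and reaches the stated conclusion, but the core probabilistic estimate is handled by a genuinely different argument than the paper's. The paper uses properness of the learner to replace its adaptive prediction sequence by an existential quantifier over $\mathcal{G}^T$, then takes a union bound over a sequential $\infty$-norm $(1/6)$-cover of size at most $e^{p(n)}$, and applies an i.i.d.\ Chernoff--Hoeffding bound to each \emph{fixed} element of the cover; this yields $\mathbb{P}_f[\text{at most } m \text{ mistakes}]\leq e^{p(n)-T/18}$, so the covering-number hypothesis is load-bearing and forces $T\geq 18(p(n)+c)$. You instead condition on the history, observe that for any prediction $\hat y_t\in[0,1]$ at most one of the two labels avoids a $(1/3)$-mistake so the conditional mistake probability is at least $1/2$, and conclude by stochastic domination/Azuma with the bound $e^{-T/18}$. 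Your route buys three things: it does not use properness, it does not use the hypothesis $\ln N_T(\mathcal{G},1/6,\infty)\leq p(n)$ at all (as you correctly note), and it is quantitatively stronger ($e^{-T/18}$ versus $e^{p(n)-T/18}$), so it would rule out the same relative mistake budget $T/3$ already in games of length $18\lceil\ln(\Delta(n)/(\Delta(n)-1))\rceil$ rather than $18(p(n)+c)$; since the theorem's budget and round count are fixed, your bound subsumes the paper's. Your anticipated obstacle is handled correctly: the supermartingale/domination step is exactly what is needed in place of a naive i.i.d.\ Chernoff bound, since the mistake indicators are adaptive. Two minor points to make explicit in a final write-up: the pairwise-distinct challenges require $T\leq 2^n$ (harmless in the intended regime $t(n)=\mathrm{poly}(n)$, and the paper makes the same implicit assumption), and for randomized learners one should note that $\hat y_t$ is measurable with respect to the history and the learner's internal randomness, both independent of the fresh label $g(x_t)$.
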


\Cref{theorem:no-efficient-online-learning-prfs} in particular says: If a hypothesis class of interest has polynomial sequential metric entropies and contains a class of PRFs secure against polynomial-time adversaries, then that class cannot be efficiently online learned with polynomially many $\mathcal{O}(1)$-mistakes.

\begin{proof}
    Suppose for contradiction that $\mathcal{A}$ is an  $\mathcal{O}(t(n))$-time algorithm for properly online learning $\mathcal{G}$ with at most $m(n)=6(p(n) + \left\lceil \ln (\frac{\Delta(n)}{\Delta(n)-1})\right\rceil)$ many $(1/3)$-mistakes in an online learning game with at least $T(n)=3m(n)$ many rounds. Note that $T(n)\leq\mathcal{O}(t(n))$, by our assumptions on $p$ and $\Delta$.
    We then construct a procedure for distinguishing between a random element of $\mathcal{F}$ and a truly random function with success probability $\geq 1/\Delta(n)$. Namely, we define $\mathcal{D}$, when given query access to a function $f$, to act as follows: First, simulate an online learning game between the learner $\mathcal{A}$ and an adversary that uses an arbitrary sequence of pairwise distinct challenges $x_1,\ldots,x_T\in\{0,1\}^n$ and the corresponding true values $f(x_1),\ldots, f(x_T)$. Second, if $\mathcal{A}$ made at most $m(n)$ many $(1/3)$-mistakes, $\mathcal{D}$ outputs ``$f\in\mathcal{F}$'', otherwise $\mathcal{D}$ outputs ``$f$ truly random''.
    
    Let us analyze the success probability of $\mathcal{D}$. On the one hand, if $f\in\mathcal{F}\subseteq\mathcal{G}$, then the simulated online learning game takes place in a realizable scenario. So $\mathcal{A}$ makes at most $p(n)$ many $(1/3)$-mistakes by assumption, and thus $\mathcal{D}$ correctly outputs ``$f\in\mathcal{F}$''.
    On the other hand, suppose $f$ is chosen as a truly random function from $\{0,1\}^n$ to $\{0,1\}$. 
    As $\mathcal{A}$ is assumed to be proper, we know that for any $1\leq m\leq T(n)$,
    \begin{multline}
        \mathbb{P}_f[\mathcal{A}\textrm{ makes }\leq m\textrm{ many }(1/3)\textrm{-mistakes}]
        \\\leq \mathbb{P}_f[\exists g_1,\ldots,g_{T(n)}\in\mathcal{G}: |\{|f(x_t)- g_t(x_t)|> 1/3\}|\leq m]\, .
    \end{multline}
    
    Next, notice that by the definition of sequential covering with $p=\infty$, if we let $V=V(\mathbf{x})$ be a smallest sequential $\infty$-norm $(1/6)$-cover for $\mathcal{G}$, where $\mathbf{x}$ is a complete rooted binary tree of depth $T(n)$ such that there is a path $\pi$ with $\mathbf{x}_t(\pi)=x_t$ for all $1\leq t\leq T(n)$, 
    then for any $g_1,\ldots,g_{T(n)}\in\mathcal{G}$, there exists a $\mathbf{v}\in V$ such that $|v_t(\pi) - g_t(x_t)| = | v_t(\pi) - g_t(\pi) |\leq 1/6$. So, by the triangle inequality and a union bound,
    \begin{align}
        &\mathbb{P}_f[\exists g_1,\ldots,g_{T(n)}\in\mathcal{G}: |\{|f(x_t)- g_t(x_t)|> 1/3\}|\leq m]\nonumber\\
        &\qquad\leq \mathbb{P}_f[\exists \mathbf{v}\in V: |\{|f(x_t)-\mathbf{v}_t(\pi)|>1/6\}|\leq m]\nonumber \\
        &\qquad\leq \sum_{\mathbf{v}\in V} \mathbb{P}_f[|\{|f(x_t)-\mathbf{v}_t(\pi)|>1/6\}|\leq m]\, .
    \end{align}
    As the $x_1,\ldots,x_T$ are pairwise distinct and $f$ is a random function, the values $f(x_1),\ldots, f(x_T)$ are independent Bernoulli random variables, each with parameter $1/2$. 

    Hence, for any fixed $\mathbf{v}$ and for any $0\leq m\leq \frac{T(n)}{2}$, the probability that at most $m$ of the predictions made by $\mathbf{v}$ are $(1/6)$-mistakes is
    \begin{align}
        \mathbb{P}_f[|\{|f(x_t)- \mathbf{v}_t (\pi)|>1/6\}|\leq m]
        =\mathbb{P}[\mathrm{Binom}(T(n), 1/2)\leq m]
        \leq \exp\left( -\frac{2(\frac{T(n)}{2}-m)^2}{T(n)}\right)\, ,
    \end{align}
    where we have used a Chernoff-Hoeffding bound.
    Plugging this into our previous bound, we see that
    \begin{align}
        \mathbb{P}_f[\mathcal{A}\textrm{ makes }\leq m(n)\textrm{ many }(1/3)\textrm{-mistakes}]
        &\leq |V|\cdot \exp\left( -\frac{2(\frac{T(n)}{2}-m(n))^2}{T(n)}\right)\\
        \nonumber
        &\leq \mathsf{N}_{T(n)} (\mathcal{G}, 1/6, \infty) \cdot \exp\left( -\frac{T(n)}{18}\right)\\
        \nonumber
        &\leq \exp\left( p(n) -\frac{T(n)}{18}\right)\\
        \nonumber
        &\leq 1 - \frac{1}{\Delta (n)}\, ,\,
    \end{align}
    where we have used $T(n)=3m(n)$ and our choice of $m(n)$. Thus, in the case of a truly random $f$, the distinguisher $\mathcal{D}$ correctly outputs ``$f$ truly random'' with probability $\geq \frac{1}{\Delta (n)}$. As $\Delta$ is by assumption polynomial, this means that $\mathcal{D}$ successfully distinguishes between pseudorandom and random with non-negligible success probability.

    Thus, to complete the proof by contradiction (to the pseudorandomness guarantee required in \Cref{definition:prfs}), it remains to argue that $\mathcal{D}$ runs in time $\mathcal{O}(t(n))$. This can be seen as follows. On the one hand, $\mathcal{D}$ plays the online learning game. Here, the learning side takes time $\mathcal{O}(t(n))$ by assumption, and the adversary side takes time $\mathcal{O}(T(n))$, since oracle queries take unit time. On the other hand, $\mathcal{D}$ checks how many mistakes the online learner makes, which takes time $\mathcal{O}(T(n))$. Thus, the overall time taken by $\mathcal{D}$ is $\mathcal{O}(t(n) + T(n))\leq\mathcal{O}(t(n))$.
\end{proof}

\begin{remark}
    Notice that in the proof of \Cref{theorem:no-efficient-online-learning-prfs}, the only property of the challenges $x_1,\ldots,x_T$ that mattered to the argument was that they are chosen to be pairwise distinct. Hence, this line of reasoning can be extended to learner-adversary interactions in which the learner, rather than the adversary, actively chooses pairwise distinct inputs to the unknown function.
\end{remark}

We now apply \Cref{theorem:no-efficient-online-learning-prfs} for our scenario of online learning bounded-complexity quantum channels.
To obtain efficiently implementable PRFs, we make a common hardness assumption, namely the hardness of the \emph{ring learning with errors} (\textsf{RingLWE}) problem \cite{lyubashevsky2010ideal}. 
Note that despite our online learning problems revolving around function classes coming from quantum physics, the classes and learners under consideration are all classical. Therefore, we only assume classical hardness of \textsf{RingLWE}.

\begin{corollary}[Computational hardness]\label{corollary:computational-hardness-bounded-complexity}
    Take the security parameter to be $\lambda = n$.
    Let $\mathsf{CPTP}_{n,G}$ be the class of all $n$-qubit channels of gate complexity at most $G$. 
    If there is no classical polynomial-time algorithm for solving \textsf{RingLWE}, then already for $G=\mathcal{O}(n\mathrm{polylog}(n))$ there exists no (classical) polynomial-time algorithm for properly online learning $\mathsf{CPTP}_{n,G}$ with at most polynomially many $(1/3)$-mistakes, even under the promise that all challenges consist of input states and output effect operators given by rank-$1$ projections on computational basis elements (without an auxiliary system).
\end{corollary}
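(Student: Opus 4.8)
The plan is to invoke \Cref{theorem:no-efficient-online-learning-prfs} with the hypothesis class $\mathcal{G}=\mathsf{CPTP}_{n,G}$, viewed (as in \Cref{subsection:prob-statement-online-learning-qchannels}) as a class of $[0,1]$-valued functions on challenges built from computational-basis states and effects, together with a pseudorandom function family $\mathcal{F}$ obtained from \textsf{RingLWE}. Three ingredients must be assembled: a \textsf{RingLWE}-based PRF of near-linear classical circuit size; an embedding of this PRF into $\mathsf{CPTP}_{n,G}$ that respects the promised challenge structure; and the sequential metric entropy bound on $\mathsf{CPTP}_{n,G}$ coming from \Cref{corollary:sequential-covering-number-gate-complexity}.

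First I would recall that, assuming classical hardness of \textsf{RingLWE}, there is a PRF family $\mathcal{F}=\{f_\lambda\}_{\lambda}$ secure against classical polynomial-time adversaries whose members $f_\lambda(\mathbf{k},\cdot)\colon\{0,1\}^n\to\{0,1\}$ are computable by Boolean circuits of size $\mathcal{O}(n\,\mathrm{polylog}(n))$, via the known low-complexity lattice PRF constructions. I would then embed each PRF member into a bounded-complexity channel exactly as in the proof of \Cref{corollary:mistake-lower-bound-channels}: given a key $\mathbf{k}$, define the $n$-qubit channel $\mathcal{N}_{f(\mathbf{k},\cdot)}$ that maps a computational-basis input $\ket{x}$ to $\ket{f(\mathbf{k},x)}\otimes\ket{0^{n-1}}$. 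Implementing $f(\mathbf{k},\cdot)$ reversibly with Toffoli gates and qubit resets, as in the alternative proof of \Cref{corollary:mistake-lower-bound-bounded-complexity}, realizes $\mathcal{N}_{f(\mathbf{k},\cdot)}$ with $\mathcal{O}(n\,\mathrm{polylog}(n))$ two-qubit channel gates, so $\mathcal{N}_{f(\mathbf{k},\cdot)}\in\mathsf{CPTP}_{n,G}$ for $G=\mathcal{O}(n\,\mathrm{polylog}(n))$. With the test operators $E_{A,B}(x)=\ketbra{x}{x}\otimes(\ketbra{1}{1}\otimes\ketbra{0^{n-1}}{0^{n-1}})$ I would verify $\Tr[E_{A,B}(x)C^{\mathcal{N}_{f(\mathbf{k},\cdot)}}_{A,B}]=f(\mathbf{k},x)$, so the induced $[0,1]$-valued function restricted to these challenges coincides with $f(\mathbf{k},\cdot)$. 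This gives $\mathcal{F}\subseteq\mathcal{G}$, and every challenge is a rank-$1$ projection onto a computational-basis element with no auxiliary system, matching the promise in the statement.

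For the third ingredient I would supply the entropy bound directly from \Cref{corollary:sequential-covering-number-gate-complexity} at $\varepsilon=1/6$: $\ln N_{T}(\mathsf{CPTP}_{n,G},1/6,\infty)\leq\ln\!\big(\binom{n}{2}^{G}(36G)^{512G}\big)=\mathcal{O}(G\log(Gn))=\mathcal{O}(n\,\mathrm{polylog}(n))\leq p(n)$ for a suitable polynomial $p$. Taking $\Delta\equiv 2$, both $p(n)$ and $\ln(\tfrac{\Delta(n)}{\Delta(n)-1})=\ln 2$ are bounded by $\mathcal{O}(t(n))$ for a polynomial $t$, so \Cref{theorem:no-efficient-online-learning-prfs} applies and yields that no classical polynomial-time algorithm properly online learns $\mathsf{CPTP}_{n,G}$ with polynomially many $(1/3)$-mistakes. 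Since the induced challenges are exactly the promised rank-$1$ computational-basis projections, and since the PRF is secure even against query-access adversaries, the conclusion is insensitive to whether the (pairwise distinct) challenges are posed by the adversary or actively chosen by the learner.

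The main obstacle I anticipate is the first ingredient: obtaining a \textsf{RingLWE} PRF whose \emph{classical} circuit size is genuinely $\mathcal{O}(n\,\mathrm{polylog}(n))$, so that the reversible quantum simulation stays within $G=\mathcal{O}(n\,\mathrm{polylog}(n))$ two-qubit channel gates rather than a larger polynomial. Establishing the quasi-linear circuit-size (and the precise polylogarithmic overhead from the reversible Toffoli-with-reset implementation) is the delicate point, as these factors are exactly what fix the final $G=\mathcal{O}(n\,\mathrm{polylog}(n))$; the remaining steps are routine given the results already established in the excerpt.
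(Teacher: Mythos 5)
Your overall skeleton matches the paper's: reduce to \Cref{theorem:no-efficient-online-learning-prfs} by combining the sequential metric entropy bound from \Cref{corollary:sequential-covering-number-gate-complexity} with an embedding of a \textsf{RingLWE}-based PRF into $\mathsf{CPTP}_{n,G}$ via computational-basis challenges. The entropy bound, the choice of parameters for \Cref{theorem:no-efficient-online-learning-prfs}, and the observation that the challenges respect the promised rank-$1$ computational-basis structure are all handled correctly.

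However, the ingredient you flag as ``the delicate point'' is a genuine gap, and it is exactly where your route diverges from (and falls short of) the paper's. You need the PRF members $f(\mathbf{k},\cdot)$ to be computable by \emph{classical Boolean circuits of size} $\mathcal{O}(n\,\mathrm{polylog}(n))$, so that the reversible Toffoli-with-reset simulation lands inside $\mathsf{CPTP}_{n,G}$ for $G=\mathcal{O}(n\,\mathrm{polylog}(n))$. No such quasi-linear classical circuit bound is known for the \textsf{RingLWE}-based PRFs, and you do not supply one. The paper circumvents this entirely: it uses the fact that the Banerjee--Peikert--Rosen PRF is computable in $\mathsf{TC}^0$ (constant depth, \emph{polynomial} size, with unbounded fan-in MAJORITY gates), and then invokes a result stating that any $\mathsf{TC}^0$ function on $m$ bits can be implemented by a quantum circuit on $n=\mathcal{O}(\mathrm{poly}(m))$ qubits of size $\mathcal{O}(n\,\mathrm{polylog}(n))$. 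The crucial move is that the number of qubits $n$ is allowed to grow polynomially in the PRF input length $m$, so the quantum gate count only needs to be quasi-linear in this larger $n$ --- a much weaker requirement than quasi-linear classical circuit complexity in the input length. Your argument, as written, cannot be completed without either proving the quasi-linear classical circuit bound (not available) or switching to the $\mathsf{TC}^0$-based quantum implementation route, with the security parameter identified with the (polynomially inflated) qubit count.

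The remaining differences are cosmetic: you embed via the non-unitary channel $\mathcal{N}_f$ of \Cref{corollary:mistake-lower-bound-channels} rather than the unitary $U_f$, and you take $\Delta\equiv 2$ rather than $\Delta(n)=3$; both choices are compatible with \Cref{theorem:no-efficient-online-learning-prfs} and with the promise on the challenges.
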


\begin{proof}
    First, recall from \Cref{corollary:sequential-covering-number-gate-complexity} that the sequential metric entropies of $\mathsf{CPTP}_{n,G}$ satisfy the bound $\ln N_T(\mathsf{CPTP}_{n,G}, 1/6, \infty)\leq \mathcal{O}(G \log(Gn))$, which scales at most polynomially in $n$ if $G$ does.
    Thus, to apply \Cref{theorem:no-efficient-online-learning-prfs}, it remains to argue that $\mathsf{CPTP}_{n,G}$ contains a suitable PRF family. We combine results from prior work to obtain such a PRF from the assumed hardness of \textsf{RingLWE}.
    Namely, Ref.~\cite[Theorem 5.3]{banerjee2012pseudorandom} shows that polynomial-time hardness of (decision-)\textsf{RingLWE} with suitable parameters (see 
    Ref.~\cite{lyubashevsky2010ideal, banerjee2012pseudorandom} for more context and a formal discussion) gives rise to a PRF family $\mathcal{RF}=\{f_\lambda \}_{\lambda\in\mathbb{N}}$ on $m=\omega(\log(\lambda))$-bit inputs that is secure against polynomial-time classical adversaries.
    Moreover, as shown in 
    Ref.~\cite[Lemma 3.16]{arunachalam2021quantum}, every $f_\lambda\in\mathcal{RF}$ can be computed by a $\mathsf{TC}^0$ circuit, that is, by a constant-depth, polynomial-size circuit consisting of AND, OR, NOT, and MAJORITY gates with unbounded fan-in.
    As shown in Ref.~\cite[Proposition 2]{zhao2023learning}, if $f:\{0,1\}^m\to\{0,1\}$ can be implemented by a $\mathsf{TC}^0$ circuit, then there is a quantum circuit on $n=\mathcal{O}(\mathrm{poly}(m))$ qubits with size $\mathcal{O}(n\mathrm{polylog}(n))$ that implements the unitary $U_f$ acting as $U_f\ket{x}\ket{b} = \ket{x}\ket{b\oplus f(x)}$ (ignoring auxiliary qubits). 
    Consequently, choosing the security parameter as $\lambda=n$, every function in $\mathcal{RF}$ can be implemented by a unitary quantum circuit of size $\mathcal{O}(n\mathrm{polylog}(n))$, and thus $G=\mathcal{O}(n\mathrm{polylog}(n))$ suffices to guarantee the inclusion $\mathcal{RF}\subseteq \mathsf{CPTP}_{n,G}$. 
    (Note: Here, we slightly abused notation by not explicitly restricting $\mathsf{CPTP}_{n,G}$ to the input space of $\mathcal{RF}$, which can be embedded into the Boolean hypercube.)

    Hence, for $G=\mathcal{O}(n\mathrm{polylog}(n))$, we can apply \Cref{theorem:no-efficient-online-learning-prfs} with $\mathcal{F}=\mathcal{RF}$, $\mathcal{G}=\mathsf{CPTP}_{n,G}$ (again not writing out the restriction of the input space), $\Delta (n)=3$, $p(n)=\mathcal{O}(G \log(Gn))$, and $t(n)=\mathrm{poly}(n)$. This yields the claimed result.
\end{proof}

As can be seen in the proof of \Cref{corollary:computational-hardness-bounded-complexity}, the hardness assumption on \textsf{RingLWE} is ``only'' used to obtain a PRF class implementable by relatively small circuits. Therefore, one may replace this widely believed assumption about the hardness of a concrete problem~\cite{regev2009lattices, ananth2023revocable, aggarwal2023lattice} by a more abstract cryptographic assumption on the existence of PRFs implementable by small circuits, and the above line of reasoning can still be applied.

\section{Shadow tomography of quantum processes}\label{sec:shadow_tomography}

For quantum states, shadow tomography~\cite{Aar17, badescu2021improved, king2024triply} is the task of using few copies of an unknown state to predict the expectation values of $M$ effect operators, which may be chosen adaptively/adversarially. In this section, we consider the analogous problem for quantum processes, starting with quantum channels and then going to multi-time processes. 

\begin{problem}[Shadow tomography of quantum channels]\label{problem:shadow-tomography}
    Let $\mathcal{N}_{A\to B}\in\mathsf{CPTP}_n$ be an (unknown) $n$-qubit channel, and let $\varepsilon,\delta>0$. When sequentially presented with any adversarially chosen sequence of two-outcome test operators, $E_{A,B}^{(1)}, E_{A,B}^{(2)},\dotsc, E_{A,B}^{(M)}$, for $M \in \mathbb{N}$, return quantities $b_i\in\mathbb{R}$ such that $|b_i - \Tr[E_{A,B}^{(i)}C_{A,B}^{\mathcal{N}}]|\leq \varepsilon$ for all $i\in\{1,2,\dotsc,M\}$ with probability at least $1-\delta$. Do this by querying the channel $k$ times (adaptively or in parallel), with $k$ being as small as possible.~\qedgen
\end{problem}

Embedding classical functions into quantum channels similarly to \Cref{subsection:mistake-lower-bounds-general}, one can see that in the case of general quantum channels, no non-trivial shadow tomography strategy---achieving a query complexity that is simultaneously sublinear in $M$ and polynomial in $n$---is possible. Therefore, we again have to consider restricted classes of channels.
We primarily focus on Pauli channels and Pauli multi-time processes, for which we introduce a shadow tomography scheme via classical adaptive data analysis~\cite{dwork2015adaptivedataanalysis,bassily2016adaptivedataanalysis} that requires few measurements (of the Choi state) of the unknown process. 

\begin{theorem}[Shadow tomography of Pauli channels]\label{theorem:Pauli-shadow-tomography}
    There exists an explicit strategy that solves \Cref{problem:shadow-tomography} for any $n$-qubit Pauli channel using 
    \begin{equation}
        k = \mathcal{O}\left(\frac{\sqrt{n}\log (M) \log^{3/2}((\varepsilon \delta)^{-1}) }{\varepsilon^3}\right)
    \end{equation}
    copies of the channel. The strategy runs in time $\mathrm{poly}(4^n, k)$ per query.
\end{theorem}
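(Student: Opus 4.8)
The plan is to reduce shadow tomography of an $n$-qubit Pauli channel to the classical problem of answering adaptively chosen statistical queries against the unknown Pauli error-rate distribution $\Vec{p}$, and then to import the sample-complexity guarantees of adaptive data analysis from Ref.~\cite{bassily2016adaptivedataanalysis}.

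First I would set up the sampling primitive. I prepare the maximally entangled vector $\ket{\Phi}$ on a reference-input pair $A'A$, send $A$ through the unknown Pauli channel $\mathcal{P}$ to produce $B$, and measure $A'B$ in the Bell basis $\{\Phi^{\Vec{z},\Vec{x}}\}_{\Vec{z},\Vec{x}}$. By \eqref{eq-Pauli_channel_Choi_rep}, the Choi state $\Phi(\mathcal{P})=\sum_{\Vec{z},\Vec{x}}p_{\Vec{z},\Vec{x}}\Phi^{\Vec{z},\Vec{x}}$ is diagonal in this basis, so the outcome $(\Vec{z},\Vec{x})$ occurs with probability exactly $p_{\Vec{z},\Vec{x}}$. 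Hence each of the $k$ uses of the channel yields one i.i.d.\ draw from $\Vec{p}$, and the preparation, evolution, and Bell measurement are implementable in time $\mathrm{poly}(4^n)$.

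Next I would recast each test operator as a statistical query. For a channel test operator $E_{A,B}^{(i)}$ the target value is
\begin{equation}
    \Tr[E_{A,B}^{(i)} C_{A,B}^{\mathcal{P}}] = \sum_{\Vec{z},\Vec{x}\in\{0,1\}^n} p_{\Vec{z},\Vec{x}}\,\Tr[E_{A,B}^{(i)}\Gamma_{A,B}^{\Vec{z},\Vec{x}}] = \mathbb{E}_{(\Vec{z},\Vec{x})\sim\Vec{p}}\!\left[\phi_i(\Vec{z},\Vec{x})\right],
\end{equation}
where $\phi_i(\Vec{z},\Vec{x})\coloneqq \Tr[E_{A,B}^{(i)}\Gamma_{A,B}^{\Vec{z},\Vec{x}}]$. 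As already verified in the proof of \Cref{theorem:pauli-regret}, one has $\phi_i(\Vec{z},\Vec{x})\in[0,1]$ for every $\Vec{z},\Vec{x}$, so each $\phi_i$ is a bounded statistical query on the domain $\{0,1\}^n\times\{0,1\}^n$ of size $4^n$ (its empirical version on $k$ samples has sensitivity $1/k$). Because the operators $E_{A,B}^{(i)}$ are revealed one at a time and may depend adversarially on previous answers, these are \emph{adaptively} chosen queries. Thus the task becomes: answer $M$ adaptively chosen statistical queries to additive accuracy $\varepsilon$ with failure probability at most $\delta$, using $k$ i.i.d.\ samples from $\Vec{p}$. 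Running the (private) multiplicative-weights-based mechanism of Ref.~\cite{bassily2016adaptivedataanalysis} on the $k$ Bell samples, with $\phi_1,\ldots,\phi_M$ supplied online, answers all queries within accuracy $\varepsilon$ and confidence $1-\delta$ once
\begin{equation}
    k = \mathcal{O}\!\left(\frac{\sqrt{\log(4^n)}\,\log M\,\log^{3/2}((\varepsilon\delta)^{-1})}{\varepsilon^3}\right),
\end{equation}
and since $\log(4^n)=\Theta(n)$ this yields the claimed $\sqrt{n}$ factor. For the runtime, the mechanism maintains and updates an explicit distribution over the $4^n$-element domain and evaluates each query pointwise, each step costing $\mathrm{poly}(4^n)$, which together with the per-sample cost gives the stated $\mathrm{poly}(4^n,k)$ per query.

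The main obstacle, and the step deserving the most care, is matching the parameters of the adaptive-data-analysis transfer theorem to the quantity we actually control. Concretely, I would need to (i) confirm that the private multiplicative-weights mechanism applies to $[0,1]$-valued (rather than $\{0,1\}$-valued) statistical queries at the sensitivity scaling claimed; (ii) track how the differential-privacy parameters of the mechanism translate, via the generalization-from-stability transfer theorem, into the $(\varepsilon,\delta)$ accuracy-confidence guarantee, accounting for the $1/\varepsilon^3$ and $\log^{3/2}((\varepsilon\delta)^{-1})$ factors; and (iii) verify that it is the domain-size dependence $\sqrt{\log(4^n)}=\Theta(\sqrt{n})$ that appears, rather than a $\sqrt{M}$ dependence from a naive sample-splitting mechanism, since it is precisely this choice of mechanism that produces the favorable $\log M$ scaling alongside the $\sqrt{n}$ factor. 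By contrast, the two quantum ingredients---exactness of Bell sampling and the $[0,1]$ range of the queries---are essentially immediate from the Bell-diagonal structure of the Pauli Choi state and the earlier computation in \Cref{theorem:pauli-regret}.
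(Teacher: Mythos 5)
Your proposal is correct and follows essentially the same route as the paper's proof: Bell sampling on the Choi state to obtain i.i.d.\ draws from the Pauli error-rate distribution, recasting each test operator as a bounded statistical query $\phi_i(\Vec{z},\Vec{x})=\Tr[E_{A,B}^{(i)}\Gamma_{A,B}^{\Vec{z},\Vec{x}}]$, and invoking the adaptive data analysis guarantee of Ref.~\cite{bassily2016adaptivedataanalysis} (the paper cites Corollary~6.3 therein). Your additional discussion of parameter matching for the transfer theorem is more detailed than the paper's, which simply applies that corollary directly.
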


\begin{proof}
    Let $A$ and $B$ be $n$-qubit systems, and consider a Pauli channel $\mathcal{P}_{A\to B}$ as in \eqref{eq-pauli_channel}, with error-rate vector $\Vec{p}=(p_{\Vec{z},\Vec{x}})_{\Vec{z},\Vec{x}\in\{0,1\}^n}$. Consider also test operators $E_{A,B}^{(i)}$, for $i\in\{1,2,\dotsc,M\}$. For every such operator, we have
    \begin{align}
        \Tr[E_{A,B}^{(i)}C_{A,B}^{\mathcal{P}}]&=\sum_{\Vec{z},\Vec{x}\in\{0,1\}^n}p_{\Vec{z},\Vec{x}}\Tr[E_{A,B}^{(i)}\Gamma_{A,B}^{\Vec{z},\Vec{x}}]=\mathbb{E}_{(\Vec{z},\Vec{x})\sim\Vec{p}}[e_{\Vec{z},\Vec{x}}^{(i)}],\label{eq:PShadow-equiv-SQ}\,
    \end{align}
    where we have defined $e_{\Vec{z},\Vec{x}}^{(i)}\coloneqq\Tr[E_{A,B}^{(i)}\Gamma_{A,B}^{\Vec{z},\Vec{x}}]$.
    From this, we can see that every desired expectation value is exactly the expectation value of the function $(\Vec{z},\Vec{x})\mapsto e_{\Vec{z},\Vec{x}}^{(i)}=\Tr[E_{A,B}^{(i)}\Gamma_{A,B}^{\Vec{z},\Vec{x}}]$ with respect to the error-rate probability distribution $\Vec{p}$ of the unknown Pauli channel $\mathcal{P}_{A\to B}$. Now, we can obtain samples from the error-rate distribution by performing Bell measurements on the Choi state. Specifically, to obtain one sample, we prepare a $(2n)$-qubit maximally-entangled state $\Phi_{A,A'}=\ketbra{\Phi}{\Phi}_{A,A'}$ (recall \eqref{eq-n_qubit_Bell_states}), send the $A'$ system through the channel, and then measure systems $A$ and $B$ with respect to the Bell basis POVM $\{\Phi^{\Vec{z},\Vec{x}}\}_{\Vec{z},\Vec{x}\in\{0,1\}^n}$. Note that this indeed amounts to measuring the Choi state of the channel with respect to the Bell basis POVM. Then, using the definition of the $n$-qubit Bell states in \eqref{eq-n_qubit_Bell_states}, the probability of obtaining an outcome $(\Vec{z},\Vec{x})$ is given by
    \begin{align}
        \Tr[\Phi_{A,B}^{\Vec{z},\Vec{x}}(\id_A\otimes\mathcal{P}_{A'\to B})(\Phi_{A,A'})]&=\Tr[\Phi_{A,B}^{\Vec{z},\Vec{x}}\Phi_{A,B}^{\mathcal{P}}]\\
        \nonumber
        &=\sum_{\Vec{z}',\Vec{x}'\in\{0,1\}^n}p_{\Vec{z}',\Vec{x}'}\Tr[\Phi_{A,B}^{\Vec{z},\Vec{x}}\Phi_{A,B}^{\Vec{z}',\Vec{x}'}]\\
         \nonumber
        &=p_{\Vec{z},\Vec{x}},
    \end{align}
    for all $\Vec{z},\Vec{x}\in\{0,1\}^n$. 
    
    We can now combine \eqref{eq:PShadow-equiv-SQ} with the ability to sample from the error-rate distribution obtained by Bell measurements to make use of known results in \textit{classical} adaptive data analysis~\cite{dwork2015adaptivedataanalysis,bassily2016adaptivedataanalysis}. In classical data analysis, the goal is to answer a sequence of adaptively chosen queries $q_1,q_2,\dotsc,q_M$ with answers $b_1,b_2,\dotsc,b_M$, such that $|b_i-q_i(\Vec{p})|\leq\varepsilon$ for all $i\in\{1,2,\dotsc,M\}$, given $k$ samples from the underlying (unknown) probability distribution $\Vec{p}$. This setting precisely matches our setting of Pauli channel shadow tomography, by recognizing that the underlying distribution $\Vec{p}$ can be taken to be the error-rate distribution of the unknown Pauli channel, and the queries $q_i$ can be taken to be $q_i(\Vec{p})\equiv\mathbb{E}_{(\Vec{z},\Vec{x})\sim\Vec{p}}[e_{\Vec{z},\Vec{x}}^{(i)}]$, $i\in\{1,2,\dotsc,M\}$. Then, in the regime $M\gg k$, we make direct use of Ref.~\cite[Corollary~6.3]{bassily2016adaptivedataanalysis} to obtain our desired result.
\end{proof}

We highlight that \Cref{theorem:Pauli-shadow-tomography} for restricted/approximate shadow tomography of Pauli channels can be used to perform shadow tomography of arbitrary channels, essentially by applying \Cref{theorem:Pauli-shadow-tomography} to the Pauli twirled version of the channel. The upshot is that our bounds scale with the diamond norm distance between the unknown channel and its corresponding Pauli twirled version.

    \begin{corollary}[Shadow tomography of arbitrary quantum channels]\label{corollary:shadow_tomography_channels}
         Let $\mathcal{N}\in\mathsf{CPTP}_n$ be an arbitrary quantum channel. There exists an explicit strategy that solves Problem~\ref{problem:shadow-tomography} for $\mathcal{N}$ using
         \begin{equation}
            k = \mathcal{O}\left(\frac{\sqrt{n}\log(M) \log^{3/2}(((\varepsilon - \frac{1}{2}\norm{\mathcal{N}-\mathcal{N}^{\mathsf{P}}}_{\diamond}) \delta)^{-1})}{(\varepsilon - \frac{1}{2}\norm{\mathcal{N}-\mathcal{N}^{\mathsf{P}}}_{\diamond})^3}\right)
        \end{equation}
        copies of $\mathcal{N}$, where $\mathcal{N}^{\mathsf{P}}$ is the Pauli twirled version of $\mathcal{N}$ and $\varepsilon>\frac{1}{2}\norm{\mathcal{N}-\mathcal{N}^{\mathsf{P}}}_{\diamond}$.
    \end{corollary}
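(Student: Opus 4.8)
The plan is to reduce the problem to \Cref{theorem:Pauli-shadow-tomography} applied to the Pauli-twirled channel $\mathcal{N}^{\mathsf{P}}$, paying only a bias that is controlled by $\frac{1}{2}\norm{\mathcal{N}-\mathcal{N}^{\mathsf{P}}}_{\diamond}$. The crucial observation is that the Bell-sampling subroutine used in the proof of \Cref{theorem:Pauli-shadow-tomography} behaves identically whether the input is an arbitrary channel $\mathcal{N}$ or its Pauli twirl. Indeed, sending one half of $\Phi_{A,A'}$ through $\mathcal{N}$ and measuring $\{\Phi^{\Vec{z},\Vec{x}}\}_{\Vec{z},\Vec{x}}$ jointly on the output $B$ and the retained system $A$ yields outcome $(\Vec{z},\Vec{x})$ with probability $\Tr[\Phi_{A,B}^{\Vec{z},\Vec{x}}\Phi_{A,B}^{\mathcal{N}}]=\frac{1}{d}\Tr[\Phi^{\Vec{z},\Vec{x}}C(\mathcal{N})]$, which by \eqref{eq-Pauli_twirl_pinching} is exactly the error rate $p^{\mathsf{P}}_{\Vec{z},\Vec{x}}$ of $\mathcal{N}^{\mathsf{P}}$. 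Hence Bell sampling $\mathcal{N}$ produces i.i.d.\ samples from the error-rate distribution $\Vec{p}^{\mathsf{P}}$ of the \emph{Pauli} channel $\mathcal{N}^{\mathsf{P}}$, exactly as if we had Bell-sampled $\mathcal{N}^{\mathsf{P}}$ directly.

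With this sampling access in hand, I would run the classical adaptive data analysis procedure underlying \Cref{theorem:Pauli-shadow-tomography} verbatim, but with the target accuracy set to $\varepsilon'\coloneqq\varepsilon-\frac{1}{2}\norm{\mathcal{N}-\mathcal{N}^{\mathsf{P}}}_{\diamond}$, which is positive by hypothesis. Since the queries $q_i(\Vec{p}^{\mathsf{P}})=\mathbb{E}_{(\Vec{z},\Vec{x})\sim\Vec{p}^{\mathsf{P}}}[e_{\Vec{z},\Vec{x}}^{(i)}]=\Tr[E_{A,B}^{(i)}C_{A,B}^{\mathcal{N}^{\mathsf{P}}}]$ are statistical queries against $\Vec{p}^{\mathsf{P}}$, this returns, with probability at least $1-\delta$, values $b_i$ obeying $\Abs{b_i-\Tr[E_{A,B}^{(i)}C_{A,B}^{\mathcal{N}^{\mathsf{P}}}]}\leq\varepsilon'$ for all $i\in\{1,2,\dotsc,M\}$, using the claimed number $k$ of copies (the formula of \Cref{theorem:Pauli-shadow-tomography} with $\varepsilon$ replaced by $\varepsilon'$).

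It then remains to convert an estimate of the twirled expectation value into one for $\mathcal{N}$ itself via the triangle inequality. Writing each test operator via its physical realization, $\Tr[E_{A,B}^{(i)}C_{A,B}^{\mathcal{N}}]=\Tr[M_{R,B}^{(i)}\mathcal{N}_{A\to B}(\rho_{R,A}^{(i)})]$ with an effect operator $0\leq M_{R,B}^{(i)}\leq\mathbbm{1}$ and a normalized state $\rho_{R,A}^{(i)}$, the bias equals $\Tr[M_{R,B}^{(i)}(\id_R\otimes(\mathcal{N}-\mathcal{N}^{\mathsf{P}}))(\rho_{R,A}^{(i)})]$. The operator $X\coloneqq(\id_R\otimes(\mathcal{N}-\mathcal{N}^{\mathsf{P}}))(\rho_{R,A}^{(i)})$ is Hermitian and \emph{traceless}, because both channels are trace preserving, and satisfies $\norm{X}_1\leq\norm{\mathcal{N}-\mathcal{N}^{\mathsf{P}}}_{\diamond}$ by definition of the diamond norm. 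For a traceless Hermitian $X$ and $0\leq M\leq\mathbbm{1}$ one has $\Abs{\Tr[MX]}\leq\frac{1}{2}\norm{X}_1$, so $\Abs{\Tr[E_{A,B}^{(i)}(C_{A,B}^{\mathcal{N}}-C_{A,B}^{\mathcal{N}^{\mathsf{P}}})]}\leq\frac{1}{2}\norm{\mathcal{N}-\mathcal{N}^{\mathsf{P}}}_{\diamond}$. Combining this with the previous paragraph yields $\Abs{b_i-\Tr[E_{A,B}^{(i)}C_{A,B}^{\mathcal{N}}]}\leq\varepsilon'+\frac{1}{2}\norm{\mathcal{N}-\mathcal{N}^{\mathsf{P}}}_{\diamond}=\varepsilon$, as required.

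The main subtlety I expect is obtaining the factor of $\frac{1}{2}$ in the bias bound rather than the weaker $\norm{\mathcal{N}-\mathcal{N}^{\mathsf{P}}}_{\diamond}$ that a direct application of the Hölder inequality \eqref{eq-Holder_inequality_strategy_norm} (using only $\norm{E}_{\diamond 1}^{\ast}\leq 1$) would give. This improvement relies on reverting from the abstract channel test operator $E^{(i)}$ to its realization as a genuine two-outcome effect operator $M^{(i)}$ and on exploiting that $\mathcal{N}-\mathcal{N}^{\mathsf{P}}$ is trace annihilating; the standard decomposition $X=X_+-X_-$ into orthogonally supported positive parts with $\Tr[X_\pm]=\frac{1}{2}\norm{X}_1$ then forces $\Tr[MX]\in[-\tfrac{1}{2}\norm{X}_1,\tfrac{1}{2}\norm{X}_1]$. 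Everything else is bookkeeping: the copy count is inherited from \Cref{theorem:Pauli-shadow-tomography} under $\varepsilon\mapsto\varepsilon'$, and the per-query runtime $\mathrm{poly}(4^n,k)$ is unchanged, since the only additional work is evaluating the same $4^n$-dimensional challenge vectors $\Vec{e}^{(i)}$ and estimating their means against the Bell-sampling data.
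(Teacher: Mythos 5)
Your proposal is correct and follows essentially the same route as the paper: Bell sampling of the Choi state yields the error-rate distribution of $\mathcal{N}^{\mathsf{P}}$, the Pauli channel shadow tomography of \Cref{theorem:Pauli-shadow-tomography} is run at the reduced accuracy $\varepsilon-\frac{1}{2}\norm{\mathcal{N}-\mathcal{N}^{\mathsf{P}}}_{\diamond}$, and the triangle inequality absorbs the twirling bias. The only (valid) cosmetic difference is in how the bound $\sup_{E}\Abs{\Tr[E(C^{\mathcal{N}}-C^{\mathcal{N}^{\mathsf{P}}})]}\leq\frac{1}{2}\norm{\mathcal{N}-\mathcal{N}^{\mathsf{P}}}_{\diamond}$ is justified: the paper directly invokes the SDP characterization of the diamond norm, whereas you re-derive it from the physical realization of the test operator together with the tracelessness of $(\id_R\otimes(\mathcal{N}-\mathcal{N}^{\mathsf{P}}))(\rho_{R,A})$.
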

    
    \begin{proof}
        We perform a Bell measurement on the Choi state of the unknown channel. The measurement probabilities define the error-rate vector of the Pauli-twirled version of the channel, based on the developments in Appendix~\ref{sec-Pauli_twirl_quantum_channel}. Then, we make use of the triangle inequality as follows, for an arbitrary $b\in\mathbb{R}$ and arbitrary channel test operator, to get
        \begin{align}
            \Abs{b-\Tr[E_{A,B}C_{A,B}^{\mathcal{N}}]}&\leq \Abs{b-\Tr[E_{A,B}C_{A,B}^{\mathcal{N}^{\mathsf{P}}}]}+\Abs{\Tr[E_{A,B}C_{A,B}^{\mathcal{N}^{\mathsf{P}}}]-\Tr[E_{A,B}C_{A,B}^{\mathcal{N}}]}\\
             \nonumber
            &\leq \Abs{b-\Tr[E_{A,B}C_{A,B}^{\mathcal{N}^{\mathsf{P}}}]}+\sup_{E_{A,B}}\Abs{\Tr[E_{A,B}C_{A,B}^{\mathcal{N}^{\mathsf{P}}}]-\Tr[E_{A,B}C_{A,B}^{\mathcal{N}}]}\\
             \nonumber
            &=\Abs{b-\Tr[E_{A,B}C_{A,B}^{\mathcal{N}^{\mathsf{P}}}]}+\frac{1}{2}\norm{\mathcal{N}-\mathcal{N}^{\mathsf{P}}}_{\diamond},
             \nonumber
        \end{align}
        where in the final equality we made use of the fact that (see, e.g., \cite[Section~4]{Wat09})
        \begin{align}
            \frac{1}{2}\norm{\mathcal{N}-\mathcal{M}}_{\diamond}&=\sup_{\substack{E_{A,B}\geq 0\\\sigma_A\geq 0}}\Big\{\Tr[E_{A,B}(C_{A,B}^{\mathcal{N}}-C_{A,B}^{\mathcal{M}})]:E_{A,B}\leq\sigma_A\otimes\mathbbm{1}_B,\,\Tr[\sigma_A]=1\Big\}\\
            &=\sup_{\substack{E_{A,B}\geq 0\\\sigma_A\geq 0}}\Big\{\Abs{\Tr[E_{A,B}(C_{A,B}^{\mathcal{N}}-C_{A,B}^{\mathcal{M}})]}:E_{A,B}\leq\sigma_A\otimes\mathbbm{1}_B,\,\Tr[\sigma_A]=1\Big\},
             \nonumber
        \end{align}
        for arbitrary channels $\mathcal{N}$ and $\mathcal{M}$. Finally, as we assume $\varepsilon - \frac{1}{2}\norm{\mathcal{N}-\mathcal{N}^{\mathsf{P}}}_{\diamond} > 0$, can run the Pauli channel shadow tomography from \Cref{theorem:Pauli-shadow-tomography} with accuracy parameter  $\tilde{\varepsilon} = \varepsilon - \frac{1}{2}\norm{\mathcal{N}-\mathcal{N}^{\mathsf{P}}}_{\diamond}$ to get the desired approximation guarantee $\Abs{b-\Tr[E_{A,B}C_{A,B}^{\mathcal{N}}]} \leq \varepsilon$ with the claimed sample complexity bounds.
    \end{proof}

\subsection{Shadow tomography of multi-time processes}

    By analogy with the shadow tomography problem for quantum channels, we can formulate the problem of shadow tomography for multi-time quantum processes.

    \begin{problem}[Shadow tomography of multi-time quantum processes]\label{problem:shadow_tomography_multi_time_processes}
        Let $r\in\{1,2,\dotsc\}$, let $N\in\mathsf{COMB}_r$ be a comb operator corresponding to a multi-time quantum process with $r$ time steps, and let $\varepsilon,\delta>0$. When sequentially presented with any adversarially chosen sequence of two-outcome multi-time test operators $E^{(1)},E^{(2)},\dotsc,E^{(M)}$, for $M\in\{1,2,\dotsc\}$, return quantities $b_i\in\mathbb{R}$ such that $|b_i-\Tr[EN]|\leq\varepsilon$ for all $i\in\{1,2,\dotsc,M\}$ with probability at least $1-\delta$. Do this by querying the process $k$ times (adaptively or in parallel), with $k$ being as small as possible.~\qedgen
    \end{problem}

    By following arguments similar to those in the proof of Corollary~\ref{corollary:shadow_tomography_channels}, we can prove a shadow tomography result for arbitrary multi-time quantum processes as follows. In particular, this involves introducing the idea of Pauli-twirling a multi-time process, which we illustrate in Figure~\ref{fig:comb_twirl}(a).

    \begin{corollary}
        Let $r\in\{1,2,\dotsc\}$, let $N\in\mathsf{COMB}_r$ be a comb operator corresponding to a multi-time quantum process with $r$ time steps. There exists an explicit strategy that solves Problem~\ref{problem:shadow_tomography_multi_time_processes} for $N$ using
        \begin{equation}
            k = \mathcal{O}\left(\frac{\sqrt{nr}\log(M) \log^{3/2}(((\varepsilon - \frac{1}{2}\norm{N-N^{\mathsf{P}}}_{\diamond r}) \delta)^{-1})}{(\varepsilon - \frac{1}{2}\norm{N-N^{\mathsf{P}}}_{\diamond r})^3}\right),
        \end{equation}
        where $N^{\mathsf{P}}$ is the comb operator corresponding to the Pauli-twirled version of the multi-time process (see Figure~\ref{fig:comb_twirl}(a) for a depiction), and $\varepsilon>\frac{1}{2}\norm{N-N^{\mathsf{P}}}_{\diamond r}$.
    \end{corollary}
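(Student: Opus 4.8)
The plan is to mirror the proof of \Cref{corollary:shadow_tomography_channels}, replacing the diamond norm by the strategy $r$-norm and the single-time Bell-sampling step by its multi-time analogue. The key conceptual ingredient is a Pauli-twirled version $N^{\mathsf{P}}$ of the comb $N$, obtained by inserting uniformly random Pauli operators (and their inverses) at the input and output legs of the $r$ time steps and averaging; see Figure~\ref{fig:comb_twirl}(a). First I would show that $N^{\mathsf{P}}$ is again an element of $\mathsf{COMB}_r$ and that its Choi representation is diagonal in a comb-Bell basis, i.e., it can be written as a convex mixture $N^{\mathsf{P}}=\sum_{\Vec{s}} p_{\Vec{s}}\,\Gamma^{\Vec{s}}$ of ``Pauli comb'' basis elements $\Gamma^{\Vec{s}}$ indexed by strings $\Vec{s}\in\{0,1\}^{2nr}$, with the $p_{\Vec{s}}$ forming a probability distribution. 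This is the multi-time analogue of the pinching-channel description of the Pauli twirl recalled in Appendix~\ref{sec-Pauli_twirl_quantum_channel}, and the nested partial-trace structure defining $\mathsf{COMB}_r$ in \eqref{eq-strategies} is what guarantees that $N^{\mathsf{P}}$ remains a valid comb.

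Next I would establish the multi-time analogue of \Cref{theorem:Pauli-shadow-tomography} for Pauli combs. As in the single-time case, for any two-outcome multi-time test operator $E$ one has
\begin{equation}
    \Tr[E N^{\mathsf{P}}]=\sum_{\Vec{s}} p_{\Vec{s}}\,\Tr[E\,\Gamma^{\Vec{s}}]=\mathbb{E}_{\Vec{s}\sim\Vec{p}}\big[e_{\Vec{s}}\big],\qquad e_{\Vec{s}}\coloneqq\Tr[E\,\Gamma^{\Vec{s}}],
\end{equation}
so each target quantity is a statistical query with respect to $\Vec{p}$. The bounds $e_{\Vec{s}}\in[0,1]$ follow from $\norm{E}_{\diamond r}^{\ast}\leq 1$ together with the comb trace hierarchy, exactly as in the chain of equalities bounding $\Tr[E^{(t)}N_j]\leq 1$ in the proof of \Cref{theorem:regret-bound-multi-time-convex-combination}. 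Crucially, I would show that one can sample from $\Vec{p}$ via a multi-time Bell-sampling tester: prepare a maximally entangled state on each of the $r$ input legs (with references), route the halves through the process, and perform Bell-basis measurements on each reference--output pair. Verifying that this is a legitimate tester and that its outcome distribution equals $\Vec{p}$ is the multi-time counterpart of the Bell-measurement computation in \Cref{theorem:Pauli-shadow-tomography}. With samples from $\Vec{p}$ in hand, I would invoke the classical adaptive data analysis result \cite[Corollary~6.3]{bassily2016adaptivedataanalysis}; since $\Vec{p}$ now lives over a domain of size $4^{nr}$, the factor $\sqrt{n}=O(\sqrt{\log 4^{n}})$ appearing in \Cref{theorem:Pauli-shadow-tomography} is replaced by $\sqrt{nr}=O(\sqrt{\log 4^{nr}})$, yielding the stated sample complexity for Pauli combs.

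Finally, to handle an arbitrary comb $N$, I would note that Bell-sampling the Choi representation of $N$ produces samples precisely from the error-rate distribution of $N^{\mathsf{P}}$ (by the previous step applied to the twirl). A triangle-inequality argument then gives, for any $b\in\mathbb{R}$ and any test operator $E$,
\begin{align}
    \Abs{b-\Tr[E N]}&\leq\Abs{b-\Tr[E N^{\mathsf{P}}]}+\sup_{E}\Abs{\Tr[E(N^{\mathsf{P}}-N)]}\nonumber\\
    &=\Abs{b-\Tr[E N^{\mathsf{P}}]}+\tfrac{1}{2}\norm{N-N^{\mathsf{P}}}_{\diamond r},
\end{align}
where the final equality is the strategy-$r$-norm analogue of the trace-distance duality, expressing $\tfrac{1}{2}\norm{N-N^{\mathsf{P}}}_{\diamond r}$ as the supremum over test operators $E$ with $\norm{E}_{\diamond r}^{\ast}\leq 1$. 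Running the Pauli-comb shadow tomography with accuracy $\tilde{\varepsilon}=\varepsilon-\tfrac{1}{2}\norm{N-N^{\mathsf{P}}}_{\diamond r}>0$ then delivers the claimed guarantee $\Abs{b-\Tr[E N]}\leq\varepsilon$ with the stated $k$.

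I expect the main obstacle to be the multi-time Bell-sampling step: unlike the channel case, the Choi representation of a comb is not a normalized quantum state, so one must carefully set up the sampling procedure as a valid co-strategy-plus-measurement tester and verify --- using the nested trace conditions of \eqref{eq-strategies} and \eqref{eq-costrategies} --- that the outcome probabilities form a genuine normalized distribution equal to the Pauli error rates $\Vec{p}$. Establishing the strategy-norm duality used in the displayed triangle-inequality step is a secondary technical point that should follow from the definition of the dual strategy norm in \eqref{eq-strategy_norm_dual_main}.
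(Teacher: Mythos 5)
Your proposal is correct and follows essentially the same route as the paper: time-local Bell sampling of the Choi state (which yields the error-rate distribution of the Pauli-twirled comb, viewed as an $(nr)$-qubit Pauli channel), the classical adaptive data analysis bound giving the $\sqrt{nr}$ factor, and the triangle inequality with the strategy-$r$-norm duality to pass from $N^{\mathsf{P}}$ back to $N$ at accuracy $\varepsilon-\tfrac{1}{2}\norm{N-N^{\mathsf{P}}}_{\diamond r}$. The normalization worry you flag is resolved exactly as the paper does it: feeding one half of a maximally entangled state into every input leg produces the (normalized) Choi state $\tfrac{1}{2^{nr}}N$, so the Bell-measurement outcomes form a genuine probability distribution.
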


    \begin{figure}
        \centering
        \includegraphics[width=0.95\textwidth]{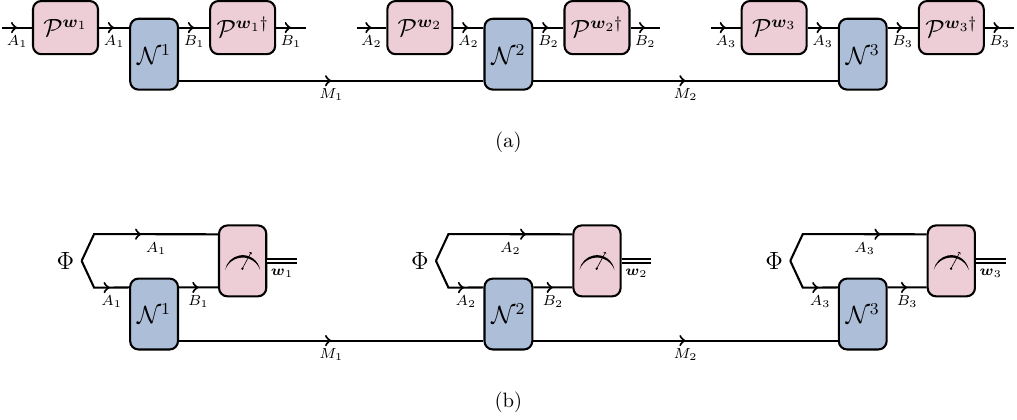}
        \caption{\textbf{Twirling of multi-time quantum processes.} (a) A ``time-local'' Pauli twirl of a multi-time quantum process with $r$ time steps consists of independently applying a random Pauli channel $\mathcal{P}^{\Vec{w}_k}(\cdot)\coloneqq P^{\Vec{w}_k}(\cdot)P^{\Vec{w}_k\dagger}$, where $\Vec{w}_k\equiv (\Vec{z}_k,\Vec{x}_k)\in\{0,1\}^n\times\{0,1\}^n$, to the input and output of every time step $k\in\{1,2,\dotsc,r\}$. (b) After twirling, the process is characterized by an error-rate probability vector, in the same way as Pauli channels. This error-rate vector can be obtained via time-local Bell measurements, as shown. The outcomes of the measurements are $\Vec{w}_1\equiv(\Vec{z}_1,\Vec{x}_1),\Vec{w}_2\equiv(\Vec{z}_2,\Vec{x}_2),\dotsc,\Vec{w}_r\equiv(\Vec{z}_r,\Vec{x}_r)$.}
        \label{fig:comb_twirl}
    \end{figure}

    \begin{proof}
        We proceed by performing ``time-local'' Bell measurements on the multi-time process; see Figure~\ref{fig:comb_twirl}(b). This means that, for every time step, we prepare a Bell state, send one-half of it through the process, and then measure the output system and the other-half of the Bell state in the $(2n)$-qubit Bell basis. Doing this once for each time step leads to measurement outcomes $\Vec{w}_1\equiv(\Vec{z}_1,\Vec{x}_1),\Vec{w}_2\equiv(\Vec{z}_2,\Vec{x}_2),\dotsc,\Vec{w}_r\equiv(\Vec{z}_r,\Vec{x}_r)$.
        The probability of any such collection of measurement outcomes is given by
        \begin{equation}
            p_{\Vec{z}_1,\Vec{x}_1,\Vec{z}_2,\Vec{x}_2,\dotsc,\Vec{z}_r,\Vec{x}_r}=\frac{1}{2^{nr}}\Tr\!\left[\left(\Phi_{A_1,B_1}^{\Vec{z}_1,\Vec{x}_1}\otimes\Phi_{A_2,B_2}^{\Vec{z}_2,\Vec{x}_2}\otimes\dotsb\otimes\Phi_{A_r,B_r}^{\Vec{z}_r,\Vec{x}_r}\right)N_{A_1,B_1,A_2,B_2,\dotsc,A_r,B_r}\right],
        \end{equation}
        which is due to the fact that applying one-half of a maximally-entangled state to every input of the process defines the Choi state of the process, which is equal to $\frac{1}{2^{nr}}N$; see Appendix~\ref{sec-multi_time_processes}.
        
        Let us now consider the Pauli-twirl of the process, as depicted in Figure~\ref{fig:comb_twirl}(a). By combining \Cref{lem-Choi_rep_twirling} and \Cref{prop-qudit_Pauli_channel_twirling}, we find that the comb operator $N^{\mathsf{P}}$ for the Pauli-twirled process is equal to
        \begin{align}
            N_{A_1,B_1,\dotsc,A_r,B_r}^{\mathsf{P}}&=(\mathcal{S}_{\mathsf{P}}\otimes\dotsb\otimes\mathcal{S}_{\mathsf{P}})(N_{A_1,B_1,\dotsc,A_r,B_r})\\
            &=\sum_{\Vec{z}_1,\Vec{x}_1,\dotsc,\Vec{z}_r,\Vec{x}_r\in\{0,1\}^n} \Tr\!\left[\left(\Phi_{A_1,B_1}^{\Vec{z}_1,\Vec{x}_1}\otimes\Phi_{A_2,B_2}^{\Vec{z}_2,\Vec{x}_2}\otimes\dotsb\otimes\Phi_{A_r,B_r}^{\Vec{z}_r,\Vec{x}_r}\right)N_{A_1,B_1,A_2,B_2\dotsc,A_r,B_r}\right]\nonumber\\
            &\qquad\qquad\qquad\qquad\qquad\qquad\times\Phi_{A_1,B_1}^{\Vec{z}_1,\Vec{x}_1}\otimes\Phi_{A_2,B_2}^{\Vec{z}_2,\Vec{x}_2}\otimes\dotsb\otimes\Phi_{A_r,B_r}^{\Vec{z}_r,\Vec{x}_r}\\
             \nonumber&=\sum_{\Vec{z}_1,\Vec{x}_1,\Vec{z}_2,\Vec{x}_2,\dotsc,\Vec{z}_r,\Vec{x}_r\in\{0,1\}^n}p_{\Vec{z}_1,\Vec{x}_1,\Vec{z}_2,\Vec{x}_2,\dotsc,\Vec{z}_r,\Vec{x}_r}\Gamma_{A_1,B_1}^{\Vec{z}_1,\Vec{x}_1}\otimes\Gamma_{A_2,B_2}^{\Vec{z}_2,\Vec{x}_2}\otimes\dotsb\otimes\Gamma_{A_r,B_r}^{\Vec{z}_r,\Vec{x}_r}.
             \nonumber
        \end{align}
        We now observe that the comb operator for the Pauli-twirled process can be thought of as simply the Choi representation of an $(nr)$-qubit Pauli channel. As such, we can apply \Cref{theorem:Pauli-shadow-tomography}. Furthermore, for an arbitrary $b\in\mathbb{R}$ and an arbitrary test operator $E$, we have
        \begin{align}
            \Abs{b-\Tr[EN]}&\leq\Abs{b-\Tr[EN^{\mathsf{P}}]}+\Abs{\Tr[EN^{\mathsf{P}}]-\Tr[EN]}\\
             \nonumber&\leq \Abs{b-\Tr[EN^{\mathsf{P}}]}+\sup_{E}\Abs{\Tr[EN^{\mathsf{P}}]-\Tr[EN]}\\
              \nonumber
            &=\Abs{b-\Tr[EN]}+\frac{1}{2}\norm{N-N^{\mathsf{P}}}_{\diamond r},
             \nonumber
        \end{align}
        where for the final equality we made use of the fact that
        \begin{align}
            \frac{1}{2}\norm{N-M}_{\diamond r}&=\sup_{E,S\geq 0}\Big\{\Tr[E(N-M)]:E\leq S\otimes\mathbbm{1}_{B_r},\,S\in\mathsf{COMB}_r^{\ast}\Big\}\\
            &=\sup_{E,S\geq 0}\Big\{\Abs{\Tr[E(N-M)]}:E\leq S\otimes\mathbbm{1}_{B_r},\,S\in\mathsf{COMB}_r^{\ast}\Big\},
             \nonumber
        \end{align}
        for arbitrary $N,M\in\mathsf{COMB}_r$. To conclude, we again invoke the Pauli channel shadow tomography of \Cref{theorem:Pauli-shadow-tomography}  $\varepsilon$ with accuracy $\varepsilon - \frac{1}{2}\norm{N-N^{\mathsf{P}}}_{\diamond r} > 0$ to obtain the desired approximation error $\Abs{b-\Tr[EN]} \leq \varepsilon$ with the claimed sample complexity bound.
    \end{proof}

\newpage

\bibliographystyle{unsrtnat}
\bibliography{refs_main} 

\begin{appendix}

\section{From qubits to qudits}\label{sec:Pauli}

    Although all of our results have been phrased for $n$-qubit systems and channels, they apply equally well to qudit systems and qudit channels. This essentially amounts to replacing all of the Pauli operators $P^{\Vec{z},\Vec{x}}$ with the \textit{Heisenberg--Weyl operators}, sometimes known as the \textit{qudit/generalized Pauli operators}. These operators are defined as ~\cite{Wat18_book}
    \begin{align}
        W^{z,x}&\coloneqq Z(z)X(x),\quad z,x\in\{0,1,\dotsc,d-1\},\\
        \nonumber
        Z(z)&\coloneqq \sum_{k=0}^{d-1} \e^{\frac{2\pi\I kz}{d}}\ketbra{k}{k},\\
         \nonumber
        X(x)&\coloneqq \sum_{k=0}^{d-1} \ketbra{k+x}{k},
         \nonumber
    \end{align}
    where the addition in the definition of $X(x)$ is performed modulo $d$, with $d\in\{2,3,\dotsc\}$. These operators are unitary and orthogonal, i.e.,
    \begin{equation}
        (W^{z,x})^{\dagger}W^{z,x}=W^{z,x}(W^{z,x})^{\dagger}=\mathbbm{1},\quad \inner{W^{z,x}}{W^{z',x'}}=d\delta_{z,z'}\delta_{x,x'},
    \end{equation}
    for $z,x,z',x'\in\{0,1,\dotsc,d-1\}$. Consequently, they form a basis for the vector space $\Lin(\mathbb{C}^d)$ of linear operators acting on $\mathbb{C}^d$. The \textit{qudit Bell states} are then defined as
    \begin{equation}\label{eq-qudit_Bell_states}
        \Phi^{z,x}\coloneqq\ketbra{\Phi^{z,x}}{\Phi^{z,x}},\quad\ket{\Phi^{z,x}}\coloneqq(\mathbbm{1}_d\otimes W^{z,x})\ket{\Phi},\quad\ket{\Phi}=\frac{1}{\sqrt{d}}\sum_{k=0}^{d-1}\ket{k,k}.
    \end{equation}
    The qudit Bell state vectors $\ket{\Phi^{z,x}}$ form an orthonormal basis for $\mathbb{C}^d\otimes\mathbb{C}^d$, and the qudit Bell states $\Phi^{z,x}$ form a POVM, meaning that $\sum_{z,x=0}^{d-1}\Phi^{z,x}=\mathbbm{1}_d\otimes\mathbbm{1}_d$.

    The qudit/generalized Pauli channels are defined analogously to $n$-qubit Pauli channels as
    \begin{equation}
        \mathcal{N}(\rho)=\sum_{z,x=0}^{d-1} p(z,x) W^{z,x} \rho W^{z,x\dagger},
    \end{equation}
    where $p(z,x)\in[0,1]$ and $\sum_{z,x=0}^{d-1}p(z,x)=1$. The Choi representations of these channels have the form
    \begin{equation}\label{eq-qudit_Pauli_channel_Choi_rep}
        C(\mathcal{N})=\sum_{z,x=0}^{d-1}p(z,x)\Gamma^{z,x}, \quad \Gamma^{z,x}\coloneqq(\mathbbm{1}_d\otimes W^{z,x})\ketbra{\Gamma}{\Gamma}(\mathbbm{1}_d\otimes W^{z,x})^{\dagger}.
    \end{equation}
    It follows from this (and using orthonormality of the Bell states) that the error rates $p(z,x)$ can be obtained as
    \begin{equation}\label{eq-qudit_Pauli_channel_error_vector}
        p(z,x)=\frac{1}{d}\Tr[\Phi^{z,x}C(\mathcal{N})],
    \end{equation}
    for all $z,x\in\{0,1,\dotsc,d-1\}$. In other words, we can recover the error rates by performing the qudit Bell basis measurement on the Choi state $\frac{1}{d}C(\mathcal{N})$ of the channel. Conversely, every positive semi-definite bipartite operator $Y\in\Lin(\mathbb{C}^d\otimes\mathbb{C}^d)$ of the form \eqref{eq-qudit_Pauli_channel_Choi_rep} corresponds to a Pauli channel with error rates gives by $p(z,x)=\frac{1}{d}\Tr[\Phi^{z,x}Y]$ for all $z,x\in\{0,1,\dotsc,d-1\}$.

    For a quantum channel $\mathcal{N}:\Lin(\mathbb{C}^d)\to\Lin(\mathbb{C}^d)$, $d\in\{2,3,\dotsc\}$, we define its (qudit) Pauli-twirled version as
    \begin{equation}\label{eq-qudit_Pauli_twirled_channel}
        \mathcal{N}^{\mathsf{W}}(\rho)=\frac{1}{d^2}\sum_{z,x=0}^{d-1}W^{z,x\dagger}\mathcal{N}(W^{z,x}\rho W^{z,x\dagger})W^{z,x},
    \end{equation}
    where the superscript ``$\mathsf{W}$'' in $\mathcal{N}^{\mathsf{W}}$ refers to the set $\mathsf{W}\coloneqq\{W^{z,x}:z,x\in\{0,1,\dotsc,d-1\}\}$ of qudit Pauli operators. In Proposition~\ref{prop-qudit_Pauli_channel_twirling}, we show that the twirled channel $\mathcal{N}^{\mathsf{W}}$ is indeed a Pauli channel.

\section{Multi-time quantum processes}\label{sec-multi_time_processes}

In this section, we provide some background on multi-time quantum processes. Such objects are also known as ``quantum strategies''~\cite{GW07} and ``quantum combs''~\cite{CDP09}, and they also constitute specific examples of quantum causal networks~\cite{tucci1995quantumbayesiannets,beckman2001causalquantumoperations,eggeling2002semicausal,CDP09,oreshkov2012noncausal,costa2016quantumcausalmodelling,allen2017quantumcausalmodels} and quantum channels with memory~\cite{kretschmann2005memorychannel}. They also provide a model for discrete-time non-Markovian quantum stochastic processes~\cite{pollock2018nonmarkovian,MM20}, and it is in this context that they are known as ``multi-time quantum processes''---see also Refs.~\cite{aharanov2009multipletimestates,berk2021multitimeprocesses}.

\begin{figure}
    \centering
    \includegraphics{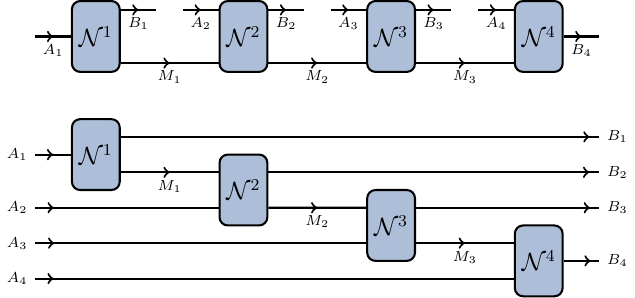}
    \caption{(Top) A multi-time quantum process with $r=4$ time steps. The input systems are $A_1,\dotsc,A_4$, the output systems are $B_1,\dotsc,B_4$, and the memory systems are $M_1,M_2,M_3$. (Bottom) Every multi-time process is associated with the channel $\mathcal{N}^{[r]}$, obtained by collapsing the causal ordering of the inputs and outputs.}
    \label{fig:comb_to_memory_channel}
\end{figure}

\subsection{Definitions and basic properties}

A general multi-time process is depicted in Figure~\ref{fig:comb_to_memory_channel}~(top) as the comb object in blue with $r=4$ time steps. The input systems are $A_1,A_2,\dotsc,A_r$, the output systems are $B_1,B_2,\dotsc,B_r$, and the memory systems are $M_1,M_2,\dotsc,M_{r-1}$. Every multi-time process is associated with a quantum channel $\mathcal{N}^{[r]}:\Lin(\mathcal{H}_{A_1}\otimes\mathcal{H}_{A_2}\otimes\dotsb\otimes\mathcal{H}_{A_r})\to\Lin(\mathcal{H}_{B_1}\otimes\mathcal{H}_{B_2}\otimes\dotsb\otimes\mathcal{H}_{B_r})$, defined by concatenating the maps $\mathcal{N}^i$ in the manner shown in Figure~\ref{fig:comb_to_memory_channel}~(bottom). As shown in Refs.~\cite{GW07,CDP09}, due to the causal constraints, the Choi representation of the channel $\mathcal{N}^{[r]}$, which completely characterizes the multi-time process, is in one-to-one correspondence with a set of so-called \textit{comb operators}, which have a very specific structure based on the causal ordering of the individual elements of the process.

\begin{definition}[Comb operator for multi-time quantum process]\label{def-multi_time_process}
    Every multi-time quantum process with $r\in\mathbb{N}$ time steps, as depicted in Figure~\ref{fig:comb_to_memory_channel}, is represented by a positive semi-definite \textit{comb operator} $N_r$, defined to be the Choi representation of the quantum channel associated with the process (see Figure~\ref{fig:comb_to_memory_channel}). For every comb operator $N_r$, there exist positive semi-definite operators $N_k\in\Lin(\mathcal{H}_{A,B}^{(k)})$, $k\in\{1,2,\dotsc,r-1\}$, such that the following constraints are satisfied:
    \begin{align}
        & N_r\geq 0,\, \Tr_{B_r}[N_r]=N_{r-1}\otimes\mathbbm{1}_{A_r},\label{eq-comb_strategy_constraints_1}\\[1ex]
        & N_{k}\geq 0,\, \Tr_{B_k}[N_{k}]=N_{k-1}\otimes\mathbbm{1}_{A_k},\quad k\in\{2,3,\dotsc,r-1\}, \label{eq-comb_strategy_constraints_2}\\[1ex]
        & N_{1}\geq 0,\,\Tr_{B_1}[N_{1}]=\mathbbm{1}_{A_1}.\label{eq-comb_strategy_constraints_3}
    \end{align}
    A positive semi-definite operator $P\in\Lin(\mathcal{H}_{A,B}^{(r)})$ is the comb operator of an $r$-step multi-time quantum process with input systems $A_1,\dotsc,A_r$ and output systems $B_1,\dotsc,B_r$ if and only if there exists a set $\{N_k\}_{k=1}^r$ of positive semi-definite operators such that $P=N_r$ and the constraints in \eqref{eq-comb_strategy_constraints_1}--\eqref{eq-comb_strategy_constraints_3} are satisfied. We let $\mathsf{COMB}_r(A_1,\dotsc, A_r;B_1,\dotsc B_r)$ denote the convex set of all operators in $\Lin(\mathcal{H}_{A,B}^{(r)})$ representing $r$-step multi-time quantum processes with input systems $A_1,\dotsc,A_r$ and output systems $B_1,\dotsc,B_r$.~\qedgen
\end{definition}

To understand Definition~\ref{def-multi_time_process}, let us see how the constraints on the comb operators in \eqref{eq-comb_strategy_constraints_1}--\eqref{eq-comb_strategy_constraints_3} are manifested in the Choi representation of the channel $\mathcal{N}^{[4]}$ depicted in Figure~\ref{fig:comb_to_memory_channel}~(bottom). By definition, we have
\begin{multline}
    N_4\equiv C(\mathcal{N}^{[4]})=\left(\mathcal{N}_{M_3A_4'\to B_4}^4\circ\mathcal{N}_{M_2A_3'\to M_3B_3}^3\right.\\\left.\circ\mathcal{N}_{M_1A_2'\to M_2B_2}^2\circ\mathcal{N}_{A_1'\to M_1B_1}^1\right)(\Gamma_{A_1A_1'}\otimes\Gamma_{A_2A_2'}\otimes\Gamma_{A_3A_3'}\otimes\Gamma_{A_4A_4'}). \label{eq-multi_time_process_Choi_rep}
\end{multline}
It is clear that $N_4\in\Lin(\mathcal{H}_{A,B}^{(4)}$ is positive semi-definite. Now, observe that
\begin{align}
    &\Tr_{B_4}[N_4]\nonumber\\
    &\quad=\Tr_{M_3A_4'}\!\left[\left(\mathcal{N}_{M_2A_3'\to M_3B_3}^3\circ\mathcal{N}_{M_1A_2'\to M_2B_2}^2\circ\mathcal{N}_{A_1'\to M_1B_1}^1\right)(\Gamma_{A_1A_1'}\otimes\Gamma_{A_2A_2'}\otimes\Gamma_{A_3A_3'}\otimes\Gamma_{A_4A_4'})\right]\\
     \nonumber
     &\quad=\Tr_{M_3}\!\left[ \left(\mathcal{N}_{M_2A_3'\to M_3B_3}^3\circ\mathcal{N}_{M_1A_2'\to M_2B_2}^2\circ\mathcal{N}_{A_1'\to M_1B_1}^1\right)(\Gamma_{A_1A_1'}\otimes\Gamma_{A_2A_2'}\otimes\Gamma_{A_3A_3'})\right]\otimes\underbrace{\Tr_{A_4'}[\Gamma_{A_4A_4'}]}_{\mathbbm{1}_{A_4}}\\
     \nonumber
    &\quad= N_3\otimes\mathbbm{1}_{A_4},
     \nonumber
\end{align}
which is precisely the constraint in \eqref{eq-comb_strategy_constraints_1}, where in the last line we let
\begin{equation}
    N_3\equiv \Tr_{M_3}\!\left[ \left(\mathcal{N}_{M_2A_3'\to M_3B_3}^3\circ\mathcal{N}_{M_1A_2'\to M_2B_2}^2\circ\mathcal{N}_{A_1'\to M_1B_1}^1\right)(\Gamma_{A_1A_1'}\otimes\Gamma_{A_2A_2'}\otimes\Gamma_{A_3A_3'})\right].
\end{equation}
In a similar manner, we find that
\begin{align}
    \Tr_{B_3}[N_3]&=N_2\otimes\mathbbm{1}_{A_3},\\
    N_2&:= \Tr_{M_2}\!\left[ \left(\mathcal{N}_{M_1A_2'\to M_2B_2}^2\circ\mathcal{N}_{A_1'\to M_1B_1}^1\right)(\Gamma_{A_1A_1'}\otimes\Gamma_{A_2A_2'})\right],\\[1ex]
    \Tr_{B_2}[N_2]&=N_1\otimes\mathbbm{1}_{A_2},\\
    N_1&:= \Tr_{M_1}\!\left[ \left(\mathcal{N}_{A_1'\to M_1B_1}^1\right)(\Gamma_{A_1A_1'})\right], \\[1ex]
    \Tr_{B_1}[N_1]&=\mathbbm{1}_{A_1},
\end{align}
which reproduces the constraints in \eqref{eq-comb_strategy_constraints_2} and \eqref{eq-comb_strategy_constraints_3}.

We can also consider a multi-time process with a measurement in the final time step, sometimes called a ``measuring strategy''.

\begin{definition}[Comb operators for multi-time quantum process with measurement]\label{def-comb_operators}
    Every multi-time quantum process with measurement, consisting of $r\in\mathbb{N}$ time steps and input systems $A_1,\dotsc,A_r$ and output systems $B_1,\dotsc,B_r$, is represented by a set $\{N_{r;x}\}_{x}$ of positive semi-definite operators $N_{r;x}\in\Lin(\mathcal{H}_{A,B}^{(r)})$, such that $\sum_{x}N_{r;x}\in\mathsf{COMB}_r(A_1,\dotsc,A_r;B_1,\dotsc,B_r)$. A finite set $\{S_x\}_{x}$ of positive semi-definite operators in $\Lin(\mathcal{H}_{A,B}^{(r)})$ defines an $r$-step multi-time quantum process with measurement, with input systems $A_1,\dotsc,A_r$ and output systems $B_1,\dotsc,B_r$, if and only if $\sum_x S_x\in\mathsf{COMB}_r(A_1,\dotsc,A_r;B_1,\dotsc,B_r)$.~\qedgen
\end{definition}

\begin{figure}
    \centering
    \includegraphics{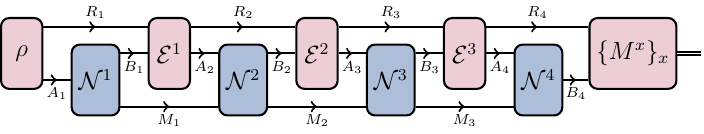}
    \caption{Concatenation of a multi-time process $\mathcal{N}^{[4]}$ with $r=4$ time steps, represented by the blue quantum comb, with a corresponding tester $\mathcal{E}^{[4]}$, which is represented by the red quantum comb. The operations $\mathcal{N}^i$ and $\mathcal{E}^i$ can be arbitrary quantum channels, and they can also more generally be arbitrary Hermiticity-preserving maps.}
    \label{fig:quantum_strategy_meas_co_strategy}
\end{figure}

A measurement, or \textit{tester}, for a multi-time process is another multi-time process that consists of an input state $\rho$ in the first time step and a measurement in the final time step, as shown by the red comb object in Figure~\ref{fig:quantum_strategy_meas_co_strategy} for $r=4$ time steps. Testers are sometimes called ``measuring co-strategies''.

\begin{definition}[Comb operators for multi-time quantum tester]\label{def-testers}
    Every multi-time quantum tester with $r\in\mathbb{N}$ time steps, consisting of input systems $B_1,\dotsc,B_r$ and output systems $A_1,\dotsc,A_r$, is represented by a set $\{E_{r;x}\}_{x}$ of positive semi-definite operators, with $E_{r;x}\in\Lin(\mathcal{H}_{A,B}^{(r)})$, such that $\sum_x E_{r;x}=S_r\otimes\mathbbm{1}_{B_r}$, with $S_r\in\mathsf{COMB}^{\ast}_r(A_1,\dotsc,A_r;B_1,\dotsc,B_{r-1})$. Here, $\mathsf{COMB}^{\ast}_r(A_1,\dotsc,A_r;B_1,\dotsc,B_{r-1})\coloneqq\mathsf{COMB}_r(\varnothing,B_1,\dotsc,B_{r-1};A_1,\dotsc,A_r)$ is the set of all multi-time processes in which the first input system is trivial; see \eqref{eq-costrategies}. A finite set $\{T_x\}_{x}$ of positive semi-definite operators in $\Lin(\mathcal{H}_{A,B}^{(r)})$ defines an $r$-step multi-time quantum tester, with input systems $B_1,\dotsc,B_r$ and output systems $A_1,\dotsc,A_r$, if and only if there exists $E_r\in\mathsf{COMB}^{\ast}_r(A_1,\dotsc,A_r;B_1,\dotsc,B_{r-1})$ such that $\sum_x T_x=E_r\otimes\mathbbm{1}_{B_r}$.~\qedgen
\end{definition}

\subsection{Norms}\label{sec-strategy_norms}

Norms for multi-time quantum processes have been defined in Refs.~\cite{CDP08,Gut12}. Here, we follow the presentation in Ref.~\cite{Gut12}.

\begin{definition}[Strategy norm and its dual~\cite{Gut12}]\label{def-strategy_norm}
    Let $r\in\mathbb{N}$. For every Hermitian operator $H\in\Lin(\mathcal{H}_{A,B}^{(r)})$, we define the \textit{strategy norm} $\norm{H}_{\diamond r}$ and its dual $\norm{H}_{\diamond r}^{\ast}$ as 
    \begin{align}
        \norm{H}_{\diamond r}&\coloneqq\sup\Big\{\Tr[H(T_0-T_1)]: T_0,T_1\geq 0,\,T_0+T_1=S\otimes\mathbbm{1}_{B_r},\,S\in\mathsf{COMB}_r^{\ast}\Big\}\label{eq-strategy_norm}\\
        &=\inf\Big\{t:t\geq 0,\,-tN\leq H\leq tN,\,N\in\mathsf{COMB}_r\Big\} \label{eq-strategy_norm_dual_formulation_0},\\[1ex]
        \norm{H}_{\diamond r}^{\ast}&\coloneqq\sup\Big\{\Tr[H(N_0-N_1)]: N_0,N_1\geq 0,\,N_0+N_1\in\mathsf{COMB}_r\Big\}\label{eq-strategy_norm_dual}\\
         \nonumber
        &=\inf\Big\{t:t\geq 0,\,-tS\otimes\mathbbm{1}_{B_r}\leq H\leq tS\otimes\mathbbm{1}_{B_r},\,S\in\mathsf{COMB}^{\ast}_r\Big\}.
    \end{align}
    For every Hermiticity-preserving linear map $\mathcal{N}^{[r]}:\Lin(\mathcal{H}_{A_1}\otimes\dotsb\otimes\mathcal{H}_{A_r})\to\Lin(\mathcal{H}_{B_1}\otimes\dotsb\otimes\mathcal{H}_{B_r})$, in particular those corresponding to multi-time processes, its strategy $r$-norm is defined via the Choi representation as $\norm{\mathcal{N}^{[r]}}_{\diamond r}\coloneqq\norm{C(\mathcal{N}^{[r]})}_{\diamond r}$.~\qedgen
\end{definition}

The norms $\norm{\cdot}_{\diamond r}$ and $\norm{\cdot}_{\diamond r}^{\ast}$ are (H\"{o}lder) dual to each other, and the proof of this can be found in Ref.~\cite{Gut12}. These norms should be thought of as generalizations of the trace norm and its dual (the spectral/operator norm) for quantum states and the diamond norm and its dual for quantum channels. Indeed, with respect to Figure~\ref{fig:quantum_strategy_meas_co_strategy}, in the case $r=1$ and $d_{A_1}=1$, it holds that $\norm{\cdot}_{\diamond 1}\equiv\norm{\cdot}_1$ and $\norm{\cdot}_{\diamond 1}^{\ast}\equiv\norm{\cdot}_{\infty}$, where we note that
\begin{align}
    \norm{H}_1&=\sup\Big\{\Tr[H(M_1-M_2)]:M_1,M_2\geq 0,\,M_2\geq 0,\,M_1+M_2\leq\mathbbm{1}\Big\}\label{eq-trace_norm_primal}\\
    &=\inf\Big\{t:t\geq 0,\,-t\sigma\leq H\leq t\sigma,\,\sigma\geq 0,\Tr[\sigma]=1\Big\}\label{eq-trace_norm_dual},\\[1ex]
    \norm{H}_{\infty}&=\sup\Big\{\Tr[H(M_1-M_2)]:M_1,M_2\geq 0,\,\Tr[M_1+M_2]\leq 1\Big\}\label{eq-infty_norm_primal}\\
    &=\inf\Big\{t:t\geq 0,-t\mathbbm{1}\leq H\leq t\mathbbm{1}\Big\},\label{eq-infty_norm_dual}
\end{align}
for every Hermitian operator $H$. Similarly, in the case $r=1$ and arbitrary dimension for the system $A_1$, we have that $\norm{\cdot}_{\diamond 1}\equiv\norm{\cdot}_{\diamond}$. The norm $\norm{\cdot}_{\diamond 1}^{\ast}$, i.e., the H\"{o}lder dual to the diamond norm, has been considered before in Ref.~\cite[Section~5.3]{gutoski2010thesis}. This dual norm is the relevant norm when considering observables for quantum channels, analogous to the role that the spectral norm $\norm{\cdot}_{\infty}$ has for observables for states. Using \eqref{eq-strategy_norm_dual}, it is straightforward to see that the diamond norm dual is given by the following primal-dual pair of semi-definite programs, where $H\in\Lin(\mathcal{H}_A\otimes\mathcal{H}_B)$ is Hermitian:
\begin{align}
    \norm{H}_{\diamond 1}^{\ast}&\coloneqq\sup\Big\{\Tr[H(S_0-S_1)]:S_0,S_1\in\Lin(\mathcal{H}_A\otimes\mathcal{H}_B),\,S_0,S_1\geq 0,\,\Tr_B[S_0+S_1]\leq\mathbbm{1}_A\Big\}\label{eq-diamond_norm_dual_primalSDP}\\
    &=\inf\Big\{\Tr[Y]:Y\in\Lin(\mathcal{H}_A),\,Y\geq 0,\,-Y\otimes\mathbbm{1}_B\leq H\leq Y\otimes\mathbbm{1}_B\Big\}.\label{eq-diamond_norm_dual_dualSDP}
\end{align}
We now show that this norm is multiplicative for tensor-product operators, which is a relevant property when considering channel observables without memory; see Section~\ref{sec-basics_quantum_info} for the relevant background information.

\begin{lemma}[Diamond norm dual for tensor-product operators]
    Let $K\in\Lin(\mathcal{H}_A)$ and $L\in\Lin(\mathcal{H}_B)$ be Hermitian operators. It holds that $\norm{K\otimes L}_{\diamond 1}^{\ast}=\norm{K}_1\norm{L}_{\infty}$.
\end{lemma}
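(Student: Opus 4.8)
The plan is to prove the two inequalities $\norm{K\otimes L}_{\diamond 1}^{\ast}\leq\norm{K}_1\norm{L}_{\infty}$ and $\norm{K\otimes L}_{\diamond 1}^{\ast}\geq\norm{K}_1\norm{L}_{\infty}$ separately, using the primal-dual SDP characterization in \eqref{eq-diamond_norm_dual_primalSDP}--\eqref{eq-diamond_norm_dual_dualSDP}. For the upper bound, I would exhibit a feasible point for the dual (infimum) SDP in \eqref{eq-diamond_norm_dual_dualSDP}. The natural candidate is $Y = \norm{L}_{\infty}\,\abs{K}$, where $\abs{K}=(K^2)^{1/2}$ is the operator absolute value, so that $\Tr[Y]=\norm{L}_{\infty}\Tr[\abs{K}]=\norm{L}_{\infty}\norm{K}_1$. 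I would then verify the operator inequality $-Y\otimes\mathbbm{1}_B\leq K\otimes L\leq Y\otimes\mathbbm{1}_B$. Since $-\norm{L}_{\infty}\mathbbm{1}_B\leq L\leq\norm{L}_{\infty}\mathbbm{1}_B$ and $-\abs{K}\leq K\leq\abs{K}$, the tensoring must be handled with care because the ordering of tensor products of indefinite operators is subtle; I would argue it by simultaneously diagonalizing in a basis adapted to $K$ and $L$ (both Hermitian, hence unitarily diagonalizable), reducing the operator inequality to the scalar inequalities $\abs{\lambda_i\mu_j}\leq\norm{L}_{\infty}\abs{\lambda_i}$ on eigenvalues, which hold since $\abs{\mu_j}\leq\norm{L}_{\infty}$.

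For the lower bound, I would construct a feasible point for the primal (supremum) SDP in \eqref{eq-diamond_norm_dual_primalSDP} achieving the value $\norm{K}_1\norm{L}_{\infty}$. Write the spectral decompositions $K=\sum_i\lambda_i\ketbra{k_i}{k_i}$ and pick a unit eigenvector $\ket{\ell}$ of $L$ with $\abs{\braket{\ell}{L|\ell}}=\norm{L}_{\infty}$; by replacing $L$ with $-L$ if necessary (which does not change $\norm{L}_\infty$ or the value $\norm{K\otimes L}^*_{\diamond 1}$) assume this extremal eigenvalue is $+\norm{L}_{\infty}$. The plan is to set $S_0=\sum_{i:\lambda_i\geq 0}\ketbra{k_i}{k_i}\otimes\ketbra{\ell}{\ell}$ and $S_1=\sum_{i:\lambda_i<0}\ketbra{k_i}{k_i}\otimes\ketbra{\ell}{\ell}$. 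Then $\Tr_B[S_0+S_1]=\sum_i\ketbra{k_i}{k_i}=\mathbbm{1}_A\leq\mathbbm{1}_A$, so the pair is feasible, and a direct computation gives $\Tr[(K\otimes L)(S_0-S_1)]=\sum_i\mathrm{sgn}(\lambda_i)\lambda_i\braket{\ell}{L|\ell}=\norm{L}_{\infty}\sum_i\abs{\lambda_i}=\norm{K}_1\norm{L}_{\infty}$.

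Combining the two bounds yields the claimed equality. The main obstacle I anticipate is the operator-inequality verification in the upper bound: one must be careful that $A\leq B$ and $C\leq D$ do \emph{not} in general imply $A\otimes C\leq B\otimes D$ when the operators are indefinite, so the naive tensoring argument fails. The clean resolution is to observe that $Y\otimes\mathbbm{1}_B\pm K\otimes L = \norm{L}_{\infty}\abs{K}\otimes\mathbbm{1}_B\pm K\otimes L$, and then diagonalize $K$ and $L$ jointly (they act on different factors, so this is unproblematic) to reduce positivity to the eigenvalue-wise statement $\norm{L}_{\infty}\abs{\lambda_i}\pm\lambda_i\mu_j\geq 0$, which is immediate from $\abs{\mu_j}\leq\norm{L}_{\infty}$. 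With that reduction in hand, both directions are routine, and the result follows.

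\begin{proof}
We use the primal-dual semi-definite program formulation of the diamond norm dual in \eqref{eq-diamond_norm_dual_primalSDP}--\eqref{eq-diamond_norm_dual_dualSDP}.

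We first prove $\norm{K\otimes L}_{\diamond 1}^{\ast}\leq\norm{K}_1\norm{L}_{\infty}$ by exhibiting a feasible point for the dual SDP in \eqref{eq-diamond_norm_dual_dualSDP}. Set $Y\coloneqq\norm{L}_{\infty}\abs{K}$, where $\abs{K}\coloneqq(K^2)^{1/2}\geq 0$, so that $\Tr[Y]=\norm{L}_{\infty}\Tr[\abs{K}]=\norm{L}_{\infty}\norm{K}_1$. It remains to check that $-Y\otimes\mathbbm{1}_B\leq K\otimes L\leq Y\otimes\mathbbm{1}_B$. Writing the spectral decompositions $K=\sum_i\lambda_i\ketbra{k_i}{k_i}$ and $L=\sum_j\mu_j\ketbra{\ell_j}{\ell_j}$, both sides are diagonal in the product basis $\{\ket{k_i}\otimes\ket{\ell_j}\}_{i,j}$, and the inequality $Y\otimes\mathbbm{1}_B\pm K\otimes L\geq 0$ reduces to the eigenvalue-wise statements
\begin{equation}
    \norm{L}_{\infty}\abs{\lambda_i}\pm\lambda_i\mu_j\geq 0\qquad\forall\,i,j.
\end{equation}
These hold because $\abs{\mu_j}\leq\norm{L}_{\infty}$, whence $\abs{\lambda_i\mu_j}\leq\norm{L}_{\infty}\abs{\lambda_i}$. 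Thus $Y$ is dual-feasible and $\norm{K\otimes L}_{\diamond 1}^{\ast}\leq\Tr[Y]=\norm{K}_1\norm{L}_{\infty}$.

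Next we prove $\norm{K\otimes L}_{\diamond 1}^{\ast}\geq\norm{K}_1\norm{L}_{\infty}$ by exhibiting a feasible point for the primal SDP in \eqref{eq-diamond_norm_dual_primalSDP}. Since both $\norm{L}_{\infty}$ and $\norm{K\otimes L}_{\diamond 1}^{\ast}$ are unchanged under $L\mapsto -L$, we may assume that there is a unit eigenvector $\ket{\ell}$ of $L$ with eigenvalue $\norm{L}_{\infty}$, i.e., $\bra{\ell}L\ket{\ell}=\norm{L}_{\infty}$. Define
\begin{equation}
    S_0\coloneqq\sum_{i:\lambda_i\geq 0}\ketbra{k_i}{k_i}\otimes\ketbra{\ell}{\ell},\qquad S_1\coloneqq\sum_{i:\lambda_i< 0}\ketbra{k_i}{k_i}\otimes\ketbra{\ell}{\ell}.
\end{equation}
Then $S_0,S_1\geq 0$ and $\Tr_B[S_0+S_1]=\sum_i\ketbra{k_i}{k_i}=\mathbbm{1}_A$, so the pair $(S_0,S_1)$ is feasible. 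Moreover,
\begin{align}
    \Tr[(K\otimes L)(S_0-S_1)]
    &=\sum_i\operatorname{sgn}(\lambda_i)\,\lambda_i\,\bra{\ell}L\ket{\ell}\\
    &=\norm{L}_{\infty}\sum_i\abs{\lambda_i}
    =\norm{K}_1\norm{L}_{\infty}.
\end{align}
Hence $\norm{K\otimes L}_{\diamond 1}^{\ast}\geq\norm{K}_1\norm{L}_{\infty}$.

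Combining the two inequalities gives $\norm{K\otimes L}_{\diamond 1}^{\ast}=\norm{K}_1\norm{L}_{\infty}$, as claimed.
\end{proof}
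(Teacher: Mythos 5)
Your proof is correct and follows essentially the same route as the paper's: both directions are handled by exhibiting feasible points in the primal and dual SDPs \eqref{eq-diamond_norm_dual_primalSDP}--\eqref{eq-diamond_norm_dual_dualSDP}, with your dual point $Y=\norm{L}_{\infty}\abs{K}$ coinciding with the paper's choice and your primal pair $(S_0,S_1)$ being exactly the paper's construction specialized to the optimal points $M_1=\sum_{\lambda_i\geq 0}\ketbra{k_i}{k_i}$, $M_2=\sum_{\lambda_i<0}\ketbra{k_i}{k_i}$, $M_1'=\ketbra{\ell}{\ell}$, $M_2'=0$. Your explicit eigenbasis verification of $-Y\otimes\mathbbm{1}_B\leq K\otimes L\leq Y\otimes\mathbbm{1}_B$ supplies a detail the paper leaves as ``readily verified,'' and correctly flags why naive tensoring of operator inequalities would not suffice.
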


\begin{proof} 
    This follows straightforwardly from semi-definite programming duality. Let $M_1,M_2\in\Lin(\mathcal{H}_A)$ be the operators achieving the trace norm $\norm{K}_1$, as in \eqref{eq-trace_norm_primal}, and let $M_1',M_2'\in\Lin(\mathcal{H}_B)$ be the operators achieving the spectral norm $\norm{L}_{\infty}$, as in \eqref{eq-infty_norm_primal}. Then, $S_0\equiv M_1\otimes M_1'+M_2\otimes M_2'$ and $S_1\equiv M_1\otimes M_2'+M_2\otimes M_1'$ are readily verified to be feasible points in the SDP \eqref{eq-diamond_norm_dual_primalSDP}, which means that $\norm{K\otimes L}_{\diamond 1}^{\ast}\geq \Tr[(K\otimes L)(S_0-S_1)]=\norm{K}_1\norm{L}_{\infty}$. For the reverse inequality, observe that $Y\equiv Z\norm{L}_{\infty}$, where $Z\in\Lin(\mathcal{H}_A)$ achieves the trace norm $\norm{K}_1$ according to \eqref{eq-trace_norm_dual}, is a feasible point in the SDP \eqref{eq-diamond_norm_dual_dualSDP}. Consequently, which this choice of $Y$, we obtain $\norm{K\otimes L}_{\diamond 1}^{\ast}\leq \Tr[Y]=\norm{K}_1\norm{L}_{\infty}$. This completes the proof.
\end{proof}

Let $N_1$ be the comb operator of a multi-time process with $r_1$ time steps, and let $N_2$ be the comb operator of a multi-time process with $r_2$ time steps, such that both processes have some compatible input and output Hilbert spaces. The multi-time process resulting from the composition of the two processes is represented by the comb operator $N_1\star N_2$, where $\star$ represents the \textit{link product}~\cite{CDP09}. 

Just as the diamond norm is submultiplicative with respect to composition of linear maps, so too is the strategy norm submultiplicative with respect to concatenation of comb operators according to the link product. We prove this in our next result.

\begin{proposition}[Submultiplicativity of the strategy norm]\label{prop-submult_strategy_norm}
    Let $N_1$ be the representation of a multi-time process with $r_1$ time steps, and let $N_2$ be the representation of a multi-time process with $r_2$ time steps. Suppose that the composition of these processes, represented by $N_1\star N_2$, produces a multi-time process with $r$ time steps. Then, it holds that
    \begin{equation}
        \norm{N_1\star N_2}_{\diamond r}\leq \norm{N_1}_{\diamond r_1}\norm{N_2}_{\diamond r_2}.
    \end{equation}
\end{proposition}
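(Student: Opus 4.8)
The plan is to work with the dual (infimum) formulation of the strategy norm in \eqref{eq-strategy_norm_dual_formulation_0} and to exploit two structural features of the link product: its bilinearity and its preservation of positive semi-definiteness. Recall from \cite{CDP09} that if $P\geq 0$ and $Q\geq 0$ are operators sharing a system over which the link product contracts, then $P\star Q\geq 0$, and that the link product of a comb in $\mathsf{COMB}_{r_1}$ with a comb in $\mathsf{COMB}_{r_2}$ again lies in $\mathsf{COMB}_r$ (this is precisely the statement, used already in the paragraph preceding the proposition, that composing multi-time processes yields a multi-time process). Both facts will be invoked as black boxes, and the SDP-dual characterization of $\norm{\cdot}_{\diamond r}$ is what lets us phrase everything at the level of operator inequalities.

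First I would extract optimizers. Writing $t_i\coloneqq\norm{N_i}_{\diamond r_i}$ for $i\in\{1,2\}$, the sets $\mathsf{COMB}_{r_i}$ are compact and the constraints in \eqref{eq-strategy_norm_dual_formulation_0} are closed, so (by compactness and lower semicontinuity, as for the minima assumed elsewhere in the paper) the infima are attained: there exist $M_i\in\mathsf{COMB}_{r_i}$, acting on the same systems as $N_i$, with $-t_i M_i\leq N_i\leq t_i M_i$, equivalently $t_i M_i - N_i\geq 0$ and $t_i M_i + N_i\geq 0$.

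The key step is the algebraic identity, valid by bilinearity of $\star$ (all link products below contract the same shared systems and land in $\Lin(\mathcal{H}_{A,B}^{(r)})$),
\begin{equation}
t_1 t_2\,(M_1\star M_2) - N_1\star N_2 = \tfrac{1}{2}\Big[(t_1 M_1 - N_1)\star(t_2 M_2 + N_2) + (t_1 M_1 + N_1)\star(t_2 M_2 - N_2)\Big],
\end{equation}
together with its sign-flipped analogue
\begin{equation}
t_1 t_2\,(M_1\star M_2) + N_1\star N_2 = \tfrac{1}{2}\Big[(t_1 M_1 - N_1)\star(t_2 M_2 - N_2) + (t_1 M_1 + N_1)\star(t_2 M_2 + N_2)\Big].
\end{equation}
On each right-hand side, every summand is a link product of two positive semi-definite operators, hence positive semi-definite by the positivity-preservation property. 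Consequently both left-hand sides are positive semi-definite, which is exactly the two-sided operator inequality $-t_1 t_2 (M_1\star M_2)\leq N_1\star N_2\leq t_1 t_2 (M_1\star M_2)$.

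To conclude, note that $M_1\star M_2\in\mathsf{COMB}_r$, so the pair $(t_1 t_2, M_1\star M_2)$ is feasible for the infimum in \eqref{eq-strategy_norm_dual_formulation_0} defining $\norm{N_1\star N_2}_{\diamond r}$, giving $\norm{N_1\star N_2}_{\diamond r}\leq t_1 t_2 = \norm{N_1}_{\diamond r_1}\norm{N_2}_{\diamond r_2}$, as claimed. I expect the main obstacle to be justifying the two black-box facts within the comb formalism, namely that $\star$ preserves positivity and that it maps $\mathsf{COMB}_{r_1}\times\mathsf{COMB}_{r_2}$ into $\mathsf{COMB}_r$: the former follows from the rank-one decomposition of the link product, while the latter requires checking that the causality/normalization constraints in \eqref{eq-strategies} are inherited under composition, and this is the place where the semi-definite programming formulation of the strategy norm is genuinely convenient compared to the diamond-norm argument.
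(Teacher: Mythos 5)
Your proposal is correct and follows essentially the same route as the paper's proof: both pass to the dual SDP formulation in \eqref{eq-strategy_norm_dual_formulation_0}, take optimal feasible points $(t_i,M_i)$, use positivity-preservation of the link product applied to the four operators $(t_1M_1\pm N_1)\star(t_2M_2\pm N_2)$ to obtain $-t_1t_2\,M_1\star M_2\leq N_1\star N_2\leq t_1t_2\,M_1\star M_2$, and conclude by feasibility of $(t_1t_2,\,M_1\star M_2)$. Your explicit averaging identities are just a tidier packaging of the same pairwise additions the paper carries out, and both treatments invoke $M_1\star M_2\in\mathsf{COMB}_r$ at the same level of detail.
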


\begin{proof}
    This result follows straightforwardly from semi-definite programming duality. In particular, we make use
    of \eqref{eq-strategy_norm_dual_formulation_0}, which we restate here for convenience as
    \begin{equation}\label{eq-strategy_norm_dual_formulation}
        \norm{H}_{\diamond r}=\inf\Big\{t:t\geq 0,\,-t P\leq H\leq t P,\,P\in\mathsf{COMB}_r\Big\}.
    \end{equation}
    Now, let $(t_1,P_1)$ and $(t_2,P_2)$ be the optimal feasible points corresponding to $\norm{N_1}_{\diamond r_1}$ and $\norm{N_2}_{\diamond r_2}$, respectively, meaning that $t_1=\norm{N_1}_{\diamond r_1}$ and $t_2=\norm{N_2}_{\diamond r_2}$. We now show that $(t_1t_2,P_1\star P_2)$ constitutes a feasible point in the SDP in \eqref{eq-strategy_norm_dual_formulation} for $\norm{N_1\star N_2}_{\diamond r}$, implying the desired result.

    By definition, we have
    \begin{align}
        -t_1P_1\leq &N_1\leq t_1P_1,\\
        -t_2P_2\leq &N_2\leq t_2P_2.
    \end{align}
    The right-most inequalities can be rewritten as $t_1P_1-N_1\geq 0$ and $t_2P_2-N_2\geq 0$. Using the fact that the link product of positive semi-definite operators is positive semi-definite (see Ref.~\cite[Theorem~2]{CDP09}), we obtain $(t_1P_1-N_1)\star(t_2P_2-N_2)\geq 0$. Expanding the left-hand side of this inequality, we have
    \begin{align}
        & t_1t_2P_1\star P_2-t_1P_1\star N_2-t_2N_1\star P_2+N_1\star N_2\geq 0\\
        \Rightarrow & \, t_1t_2P_1\star P_2+N_1\star N_2\geq t_1P_1\star N_2+t_2N_1\star P_2\geq 0\\
        \Rightarrow & \, N_1\star N_2\geq -t_1t_2 P_1\star P_2.
    \end{align}
    Similarly, we have
    \begin{align}
        &N_1+t_1P_1\geq 0,\quad t_2P_2-N_2\geq 0\\
        \Rightarrow & \, (N_1+t_1P_1)\star (t_2P_2-N_2)\geq 0\\
        \Rightarrow & \, -N_1\star N_2+t_1t_2P_1\star P_2\geq t_1P_1\star N_2-t_2N_1\star P_2,\label{eq-strategy_norm_submult_pf1}\\[1ex]
        &t_1P_1-N_1\geq 0,\quad N_2+t_2P_2\geq 0\\
        \Rightarrow & \, (t_1P_1-N_1)\star (N_2+t_2P_2)\geq 0\\
        \Rightarrow & \, -N_1\star N_2+t_1t_2 P_1\star P_2\geq t_2 N_1\star P_2-t_1P_1\star N_2.\label{eq-strategy_norm_submult_pf2}
    \end{align}
    Adding the inequalities in \eqref{eq-strategy_norm_submult_pf1} and \eqref{eq-strategy_norm_submult_pf2}, we obtain
    \begin{equation}
        -N_1\star N_2+t_1t_2P_1\star P_2\geq 0\Rightarrow N_1\star N_2\leq t_1t_2 P_1\star P_2. 
    \end{equation}
    As $P_1\star P_2$ is positive semi-definite and defines a multi-time process with $r$ steps, we conclude that $(t_1t_2,P_1\star P_2)$ is a feasible point in the SDP in \eqref{eq-strategy_norm_dual_formulation} for $\norm{N_1\star N_2}_{\diamond r}$, which implies the desired result.
\end{proof}

\begin{corollary}[Subadditivity of the strategy norm under composition]\label{cor:strategy_norm_subadditivity_composition}
    Let $N_1,M_1$ be representations of multi-time processes with $r_1$ time steps, and let $N_2,M_2$ be representations of multi-time processes with $r_2$ time steps. Suppose that the composition of $N_1$ with $N_2$ and $M_1$ with $M_2$ produces multi-time processes with $r$ time steps. Then,
    \begin{equation}
        \norm{N_1\star N_2-M_1\star M_2}_{\diamond r}\leq \norm{N_1 - M_1}_{\diamond r_1}+\norm{N_2 - M_2}_{\diamond r_2}.
    \end{equation}
\end{corollary}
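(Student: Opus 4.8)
The plan is to derive the stated subadditivity inequality from the submultiplicativity of the strategy norm proven in \Cref{prop-submult_strategy_norm}, using the standard telescoping trick that is used to prove subadditivity of the diamond norm from its submultiplicativity (compare Ref.~\cite[Proposition~3.48]{Wat18_book}). First I would insert an intermediate composite term to split the difference $N_1\star N_2 - M_1\star M_2$ into two pieces in which only one factor changes at a time:
\begin{equation}
    N_1\star N_2 - M_1\star M_2 = (N_1 - M_1)\star N_2 + M_1\star (N_2 - M_2).
\end{equation}
This identity holds because the link product $\star$ is bilinear (it is defined via partial traces and tensor products, which are linear in each argument; see Ref.~\cite{CDP09}). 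Applying the triangle inequality for the strategy $r$-norm then gives
\begin{equation}
    \norm{N_1\star N_2 - M_1\star M_2}_{\diamond r} \leq \norm{(N_1 - M_1)\star N_2}_{\diamond r} + \norm{M_1\star (N_2 - M_2)}_{\diamond r}.
\end{equation}

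Next I would bound each of the two terms on the right via \Cref{prop-submult_strategy_norm}. This yields
\begin{equation}
    \norm{(N_1 - M_1)\star N_2}_{\diamond r} \leq \norm{N_1 - M_1}_{\diamond r_1}\,\norm{N_2}_{\diamond r_2}
\end{equation}
and similarly $\norm{M_1\star (N_2 - M_2)}_{\diamond r} \leq \norm{M_1}_{\diamond r_1}\,\norm{N_2 - M_2}_{\diamond r_2}$. To obtain exactly the claimed bound, I then use that the strategy norm of a genuine comb operator equals $1$: since $N_2, M_1 \in \mathsf{COMB}$, the defining SDP in \eqref{eq-strategy_norm_dual_formulation_0} is optimized by taking $t=1$ together with the comb operator itself, so $\norm{N_2}_{\diamond r_2} = \norm{M_1}_{\diamond r_1} = 1$. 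Substituting these values collapses the two product bounds to $\norm{N_1 - M_1}_{\diamond r_1}$ and $\norm{N_2 - M_2}_{\diamond r_2}$ respectively, giving the desired inequality.

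The main obstacle I anticipate is twofold. The first concern is verifying the bilinearity identity for the link product carefully; although $\star$ is linear in each argument, one must confirm that $N_1$, $M_1$, $N_2$, $M_2$ are composed over matching wire/system decompositions so that $(N_1-M_1)\star N_2$, $M_1\star(N_2-M_2)$, and the original products are all well-defined elements of $\Lin(\mathcal{H}_{A,B}^{(r)})$ on the same space. The second, subtler point is that \Cref{prop-submult_strategy_norm} as stated applies to the strategy norm of $\emph{representations}$ of multi-time processes, whereas here I am applying it to $\emph{differences}$ such as $N_1 - M_1$, which are Hermitian but not themselves comb operators. I would note that the submultiplicativity proof only used the SDP characterization \eqref{eq-strategy_norm_dual_formulation} together with positivity of the link product of positive operators, and that this characterization is valid for arbitrary Hermitian operators; hence the bound $\norm{H\star N}_{\diamond r}\leq \norm{H}_{\diamond r_1}\norm{N}_{\diamond r_2}$ extends to Hermitian $H$ with $N$ a comb operator. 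If one wishes to avoid re-examining the proof, an alternative is to invoke the H\"older duality \eqref{eq-Holder_inequality_strategy_norm} together with the tester characterization to establish submultiplicativity directly for Hermitian arguments, which I would sketch as a fallback.
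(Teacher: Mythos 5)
Your proof is correct and follows essentially the same route as the paper: telescope via an intermediate mixed term, apply the triangle inequality, invoke the submultiplicativity of \Cref{prop-submult_strategy_norm}, and use that comb operators have strategy norm at most $1$ (your choice of intermediate term $M_1\star N_2$ versus the paper's $N_1\star M_2$ is an immaterial symmetry). Your observation that \Cref{prop-submult_strategy_norm} must be applied to Hermitian differences rather than genuine comb operators --- and that its SDP-based proof goes through unchanged in that generality --- is a subtlety the paper's own proof leaves implicit, and you resolve it correctly.
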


\begin{proof}
    By the triangle inequality, and making use of Proposition~\ref{prop-submult_strategy_norm}, we have
    \begin{align}
        \norm{N_1\star N_2-M_1\star M_2}_{\diamond r}&=\norm{N_1\star N_2-N_1\star M_2+N_1\star M_2-M_1\star M_2}_{\diamond r}\\
         \nonumber
        &\leq \norm{N_1\star N_2-N_1\star M_2}_{\diamond r}+\norm{N_1\star M_2-M_1\star M_2}_{\diamond r}\\
         \nonumber
        &=\norm{N_1\star(N_2-M_2)}_{\diamond r}+\norm{(N_1-M_1)\star M_2}_{\diamond r}\\
         \nonumber
        &\leq \norm{N_2-M_2}_{\diamond r_2}+\norm{N_1-M_1}_{\diamond r_1},
         \nonumber
    \end{align}
    as required. Here, the last step used Proposition~\ref{prop-submult_strategy_norm} together with the fact that $\norm{N_1}_{r_1},\norm{M_2}_{r_2}\leq 1$. The latter can for example easily be seen from \Cref{eq-strategy_norm_dual_formulation}.
\end{proof}

\section{Pauli-twirl of quantum channels}\label{sec-Pauli_twirl_quantum_channel}

    In this section, we prove \eqref{eq-Pauli_twirl_pinching}, and thereby prove that the error-rate vector of the Pauli-twirled version of an arbitrary quantum channel $\mathcal{N}:\Lin(\mathbb{C}^d)\to\Lin(\mathbb{C}^d)$, $d\in\{2,3,\dotsc\}$, is given by $p(z,x)=\frac{1}{d}\Tr[\Phi^{z,x}C(\mathcal{N})]$ for all $z,x\in\{0,1,\dotsc,d-1\}$. For generality, we prove the result in terms of qudit Pauli channels (see Appendix~\ref{sec:Pauli}), but an analogous proof to the one we present holds when the qudit Pauli operators are replaced by the $n$-qubit Pauli operators.

    \begin{lemma}\label{lem-Choi_rep_twirling}
        For every qudit quantum channel $\mathcal{N}:\Lin(\mathbb{C}^d)\to\Lin(\mathbb{C}^d)$, $d\in\{2,3,\dotsc\}$, the Choi representation of its Pauli-twirled version $\mathcal{N}^{\mathsf{W}}$, as defined in \eqref{eq-qudit_Pauli_twirled_channel}, is given by
        \begin{equation}
    C(\mathcal{N}^{\mathsf{W}})=\frac{1}{d^2}\sum_{z,x=0}^{d-1} (\conj{W^{z,x}}\otimes W^{z,x})^{\dagger}C(\mathcal{N})(\conj{W^{z,x}}\otimes W^{z,x}).
        \end{equation}
    \end{lemma}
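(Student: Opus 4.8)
The plan is to compute $C(\mathcal{N}^{\mathsf{W}}) = (\id_d \otimes \mathcal{N}^{\mathsf{W}})(\ketbra{\Gamma_d}{\Gamma_d})$ directly from the definition \eqref{eq-choi_rep}, pushing the twirl \eqref{eq-qudit_Pauli_twirled_channel} through the Choi construction one conjugating Pauli at a time. By linearity of $\mathcal{N}^{\mathsf{W}}$ in the sum over $z,x$, it suffices to fix a single pair $(z,x)$ and track the superoperator $(\id_d \otimes \mathcal{W}^{z,x\dagger})\circ(\id_d\otimes\mathcal{N})\circ(\id_d\otimes\mathcal{W}^{z,x})$ applied to $\ketbra{\Gamma_d}{\Gamma_d}$, where I write $\mathcal{W}^{z,x}(\cdot) \coloneqq W^{z,x}(\cdot)W^{z,x\dagger}$ for the inner conjugation and $\mathcal{W}^{z,x\dagger}(\cdot) \coloneqq W^{z,x\dagger}(\cdot)W^{z,x}$ for the outer one, so that $\mathcal{N}^{\mathsf{W}} = \frac{1}{d^2}\sum_{z,x}\mathcal{W}^{z,x\dagger}\circ\mathcal{N}\circ\mathcal{W}^{z,x}$.

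First I would apply the inner conjugation $\id_d\otimes\mathcal{W}^{z,x}$ and invoke the standard ricochet (transpose) identity $(\mathbbm{1}\otimes M)\ket{\Gamma_d} = (M^{\t}\otimes\mathbbm{1})\ket{\Gamma_d}$, valid for any $M\in\Lin(\mathbb{C}^d)$, which transfers the output-side Pauli $W^{z,x}$ onto the input side as $W^{z,x\t}$. Since the resulting operators $W^{z,x\t}\otimes\mathbbm{1}$ and its adjoint then act only on the first tensor factor, they commute with $\id_d\otimes\mathcal{N}$. Hence this step rewrites $(\id_d\otimes\mathcal{N})(\id_d\otimes\mathcal{W}^{z,x})(\ketbra{\Gamma_d}{\Gamma_d})$ as $(W^{z,x\t}\otimes\mathbbm{1})\,C(\mathcal{N})\,(W^{z,x\t}\otimes\mathbbm{1})^{\dagger}$, using once more the definition of $C(\mathcal{N})$.

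Next, I would apply the outer conjugation $\id_d\otimes\mathcal{W}^{z,x\dagger}$, which dresses the second tensor factor with $W^{z,x\dagger}$ on the left and $W^{z,x}$ on the right. Collecting the factors on each side then produces $(W^{z,x\t}\otimes W^{z,x\dagger})\,C(\mathcal{N})\,(\conj{W^{z,x}}\otimes W^{z,x})$. The final bookkeeping step is to recognize the elementary identities $(\conj{W})^{\dagger} = W^{\t}$ and $(W^{\t})^{\dagger} = \conj{W}$, which hold for any operator $W$; these let me rewrite the left dressing factor $W^{z,x\t}\otimes W^{z,x\dagger}$ as $(\conj{W^{z,x}}\otimes W^{z,x})^{\dagger}$ and confirm that the right dressing factor is exactly $\conj{W^{z,x}}\otimes W^{z,x}$. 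Summing over $z,x$ and dividing by $d^2$ then yields the claimed expression.

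I expect the main obstacle to be purely notational: keeping the transpose, complex-conjugate, and adjoint conventions mutually consistent so that the output-side Paulis ultimately land as $\conj{W^{z,x}}\otimes W^{z,x}$ rather than, say, $W^{z,x}\otimes\conj{W^{z,x}}$, and so that the ricochet identity is applied on the correct side of $\ketbra{\Gamma_d}{\Gamma_d}$ (tracking both $\ket{\Gamma_d}$ and $\bra{\Gamma_d}$). No analytic input is needed beyond the ricochet identity and the relations between $\conj{(\cdot)}$, $(\cdot)^{\t}$, and $(\cdot)^{\dagger}$; once the conventions are pinned down the argument is a short chain of equalities, and the $n$-qubit case follows by replacing the Heisenberg--Weyl operators $W^{z,x}$ with the $n$-qubit Pauli operators $P^{\Vec{z},\Vec{x}}$ verbatim.
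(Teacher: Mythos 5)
Your proposal is correct and follows essentially the same route as the paper's proof: apply the definition of the Choi representation, use linearity to handle one $(z,x)$ term at a time, invoke the transpose/ricochet identity $(\mathbbm{1}\otimes M)\ket{\Gamma_d}=(M^{\t}\otimes\mathbbm{1})\ket{\Gamma_d}$ to move the inner Pauli to the first factor past $\id_d\otimes\mathcal{N}$, and then collect the dressing operators using $(\conj{W})^{\dagger}=W^{\t}$. The paper presents this as one chain of equalities, but the underlying steps and the single nontrivial ingredient (the transpose trick) are identical.
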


    \begin{proof}
        By definition of the Choi representation, we have
        \begin{align}
            C(\mathcal{N}^{\mathsf{W}})&=(\id_d\otimes\mathcal{N}^{\mathsf{W}})(\ketbra{\Gamma}{\Gamma})\\
             \nonumber
            &=\frac{1}{d^2}\sum_{z,x=0}^{d-1}(\mathbbm{1}_d\otimes W^{z,x\dagger})(\id_d\otimes\mathcal{N})\left((\mathbbm{1}_d\otimes W^{z,x})\ketbra{\Gamma}{\Gamma}(\mathbbm{1}_d\otimes W^{z,x})^{\dagger}\right)(\mathbbm{1}_d\otimes W^{z,x})\\
             \nonumber
            &=\frac{1}{d^2}\sum_{z,x=0}^{d-1}(\mathbbm{1}_d\otimes W^{z,x\dagger})(\id_d\otimes\mathcal{N})\left((W^{z,x\t}\otimes\mathbbm{1}_d)\ketbra{\Gamma}{\Gamma}(W^{z,x\t}\otimes\mathbbm{1}_d)^{\dagger}\right)(\mathbbm{1}_d\otimes W^{z,x})\\
             \nonumber
            &=\frac{1}{d^2}\sum_{z,x=0}^{d-1}(W^{z,x\t}\otimes W^{z,x\dagger})C(\mathcal{N})(\conj{W^{z,x}}\otimes W^{z,x})\\
             \nonumber
            &=\frac{1}{d^2}\sum_{z,x=0}^{d-1}(\conj{W^{z,x}}\otimes W^{z,x})^{\dagger}C(\mathcal{N})(\conj{W^{z,x}}\otimes W^{z,x}),
             \nonumber
        \end{align}
        as required, where for the third equality we have used the ``transpose trick'' $(\mathbbm{1}_d\otimes X)\ket{\Gamma}=(X^{\t}\otimes\mathbbm{1}_d)\ket{\Gamma}$, which holds for every linear operator $X\in\Lin(\mathbb{C}^d)$.
    \end{proof}

    \begin{proposition}\label{prop-qudit_Pauli_channel_twirling}
        Let $X\in\Lin(\mathbb{C}^d\otimes\mathbb{C}^d)$, $d\in\{2,3,\dotsc\}$, and define $\mathcal{S}_{\mathsf{W}}$ to be the pinching channel in the qudit Bell basis:
        \begin{equation}\label{eq-pinching_twirling}
            \mathcal{S}_{\mathsf{W}}(X)\coloneqq\sum_{z,x=0}^{d-1}\ketbra{\Phi^{z,x}}{\Phi^{z,x}}X\ketbra{\Phi^{z,x}}{\Phi^{z,x}}.
        \end{equation}
        It holds that
        \begin{equation}
            \mathcal{S}_{\mathsf{W}}(X)=\frac{1}{d^2}\sum_{z,x=0}^{d-1}(\conj{W^{z,x}}\otimes W^{z,x})^{\dagger}X(\conj{W^{z,x}}\otimes W^{z,x}).
        \end{equation}
        Consequently, for every channel $\mathcal{N}:\Lin(\mathbb{C}^d)\to\Lin(\mathbb{C}^d)$, the Choi representation of its Pauli-twirled version $\mathcal{N}^{\mathsf{W}}$, defined in \eqref{eq-qudit_Pauli_twirled_channel}, is given by $C(\mathcal{N}^{\mathsf{W}})=\mathcal{S}_{\mathsf{W}}(C(\mathcal{N}))$. Furthermore, the corresponding error-rate vector of the Pauli-twirled channel is given by $p(z,x)=\frac{1}{d}\Tr[\Phi^{z,x}C(\mathcal{N})]$ for all $z,x\in\{0,1,\dotsc,d-1\}$.
    \end{proposition}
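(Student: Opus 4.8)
The plan is to establish the three assertions in the order stated, observing that once the operator identity for $\mathcal{S}_{\mathsf{W}}$ is proved, the remaining two claims follow quickly from Lemma~\ref{lem-Choi_rep_twirling} and a coefficient comparison. So the core of the argument is to show that the Bell-basis pinching coincides with the two-sided twirl
\begin{equation*}
\mathcal{S}_{\mathsf{W}}(X)=\frac{1}{d^2}\sum_{z,x=0}^{d-1}(\conj{W^{z,x}}\otimes W^{z,x})^{\dagger}X(\conj{W^{z,x}}\otimes W^{z,x}).
\end{equation*}
To this end I would set $U^{z,x}\coloneqq\conj{W^{z,x}}\otimes W^{z,x}$ and show that the qudit Bell states simultaneously diagonalize the family $\{U^{z,x}\}$. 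The twirl on the right is then an average over conjugations by commuting unitaries, i.e.\ a projection onto the algebra diagonal in the common eigenbasis, which is precisely what the pinching $\mathcal{S}_{\mathsf{W}}$ computes.

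For the eigenvector step I would use the transpose trick $(\mathbbm{1}_d\otimes V)\ket{\Phi}=(V^{\t}\otimes\mathbbm{1}_d)\ket{\Phi}$ together with $\conj{W^{z',x'}}^{\t}=W^{z',x'\dagger}$ to move the conjugated factor of $U^{z',x'}$ onto the second tensor leg, yielding $U^{z',x'}\ket{\Phi^{z,x}}=(\mathbbm{1}_d\otimes W^{z',x'}W^{z,x}W^{z',x'\dagger})\ket{\Phi}$. The Heisenberg--Weyl commutation relations give $W^{z',x'}W^{z,x}W^{z',x'\dagger}=\phi^{(z,x)}_{(z',x')}W^{z,x}$ for a phase $\phi^{(z,x)}_{(z',x')}$ fixed by the symplectic form on $\mathbb{Z}_d^2$, so $U^{z',x'}\ket{\Phi^{z,x}}=\phi^{(z,x)}_{(z',x')}\ket{\Phi^{z,x}}$. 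Expanding $X=\sum_{a,b}X_{ab}\ketbra{\Phi^{a}}{\Phi^{b}}$ in the Bell basis (with multi-indices $a,b$), conjugation by $U^{z',x'}$ multiplies the $(a,b)$ entry by $\conj{\phi^{a}_{(z',x')}}\phi^{b}_{(z',x')}$, and averaging leaves the factor $\frac{1}{d^2}\sum_{z',x'}\conj{\phi^{a}_{(z',x')}}\phi^{b}_{(z',x')}$. The diagonal terms equal $1$, while for $a\neq b$ this is a sum of a nontrivial character of $\mathbb{Z}_d^2$ and hence vanishes; nondegeneracy of the symplectic form guarantees that distinct Bell indices give distinct characters, so all cross terms die. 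This collapses the twirl to $\sum_{a}X_{aa}\ketbra{\Phi^{a}}{\Phi^{a}}=\mathcal{S}_{\mathsf{W}}(X)$, proving the identity.

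Combining this identity with Lemma~\ref{lem-Choi_rep_twirling}, which already expresses $C(\mathcal{N}^{\mathsf{W}})$ as exactly the twirl of $C(\mathcal{N})$ appearing on the right-hand side, immediately yields $C(\mathcal{N}^{\mathsf{W}})=\mathcal{S}_{\mathsf{W}}(C(\mathcal{N}))$. Rewriting the pinching as $\mathcal{S}_{\mathsf{W}}(C(\mathcal{N}))=\sum_{z,x}\Tr[\Phi^{z,x}C(\mathcal{N})]\,\Phi^{z,x}$ and using $\Gamma^{z,x}=d\,\Phi^{z,x}$ (from $\ket{\Gamma^{z,x}}=\sqrt{d}\,\ket{\Phi^{z,x}}$), I would match this against the general Pauli-channel Choi form $\sum_{z,x}p(z,x)\Gamma^{z,x}$ of Appendix~\ref{sec:Pauli}. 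Orthonormality of the Bell basis forces the coefficients to agree, giving $p(z,x)=\frac{1}{d}\Tr[\Phi^{z,x}C(\mathcal{N})]$. Since $\Phi^{z,x}$ and $C(\mathcal{N})$ are positive semidefinite, these coefficients are nonnegative, and trace preservation of $\mathcal{N}^{\mathsf{W}}$ forces them to sum to one, so $\mathcal{N}^{\mathsf{W}}$ is a genuine Pauli channel with the stated error rates.

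The main obstacle is the phase bookkeeping in the second paragraph: one must verify the precise commutation phase $\phi^{(z,x)}_{(z',x')}$, confirm that as a function of $(z',x')$ it is a character of $\mathbb{Z}_d^2$, and check nondegeneracy so that the character-orthogonality cancellation applies. Everything downstream is a direct appeal to Lemma~\ref{lem-Choi_rep_twirling} or a comparison of coefficients in an orthonormal basis, and an analogous argument with the $n$-qubit Pauli operators recovers the qubit statement \eqref{eq-Pauli_twirl_pinching}.
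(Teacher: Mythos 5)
Your proof is correct, but it reaches the key identity $\mathcal{S}_{\mathsf{W}}(X)=\frac{1}{d^2}\sum_{z,x}(\conj{W^{z,x}}\otimes W^{z,x})^{\dagger}X(\conj{W^{z,x}}\otimes W^{z,x})$ by a genuinely different route than the paper. The paper works in the ``Fourier'' direction: it first expands each Bell projector in the Weyl operator basis, $\Phi^{z,x}=\frac{1}{d^2}\sum_{z',x'}\e^{\frac{2\pi\I}{d}(x'z-xz')}\conj{W^{z',x'}}\otimes W^{z',x'}$, substitutes this into $\sum_{z,x}\Phi^{z,x}X\Phi^{z,x}$, and lets the sums over $z$ and $x$ produce Kronecker deltas that collapse the resulting double sum to the twirl. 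You instead diagonalize: the operators $U^{z',x'}=\conj{W^{z',x'}}\otimes W^{z',x'}$ form an abelian family with the Bell vectors as common eigenvectors, $U^{z',x'}\ket{\Phi^{z,x}}=\e^{\frac{2\pi\I}{d}(xz'-x'z)}\ket{\Phi^{z,x}}$, so conjugation multiplies the $(a,b)$ Bell matrix element of $X$ by a ratio of characters of $\mathbb{Z}_d^2$, and character orthogonality (via nondegeneracy of the symplectic form $xz'-x'z$) kills exactly the off-diagonal terms. The two arguments involve the same phase bookkeeping but in dual roles; yours makes the conceptual content more transparent (averaging over an abelian unitary group projects onto its joint eigenspaces, which is literally what a pinching does), whereas the paper's is a self-contained basis-expansion computation that also yields the explicit formula for $\Phi^{z,x}$ as a byproduct. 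Your downstream steps --- invoking Lemma~\ref{lem-Choi_rep_twirling}, matching coefficients against $\sum_{z,x}p(z,x)\Gamma^{z,x}$ with $\Gamma^{z,x}=d\,\Phi^{z,x}$, and checking nonnegativity and normalization of the $p(z,x)$ --- coincide with the paper's. The one item you flag as needing verification, the commutation phase, indeed works out: from $W^{z,x}W^{z',x'}=\e^{-\frac{2\pi\I xz'}{d}}W^{z+z',x+x'}$ one gets $W^{z',x'}W^{z,x}W^{z',x'\dagger}=\e^{\frac{2\pi\I}{d}(xz'-x'z)}W^{z,x}$, which is a character in $(z',x')$, and distinct $(z,x)$ give distinct characters, so the cancellation you rely on is valid.
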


    \begin{proof}
        We make repeated use of the following facts about the qudit Pauli operators~\cite{Wat18_book}:
        \begin{align}
            W^{z,x\dagger}&=\e^{-\frac{2\pi\I zx}{d}}W^{-z,-x},\label{eq-qudit_Pauli_1}\\
            W^{z,x\t}&=\e^{\frac{2\pi\I zx}{d}}W^{z,-x},\label{eq-qudit_Pauli_2}\\
            W^{z,x}W^{z',x'}&=\e^{-\frac{2\pi\I xz'}{d}}W^{z+z',x+x'},\label{eq-qudit_Pauli_3}
        \end{align}
        which hold for all choices of $z,x,z',x'\in\{0,1,\dotsc,d-1\}$. Now, let us start by showing that 
        \begin{equation}\label{eq-qudit_Pauli_twirl_pf1}
            \Phi^{z,x}=\frac{1}{d^2}\sum_{z',x'=0}^{d-1}\e^{\frac{2\pi\I}{d}(x'z-xz')}\conj{W^{z',x'}}\otimes W^{z',x'},
        \end{equation}
        for all $z,x\in\{0,1,\dotsc,d-1\}$. To show this, we use the definition of $\Phi^{z,x}$ in \eqref{eq-qudit_Bell_states}, the properties in \eqref{eq-qudit_Pauli_1}--\eqref{eq-qudit_Pauli_3}, and the fact that the operators $\{\conj{W^{z_1',x_1'}}\otimes W^{z_2',x_2'}:z_1',x_1',z_2',x_2'\in\{0,1,\dotsc,d-1\}\}$ form a basis for $\Lin(\mathbb{C}^d\otimes\mathbb{C}^d)$,
        \begin{align}
            &\Tr\!\left[\left(\conj{W^{z_1',x_1'}}\otimes W^{z_2',x_2'}\right)^{\dagger}\Phi^{z,x}\right]\\
             \nonumber
            &\quad=\frac{1}{d}\Tr\!\left[(\conj{W^{z_1',x_1'}}\otimes W^{z_2',x_2'})^{\dagger}(\mathbbm{1}_d\otimes W^{z,x})\ketbra{\Gamma}{\Gamma}(\mathbbm{1}_d\otimes W^{z,x\dagger})\right]\\
             \nonumber
            &\quad=\frac{1}{d}\Tr\!\left[W^{z,x\dagger}W^{z_2',x_2'\dagger}W^{z,x}W^{z_1',x_1'}\right]\\
             \nonumber
            &\quad=\frac{1}{d}\Tr\!\left[\left(W^{z_2',x_2'}W^{z,x}\right)^{\dagger}W^{z,x}W^{z_1',x_1'}\right]\\
             \nonumber
            &\quad=\frac{1}{d}\e^{\frac{2\pi\I}{d}(x_2'z-xz_1')}\underbrace{\Tr\!\left[\left(W^{z_2'+z,x_2'+x}\right)^{\dagger}W^{z+z_1',x+x_1'}\right]}_{d\delta_{z_2'+z,z+z_1'}\delta_{x_2'+x,x+x_1'}}\\
             \nonumber
            &\quad=\e^{\frac{2\pi\I}{d}(x_2'z-xz_1')}\delta_{z_2',z_1'}\delta_{x_2',x_1'}.
             \nonumber
        \end{align}
        Therefore, using \eqref{eq-qudit_Pauli_twirl_pf1}, along with the properties in \eqref{eq-qudit_Pauli_1}--\eqref{eq-qudit_Pauli_3} once more, we obtain
        \begin{align}
            \mathcal{S}_{\mathsf{W}}(X)&=\sum_{z,x=0}^{d-1}\Phi^{z,x}X\Phi^{z,x}\\
            \nonumber
            &=\frac{1}{d^4}\sum_{z,x=0}^{d-1}\sum_{z_1',x_1'=0}^{d-1}\sum_{z_2',x_2'=0}^{d-1} \left(\e^{\frac{2\pi\I}{d}(-xz_1'+x_1'z)}\conj{W^{z_1',x_1'}}\otimes W^{z_1',x_1'}\right)X\left(\e^{\frac{2\pi\I}{d}(-xz_2'+x_2'z)}\conj{W^{z_2',x_2'}}\otimes W^{z_2',x_2'}\right)\\
            \nonumber
            &=\frac{1}{d^4}\sum_{z_1',x_1'=0}^{d-1}\sum_{z_2',x_2'=0}^{d-1}\underbrace{\left(\sum_{x=0}^{d-1}\e^{-\frac{2\pi\I}{d}x(z_1'+z_2')}\right)}_{d\delta_{z_1',-z_2'}}\underbrace{\left(\sum_{z=0}^{d-1}\e^{\frac{2\pi\I}{d}z(x_1'+x_2')}\right)}_{d\delta_{x_1',-x_2'}}(\conj{W^{z_1',x_1'}}\otimes W^{z_1',x_1'})X(\conj{W^{z_2',x_2'}}\otimes W^{z_2',x_2'})\\
            \nonumber
            &=\frac{1}{d^2}\sum_{z,x=0}^{d-1}(\conj{W^{-z,-x}}\otimes W^{-z,-x})X(\conj{W^{z,x}}\otimes W^{z,x})\\
            &=\frac{1}{d^2}\sum_{z,x=0}^{d-1}\e^{\frac{2\pi\I zx}{d}} (W^{z,-x}\otimes W^{z,x\dagger})X(\conj{W^{z,x}}\otimes W^{z,x})\\
            \nonumber
            &=\frac{1}{d^2}\sum_{z,x=0}^{d-1}\e^{\frac{2\pi\I zx}{d}}\e^{-\frac{2\pi\I zx}{d}}(W^{z,x\t}\otimes W^{z,x\dagger})X(\conj{W^{z,x}}\otimes W^{z,x})\\
            \nonumber
            &=\frac{1}{d^2}\sum_{z,x=0}^{d-1}(\conj{W^{z,x}}\otimes W^{z,x})^{\dagger}X(\conj{W^{z,x}}\otimes W^{z,x}),
            \nonumber
        \end{align}
        which proves \eqref{eq-pinching_twirling}. Then, if $X\equiv C(\mathcal{N})$ is the Choi representation of a quantum channel $\mathcal{N}:\Lin(\mathbb{C}^d)\to\Lin(\mathbb{C}^d)$, using Lemma~\ref{lem-Choi_rep_twirling} we see that $C(\mathcal{N}^{\mathsf{W}})=\mathcal{S}_{\mathsf{W}}(C(\mathcal{N}))=\sum_{z,x=0}^{d-1}\Tr[\Phi^{z,x}C(\mathcal{N})]\Phi^{z,x}=\sum_{z,x=0}^{d-1}\frac{1}{d}\Tr[\Phi^{z,x}C(\mathcal{N})]\Gamma^{z,x}$. By identifying with \eqref{eq-qudit_Pauli_channel_Choi_rep}, we can see that the twirled channel is indeed a Pauli channel, and using \eqref{eq-qudit_Pauli_channel_error_vector}, we see that the error-rate vector of the twirled channel is $p(z,x)=\frac{1}{d}\Tr[\Phi^{z,x}C(\mathcal{N})]$ for all $z,x\in\{0,1,\dotsc,d-1\}$. This completes the proof.
    \end{proof}

\section{Entropic analysis of the MMW algorithm}\label{sec-MMW_entropic}

In this section, we provide an entropic analysis of the \emph{matrix multiplicative weights} (MMW) algorithm (Algorithm~\ref{alg:mmw} and its projected variant for Choi states (Algorithm~\ref{alg:projected-mmw-choi-state}). We note that an entropic analysis similar to the one we provide here can be found in Ref.~\cite[Theorem~2.4]{arora2012multiplicative} for the \emph{multiplicative weights update} (MWU) algorithm.
On it highest level, the MMW algorithm assigns 
 initial weights to experts iteratively updates these weights multiplicatively according to the feedback 
 on how well the expert has performed. 
It is known as a method for highly efficiently solve
convex optimization problems.

We start with a proof of \Cref{prop-MMW_amortization_regret_bound}. The bound we obtain applies also to the Hedge algorithm (Algorithm~\ref{alg:hedge} below), which is a special case of the MMW algorithm (Algorithm~\ref{alg:mmw}) when the loss matrices $L^{(t)}$ of the MMW algorithm are all diagonal in the same basis, such that the diagonal elements of $L^{(t)}$ constitute the loss vector $\Vec{m}^{(t)}$ in the Hedge algorithm. The bound we obtain for the Hedge algorithm is in general tighter than the one obtained in Ref.~\cite[Theorem~2.3]{arora2012multiplicative}.

\begin{algorithm} 
    \begin{algorithmic}[1]
    \Require {$0 < \eta \leq 1$; Initialize $\Vec{w}^{(1)} = \Vec{1}$}
    \For{\texttt{$t = 1, 2, \ldots T$}}
        \State Output the decision/estimate $\Vec{p}^{(t)}=\frac{\Vec{w}^{(t)}}{\Tr[\Vec{w}^{(t)}]}$.
        \State Receive the cost vector $-1\leq \Vec{m}^{(t)}\leq 1$ (element-wise inequalities).
        \State Update the weights as $\Vec{w}^{(t+1)}=\Vec{w}^{(t)}\e^{-\eta\Vec{m}^{(t)}}$ (element-wise multiplication).
    \EndFor
    \end{algorithmic}
    \caption{The Hedge algorithm~\cite{freund1997decision,arora2012multiplicative}}
    \label{alg:hedge}
\end{algorithm}

\subsection{Proof of Proposition~\ref{prop-MMW_amortization_regret_bound}}\label{sec-prop-MMW_amortization_regret_bound_pf}
    We start by noticing that in the proof of \cite[Theorem~3.1]{arora2016MMW}, the inequality
    \begin{equation}
        \Tr[W^{(t+1)}]\leq \Tr[W^{(t)}]\e^{-\eta\Tr[L^{(t)}\omega^{(t)}]+\eta^2\Tr[(L^{(t)})^2\omega^{(t)}]}
    \end{equation}
    can be written as
    \begin{equation}
        \frac{\Tr[W^{(t+1)}]}{\Tr[W^{(t)}]}\leq \e^{-\eta\Tr[L^{(t)}\omega^{(t)}]+\eta^2\Tr[(L^{(t)})^2\omega^{(t)}]}.
    \end{equation}
    Taking the logarithm on both sides leads to
    \begin{equation}\label{eq-MMW_amortization}
        \log\Tr[W^{(t+1)}]-\log\Tr[W^{(t)}]\leq -\eta\Tr[L^{(t)}\omega^{(t)}]+\eta^2\Tr[(L^{(t)})^2\omega^{(t)}].
    \end{equation}
    Applying \eqref{eq-MMW_amortization} recursively leads to
    \begin{align}
        \log\Tr[W^{(T+1)}]-\log\Tr[W^{(1)}]&=\log\Tr[W^{(T+1)}]-\log\Tr[W^{(T)}]\nonumber\\
        &\quad+\log\Tr[W^{(T)}]-\log\Tr[W^{(T-1)}]+\dotsb-\log\Tr[W^{(2)}]\nonumber\\
        &\quad+\log\Tr[W^{(2)}]-\log\Tr[W^{(1)}]\\
        &\leq \sum_{t=1}^T\left(-\eta\Tr[L^{(t)}\omega^{(t)}]+\eta^2\Tr[(L^{(t)})^2\omega^{(t)}]\right).
         \nonumber
    \end{align}
    Now, let $\rho$ be an arbitrary density operator. Then, noting that the von Neumann entropy of $\rho$ is given by $H(\rho)\coloneqq-\Tr[\rho\log\rho]$, we have that the relative entropy between $\rho$ and $\omega^{(t)}$ is\footnote{Observe that the updates $\omega^{(t)}$ always have full rank, so the support condition in \eqref{eq-relative_entropy} is satisfied.}
    \begin{align}
        D(\rho\Vert \omega^{(t)})&\coloneqq\Tr[\rho\log\rho]-\Tr[\rho\log\omega^{(t)}]\\
         \nonumber
        &=-H(\rho)-\Tr[\rho\log \omega^{(t)}]\\
         \nonumber
        &=-H(\rho)-\Tr\!\left[\rho\log\frac{W^{(t)}}{\Tr[W^{(t)}]}\right]\\
         \nonumber
        &=-H(\rho)+\log\Tr[W^{(t)}]-\Tr[\rho\log W^{(t)}].
         \nonumber
    \end{align}
    Noting further that
    \begin{align}
        \log W^{(T+1)}&=\log\exp\!\left(-\eta\sum_{t=1}^{T}L^{(t)}\right)\\
         \nonumber
        &=-\eta\sum_{t=1}^{T+1}L^{(t)}\\
         \nonumber
        &=-\eta L^{(T)}-\eta\sum_{t=1}^{T-1}L^{(t)}\\
         \nonumber
        &=-\eta L^{(T)}+\log W^{(T)},
         \nonumber
    \end{align}
    we obtain
    \begin{align}
     \nonumber
        D(\rho\Vert\omega^{(t+1)})-D(\rho\Vert\omega^{(t)})&=\log\Tr[W^{(t+1)}]-\log\Tr[W^{(t)}]-\Tr[\rho(-\eta L^{(t)}+\log W^{(t)})]\nonumber\\
        &\qquad +\Tr[Q\log W^{(t)}]
         \nonumber
         \\
         \nonumber
        &=\log\Tr[W^{(t+1)}]-\log\Tr[W^{(t)}]+\eta\Tr[\rho L^{(t)}]\\
        &\leq -\eta \Tr[L^{(t)}\omega^{(t)}]+\eta^2\Tr[(L^{(t)})^2\omega^{(t)}]+\eta\Tr[\rho L^{(t)}],\label{eq-MMW_amortization_regret_bound_pf}
    \end{align}
    where for the inequality on the last line we have used \eqref{eq-MMW_amortization}. This implies that
    \begin{align}
        D(\rho\Vert\omega^{(T+1)})-D(\rho\Vert\omega^{(1)})&=D(\rho\Vert\omega^{(T+1)})-D(\rho\Vert\omega^{(T)})\\
         \nonumber
        &\quad +D(\rho\Vert\omega^{(T)})-D(\rho\Vert\omega^{(T-1)})+\dotsb-D(\rho\Vert\omega^{(2)})\\
         \nonumber
        &\quad +D(\rho\Vert\omega^{(2)})-D(\rho\Vert\omega^{(1)})\\
         \nonumber
        &\leq -\eta\sum_{t=1}^T\Tr[L^{(t)}\omega^{(t)}]+\eta^2\sum_{t=1}^T\Tr[(L^{(t)})^2\omega^{(t)}]+\eta\sum_{t=1}^T\Tr[L^{(t)}\rho].
         \nonumber
    \end{align}
    Then, because $D(\rho\Vert\omega^{(T+1)})\geq 0$, we obtain
    \begin{equation}
        0\leq D(\rho\Vert\omega^{(T+1)})\leq -\eta\sum_{t=1}^T \Tr[L^{(t)}\omega^{(t)}]+\eta^2\sum_{t=1}^T\Tr[(L^{(t)})^2\omega^{(t)}]+\eta\sum_{t=1}^T\Tr[L^{(t)}\rho]+D(\rho\Vert\omega^{(1)}).
    \end{equation}
    Finally, because $\omega^{(1)}=\frac{1}{d}\mathbbm{1}$, and because $D(\rho\Vert\frac{1}{d}\mathbbm{1})=\log d-H(\rho)$, we can rearrange the inequality above to obtain the desired result.

\begin{remark}
    Let us make the following observations about the result in \eqref{eq-MMW_amortization_regret_bound}.
    \begin{itemize}
        \item If we let $\rho$ be a rank-one density operator, then $H(\rho)=0$, and then we can further minimize over all such rank-one density operators to obtain
            \begin{equation}
                \sum_{t=1}^T \Tr[L^{(t)}\omega^{(t)}]\leq \lambda_{\min}\!\left(\sum_{t=1}^TL^{(t)}\right)+\eta\sum_{t=1}^T\Tr[(L^{(t)})^2\omega^{(t)}]+\frac{\log d}{\eta},
            \end{equation}
            which is precisely the result in Ref.~\cite[Theorem~3.1]{arora2016MMW}. 
        \item It is worth noting that the MMW-based result for online learning of quantum states in Ref.~\cite[Theorem~4]{ACH+19} (see, in particular, the proof) presents the regret bound (using the notation of this section, and $d=2^n$)
            \begin{equation}
                \sum_{t=1}^T\Tr[L^{(t)}\omega^{(t)}]\leq \Tr\!\left[\rho\left(\sum_{t=1}^T L^{(t)}\right)\right]+\eta\sum_{t=1}^T\Tr[(L^{(t)})^2\omega^{(t)}]+\frac{\log(2^n)}{\eta},
            \end{equation}
            for every density operator $\rho$. Note that the regret bound in \eqref{eq-MMW_amortization_regret_bound} can in general be tighter than this bound, because of the entropy term $H(\rho)$ in \eqref{eq-MMW_amortization_regret_bound}, which is always non-negative.
        \item We can minimize the right-hand side of \eqref{eq-MMW_amortization_regret_bound} with respect to $\rho$, in order to obtain the best possible upper bound on the total expected loss. In other words,
            \begin{equation}
                \sum_{t=1}^T\Tr[L^{(t)}\omega^{(t)}]\leq\inf_{\substack{\rho\geq 0\\\Tr[\rho]=1}}\left(\Tr\!\left[\rho\left(\sum_{t=1}^TL^{(t)}\right)\right]-\frac{H(\rho)}{\eta}\right)+\eta\sum_{t=1}^T\Tr[(L^{(t)})^2\omega^{(t)}]+\frac{\log d}{\eta}.
            \end{equation}
    \end{itemize}
\end{remark}

From the connection between the MWW and Hedge algorithms noted at the beginning of this section, we immediately obtain the following regret bound for the Hedge algorithm from Proposition~\ref{prop-MMW_amortization_regret_bound}.

\begin{corollary}[Entropic regret bound for the Hedge algorithm]\label{cor-Hedge_amortization_regret_bound}
    Let $T\in\mathbb{N}$, and consider a sequence $\Vec{m}^{(1)},\Vec{m}^{(2)},\dotsc,\Vec{m}^{(T)}$ of loss vectors of size $d\in\{2,3,\dotsc\}$ along with the updates $\Vec{p}^{(t)}$ provided by the Hedge algorithm in Algorithm~\ref{alg:hedge}. Then, the following inequality holds:
    \begin{equation}
        \sum_{t=1}^T\Vec{p}^{(t)}\cdot\Vec{m}^{(t)}\leq \Vec{q}\cdot\left(\sum_{t=1}^T\Vec{m}^{(t)}\right)+\eta\sum_{t=1}^T\Vec{p}^{(t)}\cdot(\Vec{m}^{(t)})^2+\frac{\log d-H(\Vec{q})}{\eta},
    \end{equation}
    where $\Vec{q}$ is an arbitrary probability vector.
\end{corollary}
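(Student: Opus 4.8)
The plan is to derive this bound by specializing Proposition~\ref{prop-MMW_amortization_regret_bound} to the diagonal case, using the observation recorded at the start of this section that the Hedge algorithm (Algorithm~\ref{alg:hedge}) is exactly the MMW algorithm (Algorithm~\ref{alg:mmw}) run with loss matrices that are all diagonal in a common (computational) basis. Concretely, I would set $L^{(t)} \coloneqq \operatorname{diag}(\Vec{m}^{(t)})$, the diagonal matrix whose diagonal entries are the components of the loss vector $\Vec{m}^{(t)}$. The cost constraint $-\mathbbm{1}_d \leq L^{(t)} \leq \mathbbm{1}_d$ required by the MMW algorithm then holds if and only if $-1 \leq m_i^{(t)} \leq 1$ for every coordinate $i$, which is precisely the admissibility condition imposed on the Hedge loss vectors.

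The first substantive step is to verify, by induction on $t$, that the MMW iterates remain diagonal and reproduce the Hedge weights. Since $W^{(1)} = \mathbbm{1}_d$ is diagonal and the update $W^{(t+1)} = \exp[\log W^{(t)} - \eta L^{(t)}]$ preserves simultaneous diagonalizability when all $L^{(t')}$ share the computational basis, each $W^{(t)}$ is diagonal with entries $w_i^{(t)} = \prod_{t'=1}^{t-1} e^{-\eta m_i^{(t')}}$, matching the Hedge weights $\Vec{w}^{(t)}$ exactly. Consequently the normalized iterate $\omega^{(t)} = W^{(t)}/\Tr[W^{(t)}]$ equals $\operatorname{diag}(\Vec{p}^{(t)})$, identifying the MMW ``decision density operators'' with the Hedge probability vectors.

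With this identification in hand, the remaining work is purely to read off each term of Proposition~\ref{prop-MMW_amortization_regret_bound} in vector form. Taking the free density operator in that proposition to be $\rho \coloneqq \operatorname{diag}(\Vec{q})$ for an arbitrary probability vector $\Vec{q}$, I would use that traces of products of diagonal matrices collapse to Euclidean inner products of their diagonals: $\Tr[L^{(t)}\omega^{(t)}] = \Vec{m}^{(t)}\cdot\Vec{p}^{(t)}$, $\Tr[\rho\sum_t L^{(t)}] = \Vec{q}\cdot\sum_t \Vec{m}^{(t)}$, and $\Tr[(L^{(t)})^2\omega^{(t)}] = \Vec{p}^{(t)}\cdot(\Vec{m}^{(t)})^2$ with the coordinate-wise square. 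Finally, the von Neumann entropy of a diagonal density operator equals the Shannon entropy of its diagonal, so $H(\rho) = H(\Vec{q})$. Substituting these four identities into \eqref{eq-MMW_amortization_regret_bound} yields the claimed inequality verbatim.

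I do not anticipate a genuine obstacle: the content of the result is entirely contained in Proposition~\ref{prop-MMW_amortization_regret_bound}, and the only care needed is in the bookkeeping of the diagonal reduction. The one point worth stating explicitly, rather than glossing over, is the inductive verification that diagonality is propagated by the matrix-exponential update and that the resulting diagonal entries coincide with the multiplicative Hedge weights; once that is granted, the term-by-term translation is immediate.
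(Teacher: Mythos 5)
Your proposal is correct and follows exactly the route the paper takes: the paper derives the corollary by observing that Hedge is the restriction of the MMW algorithm to cost matrices diagonal in a common basis, and then reading off \Cref{prop-MMW_amortization_regret_bound} coordinatewise with $\rho=\operatorname{diag}(\Vec{q})$. The only difference is that you spell out the inductive verification that diagonality is preserved and that the MMW weights coincide with the Hedge weights, which the paper leaves implicit.
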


\subsection{The projected MMW algorithm}\label{sec-projected_MMW}

    We start by defining the projection map as 
    \begin{equation}\label{eq-rel_ent_CPTP_proj}
        \Pi(\sigma_{A,B})=\argmin_{\rho_{A,B}}\Big\{D(\rho_{A,B}\Vert\sigma_{A,B}):\rho_{A,B}\geq 0,\,\Tr_B[\rho_{A,B}]=\frac{\mathbbm{1}_A}{d_A}\Big\},
    \end{equation}
    where $\sigma_{A,B}$ is a density operator and the relative entropy is $D(\cdot\Vert\cdot)$ is defined in \eqref{eq-relative_entropy}. We make use of the fact that the relative entropy is a Bregman divergence~\cite{tsuda2005MMW,petz2007bregman,dhillon2008matrixnearnessBregman}. In particular, for $\rho,\sigma$ density operators, with $\sigma$ positive definite, we have that
    \begin{equation}
        D(\rho\Vert\sigma)=B_F(\rho\Vert\sigma)\coloneqq F(\rho)-F(\sigma)-\Tr[\nabla F(\sigma)(\rho-\sigma)],
    \end{equation}
    where
    \begin{align}
        F(P)&\coloneqq \Tr[P\log P-P],\label{eq-Bregman_F}\\
        \nabla F(Q)&\equiv \log Q,\label{eq-grad_F}
    \end{align}
    for $P$ positive semi-definite and $Q$ positive definite. It follows that the projection map in \eqref{eq-rel_ent_CPTP_proj} is a Bregman projection; consequently, we have the so-called \textit{Pythagorean inequality}~\cite{dhillon2008matrixnearnessBregman},
    \begin{equation}\label{eq-rel_ent_Pythagorean}
        D(\rho\Vert\sigma)\geq D(\rho\Vert\Pi(\sigma))+D(\Pi(\sigma)\Vert\sigma),
    \end{equation}
    for density operators $\rho$ and $\sigma$. This inequality essentially tells us that projection only get us closer to the set of CPTP maps, in the sense that
    \begin{equation}\label{eq-rel_ent_Pythagorean_2}
        D(\rho\Vert\Pi(\sigma))\leq D(\rho\Vert\sigma),
    \end{equation}
    which follows directly from \eqref{eq-rel_ent_Pythagorean}, due to the fact that $D(\rho\Vert\sigma)\geq 0$ for all density operators $\rho$ and $\sigma$. We also require the \textit{Pinsker inequality}~\cite{Wat18_book}: for all density operators $\rho$ and $\sigma$,
    \begin{equation}\label{eq-pinsker_inequality}
        D(\rho\Vert\sigma)\geq\frac{1}{2}\norm{\rho-\sigma}_1^2.
    \end{equation}

    Finally, we make the observation that the update step 4 in the MMW algorithm (Algorithm~\ref{alg:mmw} and Algorithm~\ref{alg:projected-mmw-choi-state})
    can be written as
    \begin{equation}
        W^{(t+1)}=\exp\!\left[\log(W^{(t)})-\eta L^{(t)}\right]=\exp\!\left[\nabla F(W^{(t)})-\eta L^{(t)}\right],
    \end{equation}
    where we recall the expression for $\nabla F$ in \eqref{eq-grad_F}. With this observation, we can equivalently formulate Algorithm~\ref{alg:projected-mmw-choi-state} as the lazy version of Algorithm~\ref{alg:projected_mmw_choi_state_2} below, which is a \textit{mirror descent} algorithm~\cite[Section~5.3]{OCO_hazan_book}.

    \begin{algorithm}
     \caption{Mirror descent algorithm for Choi states of quantum channels}\label{alg:projected_mmw_choi_state_2}
     \begin{algorithmic}[1]
     \Require {$\eta \in (0, 1)$; Initialize $W_{A,B}^{(1)} = \mathbbm{1}_A \otimes \mathbbm{1}_B$ and $\rho_{A,B}^{(1)}=\frac{1}{d_Ad_B}\mathbbm{1}_A\otimes\mathbbm{1}_B$}
     \For{\texttt{$t = 1, 2, \ldots, T$}}
        \State Output the decision/estimate $\rho_{A,B}^{(t)}$
        \State Receive the cost matrix $L_{A,B}^{(t)}$, $-\mathbbm{1}_{A,B}\leq L_{A,B}^{(t)}\leq\mathbbm{1}_{A,B}$
        \State Take the gradient step
            \begin{align*}
                & \text{(Lazy version)} \quad W_{A,B}^{(t)}\to Y_{A,B}^{(t)}\coloneqq\nabla F(W_{A,B}^{(t)})-\eta L_{A,B}^{(t)} \\[1ex]
                & \text{(Agile version)} \quad W_{A,B}^{(t)}\to Y_{A,B}^{(t)}\coloneqq\nabla F(\rho_{A,B}^{(t)})-\eta L_{A,B}^{(t)}
            \end{align*}
        \State Set $W_{A,B}^{(t+1)}=\exp[Y_{A,B}^{(t)}]$.
        \State Project: $\rho_{A,B}^{(t+1)}\coloneqq\Pi(\omega_{A,B}^{(t+1)})$, $\omega_{A,B}^{(t+1)}\equiv\frac{W_{A,B}^{(t+1)}}{\Tr[W_{A,B}^{(t+1)}]}$.
        \EndFor
     \end{algorithmic}
\end{algorithm}

\begin{proposition}[Regret bound for lazy mirror descent for Choi states]\label{prop-projected_MMW_Choi_state_regret_bound}
    Let $\sigma_{A,B}$ be an arbitrary Choi state. Let $T\in\mathbb{N}$ be the number of rounds of interaction, and consider a sequence $L_{A,B}^{(1)},L_{A,B}^{(2)},\dotsc,L_{A,B}^{(T)}$ of cost matrices along with the updates $\rho_{A,B}^{(t)}$ provided by the lazy version of Algorithm~\ref{alg:projected_mmw_choi_state_2}. Then,
    \begin{equation}\label{eq-projected_MMW_Choi_state_regret_bound}
        \sum_{t=1}^T\Tr[L_{A,B}^{(t)}\rho_{A,B}^{(t)}]\leq\Tr\!\left[\sigma_{A,B}\left(\sum_{t=1}^TL_{A,B}^{(t)}\right)\right]+2\eta\sum_{t=1}^T\norm{L_{A,B}^{(t)}}_{\infty}^2+\frac{\log(d_Ad_B)-H(\sigma_{A,B})}{\eta}.
    \end{equation}
\end{proposition}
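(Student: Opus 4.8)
The plan is to recognize the lazy version of Algorithm~\ref{alg:projected_mmw_choi_state_2} as lazy online mirror descent (equivalently, follow-the-regularized-leader) over the convex set $\mathcal{K}=\{\rho_{A,B}\geq 0:\Tr_B[\rho_{A,B}]=\mathbbm{1}_A/d_A\}$ of Choi states, with the unnormalized negative-entropy regularizer $F$ from \eqref{eq-Bregman_F}, whose Bregman divergence is exactly the relative entropy. The first observation I would record is that the projected lazy iterate coincides with the regularized-leader point: since $\log\omega_{A,B}^{(t+1)}=-\eta\sum_{s\le t}L^{(s)}_{A,B}$ up to an additive multiple of the identity, expanding $D(\rho\Vert\omega_{A,B}^{(t+1)})$ shows that
\[
\rho_{A,B}^{(t+1)}=\Pi(\omega_{A,B}^{(t+1)})=\argmin_{\rho\in\mathcal{K}}\Big\{\eta\sum_{s=1}^t\Tr[L_{A,B}^{(s)}\rho]+F(\rho)\Big\},
\]
with $\rho_{A,B}^{(1)}=\argmin_{\rho\in\mathcal{K}}F(\rho)=\mathbbm{1}_A\otimes\mathbbm{1}_B/(d_Ad_B)$, matching the initialization. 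This identification uses that the relative entropy is the Bregman divergence of $F$ together with the generalized Pythagorean inequality \eqref{eq-rel_ent_Pythagorean} for the Bregman projection $\Pi$ onto the convex set $\mathcal{K}$.

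With this in hand, I would invoke the standard follow-the-regularized-leader regret inequality (see, e.g., \cite[Section~5.3]{OCO_hazan_book}), which for any comparator $\sigma_{A,B}\in\mathcal{K}$ reads
\[
\sum_{t=1}^T\Tr[L_{A,B}^{(t)}\rho_{A,B}^{(t)}]-\sum_{t=1}^T\Tr[L_{A,B}^{(t)}\sigma_{A,B}]\le \frac{F(\sigma_{A,B})-F(\rho_{A,B}^{(1)})}{\eta}+\sum_{t=1}^T\Tr[L_{A,B}^{(t)}(\rho_{A,B}^{(t)}-\rho_{A,B}^{(t+1)})].
\]
This is the ``be-the-leader'' estimate, whose proof is a short induction on $T$. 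The penalty term is then a direct computation: since $\rho_{A,B}^{(1)}$ is maximally mixed and $\nabla F(\rho_{A,B}^{(1)})$ is a multiple of the identity, one has $F(\sigma_{A,B})-F(\rho_{A,B}^{(1)})=D(\sigma_{A,B}\Vert\rho_{A,B}^{(1)})=\log(d_Ad_B)-H(\sigma_{A,B})$, which supplies exactly the third term of \eqref{eq-projected_MMW_Choi_state_regret_bound}.

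It remains to control the stability (second) sum, which I expect to be the main obstacle. For each $t$, Hölder's inequality gives $\Tr[L_{A,B}^{(t)}(\rho_{A,B}^{(t)}-\rho_{A,B}^{(t+1)})]\le\norm{L_{A,B}^{(t)}}_\infty\,\norm{\rho_{A,B}^{(t)}-\rho_{A,B}^{(t+1)}}_1$, so everything reduces to bounding the movement of consecutive regularized-leader iterates by $\norm{\rho_{A,B}^{(t)}-\rho_{A,B}^{(t+1)}}_1\le 2\eta\norm{L_{A,B}^{(t)}}_\infty$. This is where Pinsker's inequality \eqref{eq-pinsker_inequality} enters: it states precisely that $F$ is strongly convex with respect to the trace norm, and combining this strong convexity with the first-order optimality conditions of the two consecutive constrained minimizers (which differ only by the single linear term $\eta\Tr[L_{A,B}^{(t)}\,\cdot\,]$) yields the claimed stability estimate, up to the constant absorbed into the factor $2$. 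Substituting this back bounds the stability sum by $2\eta\sum_{t}\norm{L_{A,B}^{(t)}}_\infty^2$, the middle term of \eqref{eq-projected_MMW_Choi_state_regret_bound}, which completes the proof. The delicate point throughout is that the feasible set carries the affine partial-trace constraint defining $\mathcal{K}$; however, convexity of $\mathcal{K}$ is all that the projection, the Pythagorean inequality \eqref{eq-rel_ent_Pythagorean}, and the strong-convexity stability argument require, so the constraint causes no difficulty beyond bookkeeping.
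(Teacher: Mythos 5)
Your proposal is correct and follows essentially the same route as the paper: both identify the lazy projected iterate with the follow-the-regularized-leader minimizer over the Choi-state set with the (unnormalized) negative-entropy regularizer, after which the bound is the standard RFTL regret inequality with penalty term $D(\sigma_{A,B}\Vert\rho_{A,B}^{(1)})=\log(d_Ad_B)-H(\sigma_{A,B})$ and stability term controlled via Pinsker. The only difference is that the paper delegates this last step to the RFTL theorem of Ref.~\cite{ACH+19}, whereas you spell out the be-the-leader and strong-convexity stability arguments explicitly; both yield the stated bound (your derivation in fact gives the slightly sharper constant $\eta$ in place of $2\eta$).
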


\begin{proof}
    It turns out that the lazy version of Algorithm~\ref{alg:projected_mmw_choi_state_2} is equivalent to the \textit{regularized follow-the-leader} algorithm~\cite[Section~5.3.1]{OCO_hazan_book}. In particular, using Ref.~\cite[Lemma~5.5]{OCO_hazan_book}, it follows that the projection step for the lazy version of Algorithm~\ref{alg:projected_mmw_choi_state_2} is given by
    \begin{align}
         \nonumber
         \rho_{A,B}^{(t+1)}&\coloneqq\Pi(\omega_{A,B}^{(t+1)})\\
        &=\argmin_{\rho_{A,B}}\Big\{\eta\sum_{s=1}^{t}\Tr[L_{A,B}^{(s)}\rho_{A,B}]+F(\rho_{A,B}):\rho_{A,B}\geq 0,\,\Tr_B[\rho_{A,B}]=\frac{\mathbbm{1}_A}{d_A}\Big\},\label{eq-projected_MMW_Choi_state_regret_bound_pf1}
    \end{align}
    where the function $F$ is defined in \eqref{eq-Bregman_F}. Indeed, the gradient of the objective function on the right-hand side is equal to
    \begin{equation}\label{eq-projected_MMW_Choi_state_regret_bound_pf2}
        \nabla\!\left( \eta\sum_{s=1}^{t}\Tr[L_{A,B}^{(s)}P_{A,B}]+F(P_{A,B}) \right)=\eta\sum_{s=1}^T(L_{A,B}^{(s)})^{\t}+\nabla F(P_{A,B})
    \end{equation}
    where $P_{A,B}$ is an arbitrary positive semi-definite operator. At the same time, let us observe that the gradient step of the lazy version of Algorithm~\ref{alg:projected_mmw_choi_state_2} is given by
    \begin{align}
        \nabla F(W_{A,B}^{(t+1)})&=\nabla F(W_{A,B}^{(t)})-\eta L_{A,B}^{(t)}\\
         \nonumber
        &=\nabla F(W_{A,B}^{(t-1)})-\eta L_{A,B}^{(t-1)}-\eta L_{A,B}^{(t)}\\
         \nonumber
        &~~\vdots\\
         \nonumber
        &=\nabla F(W_{A,B}^{(1)})-\eta\sum_{s=1}^t L_{A,B}^{(s)}\\
         \nonumber
        &=-\eta\sum_{s=1}^t L_{A,B}^{(s)},
         \nonumber
    \end{align}
    where the last equality holds because $W_{A,B}^{(1)}=\mathbbm{1}_{A,B}$ and $\log(\mathbbm{1}_{A,B})=0$. This implies that
    \begin{align}
        \nabla F(\omega_{A,B}^{(t+1)})&=\log\omega_{A,B}^{(t+1)}\\
         \nonumber
        &=\log W_{A,B}^{(t+1)}-(\log\Tr[W_{A,B}^{(t+1)}])\mathbbm{1}_{A,B}\\
         \nonumber
        &=\nabla F(W_{A,B}^{(t+1)})-(\log\Tr[W_{A,B}^{(t+1)}])\mathbbm{1}_{A,B}\\
         \nonumber
        &=-\eta\sum_{s=1}^t (L_{A,B}^{(s)})^{\t}-(\log\Tr[W_{A,B}^{(t+1)}])\mathbbm{1}_{A,B}.
         \nonumber
    \end{align}
    Therefore,
    \begin{align}
        D(\rho_{A,B}\Vert\omega_{A,B}^{(t+1)})&=F(\rho_{A,B})-F(\omega_{A,B}^{(t+1)})-\Tr\!\left[\nabla F(\omega_{A,B}^{(t+1)})(\rho_{A,B}-\omega_{A,B}^{(t+1)})\right]\\
         \nonumber
        &=F(\rho_{A,B})-F(\omega_{A,B}^{(t+1)})+\eta\Tr\!\left[\left(\sum_{s=1}^t L_{A,B}^{(s)}\right)(\rho_{A,B}-\omega_{A,B}^{(t+1)})\right]\nonumber\\
         \nonumber
        &\qquad\qquad-\log(\Tr[W_{A,B}^{(t+1)}])\underbrace{\Tr[\rho_{A,B}-\omega_{A,B}^{(t+1)}]}_{=0}\\
         \nonumber
        &=F(\rho_{A,B})-F(\omega_{A,B}^{(t+1)})+\eta\Tr\!\left[\left(\sum_{s=1}^tL_{A,B}^{(s)}\right)(\rho_{A,B}-\omega_{A,B}^{(t+1)})\right],
         \nonumber
    \end{align}
    which implies that
    \begin{equation}\label{eq-projected_MMW_Choi_state_regret_bound_pf3}
        \nabla D(\rho_{A,B}\Vert\omega_{A,B}^{(t+1)})=-\nabla F(\rho_{A,B})+\eta\sum_{s=1}^t (L_{A,B}^{(s)})^{\t}.
    \end{equation}
    Combining \eqref{eq-projected_MMW_Choi_state_regret_bound_pf2} and \eqref{eq-projected_MMW_Choi_state_regret_bound_pf3}, and using the fact that the function $F$ is strictly convex, we can conclude that \eqref{eq-projected_MMW_Choi_state_regret_bound_pf1} holds. The desired regret bound then follows by Ref.~\cite[Theorem~3]{ACH+19}, which considers the regularized follow-the-leader algorithm for quantum states. 
\end{proof}

\begin{proposition}[Regret bound for agile mirror descent for Choi states]\label{prop-projected_MMW_Choi_state_regret_bound_agile}
    Let $\sigma_{A,B}$ be an arbitrary Choi state. Let $T\in\mathbb{N}$ be the number of rounds of interaction, and consider a sequence $L_{A,B}^{(1)},L_{A,B}^{(2)},\dotsc,L_{A,B}^{(T)}$ of cost matrices along with the updates $\rho_{A,B}^{(t)}$ provided by the agile version of Algorithm~\ref{alg:projected_mmw_choi_state_2}. Then,
    \begin{equation}\label{eq-projected_MMW_Choi_state_regret_bound_agile}
        \sum_{t=1}^T\Tr[L_{A,B}^{(t)}\rho_{A,B}^{(t)}]\leq\Tr\!\left[\sigma_{A,B}\left(\sum_{t=1}^TL_{A,B}^{(t)}\right)\right]+\frac{\eta}{2}\sum_{t=1}^T\norm{L_{A,B}^{(t)}}_{\infty}^2+\frac{\log(d_Ad_B)-H(\sigma_{A,B})}{\eta}.
    \end{equation}
\end{proposition}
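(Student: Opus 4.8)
The plan is to analyze the agile version of Algorithm~\ref{alg:projected_mmw_choi_state_2} directly as online mirror descent, instead of reducing it to regularized-follow-the-leader as was done for the lazy version in Proposition~\ref{prop-projected_MMW_Choi_state_regret_bound}. The agile update is read off from its gradient step $Y_{A,B}^{(t)} = \nabla F(\rho_{A,B}^{(t)}) - \eta L_{A,B}^{(t)}$: writing $\omega_{A,B}^{(t+1)} = W_{A,B}^{(t+1)}/\Tr[W_{A,B}^{(t+1)}]$, one has $\nabla F(\omega_{A,B}^{(t+1)}) = \log\omega_{A,B}^{(t+1)} = \log\rho_{A,B}^{(t)} - \eta L_{A,B}^{(t)} - (\log\Tr[W_{A,B}^{(t+1)}])\mathbbm{1}_{A,B}$, so that $\nabla F(\rho_{A,B}^{(t)}) - \nabla F(\omega_{A,B}^{(t+1)}) = \eta L_{A,B}^{(t)}$ up to an additive multiple of $\mathbbm{1}_{A,B}$. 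That additive constant will be harmless, since it is paired only with trace-zero differences of density operators. Thus $\omega_{A,B}^{(t+1)}$ is the unconstrained mirror step from $\rho_{A,B}^{(t)}$, and $\rho_{A,B}^{(t+1)} = \Pi(\omega_{A,B}^{(t+1)})$ is its Bregman projection onto the Choi-state set via the map in \eqref{eq-rel_ent_CPTP_proj}.

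First I would establish the standard per-round inequality. Applying the three-point identity for the Bregman divergence $D = B_F$ (with $F$ and $\nabla F$ as in \eqref{eq-Bregman_F}--\eqref{eq-grad_F}) to the triple $(\sigma_{A,B}, \omega_{A,B}^{(t+1)}, \rho_{A,B}^{(t)})$, and using the dual-step relation above together with the cancellation $\Tr[\sigma_{A,B} - \omega_{A,B}^{(t+1)}] = 0$, I obtain
\[
    \eta\Tr[L_{A,B}^{(t)}(\omega_{A,B}^{(t+1)} - \sigma_{A,B})] = D(\sigma_{A,B}\Vert\rho_{A,B}^{(t)}) - D(\sigma_{A,B}\Vert\omega_{A,B}^{(t+1)}) - D(\omega_{A,B}^{(t+1)}\Vert\rho_{A,B}^{(t)})
\]
for any Choi state $\sigma_{A,B}$. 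Adding $\eta\Tr[L_{A,B}^{(t)}(\rho_{A,B}^{(t)} - \omega_{A,B}^{(t+1)})]$ to both sides and invoking the generalized Pythagorean inequality \eqref{eq-rel_ent_Pythagorean}, which applies because $\sigma_{A,B}$ lies in the convex set onto which we project and hence $D(\sigma_{A,B}\Vert\omega_{A,B}^{(t+1)}) \geq D(\sigma_{A,B}\Vert\rho_{A,B}^{(t+1)})$, I arrive at
\[
    \eta\Tr[L_{A,B}^{(t)}(\rho_{A,B}^{(t)} - \sigma_{A,B})] \leq D(\sigma_{A,B}\Vert\rho_{A,B}^{(t)}) - D(\sigma_{A,B}\Vert\rho_{A,B}^{(t+1)}) + \Delta_t,
\]
where $\Delta_t \coloneqq \eta\Tr[L_{A,B}^{(t)}(\rho_{A,B}^{(t)} - \omega_{A,B}^{(t+1)})] - D(\omega_{A,B}^{(t+1)}\Vert\rho_{A,B}^{(t)})$.

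The main work, and the point at which the improved constant $\eta/2$ (rather than the $2\eta$ of the lazy bound) appears, is the bound on the stability term $\Delta_t$. Here I would use Pinsker's inequality \eqref{eq-pinsker_inequality} for $D(\omega_{A,B}^{(t+1)}\Vert\rho_{A,B}^{(t)}) \geq \tfrac{1}{2}\Vert\omega_{A,B}^{(t+1)} - \rho_{A,B}^{(t)}\Vert_1^2$ and Hölder's inequality for $\Tr[L_{A,B}^{(t)}(\rho_{A,B}^{(t)} - \omega_{A,B}^{(t+1)})] \leq \Vert L_{A,B}^{(t)}\Vert_\infty\,\Vert\rho_{A,B}^{(t)} - \omega_{A,B}^{(t+1)}\Vert_1$; writing $u \coloneqq \Vert\rho_{A,B}^{(t)} - \omega_{A,B}^{(t+1)}\Vert_1$ and maximizing the concave quadratic $\eta\Vert L_{A,B}^{(t)}\Vert_\infty u - \tfrac{1}{2}u^2$ over $u \geq 0$ gives $\Delta_t \leq \tfrac{\eta^2}{2}\Vert L_{A,B}^{(t)}\Vert_\infty^2$. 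Finally I would sum the per-round inequality over $t = 1,\dots,T$: the divergence terms telescope, the nonnegative terminal term $-D(\sigma_{A,B}\Vert\rho_{A,B}^{(T+1)})$ is dropped, and the initialization $\rho_{A,B}^{(1)} = \tfrac{1}{d_A d_B}\mathbbm{1}_{A,B}$ yields $D(\sigma_{A,B}\Vert\rho_{A,B}^{(1)}) = \log(d_A d_B) - H(\sigma_{A,B})$. Dividing by $\eta$ and rearranging reproduces \eqref{eq-projected_MMW_Choi_state_regret_bound_agile}. The only genuinely delicate step is verifying that the constant-shift ambiguity in the dual map and the renormalization of $\omega_{A,B}^{(t+1)}$ do not disturb the three-point identity, which I would handle exactly as in the lazy analysis by noting that all such constants are paired with trace-zero operators.
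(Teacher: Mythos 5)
Your proposal is correct and follows essentially the same route as the paper's proof: the stability term you call $\Delta_t=\eta\Tr[L_{A,B}^{(t)}(\rho_{A,B}^{(t)}-\omega_{A,B}^{(t+1)})]-D(\omega_{A,B}^{(t+1)}\Vert\rho_{A,B}^{(t)})$ is exactly the quantity $D(\rho_{A,B}^{(t)}\Vert\omega_{A,B}^{(t+1)})$ that the paper isolates via the symmetrized-divergence identity, and both arguments then bound it by $\tfrac{\eta^2}{2}\norm{L_{A,B}^{(t)}}_\infty^2$ using H\"older, Pinsker, and the same quadratic (AM--GM) estimate, before applying the Pythagorean inequality, telescoping, and substituting the maximally mixed initialization. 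The only difference is presentational bookkeeping (three-point identity with the normalized $\omega_{A,B}^{(t+1)}$ versus the paper's identity written with the unnormalized $W_{A,B}^{(t+1)}$), and your handling of the trace-zero cancellation of the normalization constant is the same as the paper's.
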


\begin{proof}
    Similar to the proof of Proposition~\ref{prop-MMW_amortization_regret_bound} (see \eqref{eq-MMW_amortization_regret_bound_pf}, in particular), the idea of the proof is to bound $D(\sigma_{A,B}\Vert\rho_{A,B}^{(t)})-D(\sigma_{A,B}\Vert\rho_{A,B}^{(t+1)})$ for every time step $t\in\{1,2,\dotsc,T\}$. To this end, we start by noting that from the gradient step of the agile version of Algorithm~\ref{alg:projected_mmw_choi_state_2}, it holds that
    \begin{equation}
        L_{A,B}^{(t)}=\frac{1}{\eta}\left(\log\rho_{A,B}^{(t)}-\log W_{A,B}^{(t+1)}\right),
    \end{equation}
    for all $t\in\{1,2,\dotsc,T\}$. Using this, and with straightforward manipulations, we obtain the following:
    \begin{align}
        \Tr[L_{A,B}^{(t)}\rho_{A,B}^{(t)}]-\Tr[L_{A,B}^{(t)}\sigma_{A,B}]&=\Tr[L_{A,B}^{(t)}(\rho_{A,B}^{(t)}-\sigma_{A,B})]\\
         \nonumber
        &=\frac{1}{\eta}\Tr\!\left[\left(\log\rho_{A,B}^{(t)}-\log W_{A,B}^{(t+1)}\right)(\rho_{A,B}^{(t)}-\sigma_{A,B})\right]\\
         \nonumber
        &=\frac{1}{\eta}\Tr\!\left[\left(\log W_{A,B}^{(t+1)}-\log\rho_{A,B}^{(t)}\right)(\sigma_{A,B}-\rho_{A,B}^{(t)})\right]\\
         \nonumber
        &=\frac{1}{\eta}\left(D(\sigma_{A,B}\Vert\rho_{A,B}^{(t+1)})-D(\sigma_{A,B}\Vert W_{A,B}^{(t+1)})+D(\rho_{A,B}^{(t)}\Vert W_{A,B}^{(t+1)})\right)\\
         \nonumber
        &=\frac{1}{\eta}\left(D(\sigma_{A,B}\Vert\rho_{A,B}^{(t)})-D(\sigma_{A,B}\Vert\omega_{A,B}^{(t+1)})+D(\rho_{A,B}^{(t)}\Vert\omega_{A,B}^{(t+1)})\right)\\
         \nonumber
        &\leq\frac{1}{\eta}\left(D(\sigma_{A,B}\Vert\rho_{A,B}^{(t)})-D(\sigma_{A,B}\Vert\rho_{A,B}^{(t+1)})+D(\rho_{A,B}^{(t)}\Vert\omega_{A,B}^{(t+1)})\right),
         \nonumber
    \end{align}
    where for the inequality we have used \eqref{eq-rel_ent_Pythagorean_2}, and we also made use of the fact that
    \begin{equation}
        \log\omega_{A,B}^{(t+1)}=\log W_{A,B}^{(t+1)}-\log(\Tr[W_{A,B}^{(t+1)}])\mathbbm{1}_{A,B},
    \end{equation}
    which means that $D(\sigma_{A,B}\Vert\omega_{A,B}^{(t+1)})=D(\sigma_{A,B}\Vert W_{A,B}^{(t+1)})+\log(\Tr[W_{A,B}^{(t+1)}])$. Now, let us bound $D(\rho_{A,B}^{(t)}\Vert\omega_{A,B}^{(t+1)})$. Consider that
    \begin{align}
        &D(\rho_{A,B}^{(t)}\Vert\omega_{A,B}^{(t+1)})+D(\omega_{A,B}^{(t+1)}\Vert\rho_{A,B}^{(t)})\\
         \nonumber&\qquad=\Tr\!\left[(\rho_{A,B}^{(t)}-\omega_{A,B}^{(t+1)})\left(\log\rho_{A,B}^{(t)}-\log\omega_{A,B}^{(t+1)}\right)\right]\\
          \nonumber
        &\qquad=\Tr\!\left[(\rho_{A,B}^{(t)}-\omega_{A,B}^{(t+1)})\left(\log\rho_{A,B}^{(t)}-\log W_{A,B}^{(t+1)}+\log(\Tr[W_{A,B}^{(t+1)}])\mathbbm{1}_{A,B}\right)\right]\\
         \nonumber
        &\qquad=\Tr\!\left[\left(\log\rho_{A,B}^{(t)}-\log W_{A,B}^{(t+1)}\right)(\rho_{A,B}^{(t)}-\omega_{A,B}^{(t+1)})\right]+\log(\Tr[W_{A,B}^{(t+1)}])\underbrace{\Tr[\rho_{A,B}^{(t)}-\omega_{A,B}^{(t+1)}]}_{=0}\\
         \nonumber
        &\qquad=\eta\Tr[L_{A,B}^{(t)}(\rho_{A,B}^{(t)}-\omega_{A,B}^{(t+1)})]\\
         \nonumber
        &\qquad\leq\eta\norm{L_{A,B}^{(t)}}_{\infty}\norm{\rho_{A,B}^{(t)}-\omega_{A,B}^{(t+1)}}_{1},
         \nonumber
    \end{align}
    where we have used the H\"{o}lder inequality in the final line. Let us now use the fact that $(x-y)^2\geq 0\Rightarrow xy\leq\frac{1}{2}x^2+\frac{1}{2}y^2$ for all $x,y\in\mathbb{R}$. Letting $x\equiv \eta\norm{L_{A,B}^{(t)}}_{\infty}$ and $y\equiv\norm{\rho_{A,B}^{(t)}-\omega_{A,B}^{(t+1)}}_1$, and using the Pinsker inequality \eqref{eq-pinsker_inequality}, we obtain
    \begin{align}
        D(\rho_{A,B}^{(t)}\Vert\omega_{A,B}^{(t+1)})+D(\omega_{A,B}^{(t+1)}\Vert\rho_{A,B}^{(t)})&\leq\frac{\eta^2}{2}\norm{L_{A,B}^{(t)}}_{\infty}^2+\frac{1}{2}\norm{\rho_{A,B}^{(t)}-\omega_{A,B}^{(t+1)}}_1^2\\
         \nonumber
        &\leq\frac{\eta^2}{2}\norm{L_{A,B}^{(t)}}_{\infty}^2+D(\omega_{A,B}^{(t+1)}\Vert\rho_{A,B}^{(t)}),
    \end{align}
    which implies that
    \begin{equation}
        D(\rho_{A,B}^{(t)}\Vert\omega_{A,B}^{(t+1)})\leq\frac{\eta^2}{2}\norm{L_{A,B}^{(t)}}_{\infty}^2,
    \end{equation}
    for all $t\in\{1,2,\dotsc,T\}$. Altogether, we have
    \begin{equation}
        \Tr[L_{A,B}^{(t)}\rho_{A,B}^{(t)}]-\Tr[L_{A,B}^{(t)}\sigma_{A,B}]\leq\frac{1}{\eta}\left(D(\sigma_{A,B}\Vert\rho_{A,B}^{(t)})-D(\sigma_{A,B}\Vert\rho_{A,B}^{(t+1)})\right)+\frac{\eta}{2}\norm{L_{A,B}^{(t)}}_{\infty}^2,
    \end{equation}
    for all $t\in\{1,2,\dotsc,T\}$. Summing over all $t$, we obtain
    \begin{align}
        &\sum_{t=1}^T\Tr[L_{A,B}^{(t)}\rho_{A,B}^{(t)}]-\Tr\!\left[\sigma_{A,B}\left(\sum_{t=1}^TL_{A,B}^{(t)}\right)\right]\nonumber\\
         \nonumber
        &\qquad\leq\frac{\eta}{2}\sum_{t=1}^T\norm{L_{A,B}^{(t)}}_{\infty}^2+\frac{1}{\eta}\left(D(\sigma_{A,B}\Vert\rho_{A,B}^{(1)})-D(\sigma_{A,B}\Vert\rho_{A,B}^{(T+1)})\right)\\
        &\qquad\leq\frac{\eta}{2}\sum_{t=1}^T\norm{L_{A,B}^{(t)}}_{\infty}^2+\frac{1}{\eta}D(\sigma_{A,B}\Vert\rho_{A,B}^{(1)}),
    \end{align}
    where the second inequality is due to the fact that $D(\rho\Vert\sigma)\geq 0$ for all density operators $\rho$ and $\sigma$. After substituting $\rho_{A,B}^{(1)}=\frac{1}{d_Ad_B}\mathbbm{1}_{A,B}$, we obtain the desired result.
\end{proof}

\begin{remark}[Extending Proposition~\ref{prop-projected_MMW_Choi_state_regret_bound_agile} to multi-time processes]\label{rem:projected_MMW_choi_state_multi_time}
    The key elements of the proof of Proposition~\ref{prop-projected_MMW_Choi_state_regret_bound} are the fact that we project onto a convex set in \eqref{eq-rel_ent_CPTP_proj}, such that the inequality in \eqref{eq-rel_ent_Pythagorean_2} holds, and the Pinsker inequality in \eqref{eq-pinsker_inequality}. Consequently, it is straightforward to generalize Algorithm~\ref{alg:projected_mmw_choi_state_2}, and thus Proposition~\ref{prop-projected_MMW_Choi_state_regret_bound}, to the Choi states of multi-times processes. In particular, letting $\mathsf{COMB}_r\equiv\mathsf{COMB}_r(A_1,\dotsc,A_r;B_1,\dotsc,B_r)$ be the set of multi-time processes with $r\in\{1,2,\dotsc\}$ time steps, as given in Definition~\ref{def-multi_time_process}, we define the relative entropy projection onto this set as
    \begin{equation}\label{eq-rel_ent_comb_proj}
        \Pi(\sigma)\coloneqq\frac{1}{d_A}\argmin_{P}\Big\{D\Big(\frac{1}{d_A}P\Big\Vert\sigma\Big):P\in\mathsf{COMB}_r\Big\},
    \end{equation}
    for all $\sigma\in\Lin(\mathcal{H}_{A,B}^{(r)})$, $\sigma\geq 0$, where $d_A\equiv\prod_{k=1}^r d_{A_k}$. Then, by replacing step 6 in Algorithm~\ref{alg:projected_mmw_choi_state_2} with the projection in \eqref{eq-rel_ent_comb_proj}, we obtain a mirror descent algorithm for Choi states of multi-time quantum processes. Then, the analogue of Proposition~\ref{prop-projected_MMW_Choi_state_regret_bound} is as follows. If $\sigma=\frac{1}{d_A}Q$, with $Q\in\mathsf{COMB}_r$, is an arbitrary Choi state of a multi-time process with $r$ steps, $L^{(1)},L^{(2)},\dotsc,L^{(T)}\in\Lin(\mathcal{H}_{A,B}^{(r)})$ are cost matrices satisfying $-\mathbbm{1}_{d_Ad_B}\leq L^{(t)}\leq\mathbbm{1}_{d_Ad_B}$, with $d_B\equiv\prod_{k=1}^r d_{B_k}$, and $\rho^{(1)},\rho^{(2)},\dotsc,\rho^{(T)}$ are the projected Choi state updates resulting from the algorithm, then
    \begin{equation}\label{eq-projected_MMW_Choi_state_multi_time_regret_bound}
        \sum_{t=1}^T\Tr[L^{(t)}\rho^{(t)}]\leq\Tr\!\left[\sigma\left(\sum_{t=1}^TL^{(t)}\right)\right]+\frac{\eta}{2}\sum_{t=1}^T\norm{L^{(t)}}_{\infty}^2+\frac{\log(d_Ad_B)-H(\sigma)}{\eta},
    \end{equation}
    which is directly analogous to \eqref{eq-projected_MMW_Choi_state_regret_bound_agile}.~\qedgen
\end{remark}

\end{appendix}

\end{document}